\documentclass[11pt]{article}

\usepackage{booktabs} 
\usepackage[font=footnotesize,labelfont=bf]{caption}
\usepackage{subcaption}
\usepackage{geometry}
\geometry{verbose,letterpaper,tmargin=24mm,bmargin=24mm,lmargin=24mm,rmargin=24mm}
\usepackage{graphicx}
\usepackage{amsmath}
\usepackage[svgnames]{xcolor}
\usepackage{amssymb}
\usepackage[all]{xy}
\usepackage{csquotes}
\usepackage{tikz}
\usetikzlibrary{patterns}
\usetikzlibrary{matrix}
\usetikzlibrary{backgrounds} 
\usepackage{float}

\usepackage{amsthm}

\newtheorem{theorem}{Theorem}[section]
\newtheorem{corollary}[theorem]{Corollary}
\newtheorem{lemma}[theorem]{Lemma}
\newtheorem{proposition}[theorem]{Proposition}
\newtheorem{claim}[theorem]{Claim}
\newtheorem{definition}[theorem]{Definition}




\newcommand{\eps}{\varepsilon} 
\newcommand{\cupdot}{\mathbin{\mathaccent\cdot\cup}}

\newcommand{\shortversion}[1]{}

\newcommand{\cA}{{\cal A}}
\newcommand{\cB}{{\cal B}}

\newcommand{\cM}{{\cal M}}

\newcommand{\cS}{{\cal S}}

\newcommand{\E}{\mbox{\bf E}}
\newcommand{\RR}{\mathbb R}

\DeclareMathOperator*{\argmax}{argmax}

\usepackage[colorinlistoftodos,textsize=tiny,textwidth=2cm,color=red!25!white,obeyFinal]{todonotes}

\begin{document}

\title{On the Hardness of Dominant Strategy Mechanism Design}

\author{Shahar Dobzinski \thanks{Weizmann Institute of Science. Email: shahar.dobzinski@weizmann.ac.il.
Work supported by BSF grant 2016192 and ISF grant 2185/19.	
}
\and Shiri Ron\thanks{Weizmann Institute of Science. Email:  shiriron@weizmann.ac.il.} 
 \and Jan Vondrak \thanks{Stanford University. Email: jvondrak@stanford.edu. Work supported by BSF-NSF grant (BSF number: 2021655, NSF number: 2127781).}}

%


\maketitle
\begin{abstract}
We study the communication complexity of dominant strategy implementations of combinatorial auctions. We start with two domains that are generally considered ``easy'': multi-unit auctions with decreasing marginal values and combinatorial auctions with gross substitutes valuations. For both domains we have fast algorithms that find the welfare-maximizing allocation with communication complexity that is poly-logarithmic in the input size. This immediately implies that welfare maximization can be achieved in ex-post equilibrium with no significant communication cost, by using VCG payments. In contrast, we show that in both domains the communication complexity of any dominant strategy implementation that achieves the optimal welfare is polynomial in the input size.

We then move on to studying the approximation ratios achievable by dominant strategy mechanisms. For multi-unit auctions with decreasing marginal values, we provide a dominant-strategy communication FPTAS. For combinatorial auctions with general valuations, we show that there is no dominant strategy mechanism that achieves an approximation ratio better than $m^{1-\eps}$ that uses $poly(m,n)$ bits of communication, where $m$ is the number of items and $n$ is the number of bidders. In contrast, a \emph{randomized} dominant strategy mechanism that achieves an $O(\sqrt m)$ approximation with $poly(m,n)$ communication is known. This proves the first gap between computationally efficient deterministic dominant strategy mechanisms and randomized ones.

En route, we answer an open question on the communication cost of implementing dominant strategy mechanisms for more than two players, and also solve some open problems in the area of simultaneous combinatorial auctions.
\end{abstract}


\section{Introduction}

In his seminal 1961 paper \cite{Vic61}, Vickrey considers single item auctions: there is one item and $n$ bidders, the value of each bidder $i$ for the item is $v_i$. Vickrey defines the second-price auction: the highest bidder wins the item and pays the second highest bid. It is shown that in a second-price auction, bidding truthfully is a dominant strategy for each bidder. However, observe that our definition of a second-price auction was a bit careless. Bidding truthfully is indeed a dominant strategy when the second price auction is held by asking the bidders to simultaneously submit their bids, or when implemented iteratively, by conducting a (continuous) ascending auction. However, this is not always the case. Consider a ``serial'' implementation of a second price auction in which the bids of players $1,\ldots, i-1$ are publicly revealed before player $i$ makes a bid. Truth-telling is no longer a dominant strategy for, e.g., player $1$: if the strategy of all other players is ``bid $0$ unless player $1$ bids $10$, in which case bid $9$'', then player $1$ is better off bidding $11$ when his true value is $v_1=10$.

\subsubsection*{The Setting}

In this paper we analyze the hardness of dominant strategy implementations in combinatorial auctions. Recall that in a combinatorial auction there is a set $M$ of heterogeneous items $(|M|=m)$ and a set $N$ of bidders ($|N|=n$). The private information of each bidder $i$ is his value for every subset of the items: $v_i:2^M\rightarrow \mathbb R$. The standard assumptions are that the valuations are non-decreasing (for each $S\subseteq T$, $v(T)\geq v(S)$) and normalized ($v_i(\emptyset)=0$), though we will sometimes impose additional restrictions on the valuations. Our goal is to find an allocation of the items $(S_1,\ldots, S_n)$ that maximizes the social welfare: $\Sigma_iv_i(S_i)$.

We use communication protocols to model mechanisms. Specifically, we work in the blackboard model, so all messages sent are observable by all players. The input of each player $i$ is his valuation $v_i$. As usual, the communication protocol is represented by a tree that dictates which players (simultaneously) speak at each node, and the identity of the next node given the messages. The leaves of the protocol specify the outcome: the allocation and payments. We assume that each player is interested in maximizing his profit: the value of his assignment minus his payment. 
We assume that all mechanisms are \emph{normalized}, i.e. the price of the empty bundle is always zero. 


A strategy $\mathcal{S}_i$ of player $i$ dictates (given the valuation $v_i$) which messages player $i$ sends at each node. We say that $\mathcal{S}_i$ is \emph{dominant} for player $i$ if for every set of possible strategies $\mathcal{S}_{-i}'$ of the other players, every valuation profile $(v_1,\ldots,v_n)$ and every strategy $\mathcal{S}'_i$ of player $i$ it holds that: 
$$
v_i(f_i(\mathcal{S}_i(v_i),\mathcal{S}_{-i}'(v_{-i}))) - p_i(\mathcal{S}_i(v_i),\mathcal{S}_{-i}'(v_{-i})) \geq v_i(f_i(\mathcal{S}_i'(v_i),\mathcal{S}_{-i}'(v_{-i}))) - p_i(\mathcal{S}'_i(v_i),\mathcal{S}_{-i}'(v_{-i}))
$$
where  $f_i(\mathcal{S}_i(v_i),\mathcal{S}_{-i}'(v_{-i}))$  and  $p_i(\mathcal{S}_i(v_i),\mathcal{S}_{-i}'(v_{-i}))$ specify the allocation and payment of player $i$, respectively, given that player $i$ follows the  actions specified by  $\mathcal{S}_i(v_i)$ and the other players follow  the actions specified in the vector  $\mathcal{S}_{-i}'(v_{-i})=(\mathcal{S}_1'(v_1),\ldots, \mathcal{S}_{i-1}'(v_{i-1}), \mathcal S_{i+1}'(v_{i+1}), \ldots, \mathcal{S}_n'(v_n))$. Consider a mechanism in which each player $i$ has a dominant strategy $\mathcal{S}_i$. Let $V_i$  be the set of possible valuations of a player and let $\mathcal A$ is the set of all possible allocations. Let $f:V_1\times\cdots\times V_n\rightarrow \mathcal A$ be the social choice function defined by $f(v_1,\ldots, v_n)_i =f_i(\mathcal{S}_i(v_i),\mathcal{S}_{-i}(v_{-i}))$. In this case we say that the mechanism implements $f$ in dominant strategies. 

The importance of the specifics of the implementation and how they affect the solution concept are well known. The notion of ex-post equilibrium, defined by Cremer and McLean \cite{CM85}, attempts in a sense to get around this by ignoring the specifics of the implementation. A function $f:V_1\times\cdots\times V_n\rightarrow \mathcal A$ is implementable in \emph{ex-post equilibrium} if there are functions $p_1,\ldots, p_n:V_1\times\cdots\times V_n\rightarrow \mathbb R$ such that for every player $i$, valuations $v_i$ and $v'_i$ of player $i$, and valuations $v_{-i}$ of the other players:
$$
v_i(f_i(v_i,v_{-i})) - p_i(v_i,v_{-i}) \geq v_i(f_i(v'_i,v_{-i})) - p_i(v'_i,v_{-i})
$$
Roughly speaking, in an ex-post equilibrium none of the players regrets playing according to his true value, if the other players are playing according to their true values as well. This rules out ``unreasonable'' strategies like in the serial second price auction described above. An alternative description would be that an ex-post incentive compatible implementation of a function $f$ is a communication protocol that computes a function $f$  and the associated payments, where $f$ can be implemented in dominant strategies. However, in this protocol the players might not have dominant strategies. 
Clearly, every dominant strategy implementation is also an ex-post implementation. The other direction is not true, as the serial implementation of a second price auction demonstrates.\footnote{Admittedly, in some algorithmic mechanism design papers that analyze the communication complexity of incentive compatible mechanisms the distinction between the two notions is less explicit than it should be.}
Thus the communication cost of ex-post implementations is potentially much smaller than the cost of dominant strategy implementations.

The goal of this paper is to determine whether the communication cost of dominant strategy implementations  is significantly larger than the cost of ex-post implementations. Intuitively, one might suspect that dominant strategy mechanisms require significantly more communication than ex-post mechanisms. However, prior research can offer only mixed evidence to support this. First, the revelation principle implies, in particular, that every ex-post implementable function $f$ is also dominant strategy implementable (the implementation is simple: each player simultaneously reveals his valuation, and the outcome is determined accordingly). However, as was already observed by Conitzer and Sandholm \cite{CS04}, this naive implementation method might easily result in an exponential blow-up in the communication complexity. Yet, this method works well in domains in which the private information of the players can be succinctly described, e.g., single-parameter domains.

Our interest is in the more complicated multi-parameter domains. Almost all known \emph{deterministic} incentive compatible mechanisms\footnote{The only exception is \cite{BGN03} which is the only known example of a mechanism that is not maximal in range and achieves the state of the art results in a well-studied domain.} \cite{HKMT04, DNS05, DN07b,PSS08} are maximal-in-range mechanisms. Moreover, in each of them each bidder sends in the first round his value of all the (polynomially many) bundles he might win. Hence these mechanisms are dominant strategy.

The evidence that implementation in ex-post equilibrium does not buy much computational power comparing to dominant strategy implementation is more than anecdotal. In \cite{D16b} it is shown -- perhaps counter-intuitively -- that \emph{every} two player ex-post mechanism for combinatorial auctions in a rich enough domain (in particular, one that includes all XOS valuations) can be implemented in a dominant strategy equilibrium with only a polynomial blow-up in the communication complexity\footnote{In \cite{D16b} an analogous result is proved also for domains that include all submodular valuations, under certain constraints.}.

In contrast, there is evidence supporting the idea that ex-post mechanism design is significantly less costly, communication-wise. Very recently, \cite{RSTWZ21} presented a carefully crafted setting in which there is a mechanism that implements a welfare maximizer in an ex-post equilibrium with $c$ bits, but every dominant strategy implementation requires $exp(c)$ bits.

\subsubsection*{Our Results}

We begin our explorations by considering the result of \cite{D16b} discussed above, that shows that every function $f$ for two players in a ``rich enough'' auction domain that can be implemented in an ex-post equilibrium can also be implemented in dominant strategies with only a polynomial blow-up in the communication. The paper \cite{D16b} leaves open the question of whether this result holds also for mechanisms with more than two players. In Section \ref{separation-sec}, we answer this question in the negative by showing that the equivalence in implementations is unique for two player mechanisms: there is a three-player social choice function for general valuations that has an ex-post implementation that uses only $c$ bits, but $exp(c)$ bits are required for any dominant strategy implementation.
Next, in Section \ref{sec-exact-welfare} we consider two auction domains that are largely considered ''easy'' in the algorithmic mechanism design literature: multi-unit auctions with decreasing marginal values and combinatorial auctions with gross substitutes valuations (see, e.g., the surveys \cite{N15, BN07}). 
In multi-unit auctions with decreasing marginal values, the welfare maximizing solution can be found with $poly(n,\log m)$ communication, and in combinatorial auctions with gross substitutes valuations, the welfare maximizing solution can be found with $poly(n,m)$ communication \cite{NS06}. We thus get that in both settings the function that outputs the welfare maximizing allocation is implementable with low communication (since VCG payments can be computed with only a polynomial blow up in the communication). However, these results hold only in an ex-post equilibrium.
In a sharp contrast, we show that an exponential blow up is required for dominant strategy implementations (again, in the blackboard model):

\vspace{0.1in}\noindent \textbf{Theorem:} 
\begin{enumerate}
\item The communication complexity of every normalized mechanism that finds a welfare-maximizing allocation for two players in dominant strategies in multi unit auctions when the valuations exhibit decreasing marginal values is $\Omega(m\cdot \log m)$. In contrast, there is a mechanism with communication complexity $poly(\log m)$ that finds the welfare-maximizing allocation in an ex-post equilibrium. 
\item The communication complexity of every normalized mechanism that finds a welfare-maximizing allocation even for two players in dominant strategies in combinatorial auctions with gross substitutes valuations is $exp(m)$. In contrast, there is a mechanism with communication complexity $poly(m)$ that finds the welfare-maximizing allocation in an ex-post equilibrium.
\end{enumerate} 

\vspace{0.1in}\noindent We note that these results echo the recent result of \cite{RSTWZ21} which was the first to show that the communication cost of dominant strategy implementations of welfare maximizers might be exponential comparing to the communication cost of ex-post implementations but in an artificial domain. In contrast, our results prove an exponential blow-up of welfare maximizers in well-studied auction domains.

Perhaps in contrast to the common perception, the theorem demonstrates that these domains are not ``easy'' from the point of view of dominant strategy mechanism design. This immediately raises the question of whether we can have good approximations to the social welfare by low-communication dominant strategy mechanisms. For multi-unit auctions, we answer this question in the affirmative:

\vspace{0.1in}\noindent \textbf{Theorem:} Let $\eps>0$. There is a dominant strategy $(1+\eps)$-approximation mechanism for multi-unit auctions with valuations that exhibit decreasing marginal values that makes $poly(n,\log m,\frac 1 \eps)$ value queries.

\vspace{0.1in}\noindent Whether one can get good approximation ratios for combinatorial auctions with gross substitutes valuations remains an open question. The maximal-in-range mechanism of \cite{DNS05} achieves an approximation ratio of $O(\sqrt m)$ in dominant strategies for the much larger class of subadditive valuations. However, we do not even know whether dominant strategy \emph{maximal-in-range} mechanisms with polynomial communication can achieve a better approximation ratio (known impossibilities for maximal-in-range mechanisms \cite{DN07a,DSS15} hold for ex-post mechanisms but not for gross-substitutes valuations). 

We then move on to analyzing the approximation ratios achievable by dominant strategy mechanisms in the standard domain of combinatorial auctions with general (monotone) valuations. From a pure optimization point of view, there is an $O(\sqrt m)$ approximation algorithm that is not incentive compatible and this is the best achievable with polynomial communication \cite{N02,LOS:J}. Whether this is achievable with a deterministic ex-post incentive compatible mechanism remains a major open question, but we are able to answer this question in the negative for dominant strategy mechanisms (Section \ref{sec-general-outline}):

\vspace{0.1in}\noindent \textbf{Theorem:} Fix $\eps>0$. The communication complexity of a mechanism that provides an $m^{1-\eps}$ approximation for combinatorial auctions with general valuations in dominant strategies is $exp(m)$. 

\vspace{0.1in}\noindent The best currently known mechanism (dominant strategy or ex-post incentive compatible) is the simultaneous maximal-in-range algorithm of \cite{HKMT04} that guarantees an approximation ratio of $O(\frac m {\sqrt {\log m}})$. To put the theorem in context, so far, following a long line of research, the only separation between the approximation ratios achievable by ex-post mechanisms and non incentive compatible algorithms for combinatorial auctions that use polynomial communication was achieved in \cite{AKSW20}. This separation applies to two-player combinatorial auctions with XOS valuations, and relies on the taxation framework \cite{D16b}. Recall that \cite{D16b} shows the equivalence of ex-post and dominant strategy implementations for two player settings, thus the result of \cite{AKSW20} is also the first to separate dominant strategy mechanisms for combinatorial auctions and their non-truthful counterparts. 

However, a proof for our theorem requires more players, since for two players a second-price auction on the bundle of all items provides an approximation ratio of $2$.\footnote{The taxation framework \cite{D16b} offers also a different path to proving bounds for more than $2$ players by providing lower bounds on the taxation complexity, but this path was not applied successfully so far.} Thus, new tools are required to prove a bound that is worse than $2$.
The proof consists of two mains steps. First, we prove in Section \ref{sec-simultanous-general} that:

\vspace{0.1in}\noindent \textbf{Theorem:} Fix $\eps>0$. The communication complexity of a simultaneous algorithm that provides an $m^{1-\eps}$ approximation for combinatorial auctions with general valuations is $exp(m)$. 

\vspace{0.1in}\noindent 
Simultaneous combinatorial auctions were introduced by \cite{DNO14}: in these (not necessarily incentive compatible) algorithms, all players simultaneously send a message of length $poly(n,m)$ and the allocation is determined based only on these messages. Previous work (e.g., \cite{BO17, A20, ANRW15, BMW18}) considered simultaneous combinatorial auctions with restricted classes of valuations, e.g., subadditive valuations. 

In the second step, we leverage 
the hardness result to dominant strategy mechanisms by showing that the existence of a deterministic dominant strategy mechanism with approximation ratio $c$ implies a simultaneous algorithm with approximation ratio (close to) $c$.

We note that for general valuations, there exists a \emph{randomized} dominant strategy mechanism that achieves an approximation ratio of $O(\sqrt m)$ \cite{DNS06}. The mechanism is a probability distribution over dominant strategy mechanisms. Hence, we also obtain a separation of the approximation ratio possible by polynomial communication randomized dominant strategy mechanisms and deterministic dominant strategy mechanisms. An analogous separation for \emph{ex-post} mechanisms is not known.

In addition, we study the structure of dominant strategy mechanisms for general valuations (Section \ref{sec-characterization}). 
Roughly speaking, we prove that such mechanisms must be {\em semi-simultaneous} in the sense that for each player $i$, the mechanism ``commits'' on player $i$'s allocation and payment right after player $i$'s first message, unless the player sends a special message which ``postpones'' determining the allocation and payment to the next rounds. 
One example of a semi-simultaneous mechanism that is not simultaneous is an ascending auction on a bundle of some of the items.

\subsubsection*{Open Questions and Future Directions}

We conclude with some open questions. We showed that dominant strategy mechanisms cannot exactly maximize the welfare in polynomial communication in combinatorial auctions with gross substitutes valuations. As was already mentioned, it is an open question to determine the approximation ratio achievable for this class or for other classes of valuations that were extensively studied in the literature, such as subadditive, XOS, and submodular. 

We do provide some evidence that good dominant strategy mechanisms do not exist. Observe that all useful constructions of deterministic dominant strategy mechanisms that we know are based on simultaneous algorithms. In Section \ref{sec:simul} we prove that:

\vspace{0.1in}\noindent \textbf{Theorem:} Fix $\eps>0$. The communication complexity of a simultaneous algorithm that provides an $m^{\frac 1 {16}}$ approximation for combinatorial auctions with gross substitutes valuations is $exp(m)$. 

\vspace{0.1in}\noindent This answers an open question of \cite{DNO14}. Before our work, it was not even known if there is a simultaneous algorithm for combinatorial auctions with submodular valuations that achieves a constant approximation ratio.

Another exciting direction is proving impossibilities for randomized mechanisms. A recent line of work provides sub-logarithmic approximation ratios for various classes of valuations \cite{D16a, AS19, AKS21}. All these mechanisms are a probability distribution over dominant strategy mechanisms.\footnote{Only \cite{D16a} claim explicitly that the mechanism is dominant strategy and not just ex-post incentive compatible, but this is likely to be the case also for the other papers as they follow the basic structure that was introduced in \cite{D16a}.} Are randomized ex-post mechanisms more powerful than dominant strategy mechanisms?\footnote{In contrast, many of the truthful-in-expectation mechanisms in the literature are based on solving an LP and are not dominant strategies \cite{LS05, DRY11}, though some dominant strategy truthful-in-expectation mechanisms do provide an optimal approximation ratio \cite{DD09}. Analyzing the power of dominant strategy truthful-in-expectation mechanisms is also a fascinating avenue for future research.}

We end by noting that our mechanisms work in the blackboard model and all messages sent are observable by all players. A more relaxed model would allow private channels between the players and the center. This assumes that the players trust the center not to leak their messages and the private communication channel itself is not leaky. We do not know how to take advantage of this relaxed model, except for the case of combinatorial auctions with $k$ copies from each good, where the mechanism of \cite{BGN03} (the usual outlier) 
cannot be implemented in dominant strategies but can be implemented in the relaxed model. We leave studying this model to future research.

\section{Formalities and Basic Observations}\label{preliminaries}

In this section we discuss some basic properties of dominant strategy mechanisms. These properties hold for every possible domain, not only for combinatorial auctions. Thus, in this section $\mathcal A$ is the set of alternatives (which are not necessarily allocations) and the valuation of each player is $v_i:\mathcal A\to \mathbb R$.   

Here and subsequently, when we talk about a fixed mechanism $\mathcal{M}$ together with its dominant strategies $\mathcal{S}_1,\ldots,\mathcal{S}_n$ we will slightly abuse notation: We say that player $i$ with valuation $v_i$ sends a message $z$ at vertex $r$ instead of saying that the dominant strategy of player $i$ with valuation $v_i$ is to send message $z$ in at vertex $r$. We also say that valuations $v_1\ldots,v_n$ reach a leaf of a protocol, instead of saying that the strategy profile $(\mathcal{S}_1(v_1),\ldots,\mathcal{S}_n(v_n))$ leads to it.  

\subsection{Minimal Dominant Strategy Mechanisms}
In this section, we show that all dominant strategy mechanisms can be simplified without harming their dominant strategy equilibria and without any communication burden.  Since our main interest in this paper is in impossibility results, it implies that  we can analyze the power of \textquote{minimal} dominant strategy mechanisms without loss of generality. Formally:
\begin{definition}
	We say that a mechanism $\mathcal{M}$ is \emph{minimal} with respect to the strategies $(\mathcal{S}_1,\ldots,\mathcal{S}_n)$ and the valuations $V=V_1\times \cdots \times V_n$ if it satisfies the following properties:
	\begin{enumerate}
		\item There are no useless messages in the protocol, i.e. if some player $i$ can send some message in some particular vertex, we assume that it is a dominant strategy for some type $v_i$ to send this message. It immediately implies that for every leaf in the protocol there exist valuations $(v_1,\ldots,v_n)$ such that the strategies $(\mathcal{S}_1(v_1),\ldots,\mathcal{S}_n(v_n)$ reach this leaf. \label{obs1}
		\item 	Every node in the protocol tree (that is not a leaf) satisfies that there  is  at least one player $i$ that has two valuations $v_i,v_i'\in V_i$  such that the strategies  $\mathcal{S}_i(v_i)$ and $\mathcal{S}_i(v_i')$ dictate sending different messages in it.  \label{obs2}
	\end{enumerate} 
\end{definition}
\begin{lemma}\label{minimal-lemma}
	Let $\mathcal{M}$ be mechanism and strategies $(\mathcal{S}_1,\ldots,\mathcal{S}_n)$ that realize a  social choice function $f:V\to\mathcal{A}$ with payments $P_1,\dots,P_n:V\to \mathbb{R}^n$ in dominant strategies with communication complexity of $c$ bits.
	Then, there exists a \emph{minimal} mechanism $\mathcal{M}'$ and strategies $(\mathcal{S}_1',\ldots,\mathcal{S}_n')$ that realize $f$ with the payments schemes $P_1,\ldots,P_n$ in dominant strategies with at most $c$ bits. 
\end{lemma}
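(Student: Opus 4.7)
The plan is to iteratively simplify $\mathcal{M}$ via two local surgery operations on its protocol tree, and then verify that the resulting mechanism $\mathcal{M}'$ realizes the same $f$ with the same payments $P_1,\ldots,P_n$ in dominant strategies while using at most $c$ bits. The first operation, \emph{pruning}, enforces property~(1): at every vertex $r$, examine each outgoing message $m$ that player $i$ is permitted to send, and check whether some valuation $v_i\in V_i$ has $\mathcal{S}_i(v_i)$ reaching $r$ and sending $m$ there. If not, delete the branch labeled by $m$ together with the subtree below it. The second operation, \emph{collapsing}, enforces property~(2): at any internal vertex $r$ at which every speaking player's message is constant across all valuations, the outgoing tuple of messages is fixed, so $r$ has a unique effective outgoing edge, and we identify $r$ with its unique child. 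I alternate the two operations; each strictly shrinks the tree, so the procedure terminates at a mechanism $\mathcal{M}'$ satisfying both properties.

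Next I would check that the natural restriction $\mathcal{S}_i'$ of each original strategy $\mathcal{S}_i$ to the new tree is well defined and that the outcome is preserved. For pruning this is essentially immediate: no dominant-strategy profile $(\mathcal{S}_1(v_1),\ldots,\mathcal{S}_n(v_n))$ ever uses a deleted edge, so each $\mathcal{S}_i$ descends unchanged to the pruned tree and every valuation profile reaches the same leaf as before, preserving both $f$ and $P_1,\ldots,P_n$. For collapsing, the removed node carried no information, so again leaves and payments are preserved. The communication complexity can only decrease: collapsing strictly shortens some root-to-leaf paths, and pruning never lengthens any of them, so $\mathcal{M}'$ uses at most $c$ bits.

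The main (and essentially only) obstacle is confirming that pruning does not destroy dominance by removing some relevant deviation from the strategy space. I resolve this via a lifting argument: any deviation $\mathcal{S}_i''$ or opponent profile $\mathcal{S}_{-i}''$ available in the pruned mechanism can be lifted to a strategy in the original $\mathcal{M}$ (simply never take the removed branches), and the outcome along this lifted execution is by construction the same as in the pruned mechanism. Hence the dominance inequality
$$ v_i(f_i(\mathcal{S}_i(v_i),\mathcal{S}_{-i}'(v_{-i}))) - p_i(\mathcal{S}_i(v_i),\mathcal{S}_{-i}'(v_{-i})) \geq v_i(f_i(\mathcal{S}_i'(v_i),\mathcal{S}_{-i}'(v_{-i}))) - p_i(\mathcal{S}'_i(v_i),\mathcal{S}_{-i}'(v_{-i})) $$
that holds in $\mathcal{M}$ against \emph{all} strategies transfers automatically to the (smaller) family of strategies available in the pruned mechanism, so $\mathcal{S}_i'$ remains dominant there. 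Since collapsing is a purely cosmetic identification that does not alter which messages any strategy can send or which leaf it reaches, dominance survives it as well. Combining the two reductions, $\mathcal{M}'$ together with $(\mathcal{S}_1',\ldots,\mathcal{S}_n')$ is the desired minimal mechanism, realizing $f$ with the payments $P_1,\ldots,P_n$ in dominant strategies and using at most $c$ bits of communication.
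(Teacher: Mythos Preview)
Your proposal is correct and follows essentially the same approach as the paper: both prune away messages that no dominant strategy ever sends and then collapse single-child vertices, noting that removing unused actions cannot destroy dominance (your lifting argument is the explicit justification for the paper's one-line remark that ``removing actions that are dominant strategy for none of the players does not make dominant strategies not dominant''). Your write-up is simply more detailed than the paper's brief sketch.
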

\begin{proof}
	Given a mechanism, we can assume that it has no useless messages because  otherwise we can simplify the protocol by not letting player $i$ send this message. Note that removing actions that are dominant strategy for none of the players does not make dominant strategies not dominant. 
	
	Similarly, if the second condition does not hold for some vertex,  then due to the fact that there are no useless messages, it has only one child. Then, we can delete this vertex and replace it with its child. We continue with this iterative trimming until we reach a vertex that has a player $i$ with a \textquote{meaningful} message. 
	If no such vertex is found until we reach a leaf, we replace the original vertex by this leaf.  
\end{proof}

\subsection{Induced Trees of Mechanisms}
We now introduce the notion of induced trees and prove a simple property of them. 
Consider some vertex $u$ in a minimal dominant strategy mechanism. Let $Z_{j,u}$ denote the set of possible messages that player $j$ can send at node $u$ (assume that $Z_{j,u}=\varnothing$ if player $i$ does not send any message at node $u$). Fix some player $i$ with $Z_{i,u}\neq \varnothing$ and some message profile for the other players $z^u_{-i}=(z_1, \ldots,z_{i-1},z_{i+1},\ldots, z_{n})$ where each $z_j\in  Z_{j,u}$. The \emph{induced tree of player $i$ at vertex $u$ given $z_{-i}^u$} is the tree that is rooted by $u$ and contains all subtrees that are connected to $u$ by an edge $(z_i,z^u_{-i})$ for every possible $z_i\in Z_{i,u}$. I.e., we fix the messages of all other players except player $i$ and think about each message $z_i$ as leading to the subtree that the set of messages $(z_i,z_{-i}^u)$ leads to. For an illustration,  see Figure \ref{my-third-tikz}.

\begin{figure}[H]
	\centering 
	\setlength{\belowcaptionskip}{-1pt}
	\caption{Illustration of the tree rooted at $u$ of a two-player protocol and one of its induced trees.
		The vertex $u$ satisfies that $Z_{1,u}=\{z_1,z_1'\}$ and $Z_{2,u}=\{z_2,z_2'\}$,
		i.e. each player has two possible messages.  The leaf $l$ that is labeled with $(A,p_1,p_2)$ satisfies that the mechanism outputs alternative $A$, player pays $p_1$ and player $2$ pays $p_2$. The same holds for the leaf $l'$ with respect to its outcome $(A',p_1',p_2')$. The induced tree at Figure \ref{subfig-induced} describes how the protocol looks like from the point of view of player $1$ when player $2$ sends the message $z_2$. 
	}
	\label{my-third-tikz}
	\begin{subfigure}[b]{1.0 \linewidth}
		\centering
		\setlength{\belowcaptionskip}{-1pt}
		\caption{An illustration of the full tree protocol at vertex $u$.}
		\label{subfig-full-protocol}
		\begin{tikzpicture}
			\node[shape=circle,draw=black,minimum size=0.75cm] (r) at (1.5,1.5) {\small$u$};
			\node[shape=circle,draw=black,minimum size=0.75cm] (v) at (-3,0) {\small$v$};
			\node[shape=circle,fill=magenta!10,draw=black,minimum size=0.75cm,label=below:{\small $(A,p_1,p_2)$}] (l) at (-4,-1) {\small$l$};
			\node[shape=circle,draw=black,minimum size=0.75cm] (v') at (6,0) {}; 
			\node[shape=circle,draw=black,minimum size=0.75cm] (l') at (7,-1) {};
			
			\node[shape=circle,draw=black,minimum size=0.75cm] (n') at (5,-1) {}; 
			\node[shape=circle,draw=black,minimum size=0.75cm] (n) at (-2,-1) {}; 
			
			\node[shape=circle,draw=black,minimum size=0.75cm] (v-newl) at (0,0) {\small$v'$};
			\node[shape=circle,draw=black,minimum size=0.75cm] (v-newl-kid-left) at (-1,-1) {};
			\node[shape=circle,draw=black,,fill=magenta!50,minimum size=0.75cm,label=below:{\small $(A',p_1',p_2')$}] (v-newl-kid-right) at (1,-1) {\small$l'$};
			
			\node[shape=circle,draw=black,minimum size=0.75cm] (v-newr) at (3,0) {};
			\node[shape=circle,draw=black,minimum size=0.75cm] (v-newr-kid-left) at (4,-1) {};
			\node[shape=circle,draw=black,minimum size=0.75cm] (v-newr-kid-right) at (2,-1) {};
			
			%
			%

			\draw [->] (r) edge  node[sloped, above] {\footnotesize $z_1,z_2$} (v);
			\draw [->] (r) edge  node[sloped, above] {\footnotesize $z_1',z_2'$} (v');
			\draw [->] (r) edge  node[sloped, below] {\footnotesize $z_1',z_2$} (v-newl);
			\draw [->] (r) edge  node[sloped, below] {\footnotesize $z_1,z_2'$} (v-newr);
			\draw [->] (v) edge[dotted]  node[sloped, above] {}  (l);
			\draw [->] (v') edge[dotted]  node[sloped, above] {} (l');
			\draw [->] (v) edge[dotted]  node[sloped, above] {} (n);
			\draw [->] (v') edge[dotted]  node[sloped, above] {} (n');
			\draw [->] (v-newl) edge[dotted] node[sloped,above] {} (v-newl-kid-left);
			\draw [->] (v-newl) edge[dotted] node[sloped,above] {} (v-newl-kid-right);
			\draw [->] (v-newr) edge[dotted] node[sloped,above] {} (v-newr-kid-left);
			\draw [->] (v-newr) edge[dotted] node[sloped,above] {} (v-newr-kid-right);
			
			
		\end{tikzpicture}
		
	\end{subfigure}
	\vspace{2.0em}
	\begin{subfigure}[b]{ 1.0 \linewidth}
		\centering
		\setlength{\belowcaptionskip}{3pt}
		\setlength{\abovecaptionskip}{3pt}
		\caption{The induced tree of player $1$ at vertex $u$ given the message $z_2$ of player $2$. The tree has two subtrees: a left subtree that contains node $v$ and its descendants and a right subtree with node $v'$ and its descendants. 
		}	
		\label{subfig-induced}
		\begin{tikzpicture}
			\node[shape=circle,draw=black,minimum size=0.8cm] (u) at (1.5,1.5) {$u$};
			\node[shape=circle,draw=black,minimum size=0.8cm] (v) at (0,0) {$v$};
			\node[shape=circle,fill=magenta!10,draw=black,minimum size=0.8cm,label=below:{\small $(A,p_1)$}] (l) at (-1,-1) {\small$l$};
			\node[shape=circle,draw=black,minimum size=0.8cm] (v') at (3,0) {$v'$}; 
			\node[shape=circle,draw=black,,fill=magenta!50,minimum size=0.8cm,label=below:{\small $(A',p_1')$}] (l') at (4,-1) {\small$l'$};
			\node[shape=circle,draw=black,minimum size=0.8cm] (n') at (2,-1) {}; 
			\node[shape=circle,draw=black,minimum size=0.8cm] (n) at (1,-1) {};

			\draw [->] (u) edge  node[sloped, above] {\small$z_1$} (v);
			\draw [->] (u) edge  node[sloped, above] {\small $z_1'$} (v');
			\draw [->] (v) edge[dotted]  node[sloped, above] {}  (l);
			\draw [->] (v') edge[dotted]  node[sloped, above] {} (l');
			\draw [->] (v) edge[dotted]  node[sloped, above] {} (n);
			\draw [->] (v') edge[dotted]  node[sloped, above] {} (n');
		\end{tikzpicture}
		
	\end{subfigure}
	
\end{figure}

\begin{lemma}\label{lemma-known-prices}
	Fix some player $i$, vertex $u$, and messages of the other players $z^u_{-i}$ in a minimal dominant strategy mechanism. Consider the induced tree of player $i$ at vertex $u$ given $z^u_{-i}$. 
	If alternative $A\in \mathcal{A}$ appears in two different subtrees, then 
	all the leaves in this induced tree that are labeled with $A$ have the same payment for player $i$.
\end{lemma}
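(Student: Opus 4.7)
The plan is to leverage that the dominant-strategy property must hold against \emph{any} opponent strategy profile, not only those induced by valuations; this freedom will let me ``stitch'' two plays into one by combining opponent behaviors across subtrees.

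I first reduce to the case of two leaves in distinct subtrees. Suppose, seeking a contradiction, that two leaves in the induced tree are both labeled with $A$ yet have distinct payments for player $i$. If both lie in the same subtree $T_k$, the hypothesis supplies a third leaf labeled $A$ in a different subtree $T_{k'}$, and it suffices to compare each of the original two with this third one. So I may assume $l_1 \in T_1$ and $l_2 \in T_2$ with outcomes $(A, p_1)$ and $(A, p_2)$ and $T_1 \neq T_2$.

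By minimality, each of $l_1, l_2$ is reached by some dominant-strategy profile: $(v_i^1, \tilde{v}_{-i}^1)\to l_1$ and $(v_i^2, \tilde{v}_{-i}^2)\to l_2$. In particular, $\mathcal{S}_i(v_i^k)$ sends $z_i^k$ at $u$ for $k=1,2$, and both $\mathcal{S}_{-i}(\tilde{v}_{-i}^1)$ and $\mathcal{S}_{-i}(\tilde{v}_{-i}^2)$ send identical messages along the unique path from the root to $u$, ending with $z_{-i}^u$ at $u$. I construct an opponent strategy profile $\mathcal{S}_{-i}'$ (not required to arise from any valuation) that agrees with $\mathcal{S}_{-i}(\tilde{v}_{-i}^1)$ from the root down to and including every node of the subtree $T_1$, and with $\mathcal{S}_{-i}(\tilde{v}_{-i}^2)$ at every node of $T_2$; since $T_1$ and $T_2$ are disjoint branches of $u$, this stitching is consistent, and the behavior at nodes not involved in either path can be set arbitrarily.

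With $\mathcal{S}_{-i}'$ in place, playing $\mathcal{S}_i(v_i^1)$ produces exactly the message sequence that originally led to $l_1$, hence reaches $l_1$ with outcome $(A, p_1)$; likewise $\mathcal{S}_i(v_i^2)$ against $\mathcal{S}_{-i}'$ reaches $l_2 = (A, p_2)$. Applying the dominant-strategy property for player $i$ with valuation $v_i^2$ facing opponents $\mathcal{S}_{-i}'$ and deviation $\mathcal{S}_i(v_i^1)$ yields $v_i^2(A)-p_2 \geq v_i^2(A)-p_1$, so $p_2 \leq p_1$; the symmetric use with $v_i^1$ gives $p_1 \leq p_2$. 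Hence $p_1 = p_2$. The critical conceptual point, and the potential obstacle, is recognizing that the dominant-strategy definition quantifies over arbitrary opponent strategies (not only over those of the form $\mathcal{S}_{-i}(v_{-i})$ for some valuation profile), which is what permits the stitched profile $\mathcal{S}_{-i}'$ to be invoked even though it may not be realizable by any single valuation; the remaining technical check — consistency of $\mathcal{S}_{-i}'$ at and above $u$ — is immediate, since both $\tilde{v}_{-i}^1$ and $\tilde{v}_{-i}^2$ are forced to induce the same messages there, namely those reaching $u$.
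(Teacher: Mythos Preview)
Your proof is correct and follows essentially the same approach as the paper: both use minimality to realize each leaf by some dominant-strategy play, stitch the opponents' behaviors across the two subtrees into a single (possibly non-valuation-induced) strategy profile, and apply the dominant-strategy inequality in both directions to force $p_1=p_2$. Your explicit reduction from the ``same subtree'' case to the ``different subtrees'' case via a third leaf is a small clarification the paper leaves implicit, but the core argument is identical.
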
 
\begin{proof}
	Let $\ell$ and $\ell'$ be two leaves labeled with
	$(A,p_A)$ and with $(A,p_A')$ that belong in different subtrees, $t$ and $t'$.
	By the minimality of the mechanism, every leaf in the protocol has 
	valuations such that  $(\mathcal{S}_1(v_1),\ldots,\mathcal{S}_n(v_n))$ reach this leaf. Thus,	 
	there exist valuations $v,v'\in V_i,v_{-i},v_{-i}'\in V_{-i}$ such that:
	$$
	(\mathcal{S}_i(v),\mathcal{S}_{-i}(v_{-i}))\to \ell,\quad
	(\mathcal{S}_i(v'),\mathcal{S}_{-i}(v_{-i}'))\to \ell'
	$$
	Observe the following strategy profile $\mathcal{S}_{-i}''$: For every valuation $v_{-i}''$, choose the actions specified by $\mathcal{S}_{-i}(v_{-i})$ until vertex $u$. Afterwards, at the subtree $t$, pick the actions that $\mathcal{S}_{-i}(v_{-i})$ specifies, and at the subtrees $t'$ pick the actions that $\mathcal{S}_{-i}(v_{-i}')$ specifies. Since $s_{-i}$ and $s_{-i}'$ do not diverge until vertex $u$, we have that
	$$
	(\mathcal{S}_i(v),\mathcal{S}_{-i}''(v_{-i}''))\to \ell, \quad (\mathcal{S}_i(v'),\mathcal{S}_{-i}''(v_{-i}''))\to \ell'
	$$
	where the profit of player $i$ with valuation $v$ has to be larger than her profit at $\ell'$, since $\mathcal{S}_i(v)$ is a dominant strategy for her. Thus, $v(A)-p_A\ge v(A)-p_A'$, so $p_A'\ge p_A$. By applying the same argument for the valuation $v'$, we get that $p_A\ge p_A' \implies p_A=p_A'$. 
	Thus, we have that every two leaves labeled with alternative $A$ in the induced tree of player $i$ given $z_{-i}^u$ have the same payment for player $i$, which completes the proof. 
\end{proof}

\section{Hardness of Exact Welfare Maximization}\label{sec-exact-welfare}

We now consider two domains that are generally considered ``easy'' in the sense that the welfare maximizing allocation can be found in time that is polylogarithmic in the representation size of the valuations. For both domains we show that -- in contrast to what is perhaps a common misconception -- incentive compatible mechanisms that maximize the welfare are incentive compatible only in ex-post equilibrium. For dominant strategy mechanisms, we show that the communication complexity is linear in the size of the representation of the valuations. 

Let us first recall how to obtain an ex-post incentive compatible algorithm for combinatorial auctions with two players. Denote the valuations by $v_1$ and $v_2$, and for every $1\leq x \leq m$ let $v'_1(x)=v_1(x)-v_1(x-1)$ and $v'_2(x)=v_2(x)-v_2(x-1)$ be the marginal values. The decreasing marginal values property guarantees that the welfare-maximizing allocation $(o_1,o_2)$ is a point where $v'_1$ and $v'_2$ \textquote{cross} each other, i.e. where $v'_1(o_1)\ge v'_2(o_2+1)$ and $v'_1(o_1+1)\le v_2'(o_2)$ (see also Lemma \ref{onlylemma}). $v_1'$ and $v_2'$ are monotone, so we have to find where two ordered arrays ``cross'' each other. Thus, a simple binary search will find the optimal allocation with $poly(\log m)$ value queries. VCG prices (player $1$ pays $v_2(m)-v(o_2)$, player $2$ pays $v_1(m)-v(o_1)$) guarantee incentive compatibility in an ex-post equilibrium. 

For combinatorial auctions with gross-substitutes bidders the optimal allocation can be found with communication $poly(m,n)$ for valuations that can be represented by $exp(m)$ bits \cite{NS06}.

Despite the fact that in ex-post equilibrium the optimal welfare can be achieved efficiently, if we require dominant strategy equilibrium, we get an exponential blowup in the communication complexity in both domains.

\begin{theorem}\label{newmainthm}
	Fix a normalized  mechanism which implements in dominant strategies a welfare-maximizer for a multi-unit auction where the valuations have decreasing marginal utilities, and the value of a bundle can be represented with $\mathcal{O}(\log(m))$ bits. Then, the communication complexity of the mechanism is 	$\Omega(m\log(m))$.
\end{theorem}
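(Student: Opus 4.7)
I aim to show that any normalized dominant-strategy welfare-maximizing mechanism must effectively reveal player 2's VCG price menu to player 1, which carries $\Omega(m\log m)$ bits of information.

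\textbf{Step 1 (Payments are VCG).} First, I argue that in any such mechanism, player 1's payment when she receives $k$ items must equal $v_2(m)-v_2(m-k)$. For each fixed $v_2$, the set of (allocation, price) pairs that player 1 can achieve by varying $v_1$ forms a ``menu''; by dominant strategy and welfare maximization, her utility-maximizing choice in this menu must coincide with the welfare-optimal allocation for every concave $v_1$. Comparing pairs of adjacent concave $v_1$'s whose welfare-optimal allocations differ by one item forces the marginal prices in the menu to match $v_2$'s marginal values, and the normalization $p_1=0$ for the empty bundle pins down the constant.

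\textbf{Step 2 (Rich family of distinct menus).} Consider the family $\mathcal{V}_2=\{v^b:b\in B\}$ where $B=\{(b_1>b_2>\cdots>b_m)\in\{0,1,\ldots,m^2\}^m\}$ and $v^b(k)=\sum_{i=1}^k b_i$. Each $v^b$ is strictly concave with $O(\log m)$-bit values, and $|B|=\binom{m^2+1}{m}=2^{\Omega(m\log m)}$. From the sequence $\{v^b(m)-v^b(m-k)\}_{k=0}^m$ one recovers $b$ by successive differences, so distinct $b$'s yield distinct menus. For any $v^b\neq v^{b'}$, picking an index $k$ on which the menus disagree and taking $v_1$ with constant marginal $c\in(b_{m-k+1},b_{m-k})$ forces the welfare-optimal allocation under $v^b$ to be exactly $k$ items for player 1; then either the allocation under $v^{b'}$ differs (so the transcripts differ in allocation), or it also equals $k$ and then $v^b(m)-v^b(m-k)\neq v^{b'}(m)-v^{b'}(m-k)$ forces the transcripts to differ in payment. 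Hence the protocol must distinguish every pair in $\mathcal{V}_2$.

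\textbf{Step 3 (Counting, the main obstacle).} Converting ``$|B|=2^{\Omega(m\log m)}$ distinguishable valuations'' into $c=\Omega(m\log m)$ is the crux. My plan is to apply Lemma \ref{lemma-known-prices} iteratively: at every node where player 1 is about to act, the menu player 1 faces in the induced tree is pinned down by the messages player 2 has sent so far. As player 2 continues to communicate, the committed menus refine; since all $|B|$ target menus must be distinguishable by the end of the protocol, player 2's total communication must carry at least $\log|B|=\Omega(m\log m)$ bits of information, yielding $c=\Omega(m\log m)$. The subtlety here is that for general adaptive protocols the number of distinct player-2 strategies may exceed $2^c$, so one cannot simply bound distinct strategies by transcripts. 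The dominant-strategy structure, via Lemma \ref{lemma-known-prices}, is what allows identifying each player-2 behavior with a sequence of committed menus whose total ``menu-entropy'' is in fact controlled by the bits player 2 transmits; making this identification rigorous is the heart of the argument.
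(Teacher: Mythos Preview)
Your Steps 1--2 set up the right intuition, but Step 3 is where the actual lower bound lives, and your plan there does not work as stated. The observation that every pair $v^b\neq v^{b'}$ is separated by \emph{some} $v_1$ does not by itself bound the number of leaves: you would need a single $v_1$ (or a small family) that simultaneously separates exponentially many $v_2$'s, and you do not construct one. Your proposed fix---tracking ``menu-entropy'' via Lemma~\ref{lemma-known-prices} as player~2 speaks---is too vague to be a proof; Lemma~\ref{lemma-known-prices} constrains prices in an induced tree at a single node, but it does not give you an information-theoretic inequality bounding the number of distinct menus by the bits player~2 sends.

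The paper's argument is structurally different and avoids this counting problem entirely. It does not try to meter player~2's communication over the whole protocol. Instead it shows that \emph{one player's first message} already has length $\Omega(m\log m)$. The key device is a pair of ``semi-decisive'' valuations for Bob that induce the \emph{same} optimal allocation $(s,m-s)$ against every Alice-valuation in a large class, but differ in an allocation-irrelevant coordinate (the marginal of the $m$-th item). Using dominant strategies for Alice, one shows Bob must send different first-round messages for these two valuations (Claim~\ref{claimbobsayspayment}). Then, flipping roles, the existence of these two Bob-messages forces---again via dominant strategies, now for Bob, through Lemma~\ref{lemma-known-prices}---that Alice's first message already determines $P_B(m-s,v_A)$ for every $s$ (Claim~\ref{muamainclaim}). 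Since these payments recover $v_A$ (Corollary~\ref{cor-approx-payments}), Alice's first message encodes a valuation from a set of size $m^{\Omega(m)}$.

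A smaller issue: your Step~1 asserts payments are \emph{exactly} VCG. They need not be; over a discrete domain one only gets $P_B(x,v_A)\in[v_A(m)-v_A(m-x)\pm O(1/m)]$, and the paper establishes this by introducing an auxiliary family $V^P$ of ``payment-probing'' valuations (Lemma~\ref{lemma-approx-uniqueness}). Approximate equality suffices for reconstruction because the target values are integers, but you should not claim exact VCG without justification.
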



\begin{theorem}\label{gsmainthm}
	Fix a normalized mechanism which implements in dominant strategies a welfare-maximizer for a combinatorial auction with gross substitutes valuations, where the value of each bundle can be represented with $poly(m)$ bits. 
Then, the communication complexity of the mechanism is exponential in $m$.	
\end{theorem}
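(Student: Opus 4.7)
The plan is to adapt the proof strategy of Theorem~\ref{newmainthm} (multi-unit) to the exponentially larger domain of gross substitutes valuations. The key idea is to design a family of simple gross substitutes valuations $v_2^S$ for player~2 (one for each subset $S \subseteq M$) that effectively probe player~1's valuation across every bundle, and then to use the dominant-strategy structure (Lemmas~\ref{minimal-lemma} and~\ref{lemma-known-prices}) to show that player~1 must essentially commit to her entire valuation in her first message.

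Concretely, for each $S\subseteq M$, let $v_2^S$ be the additive valuation assigning value $K$ to each item in $M\setminus S$ and zero to each item in $S$, where $K$ exceeds the maximum value of any $v_1$ we consider; additive valuations are gross substitutes. For any such $v_1$, the unique welfare-maximizing allocation of $(v_1,v_2^S)$ is $(S,M\setminus S)$, with VCG payments $p_1=0$ and $p_2=v_1(M)-v_1(S)$. Let $\mathcal{F}_1$ be a family of gross substitutes valuations for player~1 that all share a common value $V=v_1(M)$ (for example, matroid rank functions of a fixed rank), chosen so that $|\mathcal{F}_1|=2^{2^{\Omega(m)}}$. For any two distinct $v_1^a,v_1^b\in\mathcal{F}_1$, there exists $S^*$ with $v_1^a(S^*)\neq v_1^b(S^*)$; since they share $v_1(M)=V$, the payment $p_2$ differs between the runs $(v_1^a,v_2^{S^*})$ and $(v_1^b,v_2^{S^*})$, and hence the protocol must reach distinct leaves on these two inputs.

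To conclude the $2^{\Omega(m)}$ lower bound, we argue that distinct $v_1\in\mathcal{F}_1$ must already trigger distinct first messages at the root $r$. By Lemma~\ref{minimal-lemma}, we may assume the mechanism is minimal. Suppose for contradiction that $v_1^a\neq v_1^b\in\mathcal{F}_1$ send the same first message $z_1$. By iteratively applying Lemma~\ref{lemma-known-prices} to the induced trees traversed for the various probing valuations $v_2^S$, we deduce that the payment $v_1(M)-v_1(S)$ must be determined by $z_1$ and $S$ alone, hence $v_1^a(S)=v_1^b(S)$ for every $S\subseteq M$, contradicting $v_1^a\neq v_1^b$. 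Thus $|Z_{1,r}|\geq|\mathcal{F}_1|$, so the first message alone encodes $\log|\mathcal{F}_1|=2^{\Omega(m)}$ bits. The main technical difficulty lies precisely in this last step: Lemma~\ref{lemma-known-prices} only equates payments at leaves of a single induced tree labeled with the same allocation, and extending this to a global claim that $z_1$ pins down $v_1(M)-v_1(S)$ for every $S$ requires combining the price-constancy constraints on both players' induced trees along parallel paths through the protocol. This is the analog of the corresponding step in the proof of Theorem~\ref{newmainthm}, and the exponential blow-up compared with the $\Omega(m\log m)$ multi-unit bound reflects the exponentially larger description complexity of gross substitutes valuations.
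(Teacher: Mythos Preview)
Your high-level strategy matches the paper's: use a family of ``decisive'' valuations for player~2, indexed by subsets $S$, to force player~1 to commit to $v_1(S)$ for all $S$ in her first message. But the step you flag as ``the main technical difficulty'' is in fact the entire proof, and your sketch does not supply it. There are three concrete gaps.

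\textbf{First, you assume VCG payments.} The theorem only fixes the social choice function, not the payments; any $P_B(T,v_1)=v_1(M)-v_1(M\setminus T)+h(v_1)$ is incentive compatible. Normalization pins down $h$ only if Bob can actually receive the empty bundle in the domain. The paper handles this with an auxiliary family $V^P$ of additive valuations designed to sandwich each marginal of $P_B$, proving $P_B(T,v_1)\in[v_1(M)-v_1(M\setminus T)\pm \tfrac{1}{8m}]$ (Lemma~\ref{lemma-approx-uniqueness-gs}). Without this you cannot conclude that $P_B(M\setminus S,v_1^a)\neq P_B(M\setminus S,v_1^b)$ when $v_1^a(S)\neq v_1^b(S)$.

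\textbf{Second, and more seriously, your invocation of Lemma~\ref{lemma-known-prices} does not go through.} That lemma equates Bob's payment across leaves labeled with the same allocation only when those leaves lie in \emph{different} subtrees of Bob's induced tree at the root, i.e., when Bob sends different first messages. With a single probing valuation $v_2^S$, both $(v_1^a,v_2^S)$ and $(v_1^b,v_2^S)$ land in the \emph{same} subtree (same $z_2$), and the lemma says nothing. The paper's fix is to give Bob two versions of each decisive valuation, differing only in a noise parameter $\eta\in\{0,\tfrac12\}$ on a designated item $a$, and then to prove that Bob must send different first messages for the two $\eta$-values (Claim~\ref{gs-claimbobsayspayment}). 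This is itself nontrivial: it is obtained by first using minimality to find two valuations of \emph{Alice} with different first messages but identical optimal allocation, then applying the approximate-VCG lemma to show the two $\eta$-values induce different prices for Alice, and finally invoking the symmetric version of Lemma~\ref{lemma-known-prices} on Alice's side to force Bob to separate them (Lemma~\ref{gs-unite-big-small-lemma}). Only after this bootstrap does the allocation $(S\cup\{a\},M\setminus(S\cup\{a\}))$ appear in two subtrees of Bob's induced tree, and only then does Lemma~\ref{lemma-known-prices} pin Bob's payment to $P_B(M\setminus(S\cup\{a\}),v_1)$ uniformly over all $v_1$ sending $z_1$.

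\textbf{Third}, with $v_2^S$ additive and zero on $S$, the welfare maximizer need not be unique when $v_1$ is a matroid rank function (dependent elements have zero marginal). The paper avoids ties by adding a strictly monotone term $\gamma\cdot|S|$ to the non-decisive valuations and by reserving special items $a,b$ whose large value forces them to their respective owners.
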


Proofs of these theorems can be found in Appendices \ref{sec-mua-impossibility} and \ref{sec-gs-impossibility}. Both proofs share a similar structure. We now give some intuition for the proof in the context of multi-unit auctions with decreasing marginal values.


Consider the following scenario. We restrict ourselves to some (large) set of valuations. Suppose that Bob is decisive: for (almost) every allocation $(s,m-s)$, there exist two valuations of Bob $v_b^1,v_b^2$, such that for each valuation $v_a$ of Alice that is in this set, the optimal allocation in the instances $(v_a,v_b^1)$ and $(v_a,v_b^2)$ is $(s,m-s)$. Furthermore, assume that the dominant strategy of Bob dictates a different message when his valuation is $v_b^1$ than when it is $v_b^2$.

Let $v_a^1,v_a^2$ be two valuations of Alice that are in the set. Since we are implementing a welfare maximizer, Bob must get $m-s$ items for every valuation  $v_a^1,v_a^2$ of Alice. For simplicity, we assume for now (but not in the proof) that we are using VCG payments, so Bob's payment might be different: it can be either $v_a^1(m)-v_a^1(s)$ or $v_a^2(m)-v_a^2(s)$. Thus, if Bob sends a different message for $v_b^1$ than that of $v_b^2$ and Alice sends the \emph{same} message for both $v_a^1,v_a^2$, Bob does not have a dominant strategy, since Alice can ``force'' him to choose one such message by guaranteeing that his payment will be higher otherwise.

To avoid this, Alice has to ``commit'' on her value for $s$ items. That is, if $v_a^1$ and $v_a^2$ have a different value for $s$ items, then the message that the dominant strategy of Alice dictates cannot be the same for both of them. In fact, we show that this implies, roughly speaking, that Alice's first message must be so informative that we can fully reconstruct Alice's valuation from her first message. Thus, her first message is very big, and the proof is complete. 
The main challenge of the proof is to construct a big enough set of valuations that satisfies all those properties. For an illustration, see Figure \ref{exact-welfare-tikz}.

\begin{figure} [H] 
	\centering
	\caption{Let $v_b^1,v_b^2$ be two valuations of Bob that dictate different messages $z_B^1,z_B^2$ at the root vertex $r$.  $v_a^1,v_a^2$ are two valuations of Alice such that she sends the message $z_A$ for both of them at the first round. Assume that the unique optimal allocation for the valuations $(v_a^1,v_b^1)$, $(v_a^1,v_b^2)$, $(v_a^2,v_b^2)$ $(v_a^2,v_b^2)$ is $(s,m-s)$.
		\newline 
		Below we have the induced tree of Bob at vertex $r$ given the message $z_A$. The leaves $l_{1,1},l_{1,2},l_{2,1},l_{2,2}$ respectively are the leaves that the protocol ultimately reaches given the valuations $(v_a^1,v_b^1)$, $(v_a^1,v_b^2)$, $(v_a^2,v_b^1)$, $(v_a^2,v_b^2)$ according to the dominant strategies of Alice and Bob. By assumption, they are all labeled with the welfare maximizing allocation and with VCG prices. 
		\newline Note that if $v_a^1(m)-v_a^1(s) >  v_a^2(m)-v_a^2(s)$,  sending $z_B^1$ is no longer a dominant strategy for Bob given the valuation $v_b^1$ (he might get better price for $m-s$ items by sending $z_B^2$). Thus, to avoid this situation, Alice has to commit for her value of $s$ items.}
	\begin{tikzpicture}
		\node[shape=circle,draw=black,minimum size=1.05cm] (r) at (1.5,1.5) {$r$};
		\node[shape=circle,draw=black,minimum size=1.05cm] (v) at (-3,-0.5) {};
		\node[shape=circle,fill=yellow!30,draw=black,minimum size=1cm,label=below:{\footnotesize $(m-s,v_a^1(m)-v_a^1(s))$}] (l) at (-5,-2) {$l_{1,1}$};
		\node[shape=circle,draw=black,minimum size=1.05cm] (v') at (6,-0.5) {}; 
		\node[shape=circle,fill=orange!40,draw=black,minimum size=1.05cm,label=below:{\footnotesize $(m-s,v_a^2(m)-v_a^2(s))$}] (l') at (8,-2) {$l_{2,2}$};
		\node[shape=circle,fill=yellow!30,draw=black,minimum size=1cm,label=below:{\footnotesize $(m-s,v_a^1(m)-v_a^1(s))$}] (n') at (4,-2) {$l_{1,1}$};
		
		\node[shape=circle,fill=orange!40,draw=black,minimum size=1.05cm,,label=below:{\footnotesize $(m-s,v_a^2(m)-v_a^2(s))$}] (n) at (-1,-2) {$l_{1,2}$};
		
		\node[] (dots-r) at (3.2,-0.5) {$\ldots \quad \ldots \quad \ldots$};
		\node[] (dots-l) at (-0.2,-0.5) {$\ldots \quad \ldots \quad \ldots $};
		
		%
		%

		\draw [->] (r) edge  node[sloped, above] {$z_B^1$} (v);
		\draw [->] (r) edge  node[sloped, above] {$z_B^2$} (v');
		\draw [->] (v) edge[dotted]  node[sloped, above] {}  (l);
		\draw [->] (v') edge[dotted]  node[sloped, above] {} (l');
		\draw [->] (v) edge[dotted]  node[sloped, above] {} (n);
		\draw [->] (v') edge[dotted]  node[sloped, above] {} (n');
	\end{tikzpicture}
	
	\label{exact-welfare-tikz}
\end{figure}

To complement this hardness result, in Subsection \ref{subsec-fptas} we show that for multi-unit auctions with decreasing marginal values, arbitrarily good approximations are possible in dominant strategies (a ``communication FPTAS'').

\begin{theorem}\label{thm-mua-fptas}
For every $\eps>0$, there is a dominant strategy algorithm for multi-unit auctions with decreasing marginal values that makes $poly(\frac 1 \eps,n)$ value queries and provides an allocation with social welfare at least $(1-\eps)\cdot OPT$, where $OPT$ is the value of the optimal social welfare.
\end{theorem}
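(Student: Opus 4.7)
My plan is to construct a maximal-in-range (MIR) mechanism paired with Vickrey--Clarke--Groves (VCG) payments, which is automatically normalized and dominant-strategy incentive compatible. I would let the range $R$ consist of every allocation $(x_1, \ldots, x_n)$ with $\sum_i x_i \leq m$ and each $x_i$ lying in a fixed grid $G \subseteq \{0, 1, \ldots, m\}$; a natural first candidate is the geometric grid $G = \{0\} \cup (\{\lceil (1+\eps)^j \rceil : j \geq 0\} \cap [1,m])$.

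For the approximation guarantee, let $(o_1, \ldots, o_n)$ be an optimum and round each $o_i$ down to the largest $o'_i \in G$ with $o'_i \leq o_i$. Then $(o'_1, \ldots, o'_n) \in R$ and $o'_i \geq o_i/(1+\eps)$, which in particular gives $(o_i - o'_i)/(o'_i + 1) \leq \eps$. Writing $\mu_i(x) = v_i(x) - v_i(x-1)$ for the marginal, the decreasing-marginals hypothesis yields $\mu_i(o'_i+1) \leq v_i(o'_i+1)/(o'_i+1) \leq v_i(o_i)/(o'_i+1)$, so
\[
v_i(o_i) - v_i(o'_i) \leq (o_i - o'_i)\,\mu_i(o'_i + 1) \leq \eps\, v_i(o_i).
\]
Summing over $i$, both the rounded allocation and the true range maximizer achieve welfare at least $(1 - \eps)\cdot OPT$.

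The range maximizer can be computed by a dynamic program whose state records the bidders processed so far and the remaining capacity, transitioning over the at most $|G|$ grid values each bidder may take. The only value queries are the $O(n|G|)$ evaluations $v_i(g)$; running the DP $n$ more times with each bidder excluded produces the VCG payments, for a total of $O(n^2 |G|)$ value queries. Combined with the approximation analysis, any grid $G$ of size $\operatorname{poly}(n,1/\eps)$ immediately yields the theorem.

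The hard part will be realizing $|G| = \operatorname{poly}(n, 1/\eps)$, independent of $m$, in order to match the query bound stated in the theorem (the geometric grid above already gives $|G| = O(\log(m)/\eps)$). I would attempt a two-tier construction: list the small integers $\{0, 1, \ldots, \lceil 1/\eps \rceil\}$ explicitly, and append a coarse tail of $O(\log(n/\eps)/\eps)$ ``value anchors'' obtained by exploiting the uniform bound $v_i(m) \leq OPT$ (so that any bidder $i$ with $v_i(o_i) \leq \eps\cdot OPT/n$ can be rounded to $0$ without harm). A cleaner alternative I would also explore is to let the effective grid be defined implicitly through an outer doubling search whose depth is $O(\log(n/\eps))$, arguing that only $\operatorname{poly}(n/\eps)$ grid levels are ever consulted by the DP across all $n{+}1$ VCG computations. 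Preserving full valuation-independence of the range (as MIR requires) while simultaneously making the query cost $m$-independent is the central technical hurdle.
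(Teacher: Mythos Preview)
Your framework (MIR over a grid, plus VCG) is exactly right, and your per-bidder approximation analysis via decreasing marginals is valid. The gap is precisely where you flag it: a geometric grid on item counts has size $\Theta(\log m/\eps)$, so it cannot yield $\operatorname{poly}(n,1/\eps)$ queries independent of $m$, and neither of your proposed workarounds closes this cleanly. Any valuation-independent grid on $[1,m]$ with $(1+\eps)$-multiplicative spacing needs $\Omega(\log m/\eps)$ points, while letting the grid depend on observed values (your ``value anchors'') breaks the valuation-independence of the range that MIR requires.

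The paper's fix is to switch from a geometric to an \emph{arithmetic} grid: take all multiples of $q=\lfloor \eps m/n^{2}\rfloor$ (together with a single residual bundle of size $m-\lfloor m/q\rfloor q$), so the grid has only $O(n^{2}/\eps)$ points regardless of $m$. The price is that you can no longer round every bidder multiplicatively. Instead, round each $o_{i}$ for $i\ge 2$ \emph{up} to the nearest multiple of $q$ and let bidder~$1$ --- chosen to be any bidder with $o_{1}\ge m/n$ --- absorb the at most $(n-1)q\le \eps m/n$ lost items. Monotonicity protects bidders $i\ge 2$; bidder~$1$ loses at most an $\eps$-fraction of her $o_{1}\ge m/n$ items, so decreasing marginals give $v_{1}(o_{1}')\ge(1-\eps)v_{1}(o_{1})$. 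This ``concentrate all the rounding loss on one large bidder'' trick is the single idea you are missing, and it replaces your entire second-tier construction.
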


In contrast, the only known upper bound on the approximation ratio of efficient dominant strategy mechanisms for combinatorial auctions with gross substitutes valuations is $\mathcal{O}(\sqrt m)$ \cite{DNS05}. Determining the approximation ratio possible for this class remains an open problem.

\section{Inapproximability of Mechanisms for General Valuations}\label{sec-general-outline}

In this section we prove that no deterministic dominant strategy mechanism with polynomial communication for general valuations achieves an approximation ratio better than $m^{1-\eps}$. In contrast, there is a \emph{randomized} dominant strategy mechanism that achieves an approximation ratio of $O(\sqrt m)$ \cite{DNS06}. Note that an approximation ratio of $O(\sqrt m)$ is the best  possible with polynomial communication even when ignoring incentives \cite{NS06}. We refer the reader to the full version for the exact statement. 


\begin{theorem} \label{theorem-general-impossibility}
	Fix some $\epsilon>0$. Let $\mathcal{M}$ be a deterministic, normalized, no negative transfers dominant strategy mechanism for combinatorial auctions with $n=\Omega(m^{2-\epsilon})$ bidders with general valuations, where the value of each bundle can be described with $poly(m)$ bits. If the approximation ratio of $\mathcal M$ is better than $m^{1-4\eps}$, then the communication complexity of $\mathcal{M}$ is at least
	 $poly(\frac {2^{m^{\frac {\eps^2} 2}}} n)$ bits.  
\end{theorem}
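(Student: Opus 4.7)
The plan is to follow the two-step strategy announced in the introduction: first, reduce any deterministic polynomial-communication dominant-strategy mechanism with approximation ratio $c$ to a simultaneous algorithm with approximation ratio close to $c$; second, apply the exponential lower bound for simultaneous algorithms on combinatorial auctions with general valuations stated in Section \ref{sec-simultanous-general}. Since the simultaneous theorem rules out an $m^{1-\eps}$ approximation with sub-exponential communication, the reduction combined with the $m^{1-4\eps}$ hypothesis produces the desired contradiction, with the slack between $m^{1-\eps}$ and $m^{1-4\eps}$ absorbing the loss incurred by the reduction.

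For the reduction itself, I would use the characterization of dominant-strategy mechanisms as \emph{semi-simultaneous} (Section \ref{sec-characterization}) together with the price-invariance Lemma \ref{lemma-known-prices}. The semi-simultaneous structure says that for each player $i$, her first message $z_i$ either commits the mechanism to a specific allocation/payment pair for $i$, or it is a designated ``postpone'' message. I would therefore define a simultaneous algorithm in which every player broadcasts $z_i$ in a single round. From the tuple $(z_1,\ldots,z_n)$ one can recover the allocation and payment that the dominant-strategy mechanism would assign to any non-postponing player, because Lemma \ref{lemma-known-prices} forces the price associated to each committed bundle in player $i$'s induced tree to be a function of $z_{-i}$ only, and the semi-simultaneous property pins down the bundle itself.

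The main obstacle, and the reason for the degraded exponent $1-4\eps$, is controlling the welfare lost to the set $P$ of players who postpone. Here the assumption $n = \Omega(m^{2-\eps})$ is used via a pigeonhole/averaging argument over the optimal allocation: since $|P|\le n$ and the total welfare is spread across many bidders in the canonical hard instances, the welfare contributed by postponers can be bounded by an $m^{O(\eps)}$ factor of the total, provided we design the hard distribution so that no single bidder dominates the optimum. I would therefore select the hard instances inherited from the simultaneous lower bound to already have this spread property, so that extracting only the committed part of the allocation still yields a near-$m^{1-\eps}$ approximation whenever the dominant-strategy mechanism was better than $m^{1-4\eps}$.

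Finally, plugging the resulting simultaneous algorithm into the theorem of Section \ref{sec-simultanous-general} yields an $\exp(m^{\eps^2/2})$ lower bound on total simultaneous communication; dividing by the per-player cost and the $n$ players reproduces the stated bound $\mathrm{poly}\!\bigl(2^{m^{\eps^2/2}}/n\bigr)$. The delicate steps to get right are (a) verifying that the induced-tree machinery of Section \ref{preliminaries} suffices to reconstruct per-player outcomes from first messages alone, (b) handling postponers tightly enough to keep the approximation loss within $m^{3\eps}$, and (c) ensuring the hard instances from the simultaneous theorem remain valid under this ``drop the postponers'' projection, which I expect to be the main technical burden of the actual proof.
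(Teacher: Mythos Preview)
Your proposal has two genuine gaps, and together they mean the argument as sketched does not go through.

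First, you misread what the semi-simultaneous characterization delivers. In Theorem~\ref{characterizationtheorem} the special subtree $t^\ast$ is defined \emph{per choice of $z^u_{-i}$}: the same message $z_i$ can be ``special'' for one profile $z_{-i}$ and non-special for another. So ``postpone'' is not a property of $z_i$ alone, and your dichotomy ``either commits or postpones'' is not well-defined from player $i$'s side. Moreover, even when $z_i$ lands outside $t^\ast$, the characterization (and Corollary~\ref{cor-commit-profit}) only says that every bundle labeling a leaf there is \emph{decisive at its minimal price}; it guarantees a profit floor, not a specific bundle. Different leaves in the non-special subtree can carry different bundles, and which one is reached still depends on later rounds. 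So the claim that ``from $(z_1,\ldots,z_n)$ one can recover the allocation the mechanism would assign to any non-postponing player'' is not supported by Lemma~\ref{lemma-known-prices} or the semi-simultaneous structure.

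Second, the postponer-welfare bound is unjustified. In the hard distribution of Section~\ref{sec-simultanous-general}, essentially all of the optimal welfare comes from the $\ell=m^{1-\eps}-1$ special bidders. Nothing prevents the mechanism from making exactly these bidders send the special message; your pigeonhole over $|P|\le n$ gives nothing, since $n$ is much larger than $\ell$. The paper does not control postponers by averaging. Instead it introduces an exponential ladder of weight scales $\alpha\in\{1,2,\ldots,2^{2^m}\}$ and noises, locates an \emph{initial vertex} $x$ where messages first separate two scales, and shows (Lemma~\ref{lemma-first-round2}, Lemma~\ref{lemma-random2}) that for a random base valuation there exist adjacent scales $\alpha,2\alpha$ with different messages at $x$. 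The point is that both the $\alpha$-subtree and the $2\alpha$-subtree must contain a leaf where player $i$ gets a valuable set (by the approximation guarantee applied to appropriately weighted instances), and this two-subtree structure is what forces the dominant strategy of $\alpha(1+\eta_i)u_i$ to \emph{guarantee} a valuable set against any continuation consistent with $z_{-i}$ (the argument around Figure~\ref{section-g-figure}). The ``diluted version'' device (Claim~\ref{claim-diluted2}) then lets you assume the instance fed to $\mathcal M$ has all its welfare concentrated on $\alpha$-weight special bidders. None of these ingredients---the weight ladder, the critical-$\alpha$ argument, the two-subtree guarantee, or dilution---appear in your outline, and they are what actually convert the dominant-strategy mechanism into a simultaneous algorithm on the hard distribution.
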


The proof sketch is as follows.

\subsubsection*{Step I: A Lower Bound on Simultaneous Algorithms (Section \ref{sec-simultanous-general})}

In general, we prove Theorem \ref{theorem-general-impossibility} by showing that dominant strategy mechanisms for combinatorial auctions with general valuations are as powerful as simultaneous (non-incentive compatible) algorithms. Recall that perhaps the ``easiest'' way to obtain a dominant strategy mechanism is by designing an ex-post mechanism and making it ``simultaneous''. Indeed, almost all deterministic dominant strategy mechanisms in the literature are simultaneous. Thus, the first step is done in Section \ref{sec-simultanous-general}: a proof that no simultaneous algorithm can achieve an approximation ratio better than $m^{ 1 -\eps}$ with polynomial communication. 

\subsubsection*{Step II: Efficient Dominant Strategy Mechanisms Imply Efficient Simultaneous Algorithms (Section \ref{sec-general2})}
Step I does not suffice because 
not all dominant strategy mechanisms are simultaneous. Consider the following example of a combinatorial auction with two players with additive valuations $v_A, v_B$, where all values are integers in $\{1,\ldots,{{\frac m 2}\choose 2}\}$. 
Split the items arbitrarily to two equal sets $A$ and $B$. Alice can win only items from $A$, and Bob can win only items from $B$. We associate each possible value of Alice   for some arbitrary item $b\in B$, $v_A(\{b\})$, with a distinct pair of items in $B$, and similarly we associate Bob's value  for some item $a\in A$, $v_B(\{a\})$, with a distinct pair of items in $A$. 
According to the social choice function,  Alice wins her more valuable item among the pair that $v_B(\{a\})$ points to and Bob wins his more valuable item among the pair that $v_A(\{b\})$ points to.

A protocol with $\mathcal{O}(\log m)$ bits where they simultaneously send $v_A(\{b\})$ and $v_B(\{a\})$ in the first round and then each reports the preferred item among the possible two items is clearly dominant-strategy incentive compatible. However, it is not hard to show that any simultaneous mechanism for this auction requires $\Omega(m\cdot \log m)$ bits.
Thus, this instance exhibits a separation between dominant strategy and simultaneous implementations. 

However, we will show that if a mechanism provides an approximation ratio better than $m^{1-4\epsilon}$ to the welfare for general valuations, it can be used to construct a simultaneous algorithm with approximation ratio $m^{1-\eps}$.

\section{Simultaneous Algorithms for Combinatorial Auctions}\label{sec-simultanous-general}
\label{sec:simul}

In this section we consider simultaneous combinatorial auctions. The hardness results that we obtain will be used later in Section \ref{sec-general2} to prove impossibility result for dominant strategy mechanisms for combinatorial auctions with general valuations.

The setup is as follows: as usual, there is a set of items $M, |M|=m$, and $n$ bidders with valuation functions $v_1,\ldots,v_n: 2^M \rightarrow \RR_+$. Each of them simultaneously sends a message $s_i$ to a central authority; the messages all together are bounded by bit-length $L$. The algorithm, given the messages, produces an allocation $\cA(s_1,\ldots,s_n) = (A_1,A_2,\ldots,A_n)$. The goal is the maximize the social welfare $\sum_{i=1}^{n} v_i(A_i)$. We impose no computational constraints on the bidders or the central authority. 

\begin{theorem}\label{thm-simultaneous-general}
For $m$ items and $n=\Omega(m^{2-\epsilon})$ bidders with general monotone (binary) functions as valuations, there is no randomized simultaneous mechanism with messages of size at most $\frac {2^{m^{\frac {\epsilon^2} 2}}} n$ which achieves an approximation ratio better than $m^{1-\epsilon}$, for any fixed $\epsilon>0$. 
\end{theorem}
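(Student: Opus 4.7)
The plan is to apply Yao's minimax principle and exhibit a product distribution $\mathcal D$ on valuation profiles for which $\mathrm{OPT}$ is $\Omega(m)$ with high probability but every deterministic simultaneous protocol with per-player messages of length $L\le 2^{m^{\epsilon^2/2}}/n$ achieves expected welfare $O(m^{\epsilon})$; by averaging this transfers to randomized protocols. For the distribution I would let each player $i$ independently draw a random monotone binary valuation of the form $v_i(S)=f_i(S\cap X_i)$, where $X_i$ is a uniformly random $k$-subset of $M$ with $k=\Theta(m^{\epsilon^2/2})$ and $f_i$ is a uniformly random non-constant monotone Boolean function on $2^{X_i}$. The resulting family $\mathcal V$ has $\log|\mathcal V|\gg L$, so no single message can determine $v_i$; a simpler variant restricting to single-minded valuations $v_i(S)=\mathbf{1}[T_i\subseteq S]$ with $T_i$ random may also suffice for the right choice of parameters.

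For the OPT lower bound, with $n=\Omega(m^{2-\epsilon})$ independent random $k$-supports a standard greedy packing produces $\Omega(m/k)$ players whose supports $X_i$ are pairwise disjoint; each such player can be fully satisfied by $A_{i_j}=X_{i_j}$ (since $f_{i_j}(X_{i_j})=1$), giving $\mathrm{OPT}\ge \Omega(m/k)$ almost surely, and a more careful packing that allocates minimal satisfying subsets of each $X_i$ boosts this to $\mathrm{OPT}\ge \Omega(m)$.

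For the upper bound on ALG I fix a deterministic simultaneous protocol with message functions $\pi_i:\mathcal V\to\{0,1\}^L$ and allocation rule $A(\sigma)=(A_1(\sigma),\ldots,A_n(\sigma))$ satisfying $\sum_i|A_i(\sigma)|\le m$. Conditioned on $\sigma=(\sigma_1,\ldots,\sigma_n)$, player $i$'s valuation is uniform on the fibre $C_i(\sigma_i)=\pi_i^{-1}(\sigma_i)$ and her expected contribution to the welfare is $p_i(\sigma)=\Pr_{v\sim C_i(\sigma_i)}[v(A_i(\sigma))=1]$. The key combinatorial lemma I aim to prove is a fibre-regularity statement: for all but an exponentially small fraction of profiles under $\mathcal D$, and for every player $i$, one has $p_i(\sigma)\le (|A_i(\sigma)|/m)^{\Omega(k)}$. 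Summing and using $\sum_i|A_i|\le m$ together with convexity of $x\mapsto x^k$ bounds the expected welfare by $O(m^{\epsilon})$, yielding the claimed $m^{1-\epsilon}$ approximation lower bound.

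The main obstacle is precisely this fibre-regularity lemma. The naive estimate $p_i(\sigma)\le 1$ is useless when $L$ is as large as $2^{m^{\epsilon^2/2}}/n$, because an adversarial fibre could a priori be supported entirely on valuations satisfied by a single small bundle. The saving grace is that each $\pi_i$ is fixed in advance and depends only on player $i$'s own valuation, so I expect to control the expected number of ``bad'' fibres by a counting / second-moment argument over $\mathcal V$ and the $\le 2^L$ possible messages, followed by a union bound over the $n$ players and the $2^{nL}$ message profiles. The exponent $\epsilon^2/2$ in the message-size bound is exactly calibrated to leave enough slack between $\log|\mathcal V|\approx 2^{m^{\epsilon^2/2}}$ and $L\le 2^{m^{\epsilon^2/2}}/n$ for this union bound to go through while still producing the required $m^{1-\epsilon}$ gap.
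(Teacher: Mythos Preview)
Your high-level plan (Yao's principle, a hard distribution, bound $\mathrm{OPT}$ from below and the protocol from above) matches the paper, but the concrete instantiation has two genuine gaps that make the argument fail.

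First, the fibre-regularity lemma is false as stated. With $k=\Theta(m^{\epsilon^2/2})$, encoding the support $X_i$ costs only $O(k\log m)$ bits, far below $L$. A protocol in which each player sends $X_i$ and the allocator assigns $A_i=X_i$ to a maximal disjoint packing of supports achieves $p_i(\sigma)=1$ for those players (since $f_i(X_i)=1$ for any non-constant monotone $f_i$), whereas your claimed bound gives $p_i\le (k/m)^{\Omega(k)}$, which is exponentially small. So the lemma cannot hold even for typical fibres.

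Second, the claim $\mathrm{OPT}\ge\Omega(m)$ is wrong for this distribution. A uniformly random monotone Boolean function on $k$ variables has, with high probability, no $1$-inputs of size below $k/2-O(\sqrt k)$ (this is the Korshunov-type concentration underlying Dedekind's problem), so satisfying a player requires giving her $\Omega(k)$ items from $X_i$. Hence $\mathrm{OPT}\le O(m/k)=O(m^{1-\epsilon^2/2})$. The single-minded variant $v_i(S)=\mathbf 1[T_i\subseteq S]$ with $|T_i|=k$ has the same issue: a player can send $T_i$ in $O(k\log m)$ bits. Either way, both $\mathrm{OPT}$ and the best simultaneous protocol achieve $\Theta(m/k)$ on your distribution, so there is no gap at all.

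The paper's construction is structurally different and avoids both problems. Bidders are partitioned into $\ell\approx m^{1-\epsilon}$ groups of $m$ bidders; all bidders in group $G_j$ share the same base set $A_j\cup B$ (with $|A_j|=|B|=m^\epsilon$) and the same family $\cA_j$ of $2^{\Theta(\epsilon^2 m^\epsilon)}$ candidate desired sets. Exactly one bidder per group is ``special'' (her family includes the set $A_j$, which is disjoint from the other groups' special sets), and all non-special desired sets pairwise intersect, so $\mathrm{OPT}=\Theta(m^{1-\epsilon})$ comes entirely from the special bidders. The hardness is a \emph{symmetry} argument: conditioned on the valuations in $G_j$, the identity of the special set $A_j$ is still uniform in $\cA_j$, so to give $A_j$ to the right bidder the protocol must learn, from the messages, which bidder has $A_j\in\cB_i$. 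The bound on how much a message can shift this conditional probability is obtained not by a union bound over $2^{nL}$ profiles (which would indeed be hopeless) but by a frequent/rare dichotomy: message sets of probability below $4^{-L}$ contribute negligibly in aggregate; for the remaining ``frequent'' messages, conditioning inflates any event's probability by at most $4^L$, and this factor is absorbed by a Chernoff bound over the $|\cA_j|\gg L$ independent coin flips defining $\cB_i$.
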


\begin{theorem}\label{thm-simultaneous-matroids}
For $m$ items and  $n=\Omega(m^{\frac{3}{32}})$ bidders with matroid rank functions as valuations, there is no randomized simultaneous mechanism with messages of length $\frac {2^{m^{\frac 1 {32}}}} n$ which achieves an approximation ratio better than $m^{\frac 1 {16}}$.
\end{theorem}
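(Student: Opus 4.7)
The plan is to follow the same broad strategy as the proof of Theorem~\ref{thm-simultaneous-general} for general valuations, but to work harder in order to encode the hardness within the much more restrictive class of matroid rank functions. A matroid rank function must be submodular and have unit marginals, which heavily constrains how much ``hidden information'' a single bidder's valuation can carry; this is exactly why the bounds in Theorem~\ref{thm-simultaneous-matroids} are weaker than those in Theorem~\ref{thm-simultaneous-general}.

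First, I would partition the ground set $M$ into $n$ blocks $M_1,\dots,M_n$ of equal size $m/n=\Theta(m^{29/32})$, and make bidder $i$'s rank function supported on $M_i$, i.e.\ $r_i(S)=r_i(S\cap M_i)$. This decouples welfare maximization across bidders, so a per-bidder lower bound lifts immediately to a bound on total welfare. Within each block I would define a large ``hard family'' $\mathcal{F}$ of paving matroids of rank $r\approx m^{1/16}$, parametrized by a hidden subset $T_i\subseteq M_i$ of size $r$ which plays the role of the unique ``good'' basis: $T_i$ would be a basis of $\mathcal{M}_i^{T_i}$, while any $S\subseteq M_i$ whose intersection with $T_i$ is not close to $r$ would have rank $O(1)$. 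Paving matroids give the necessary flexibility because their only non-trivial dependent sets are the $(r-1)$-dimensional hyperplanes, which only have to pairwise intersect in at most $r-2$ elements; a random or partial-Steiner-system construction should yield $|\mathcal{F}|\ge 2^{\Omega(m^{1/32})}$ pairwise ``far apart'' hidden bases.

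The simultaneous lower bound would then follow from a standard distributional argument mirroring that of Theorem~\ref{thm-simultaneous-general}. Draw each bidder's matroid independently and uniformly from $\mathcal{F}$. The total message length across all bidders is at most $n\cdot(2^{m^{1/32}}/n)=2^{m^{1/32}}$, so the joint transcript takes at most $2^{\,2^{m^{1/32}}}$ values. Since $|\mathcal{F}|^n$ is doubly exponential and dominates this, a pigeonhole/entropy argument shows that for most transcripts the posterior over $(T_1,\dots,T_n)$ remains close to uniform on $\mathcal{F}^n$, and hence the algorithm's output $A_i\subseteq M_i$ can hit a subset of rank $\Omega(r)$ only with probability $o(1)$. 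Averaging, the expected welfare of the mechanism is $O(n)$, while the optimum allocation assigns $T_i$ to bidder $i$ for welfare $\Omega(n\cdot r)=\Omega(n\cdot m^{1/16})$. The gap is $\Omega(m^{1/16})$, contradicting the assumed approximation ratio.

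The main obstacle is the matroid construction: we need a family $\mathcal{F}$ that is simultaneously large enough (at least $2^{\Omega(m^{1/32})}$ hidden bases per block) and ``hiding'', in the sense that knowing the rank of any short list of subsets of $M_i$ should reveal very little about $T_i$. The tension is real: matroid rank functions are submodular and rigid, so declaring one subset to have rank $r$ forces the rank of many nearby subsets, and this ``leakage'' directly threatens the hiding property. I expect the heart of the argument to be a careful probabilistic existence proof — perhaps taking the hyperplane collection to be a random partial Steiner system on $M_i$ and verifying both the pairwise intersection axiom and a concentration statement that bounds the number of high-rank subsets by $2^{o(m/n)}$ per matroid. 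A transversal-matroid variant (bidders' matroids coming from random bipartite graphs between $M_i$ and an external witness set) is an alternative candidate and may yield a cleaner analysis if the paving construction becomes unwieldy.
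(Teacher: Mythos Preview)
Your proposal has a fatal counting error that breaks the pigeonhole argument. The allowed per-bidder message length is $L'=2^{m^{1/32}}/n$, which for $n=\Theta(m^{3/32})$ is still $2^{\Theta(m^{1/32})}$ bits --- super-polynomial in $m$. You propose a family $\mathcal F$ with $|\mathcal F|=2^{\Omega(m^{1/32})}$; but then a bidder needs only $O(m^{1/32})$ bits to name their matroid exactly, and the allowed message is exponentially longer than that. Concretely, the number of transcripts $2^{\,2^{m^{1/32}}}$ is doubly exponential while $|\mathcal F|^n=2^{O(m^{1/8})}$ is only singly exponential, so every input profile can get its own transcript and the mechanism learns everything. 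The same objection kills the ``single hidden basis $T_i$'' idea even more directly: specifying $T_i$ takes $O(m^{1/16}\log m)$ bits, which is polynomial. For the bound to bite, the family of matroids per bidder must have size $2^{\omega(L')}$, i.e.\ doubly exponential in $m^{1/32}$.

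This is exactly why the paper does not build matroids from scratch. It invokes the Balcan--Harvey construction: on a ground set of size $\tilde m=m^{3/4}$ one gets a family $\mathcal A$ of $k=2^{m^{1/16}}$ ``test sets'' of size $m^{1/4}$, and for \emph{every} $\mathcal B\subseteq\mathcal A$ a matroid $\mathcal M_{\mathcal B}$ in which sets in $\mathcal B$ have rank $m^{1/4}$ and sets in $\mathcal A\setminus\mathcal B$ have rank $8m^{1/16}$. That gives $2^{k}=2^{2^{m^{1/16}}}$ matroids per bidder --- doubly exponential, as needed. The paper then does \emph{not} give each bidder a disjoint block; instead it copies the structure of Theorem~\ref{thm-simultaneous-general}: bidders are split into $\ell\approx m^{3/4}$ groups of size $m^{1/8}$, each group shares a support $A_j\cup B$ (with a common ``center'' $B$), and for each test set in $\mathcal A_j$ exactly one random bidder in the group has high rank for it. The question the mechanism must answer is \emph{which bidder in $G_j$ is special for $A_j$}, and Lemma~\ref{lem:frequent} is reused verbatim to show that a frequent message set cannot bias more than $L\cdot|G_j|$ of the $k$ test sets. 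Your paving-matroid sketch also runs into a structural obstacle: a matroid of rank $r$ cannot have a unique $r$-set of rank $r$ while all sets far from it have rank $O(1)$, since the ground set itself has rank $r$ and hence contains many bases; the Balcan--Harvey construction is precisely the tool that realizes the high-rank/low-rank dichotomy you were reaching for.
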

The first theorem is directly used in the construction of Section \ref{sec-general2}. The second one (proof in Appendix \ref{sec-simultaneous-GS}) solves an open problem of \cite{DNO14} that asks whether there is a simultaneous algorithm that provides a constant approximation for submodular valuations. Therefore, Theorem \ref{thm-simultaneous-matroids} answers this question negatively, even for matroid rank functions (which are also gross substitutes valuations). We note that the result almost settles completely the approximation ratio achievable in this setting, as a simultaneous $\tilde O(m^{\frac 13})$-approximation algorithm for all subadditive valuations exists \cite{DNO14}.

\subsection{Proof of Theorem \ref{thm-simultaneous-general}: An Impossibility for General Valuations} \label{impossible-sim-general-subsec}

\paragraph{The Hard Distribution.}
We prove our impossibility for randomized mechanisms by applying Yao's principle. Thus, we now describe a distribution over instances and analyze the performance of deterministic mechanisms on it. 

Fix $\epsilon>0$. Let the number of bidders be $n = m^{2-\epsilon}-m$, divided into  $\ell = m^{1-\epsilon}-1$ groups $G_1, \ldots, G_\ell$ of $m$ bidders each. Let $(A_1,A_2,\ldots,A_\ell,B)$ be a random partitioning of the $m$ items, such that for all $j$, $|A_j| = |B| = m^{\epsilon}$ (note that $m^{\epsilon} (\ell + 1) = m$).
For each group $G_j$, the set of relevant items is $A_j \cup B$. Let $\cA_j$ be a family of $t = 2^{\Theta(\epsilon^2 m^\epsilon)}$ subsets of $A_j \cup B$ of size $m^\epsilon$, such that one of the sets is always $A_j$ and the other sets are chosen uniformly at random. 
By standard concentration bounds, with high probability, these sets overlap pseudo-randomly in the sense that the intersection of any two sets in $\cA_j$ has size $(\frac12 \pm \epsilon) m^{\epsilon}$. In the following, we will only use a weaker statement which is that for any two sets $A \in \cA_j, A' \in \cA_{j'}$ such that $A \neq A_j, A' \neq A_{j'}$, we have $A \cap A' \neq \emptyset$ w.h.p. For any two such sets $A,A'$, we have $A \subseteq B \cup A_j$ and $A' \subseteq B \cup A_{j'}$, and the probability that they are disjoint is at most $e^{-\Omega(m^{\epsilon)}}$, since for every $b \in B$, the probability that $b \in A \cap A'$ is $1/4$ and these events are negatively correlated. The number of such pairs of sets is $2^{\Theta(\epsilon^2 m^\epsilon)}$; i.e. by the union bound, all pairs of sets $A \in \cA_j \setminus \{A_j\}, A' \in \cA_{j'} \setminus \{A_{j'}\}$ intersect with probability $1 - e^{-\Omega(m^\epsilon)}$.

For each bidder $i$ in group $G_j$, the valuation is supported on the set of items $A_j \cup B$. For each bidder $i \in G_j$, we choose a random sub-family $\cB_i \subseteq \cA_j$ such that each set in $\cA_j$ appears in $\cB_i$ independently with probability $\frac{1}{m}$. More specifically, we do this in such a way that for each set $A \in \cA_j$, we choose independently a random bidder $i \in G_j$ for whom $A \in \cB_i$; for the other bidders $i' \neq i$, $A \notin \cB_{i'}$.

We define the valuation of bidder $i$ as:
$$
v_i(S)  =\begin{cases}
 1 &   S \supseteq B \mbox{ for some } B \in \cB_i, \\
0 & \mbox{    otherwise}.
\end{cases}
$$
I.e., a bidder $i$ is satisfied if she gets the items of some set in $\cB_i$. We call each subset in $\cB_i$ a set that bidder $i$ is \emph{interested in}. In particular, if $A_j \in \cB_i$, one way to satisfy a bidder in group $G_j$ is to allocate the set $A_j$. However, this set is valuable only for the bidder $i \in G_j$ such that $A_j \in \cB_i$. We call  this bidder {\em special} in group $G_j$. 

Note also that only a small number of non-special bidders can be satisfied overall, since these bidders want random sets which intersect with each other with high probability. This leads to the following lemma.

\begin{lemma}\label{lemma-welfare-comes-from-special}
	With probability $1-e^{-\Omega(m^\epsilon)}$, the welfare of an allocation is at most $1$ plus the number of special bidders who receive the respective set $A_j$.
\end{lemma}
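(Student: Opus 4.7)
The plan is to fix an arbitrary allocation $(S_1,\ldots,S_n)$ and bound its welfare in terms of the quantity $X$ defined as the number of special bidders $i\in G_j$ whose bundle $S_i$ contains $A_j$. For any other satisfied bidder $i\in G_j$, meaning one that is either non-special, or special but not actually allocated $A_j$, I would pick a witness $W_i\in\cB_i$ with $W_i\subseteq S_i$ and $W_i\neq A_j$; by construction such a $W_i$ exists and lies in $\cA_j\setminus\{A_j\}$. Letting $Y$ count these ``non-canonically satisfied'' bidders, the welfare equals $X+Y$, so it suffices to show $Y\leq 1$.

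Since the bundles $S_i$ are pairwise disjoint, the witnesses $\{W_i\}$ are pairwise disjoint as well. The pseudo-random property already established in the excerpt says that with probability $1-e^{-\Omega(m^\epsilon)}$ over the random construction of the partition $(A_1,\ldots,A_\ell,B)$ and the families $\bigcup_j\cA_j$, every two sets $W\in\cA_j\setminus\{A_j\}$ and $W'\in\cA_{j'}\setminus\{A_{j'}\}$ intersect. The across-group case $j\neq j'$ is exactly what the excerpt handled via the $\tfrac14$-probability-of-overlap on each $b\in B$ together with a union bound over pairs. The within-group case $j=j'$ falls under the stronger statement that any two sets in $\cA_j$ share $(\tfrac12\pm\epsilon)m^\epsilon$ items; a routine union bound over the at most $2^{O(\epsilon^2 m^\epsilon)}$ pairs absorbs this into the same high-probability event. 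On this event, there cannot be two disjoint witnesses, forcing $Y\leq 1$.

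The main subtlety, and really the only non-routine step, is making sure that a special bidder who is satisfied but does \emph{not} receive $A_j$ is correctly absorbed into $Y$ with a witness in $\cA_j\setminus\{A_j\}$. This works because if $A_j\not\subseteq S_i$, then any set in $\cB_i$ contained in $S_i$ must be one of the random non-$A_j$ sets, so such a witness is automatically of the required type. Once this is in place, the argument is purely combinatorial. The crucial observation is that the high-probability event depends only on the random construction of the partition and the families $\cA_j$, not on the $\cB_i$'s or on the allocation. Consequently the bound $X+Y\leq X+1$ holds simultaneously for every allocation on a single event of probability $1-e^{-\Omega(m^\epsilon)}$, which is exactly the statement of the lemma.
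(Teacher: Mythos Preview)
Your proposal is correct and follows essentially the same approach as the paper: both argue that any bidder satisfied by a set other than the appropriate $A_j$ must use a witness in some $\cA_j\setminus\{A_j\}$, and such witnesses pairwise intersect with probability $1-e^{-\Omega(m^\epsilon)}$, so at most one such bidder can be satisfied. Your treatment is in fact slightly more careful than the paper's, which only explicitly discusses non-special bidders and glosses over the case of a special bidder satisfied via a non-$A_j$ set; you correctly absorb that case into $Y$ with a witness of the required type.
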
 
\begin{proof}
	Any player who is not special can get value $1$ only if she gets a set in $\cB_i$, which does not include the special set $A_i$. As we argued above, all the sets in $\cB_i \setminus \{A_i\}$, for different values of $i$, intersect pairwise with probability $1 - e^{-\Omega(m^\epsilon)}$. Hence, at most one bidder can be satisfied this way. 
	Any additional value comes from special bidders who receive the respective set $A_j$.
\end{proof}

We prove Theorem \ref{thm-simultaneous-general} by proving the following proposition, that will be used directly in the impossibility result for dominant strategy mechanisms (Section \ref{sec-general2}).
\begin{proposition}\label{prop-simul-hard-dist}
There is no simultaneous mechanism with messages of size at most $\frac {2^{m^{\frac {\epsilon^2} 2}}} n$ which achieves in expectation an approximation ratio better than  $m^{1-\epsilon}$ on instances sampled from the hard distribution described above. 
\end{proposition}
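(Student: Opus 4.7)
My approach is first to apply Yao's minimax principle: I will bound the expected welfare of any deterministic simultaneous mechanism against the described hard distribution. The first reduction is via Lemma~\ref{lemma-welfare-comes-from-special}: outside an $e^{-\Omega(m^\epsilon)}$-probability event (which contributes negligibly, since welfare is trivially at most $n$), the welfare is at most $1 + \sum_{j=1}^{\ell} Y_j$, where $Y_j$ indicates that the special bidder of group $G_j$ is allocated the bundle $A_j$. Since the optimum welfare is $\ell = \Omega(m^{1-\epsilon})$ (achieved by giving each $A_j$ to its special bidder), my goal reduces to proving $\sum_j \mathbb{E}[Y_j] = O(1)$.

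The core of the argument will be a per-group analysis. Fix a group $G_j$ and condition on every source of randomness except the identity $\sigma_j \in G_j$ of the bidder assigned $A_j$: the partition, $\cA_j$, the assignment of each set in $\cA_j \setminus \{A_j\}$ to bidders in $G_j$, and the inputs of all bidders outside $G_j$ are thereby fixed, while $\sigma_j$ is uniform over $G_j$. Let $\cB_i^-$ denote the resulting (fixed) non-special collection of bidder $i \in G_j$, and write $M_i^- := f_i(\cB_i^-)$, $M_i^+ := f_i(\cB_i^- \cup \{A_j\})$, where $f_i$ is bidder $i$'s deterministic message function. Define $D := \{i \in G_j : M_i^+ \neq M_i^-\}$. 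When $\sigma_j = i$, the group's message profile is $(M_1^-, \ldots, M_{i-1}^-, M_i^+, M_{i+1}^-, \ldots, M_m^-)$: this coincides with the common ``all-$M^-$'' profile whenever $i \notin D$, while for $i \in D$ it is distinct both from the all-$M^-$ profile and from the profile arising at any $i' \in D \setminus \{i\}$. Since the mechanism outputs a single allocation per message profile, $Y_j = 1$ can hold for at most $|D|+1$ of the $m$ possible values of $\sigma_j$, giving
$$
\Pr[Y_j = 1 \mid \text{conditioning}] \;\leq\; \frac{|D| + 1}{m}.
$$

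The remaining step, and the main obstacle, will be bounding $\mathbb{E}[|D|]$ over the joint distribution of $\cA_j$, the assignment of sets to bidders, and the location of $A_j$ within $\cA_j$. Writing $|D| = \sum_{i \in G_j} Z_i$ with $Z_i := \mathbb{1}[f_i(\cB_i^-) \neq f_i(\cB_i^- \cup \{A_j\})]$, my strategy is to exploit two features: each $f_i$ has image of size at most $2^s$ (where $s$ is the bound on per-bidder message length, so $2^s \ll t = 2^{\Theta(\epsilon^2 m^\epsilon)}$), and $A_j$ is nearly exchangeable with the other $t-1$ members of $\cA_j$ (all being $m^\epsilon$-subsets of $A_j \cup B$). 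A coupling that replaces $A_j$ by a uniformly random element of $\cA_j$ should reduce the task to bounding, for a random single-set perturbation of $\cB_i^-$, the probability that $f_i$'s output changes. The difficulty is that a naive pigeonhole on $|\mathrm{image}(f_i)| \leq 2^s$ is not enough---an adversarial $f_i$ could in principle have $f_i(\cB_i^- \cup \{A\}) \neq f_i(\cB_i^-)$ for almost every $A \in \cA_j$---so the argument must also exploit the random structure of $\cB_i^-$ (each set of $\cA_j$ is independently assigned to bidder $i$ with probability $1/m$), aggregating across bidders in $G_j$ to balance the small alphabet size against $t$. Once $\mathbb{E}[|D|] = O(m^\epsilon)$ is established, summing over the $\ell = \Theta(m^{1-\epsilon})$ groups yields $\sum_j \mathbb{E}[Y_j] = O(1)$, producing expected welfare $O(1)$ against an optimum of $\Omega(m^{1-\epsilon})$ and thereby the claimed approximation lower bound.
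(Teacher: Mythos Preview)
Your setup and the per-group inequality $\Pr[Y_j = 1 \mid \text{conditioning}] \le (|D|+1)/m$ are correct. The gap is in the unproved claim $\mathbb{E}[|D|] = O(m^\epsilon)$: this bound is simply false, and no ``aggregation across bidders'' can rescue it. Take the mechanism where each bidder $i$ sends the single bit $f_i(\cB_i) = |\cB_i| \bmod 2$. Since $A_j \notin \cB_i^-$ by construction, adding $A_j$ always flips the parity, so $M_i^+ \neq M_i^-$ for \emph{every} $i \in G_j$ simultaneously; thus $|D| = |G_j| = m$ with probability $1$. This mechanism uses only $n$ bits in total, well within the allowed budget, yet your bound collapses to $(m+1)/m$. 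You anticipated that a single $f_i$ could be sensitive to almost every $A$, but the same one-bit function makes all $m$ bidders sensitive at once, so summing the $Z_i$ buys nothing.

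The deeper problem is the choice of conditioning. You fix the ``baseline'' $(\cB_i^-)_{i\in G_j}$, but the mechanism never sees that baseline---it sees only the message tuple. In the parity example, averaging over the unknown baseline makes $\sigma_j$ posterior-uniform given the messages, so $\mathbb{E}[Y_j]=1/m$; your per-conditioning bound cannot exploit this averaging. The paper therefore conditions on the realized message tuple $\bar{s}$ from group $G_j$ instead. It splits into \emph{frequent} tuples ($\Pr[\bar{s}]\ge 4^{-L}$) and rare ones, and shows by a Chernoff argument (Lemma~\ref{lem:frequent}) that for any frequent $\bar{s}$ and any $i$, fewer than $L\cdot|G_j|$ sets $A\in\cA_j$ satisfy $\Pr[A\in\cB_i\mid\bar{s}]>7/|G_j|$; hence at most $L m^2$ sets in $\cA_j$ are ``biased'' for some bidder. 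Since $A_j$ is uniform over $\cA_j$ even conditioned on $\bar{s}$ (the bidders in $G_j$ cannot distinguish $A_j$ from the other sets in $\cA_j$), with overwhelming probability $A_j$ is unbiased, and then whoever receives $A_j$ is the special bidder with conditional probability at most $7/m$. Conditioning on what the mechanism actually observes, rather than on the latent inputs, is what makes the argument close.
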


\begin{lemma}\label{optimal-high-in-expectation}
The optimal welfare for every instance is $OPT \geq m^{1-\epsilon} - 1$.
\end{lemma}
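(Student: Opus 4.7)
The plan is to exhibit an explicit feasible allocation whose welfare is already $\ell = m^{1-\epsilon}-1$. By construction, the sets $A_1,\ldots,A_\ell$ together with $B$ form a partition of $M$, so in particular $A_1,\ldots,A_\ell$ are pairwise disjoint. Also by construction, for every $j$ the set $A_j$ belongs to $\cA_j$, and the sampling procedure assigns to each set in $\cA_j$ some bidder $i\in G_j$ whose family $\cB_i$ contains that set. Let $i_j\in G_j$ denote the special bidder for whom $A_j\in\cB_{i_j}$.

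I would then define the allocation $(S_1,\ldots,S_n)$ by setting $S_{i_j}=A_j$ for every $j\in\{1,\ldots,\ell\}$, and $S_i=\emptyset$ for every other bidder. Disjointness of the $A_j$'s makes this a valid allocation. Since $A_j\in\cB_{i_j}$ and $S_{i_j}\supseteq A_j$, the definition of $v_{i_j}$ gives $v_{i_j}(S_{i_j})=1$ for every $j$. Summing over the $\ell$ special bidders yields total welfare at least $\ell = m^{1-\epsilon}-1$, so $OPT\ge m^{1-\epsilon}-1$.

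There is no real obstacle here: the only thing to note is that the bound is deterministic rather than in expectation, because for every realization of the random partition and of the families $\cB_i$, the special bidders $i_1,\ldots,i_\ell$ exist and the allocation above is well-defined and feasible.
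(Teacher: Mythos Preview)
Your proof is correct and follows essentially the same approach as the paper: allocate each special set $A_j$ to the unique special bidder $i_j\in G_j$ with $A_j\in\cB_{i_j}$, yielding welfare exactly $\ell=m^{1-\epsilon}-1$. Your write-up is in fact a bit more explicit than the paper's (you spell out disjointness and why the special bidders exist), but the argument is identical.
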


\begin{proof}
Each group $G_j$ contains exactly 1 bidder who wants the special set $A_j$.
Hence, a solution which allocates $A_j$ to the special bidder in group $G_j$, achieves value exactly $\ell = m^{1-\epsilon}-1$.
\end{proof}

We now analyze the expected welfare achieved by any mechanism on the random instance described above.
By Yao's principle, we assume that that the mechanism is deterministic.
A good mechanism should ensure that many of the sets $A_j$ go to some special bidder in group $G_j$. But how can it determine who the special bidders are? For that, it would intuitively need to know the value of $A_j$ for each bidder, but the bidders do not know which of their sets is special and there are too many sets to encode in a message. Our goal is to prove that this indeed implies an impossibility result in the simultaneous model.

We prove that the messages $(s_i: i \in G_j)$ sent by the bidders in group $G_j$ typically do not give us much information about who the special bidder is.
Suppose that the messages $(s_i: i \in G_j)$ altogether have bit-length bounded by $L$. These messages are chosen depending on the random valuations $(v_i: i \in G_j)$, so each choice of messages appears with a certain probability. We distinguish between ``frequent'' and ``rare'' message sets. 

\begin{definition}
We call a message set $(s_i: i \in G_j)$ {\em frequent} if it appears with probability at least $\frac 1 {4^L}$; otherwise it is {\em rare}.
\end{definition}

Observe that since the total number of messages is at most $2^L$, all rare messages together appear with probability less than $\frac{1}{2^L}$. Next, we prove that a frequent set of messages cannot give us much information about the distribution of high-value sets. Recall that without any conditioning, for a particular bidder $i \in G_j$, each set in $\cA_j$ is chosen to be in $\cB_i$ with probability $\frac 1 {|G_j|} = \frac 1 m$. The key lemma is the following.

\begin{lemma}
\label{lem:frequent}
Let $\bar{s} = (s_i: i \in G_j)$ be a frequent set of messages. Then for every bidder $i \in G_j$, there are fewer than $L \cdot  |G_j|$ sets $A \in \cA_j$ such that conditioned on bidders in $G_j$ sending $\bar{s}$, $\Pr[A \in \cB_i \mid \bar{s}] > \frac 7 { |G_j|}$.
\end{lemma}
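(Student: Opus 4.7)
The plan is to apply an information-theoretic argument: the messages $\bar{s}$ carry at most $L$ bits of information, so they cannot concentrate the posterior sharply on more than a bounded number of ``bidder = $i$'' events. Encode the random instance by the assignment function $\phi : \cA_j \to G_j$ where $\phi(A)$ is the unique bidder such that $A \in \cB_{\phi(A)}$. By construction the prior on $\phi$ is a product distribution: each $\phi(A)$ is uniform over $G_j$ and independent across $A \in \cA_j$. Since each message $s_i$ is a deterministic function of $v_i$ and hence of $\phi$, the tuple $\bar{s}$ is a deterministic function of $\phi$ as well.

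First, I would show that the total KL divergence of the posterior over $\phi$ from the prior is small. Because $\bar{s}$ is determined by $\phi$, the conditional $P_{\phi \mid \bar{s}}$ is just $P_\phi$ restricted to the preimage of $\bar{s}$ and renormalized by $1/\Pr[\bar{s}]$, which gives the identity
\[
D\bigl(P_{\phi \mid \bar{s}} \,\|\, P_\phi\bigr) \;=\; \log\tfrac{1}{\Pr[\bar{s}]} \;\leq\; 2L,
\]
where the inequality uses the frequency assumption $\Pr[\bar{s}] \geq 4^{-L}$. Next, since the prior factorizes over coordinates $A \in \cA_j$, the standard subadditivity bound (chain rule of KL combined with subadditivity of entropy) yields
\[
D\bigl(P_{\phi \mid \bar{s}} \,\|\, P_\phi\bigr) \;\geq\; \sum_{A \in \cA_j} D\bigl(P_{\phi(A) \mid \bar{s}} \,\|\, P_{\phi(A)}\bigr).
\]

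Finally, I would lower-bound each summand on the set of ``bad'' coordinates. Fix bidder $i$ and let $I_i := \{A : \Pr[\phi(A)=i \mid \bar{s}] > 7/|G_j|\}$. By the data-processing inequality, coarsening the image to ``$=i$'' versus ``$\neq i$'',
\[
D\bigl(P_{\phi(A)\mid \bar{s}} \,\|\, P_{\phi(A)}\bigr) \;\geq\; D\bigl(\mathrm{Bern}(p)\,\|\,\mathrm{Bern}(1/|G_j|)\bigr)
\]
for some $p > 7/|G_j|$. A direct computation (using $\ln x \geq 1 - 1/x$ on the ``$\neq i$'' term) gives
\[
D\bigl(\mathrm{Bern}(p)\,\|\,\mathrm{Bern}(1/|G_j|)\bigr) \;\geq\; \bigl(7 \log_2 7 - 6/\ln 2\bigr)/|G_j| \;>\; 2/|G_j|
\]
for every $p \geq 7/|G_j|$. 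Combining with the previous two displays gives $|I_i| \cdot (2/|G_j|) < 2L$, i.e.\ $|I_i| < L \cdot |G_j|$, as required.

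The main obstacle is calibrating Step~3: the threshold $7/|G_j|$ in the lemma is precisely what is needed to push the binary KL above $2/|G_j|$ so that the resulting count $|I_i|$ beats the target $L \cdot |G_j|$. A weaker concentration gap (say, probability above $2/|G_j|$) would only yield a KL of order $o(1)/|G_j|$ and would force a correspondingly larger bound on $|I_i|$, breaking the downstream argument in the paper.
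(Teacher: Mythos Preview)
Your argument is correct and takes a genuinely different route from the paper. The paper argues by contradiction via a Chernoff bound: assuming a family $\cS$ of $L\cdot|G_j|$ biased sets exists, the unconditioned count $|\cS\cap\cB_i|$ has mean $L$ and a Chernoff tail $\Pr[|\cS\cap\cB_i|>6L]<2^{-5L}$; since conditioning on a frequent message can inflate any probability by at most $4^L$, the conditional expectation stays below $7L$, contradicting $\sum_{A\in\cS}\Pr[A\in\cB_i\mid\bar{s}]>7L$. Your approach instead bounds the total KL of the posterior from the product prior by $2L$, tensorizes, and shows each biased coordinate contributes more than $2/|G_j|$ to this budget. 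Your route is more conceptual and makes the information bottleneck explicit (and in fact yields a sharper constant, since your per-coordinate KL is $\approx 11/|G_j|$), while the paper's route is more elementary, needing only Chernoff and a union-bound-style conditioning step rather than KL machinery. Both hinge on the same identity $\Pr[\bar{s}]\ge 4^{-L}$ and both exploit the product structure of the prior, so neither generalizes in a direction the other does not.
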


\begin{proof}
Suppose that $\bar{s}$ is a frequent set of messages and there is a family of $L \cdot |G_j|$ sets $A \in \cA_j$ with $\Pr[A \in \cB_i \mid \bar{s}] > \frac 7 { |G_j|}$; denote it by $\cS \subset \cA_j$.

Consider the choices whether $A \in \cB_i$ for $A \in \cS$. Without any conditioning, each $A$ is chosen to be in $\cB_i$ independently with probability $\frac 1 {|G_j|}$. In expectation, the number of sets in $\cS \cap \cB_i$ is $\frac {|\cS|} { |G_j|} = L$. Hence, by the Chernoff bound,
$$ \Pr[|\cS \cap \cB_i| > (1+\delta) L] \leq \left( \frac{e^\delta}{(1+\delta)^{1+\delta}} \right)^L < \frac{1}{2^{\delta L}}$$
for $\delta \geq 5$.
Consider now the conditioning on $\bar{s}$. Since $\Pr[\bar{s}] \geq \frac 1 {4^L}$, this conditioning cannot increase the probability of any event by more than a factor of $4^L$. Therefore,
$$ \Pr[|\cS \cap \cB_i| > (1+\delta) L \mid \bar{s}] < \frac{4^L}{2^{\delta L}}.$$
For $\delta=5$, we get $\Pr[|\cS \cap \cB_i| > 6L \mid \bar{s}] < \frac 1 {8^L}$ and the tail probability decays exponentially beyond that. This implies that:
$$ \E[|\cS \cap \cB_i| \mid \bar{s}] < 7L$$

We assumed that $\Pr[A \in \cB_i \mid \bar{s}] > \frac {7 }{|G_j|}$ for every $A \in \cS$, and $|\cS| = L \cdot  |G_j|$, a contradiction.
\end{proof}

\medskip

We are now ready to conclude the proof of Proposition \ref{prop-simul-hard-dist} by showing that for any simultaneous mechanism with messages of total size at most $L=2^{m^{\epsilon^2 / 2}}$, executed on a random instance as described above, the expected welfare is $O(1)$, while the optimum is $OPT=\Omega(m^{1-\epsilon})$. 

We already showed above that $OPT=\Omega(m^{1-\epsilon})$.
Let us bound the expected welfare achieved by bidders in group $G_j$, assuming that their messages together are bounded by $L$ bits. The contribution of cases where $\bar{s} = (s_i: i \in G_j)$ is a rare message set is small, because they happen with total probability less than $\frac 1 {2^L}$; hence the expected contribution from these cases is negligible (less than $\frac{m}{2^L}$).

In the case of a frequent message set $\bar{s}$, consider the partitioning of the items $B \cup A_j$ among the bidders in group $G_j$. This partitioning is determined by $\bar{s}$. Lemma~\ref{lem:frequent} says that for each bidder $i \in G_j$, fewer than $L \cdot |G_j|$ sets $A \in \cA_j$ have the property that $\Pr[A \in \cB_i \mid s_i] > \frac 7 {|G_j|}$. Recall that $|G_j| = m$. Hence, among all the sets in $\cA_j$, at most $L \cdot  |G_j|^2 = L  m^2$ sets are ``biased'' in the sense that the value is $1$ for some bidder with conditional probability more than $\frac 7 m$. 

Considering group $G_j$ in isolation, the special set $A_j$ is uniformly random among all sets in $\cA_j$, and this is true even conditioned on the valuations in group $G_j$, and hence also conditioned on the message set $\bar{s}$. (Recall that given the set of items $B \cup A_j$ relevant to group $G_j$, there is no way to distinguish the subset $A_j$, which is equally likely to be any of the sets in $\cA_j$). Furthermore, unless the special set $A_j$ is one of the at most $L m^2$ biased sets discussed above, however the items in $A_j$ are allocated, each bidder is the special bidder for it with conditional probability at most $\frac 7 m$. If $A_j$ is split among multiple bidders, none of them receives all of $A_j$. If $A_j$ goes to a particular bidder, then this bidder is special with conditional probability at most $\frac 7 m$. Hence, conditioned on a message set $\bar{s}$, we satisfy a special bidder with conditional probability at most $\frac 7 m$.

Finally, in case the special set $A_j$ is one of the biased sets, we can assume that we derive value of $1$ from it;  however this happens with probability at most $\frac {L m^2} { |\cA_j|} = O(m^2\cdot  2^{-m^ {\frac {\epsilon^2} 2}})$. 
The contribution 
of these cases is negligible. 

We have $\ell = m^{1-\epsilon} - 1$ groups of bidders. There are also the items in $B$, which can contribute value at most $1$ in total, with high probability. Hence, the total expected welfare is at most $1 + \frac {7 \ell } m = O(1)$.

\section{Proof of Theorem \ref{thm-simultaneous-matroids}: Simultaneous Algorithms for Matroid Rank Functions}
\label{sec-simultaneous-GS}

Here we combine the ideas of Section~\ref{sec:simul} with a construction of matroids by Balcan and Harvey, which we recap here.

\begin{theorem}[\cite{BH11}]
For any $k \geq 8$ with $k = 2^{o(\tilde{m}^{\frac 1 3})}$, there exists a family of sets $\cA \subseteq 2^{[\tilde{m}]}$ and a family of matroids $\{ \cM_\cB: \cB \subseteq \cA \}$ with the following properties:
\begin{itemize}
\item $|\cA| = k$ and $|A| = \tilde{m}^{\frac 1 3}$ for every $A \in \cA$.
\item For every $\cB \subseteq \cA$ and every $A \in \cA$, we have:
$$ \mbox{rank}_{\cM_\cB}(A) = |A|,     \ \ \ \ \   \mbox{if } A \in \cB.$$
$$ \mbox{rank}_{\cM_\cB}(A) = 8 \log k, \ \ \ \ \  \mbox{if } A \in \cA \setminus \cB,$$
\end{itemize}
\end{theorem}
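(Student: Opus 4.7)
Set $r := \tilde{m}^{1/3}$ and $s := 8\log k$. The plan has two components: (i) build the family $\cA$ as a combinatorial design whose sets are nearly disjoint, and (ii) for each $\cB \subseteq \cA$, define the matroid $\cM_\cB$ directly through its independent sets and check that the definition yields a matroid with the required rank function.

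For the family $\cA$, the goal is $|\cA| = k$ subsets of $[\tilde{m}]$, each of size $r$, with the property that $|A \cap A'| \leq s/2$ for every pair $A \neq A'$ in $\cA$. When $kr \leq \tilde{m}$, one may simply take the sets pairwise disjoint. In the more interesting regime $k = 2^{o(\tilde{m}^{1/3})}$, I would use the probabilistic method: sample $k$ subsets of $[\tilde{m}]$ of size $r$ independently and uniformly. The expected intersection of any two is $r^2/\tilde{m} = \tilde{m}^{-1/3} \ll 1$, and a Chernoff bound gives that any specific pair exceeds intersection $s/2 = 4\log k$ only with probability $e^{-\Omega(\log k)}$. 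Since $\log \binom{k}{2} = O(\log k) = o(\tilde{m}^{1/3})$, a union bound over all $\binom{k}{2}$ pairs produces such a family with positive probability.

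Given the design $\cA$, for each $\cB \subseteq \cA$ I would define
\[
\cI_\cB \;:=\; \bigl\{\, S \subseteq [\tilde{m}] : |S| \leq r \text{ and } |S \cap A| \leq s \text{ for every } A \in \cA \setminus \cB \,\bigr\},
\]
and declare $\cM_\cB = ([\tilde{m}], \cI_\cB)$. Assuming $\cM_\cB$ is a matroid, the rank computation is straightforward. If $A \in \cB$, then $A$ itself lies in $\cI_\cB$: the cardinality constraint $|A| = r$ is tight, and for every other $A' \in \cA \setminus \cB$ the near-disjointness forces $|A \cap A'| \leq s/2 < s$; hence $\operatorname{rank}_{\cM_\cB}(A) = |A|$. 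If $A \in \cA \setminus \cB$, the defining constraint gives $\operatorname{rank}_{\cM_\cB}(A) \leq s$ immediately, while any $s$-subset $S \subseteq A$ satisfies $|S \cap A'| \leq |A \cap A'| \leq s/2 < s$ for every $A' \in \cA\setminus\cB$ with $A' \neq A$, so $S \in \cI_\cB$ and $\operatorname{rank}_{\cM_\cB}(A) = s = 8\log k$.

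The main obstacle is checking the matroid augmentation axiom, so verifying that $\cI_\cB$ really is a matroid. The hereditary property is immediate. For augmentation, take $I, J \in \cI_\cB$ with $|I| < |J| \leq r$; since $|I| < r$, the size bound is never the obstacle. The only elements $x \in J \setminus I$ that cannot be added are those lying in some \emph{saturated} constraint, i.e., some $A \in \cA\setminus \cB$ with $|I \cap A| = s$. The plan is a double-counting argument: each saturated $A$ contains $s$ elements of $I$, and the near-disjointness $|A \cap A'| \leq s/2$ bounds how much a single element of $I$ can be charged across distinct saturated sets, yielding a small upper bound on the number of saturated sets. For each such $A$, the forbidden elements of $J \setminus I$ lie in $J \cap A$, and $|J \cap A| \leq s$ because $J \in \cI_\cB$. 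Combining these bounds and comparing to $|J \setminus I| \geq 1$, one shows that the total number of forbidden elements is strictly less than $|J \setminus I|$, so some legal augmentation element exists. I expect that making this counting go through in the full range $k = 2^{o(\tilde{m}^{1/3})}$ will require strengthening the design guarantee beyond mere pairwise intersection bounds (for example, also controlling how many sets of $\cA$ contain any given element), which is still achievable by the same random construction plus an additional union bound.
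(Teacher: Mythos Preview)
The paper does not prove this theorem; it is quoted from Balcan--Harvey \cite{BH11} and used as a black box, so there is no in-paper proof to compare against. Evaluating your proposal on its own merits, there is a genuine gap: your candidate family $\cI_\cB = \{S : |S| \leq r \text{ and } |S \cap A| \leq s \text{ for all } A \in \cA \setminus \cB\}$ is in general \emph{not} a matroid, even under the design hypothesis $|A \cap A'| \leq s/2$. Take $\cA \setminus \cB = \{A_1, A_2\}$ with $A_1 \cap A_2 = \{u\}$ (a single shared element, certainly $\le s/2$). Let $I$ consist of $s$ elements of $A_1$ including $u$, together with $s-1$ further elements of $A_2 \setminus A_1$; then $|I \cap A_1| = |I \cap A_2| = s$ and $|I| = 2s-1$. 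Let $J = (I \setminus \{u\}) \cup \{w_1, w_2\}$ with $w_1 \in A_1 \setminus (I \cup A_2)$ and $w_2 \in A_2 \setminus (I \cup A_1)$. Then $J \in \cI_\cB$ and $|J| = 2s > |I|$, but $J \setminus I = \{w_1, w_2\}$ and adding either element to $I$ violates a constraint, so augmentation fails.

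This obstruction cannot be designed away. The theorem allows $k$ up to $2^{o(\tilde{m}^{1/3})}$, whereas pairwise-disjoint $r$-sets would require $k \le \tilde{m}/r = \tilde{m}^{2/3}$, so overlapping pairs are forced; and the counterexample needs only a single shared element between two sets, so your proposed strengthening (bounding how many sets of $\cA$ contain any given element) is beside the point---and in any case that quantity has average $kr/\tilde{m}$, which is super-polynomial once $k \gg \tilde{m}^{2/3}$. The actual Balcan--Harvey matroid is not defined by per-set caps $|S \cap A| \le s$; its rank function involves a minimum over \emph{subfamilies} of $\cA \setminus \cB$ (an iterated-truncation construction), and the design properties of $\cA$ are used to verify submodularity of that more elaborate rank function rather than to rescue the naive intersection-of-constraints definition.
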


For an instance of combinatorial auctions with $m$ items, we will use this construction with $\tilde{m} = m^{\frac 3 4}$ and $k = 2^{m^{\frac 1 {16}}}$; hence $\mbox{rank}_{\cM_\cB}(A)$ is either $m^{\frac 1 4}$ or $8 \cdot m^{\frac 1 {16}}$, depending on the choice of $\cB$.\footnote{Note that compared to Balcan-Harvey, we switch the meaning of $\cB$ and $\cA \setminus \cB$; we find it more natural to use $\cB$ to denote bases of the matroid. However, the reader should keep in mind that there are also other bases in $\cM_\cB$.}

\paragraph{The Hard Distribution.}
We prove our impossibility for randomized mechanism by applying Yao's principle. Thus, we now describe a distribution over instances and analyze the performance of deterministic mechanisms on it. We define instances as follows. Let the number of bidders be $n = m^{\frac 1 8} (m^{\frac 3 4} - m^{\frac 1 2} + 1)$, divided into  $\ell = m^{\frac 3 4} - m^{\frac 1 2} + 1$ groups $G_1, \ldots, G_\ell$ of $m^{\frac 1 8}$ bidders each. Let $(A_1,A_2,\ldots,A_\ell,B)$ be a random partitioning of the $m$ items, such that $|A_j| = m^{\frac 1 4}$ and $|B| = m^{\frac 3 4} - m^{\frac 1 4}$. (Note that $m^{\frac 1 4} \cdot \ell + m^{\frac 3 4} - m^{\frac  1 4} = m$.)
For each bidder $i$ in group $G_j$, the valuation is supported on the set of items $A_j \cup B$; it is a matroid rank function of a Balcan-Harvey matroid on $\tilde{m} = m^{\frac 3 4}$ elements, with parameter $k = 2^{m^{\frac 1 {16}}}$, defined by set families $\cB_i \subseteq \cA_j$ and embedded in $A_j \cup B$ so that a random one of the sets in $\cA_j$ is mapped onto $A_j$, and the remaining elements are mapped randomly onto $B$. (Note that we use a $j$ subscript for $\cA_j$, because this family is shared among all the bidders in $G_j$.) The sub-family $\cB_i \subseteq \cA_j$ of high-value sets for bidder $i$ is chosen randomly in the following way:
For each set $A \in \cA_j$, we choose independently and uniformly at random one bidder $i$ in group $G_j$ such that $A \in \cB_i$. For all the other bidders $i' \in G_j$, we don't include $A$ in $\cB_{i'}$.
Note that in expectation we have $\E[|\cB_i|] = \frac {|\cA_j|} {|G_j|} = m^{-\frac 1 8}\cdot  2^{m^{\frac 1 {16}}}$, and $|\cB_i|$ is tightly concentrated. Exactly one bidder in group $G_j$ has a high value for the set mapped to $A_j$, and we call 
this bidder the {\em special bidder} in $G_j$.

\begin{lemma}
The optimal welfare for this instance is $OPT=m$.
\end{lemma}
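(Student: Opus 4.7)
The plan is to sandwich $OPT$ between matching upper and lower bounds of $m$. The upper bound $OPT \le m$ is immediate from the defining property of matroid rank functions: $v_i(S) \le |S|$ for every $S$, so for any allocation $(A_1,\ldots,A_n)$, $\sum_i v_i(A_i) \le \sum_i |A_i| \le m$.

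For the lower bound I would exhibit an explicit allocation of welfare exactly $m$, structured so that $m^{3/4}$ distinct bidders each receive a base of her matroid of size $m^{1/4}$. The allocation proceeds in two steps. First, for each group $G_j$, I give the items in $A_j$ to the unique special bidder $i_j^* \in G_j$. Because $A_j$ is the image under the embedding of a set $A^* \in \cB_{i_j^*}$ that is a base of $\cM_{\cB_{i_j^*}}$, this bidder obtains value $|A_j| = m^{1/4}$. Summed over $j$, this contributes $\ell \cdot m^{1/4} = m - m^{3/4} + m^{1/4}$ of welfare and uses exactly $\ell$ bidders. Second, I partition $B$ arbitrarily into $m^{1/2} - 1$ disjoint subsets $B_1,\ldots,B_{m^{1/2}-1}$, each of size $m^{1/4}$, and assign each $B_k$ to a distinct non-special bidder $i_k$ (in some group) for whom $B_k$ is a base of $\cM_{\cB_{i_k}}$. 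This adds $(m^{1/2}-1) \cdot m^{1/4} = m^{3/4} - m^{1/4}$ of welfare, and the grand total is $m^{3/4} \cdot m^{1/4} = m$.

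The main obstacle is the second step: showing that such an assignment exists simultaneously for all $B_k$, with the chosen non-special bidders being pairwise distinct. In the Balcan-Harvey construction, an $m^{1/4}$-subset of $A_{j_k}\cup B$ fails to be a base of $\cM_{\cB_{i_k}}$ only if its preimage under the embedding lies in $\cA_{j_k}\setminus\cB_{i_k}$, and this forbidden family has size at most $k = 2^{m^{1/16}}$ per bidder. Since there are $(m^{1/8}-1)\ell = \Theta(m^{7/8})$ non-special bidders available, far exceeding the $m^{1/2}-1$ slots needed, and since for each fixed $B_k$ only a vanishing fraction of bidders have $B_k$ on their forbidden list, a greedy assignment---or equivalently a Hall-type matching argument on the bipartite graph between the $B_k$'s and the eligible non-special bidders---produces the required assignment and completes the proof.
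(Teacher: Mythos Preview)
Your upper bound and the first step of the lower bound (allocating each $A_j$ to the special bidder in $G_j$) are correct and match the paper. The divergence is in how you handle the items in $B$.

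The paper's argument here is much simpler than yours: it hands out the items of $B$ \emph{one at a time} to $|B|=m^{3/4}-m^{1/4}$ distinct non-special bidders, each of whom then has value $1$ for her singleton (matroid rank functions send non-loop singletons to $1$). There are $\Omega(m^{7/8})$ non-special bidders available, so this is immediate, and the contribution is exactly $|B|$, giving total $m$.

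Your second step, by contrast, rests on the assertion that ``an $m^{1/4}$-subset of $A_{j}\cup B$ fails to be a base of $\cM_{\cB_{i}}$ only if its preimage lies in $\cA_{j}\setminus\cB_{i}$.'' This is not supported by the Balcan--Harvey theorem as quoted in the paper, which only pins down the rank of sets \emph{in} $\cA$. In fact the claim is false in general: since $\mathrm{rank}_{\cM_{\cB_i}}(A)=8\log k$ for every $A\in\cA_j\setminus\cB_i$, any $m^{1/4}$-set that merely \emph{contains} more than $8\log k$ elements of some such $A$ is already dependent, whether or not it belongs to $\cA_j$. So the ``forbidden list'' per bidder is far larger than $|\cA_j\setminus\cB_i|$, and your greedy/Hall argument does not go through as written. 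The paper sidesteps this entirely by never needing any set larger than a singleton to be independent.
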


\begin{proof}
In each group $G_j$, we allocate the special set $A_j$ to the special bidder, who receives value $|A_j| = m^{\frac 1 4}$. The items in $B$ can be allocated arbitrarily to some non-special bidders (since $|B| = m^{\frac 3 4} - m^{\frac 1 4}$ and the number of non-special bidders is $\Omega(m^{\frac 7 8})$), who get value $1$ each. Hence, each item contributes exactly $1$ and $OPT = m$. 
\end{proof}

We analyze the expected welfare achieved by any mechanism on the random instance described above. We make the following simple claim.

\begin{lemma}
If at most $m_j$ of the items in $A_j$ are allocated to the special bidder in group $G_j$, then
the welfare of the allocation is at most $O(m^{\frac {15}{16}}) + \sum_j m_j$.
\end{lemma}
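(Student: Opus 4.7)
The plan is to bound $v_i(S_i)$ for each bidder separately via the subadditivity and monotonicity of matroid rank, and then sum up. Since bidder $i \in G_j$ is supported on $A_j \cup B$, I split $S_i = (S_i \cap A_j) \cup (S_i \cap B)$ and write
$$v_i(S_i) \;\le\; v_i(S_i \cap A_j) + v_i(S_i \cap B) \;\le\; v_i(A_j) + |S_i \cap B|,$$
using monotonicity on the first term and the fact that a matroid rank function is bounded by cardinality on the second. After summation over all bidders, the $|S_i \cap B|$ terms collapse to at most $|B| \le m^{3/4}$ since the $S_i$'s are disjoint, so the whole argument reduces to a per-bidder bound on $v_i(A_j)$.

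The key step is to exploit the Balcan--Harvey structure for the non-special bidders. Fix a group $G_j$ and let $A^\star \in \cA_j$ be the unique set whose embedded image is exactly $A_j$. The special bidder $i^\star_j$ is by definition the bidder in $G_j$ with $A^\star \in \cB_{i^\star_j}$, so for every non-special bidder $i \in G_j$ we have $A^\star \in \cA_j \setminus \cB_i$. The Balcan--Harvey theorem then yields
$$v_i(A_j) \;=\; \mathrm{rank}_{\cM_{\cB_i}}(A^\star) \;=\; 8\log k \;=\; 8 m^{\frac{1}{16}}.$$
Thus every non-special bidder contributes at most $8 m^{\frac{1}{16}} + |S_i \cap B|$ to the welfare. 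Identifying that the low-rank $8\log k$ guarantee for ``fake'' sets is exactly the structural property that rules out non-special bidders extracting much value from $A_j$ is the main conceptual obstacle; once this is spotted, the rest is accounting.

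For the special bidder $i^\star_j$ I would not invoke the matroid structure on $A_j$ (which would give the useless bound $|A_j| = m^{1/4}$); instead I directly use the hypothesis $|S_{i^\star_j} \cap A_j| \le m_j$ together with the cardinality bound to get $v_{i^\star_j}(S_{i^\star_j}) \le m_j + |S_{i^\star_j} \cap B|$. Summing over all bidders and using $n \le m^{\frac{1}{8}} \cdot m^{\frac{3}{4}} = m^{\frac{7}{8}}$,
$$W \;\le\; \sum_{j=1}^\ell m_j \;+\; 8 m^{\frac{1}{16}} (n-\ell) \;+\; \sum_{i} |S_i \cap B| \;\le\; \sum_j m_j \;+\; 8 m^{\frac{15}{16}} \;+\; m^{\frac{3}{4}} \;=\; O\!\bigl(m^{\frac{15}{16}}\bigr) + \sum_j m_j,$$
which is the desired bound.
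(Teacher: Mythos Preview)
Your proof is correct and follows essentially the same approach as the paper: decompose each bidder's value into the contribution from $B$ (bounded by $|B|$ in total via disjointness) and from $A_j$, then use the Balcan--Harvey low-rank guarantee $8\log k = 8m^{1/16}$ for non-special bidders on $A_j$ and the cardinality bound $m_j$ for the special bidder. Your write-up is in fact more explicit than the paper's about the subadditivity step and the identification of $A_j$ with the set $A^\star \in \cA_j \setminus \cB_i$, but the underlying argument is the same.
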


\begin{proof}
The items in $B$ contribute at most $|B| = m^{\frac 3 4} - m^{\frac 1 4}$ altogether.
Any player who is not special can get value at most $O(m^{\frac 1 {16}})$ from the items in $A_j$, hence 
all these players together can get at most $m^{\frac 3 4} + O(n \cdot m^{\frac 1 {16}}) = O(m^{\frac {15}{16}})$.
Finally, the special players can get at most $m_j$ each from the items in $A_j$; hence $\sum_j m_j$.
\end{proof}

From here, the proof is similar to the proof of Theorem \ref{thm-simultaneous-general}. We complete the proof by showing that for any simultaneous mechanism with messages of total size at most $L = 2^{m^{\frac 1 {32}}}$, executed on a random instance as described above, the expected welfare is $O(m^{\frac {15}{16}})$, while the optimum is $OPT=m$. 

We already showed above that $OPT=m$.
Let us bound the expected welfare achieved by bidders in group $G_j$, assuming that their messages together are bounded by $L$ bits. The contribution of cases where $\bar{s} = (s_i: i \in G_j)$ is a rare message set is small, because they happen with total probability less than $\frac{1}{2^L}$; hence the expected contribution from these cases is negligible (less than $\frac{m}{2^L}$).

In the case of a frequent message set $\bar{s}$, consider the partitioning of the items $B \cup A_j$ among the bidders in group $G_j$. This partitioning is determined by $\bar{s}$. Lemma~\ref{lem:frequent} says that for each bidder $i \in G_j$, fewer than $L \cdot |G_j|$ sets $A \in \cA_j$ have the property that $\Pr[A \in \cB_i \mid s_i] > \frac 7 {|G_j|}$.  Here, we have $|G_j| = m^{\frac 1 8}$. Hence, among all the sets in $\cA_j$, at most $L |G_j|^2 = L m^{\frac 1 4}$ sets are ``biased'' in the sense that the value is high for some bidder with conditional probability more than $\frac{7}{m^{1/8}}$. 

The special set $A_j$ is uniformly random among all sets in $\cA_j$, and this is true even conditioned on the valuations in group $G_j$, and hence also conditioned on the message set $\bar{s}$. (Recall that given the set of items $B \cup A_j$ relevant to group $G_j$, there is no way to distinguish the subset $A_j$, which is equally likely to be any of the sets in $\cA_j$). Furthermore, unless the special set $A_j$ is one of the at most $L \cdot m^{1/4}$ biased sets discussed above, however the items in $A_j$ are split, each bidder is the special bidder for it with conditional probability at most $\frac{7}{m^{1/8}}$. Suppose that bidder $i$ receives $k_i$ items from $A_j$ in this allocation. Then the expected value that the bidders derive from $A_j$ is at most
$$ \sum_{i \in G_j} \frac{7}{m^{1/8}} \cdot k_i + \sum_{i \in G_j} \left(1 - \frac{7}{m^{1/8}} \right) O(m^{\frac 1 {16}}) 
< \frac{7\cdot |A_j|}{m^{1/8}} + O(m^{1/16} |G_j|) = O(m^{3/16})$$
because a bidder who is special gets value $1$ for each item received from $A_j$, $|A_j| = m^{\frac 1 4}$, and a bidder who is not special receives value at most $O(m^{\frac 1 {16}})$ from $A_j$.
Finally, in case the special set $A_j$ is one of the biased sets, we can assume that we derive full value $|A_j| = m^{\frac 1 4}$ from it; however this happens with probability at most $L \cdot \frac  {m^{\frac 1 4}} {|\cA_j|} = O(m^{\frac 1 4} \cdot 2^{-m^{\frac 1 {32}}})$. The contribution of these cases is negligible.

We have $\ell \leq m^{\frac 3 4}$ groups of bidders. There are also the items in $B$, $|B| \leq m^{\frac 3 4}$, which can contribute at most $|B|$ in total. Hence, the total expected welfare is at most $|B| + O(\ell\cdot  m^{\frac 3 {16}}) = O(m^{\frac {15}{16}})$.


\section{Multi-Unit Auctions With Decreasing Marginal Valuations}\label{sec-mua-impossibility}


Consider a social choice function $f$ that always outputs an allocation that maximizes the welfare. This social choice function can be implemented in dominant strategies by the VCG mechanism. 
 The next theorem shows that even if we restrict ourselves to a subset of the valuations such that each valuation can be represented by $\mathcal O(m\cdot \log m)$ bits, any dominant strategy normalized and no-negative-transfers implementation of $f$  requires $ \Omega(m\cdot \log m)$ bits, even when there are only two players. In contrast, recall that an ex-post implementation of a welfare maximizer  with VCG payments for this set of valuations requires only $poly(\log m)$ bits.

In Section \ref{sec-gs-impossibility} we show an exponential blow up also in the implementation of dominant-strategy welfare maximizers for combinatorial auctions with gross substitute valuations. The two hardness proofs share a very similar structure.

\subsection{The Hardness Result: Proof of Theorem \ref{newmainthm}}\label{subsec-hardness}

Consider a multi-unit auction of $m\ge 5$ items and two players (Alice and Bob). 
The valuations that we consider belong to three families: ``semi-decisive'' valuations $V^{D}$, non-decisive valuations $V^{ND}$ and another set of valuations $V^{P}$ that we will use to show that payments can be used as sketches of valuations.  

Every semi-decisive and non-decisive valuation will have a ``weight'' which is a scalar $\gamma \in \{1,\ldots,m^{5}\}$ that captures its magnitude. We now define the set $V^{D,\gamma}$ of semi-decisive valuations with the scalar $\gamma$.  
Every $v\in V^{D,\gamma}$ has two parameters: a special bundle  $x^\ast\in \{2,\ldots,m-2\}$ and a margin $d_m \in \{\frac{1}{2},1\}$ such that: 
\begin{equation*}\label{v2def}
v(x)=	\begin{cases}
0 \quad&x=0, \\
\gamma\cdot 3m^8 \quad&x=1, \\
\gamma\cdot (m^2-m+1)+v(x-1) \quad& x\in \{1,\ldots,x^\ast\}, \\
\gamma+v(x-1) \quad&x\in \{x^\ast+1,\ldots,m-1\}, \\
d_m +v(m-1) 
\quad&x=m.
\end{cases}
\end{equation*}
To define the set of non-decisive valuations, we define for every number of items $x\in \{2,\ldots,m\}$ its set of all its possible marginal utilities:
\begin{gather*}
	\forall x\in\{2,\ldots,m-1\}, \quad
	D_{x}= \{m^2-m\cdot x,m^2-m\cdot x+1,\ldots, m^2-m\cdot (x-1)\} \\
	D_m=\{\frac{1}{2},1\}
\end{gather*}
For every weight $\gamma \in \{1,\ldots, m^5\}$, every valuation in the set $V^{ND,\gamma}$ is parameterized by a vector $(d_2,\ldots,d_{m})\in D_2\times\cdots\times D_{m}$ such that:
\begin{gather*} \label{v1gammadescription}
 v(x)=	\begin{cases}
	0 \quad& x=0, \\
	\gamma\cdot 3m^8 \quad& x=1, \\
	\gamma\cdot d_x+v(x-1) \quad& x\in \{1,\ldots,m-1\}, \\
	d_m+v(m-1)    
	\quad&x=m.
\end{cases}
\end{gather*}
Throughout the proof, we use the notations $V^{ND}=\bigcup\limits_{\gamma=1}^{m^5} V^{ND,\gamma}$ and $V^{D}=\bigcup\limits_{\gamma=1}^{m^5} V^{D,\gamma}$.  

We are now going to define another set of valuations $V^{P}$ with the purpose of guaranteeing that  different valuations in $V^{D}\cup V^{ND}$ induce different payments. We use this fact later on to sketch valuations.  
Every $v\in V^{P}$ has a valuation $v'\in V^{ND}\cup V^{D}$, a sign $sn\in \{0,1\}$ and a special bundle $t^\ast\in \{1,\ldots,m\}$ such that:
\begin{gather*} 
	v(x)=\begin{cases}
		0 \quad& x=0, \\
		m^{15}+ v(x-1) \quad& 0<x<t^\ast, \\
		v'(m-x+1)-v'(m-x)+(-1)^{sn}\cdot \frac{1}{8m^2}+v(x-1) \quad& x=t^\ast, \\
		v(x-1) \quad& x>t^\ast.
	\end{cases}
\end{gather*}

It is easy to see that all the valuations in all three families 
are normalized, monotone and have decreasing marginal utilities. Also, the value of each bundle can be represented with $\mathcal{O}(\log m)$ bits.  
We begin with a simple observation regarding the properties of welfare maximizing allocations. 
\begin{lemma}\label{onlylemma}
	Let $v_A,v_B:[m]\to\mathbb{R}_{+}$ be multi-unit valuations with decreasing marginal values. Suppose that $s\in \{1,\ldots,m-1\}$ is a number of items such that:
	\begin{enumerate}
		\item $v_B(m-s)-v_B(m-s-1)>v_A(s+1)-v_A(s)$.
		\label{assum1}
		\item $v_A(s)-v_A(s-1)>v_B(m-s+1)-v_B(m-s)$.
		\label{assum2}
	\end{enumerate}
	\emph{Then,} $(s,m-s)$ is the \emph{unique} welfare maximizing allocation. 
If $v_B(m)-v_B(m-1)>v_A(1)-v_A(0)$, then the unique welfare maximizing allocation is $(0,m)$.
Equivalently, $v_A(m)-v_A(m-1)>v_B(1)-v_B(0)$ implies that  the only welfare maximizing allocation is $(m,0)$. 
\end{lemma}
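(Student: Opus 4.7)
The plan is to compare the welfare of the candidate allocation $(s, m-s)$ with that of any other allocation $(t, m-t)$ using the telescoping expansion of $v_A$ and $v_B$ in terms of their marginals, and then apply the decreasing-marginals property together with the hypothesized strict inequalities to conclude strict dominance.

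First I would split into the two cases $t > s$ and $t < s$. For $t > s$, write
\[
v_A(t) - v_A(s) \;=\; \sum_{k=s+1}^{t}\bigl(v_A(k)-v_A(k-1)\bigr), \qquad
v_B(m-s) - v_B(m-t) \;=\; \sum_{k=m-t+1}^{m-s}\bigl(v_B(k)-v_B(k-1)\bigr).
\]
Since $v_A$ has decreasing marginals and every index $k$ in the first sum satisfies $k \geq s+1$, each summand is at most $v_A(s+1)-v_A(s)$. Similarly, every index $k$ in the second sum satisfies $k \leq m-s$, so each summand is at least $v_B(m-s)-v_B(m-s-1)$. Both sums have $t-s$ terms, and assumption \ref{assum1} gives a strict per-term inequality, yielding
\[
v_B(m-s) - v_B(m-t) \;\geq\; (t-s)\bigl(v_B(m-s)-v_B(m-s-1)\bigr) \;>\; (t-s)\bigl(v_A(s+1)-v_A(s)\bigr) \;\geq\; v_A(t)-v_A(s),
\]
which rearranges to $v_A(s)+v_B(m-s) > v_A(t)+v_B(m-t)$. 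The case $t < s$ is symmetric: the telescoping now involves marginals of $v_A$ on indices $\leq s$ (each at least $v_A(s)-v_A(s-1)$) and marginals of $v_B$ on indices $\geq m-s+1$ (each at most $v_B(m-s+1)-v_B(m-s)$), and assumption \ref{assum2} closes the argument.

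For the boundary cases, I would handle $(0,m)$ as follows: given $v_B(m)-v_B(m-1) > v_A(1)-v_A(0)$, for any $t \geq 1$ the same telescoping gives $v_A(t)-v_A(0) \leq t(v_A(1)-v_A(0))$ and $v_B(m)-v_B(m-t) \geq t(v_B(m)-v_B(m-1))$, and the strict hypothesis forces $v_A(0)+v_B(m) > v_A(t)+v_B(m-t)$. The case $(m,0)$ is identical with the roles of $A$ and $B$ swapped.

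There is no real obstacle; the only point requiring care is the direction of the marginal inequalities, which depends on whether the summation index lies above or below the crossing point $s$ (respectively $m-s$). Since decreasing marginals make marginals at larger indices smaller, one must check that the index range of each telescoping sum lies entirely on the correct side of the pivot so that the bound given by the hypothesis at the pivot is indeed the extremal one. Once this is set up, strict inequality of the hypotheses propagates directly to strict inequality of the welfares, yielding uniqueness.
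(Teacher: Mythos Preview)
Your proof is correct and follows essentially the same approach as the paper: both compare $(s,m-s)$ to $(s\pm t, m-s\mp t)$ by bounding the change in each player's value via the extremal marginal at the pivot (using decreasing marginals) and then invoke the strict hypothesis to conclude strict welfare dominance. You simply make the telescoping sums explicit where the paper states the bounds directly.
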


\begin{proof}
	Let $(s,m-s)$ be an allocation such that inequalities \ref{assum1} and \ref{assum2} hold. 
	Consider an allocation $(s+t,m-s-t)$ where $t>0$. Note that Alice gets $t$ more items, so due to the property of diminishing marginal utilities, her value increases by at most $t\cdot [v_A(s+1)-v_A(s)]$. Bob gets $t$ items less, so his value decreases by at least $t\cdot [v(m-s)-v(m-s-1)]$. By inequality \ref{assum1}, we get that the welfare of $(s+t,m-s-t)$ is strictly smaller than the welfare of $(s,m-s)$. 
	
	For the other direction, consider an allocation $(s-t,m-s+t)$ where $t>0$. This time, Bob gets $t$ more items, so his value increases by at most $t\cdot [v(m-s+1)-v(m-s)]$. Alice gets $t$ items less, so her value decreases by at least $t\cdot [v(s)-v(s-1)]$. Similarly, it implies that the welfare of $(s-t,m-s+t)$ is strictly smaller than the welfare of $(s,m-s)$. 
	
	The proof for the second and third part of the lemma is identical.          
\end{proof}



It is easy to see that the following two propositions together imply Theorem \ref{newmainthm}.

\begin{proposition}\label{dsicsketchprop}
	Let $\mathcal{M}$ be a normalized mechanism with $c$ bits that implements in dominant strategies a welfare maximizer for a multi-unit auction
	where the valuations have decreasing marginal utilities and the value of a bundle can be represented with $\mathcal{O}(\log m)$ bits. \emph{Then},
	there exists $\gamma\in \{1,m^5\}$ such that every element of $V^{ND,\gamma}$ can be represented with at most $c+\mathcal{O}(\log (m))$ bits.	
\end{proposition}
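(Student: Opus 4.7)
My plan is to show that Alice's valuation in $V^{ND,\gamma}$ is determined, for any $\gamma$, by the sequence of her dominant-strategy messages along the protocol. Since these messages fit inside the $c$-bit transcript, and $\gamma\in\{1,\ldots,m^5\}$ is representable in $\mathcal{O}(\log m)$ bits, this yields the desired $c+\mathcal{O}(\log m)$-bit encoding. The argument combines three ingredients: the minimality of $\mathcal{M}$ (Lemma \ref{minimal-lemma}), the payment-commitment consequence of Lemma \ref{lemma-known-prices}, and the allocation-forcing property of the semi-decisive family $V^{D,\gamma}$ via Lemma \ref{onlylemma}.

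I first pin down Bob's payments. Standard Green--Laffont uniqueness for two-player welfare-maximizing dominant-strategy mechanisms with normalized payments forces Bob's payment, when the allocation is $(s,m-s)$, to equal $h(v_A)-v_A(s)$ for some function $h$ independent of $v_B$. Applying this against a Bob witness drawn from $V^{P}$ with $t^*=1$ (whose total value is $\mathcal{O}(1)$, while any $v_A\in V^{ND,\gamma}$ has $v_A(1)=\gamma\cdot 3m^8$) ensures that the welfare-maximizing allocation gives Alice all items; normalization then yields $h(v_A)=v_A(m)$ for every $v_A\in V^{ND,\gamma}$. Hence Bob's payment at any reachable leaf equals the VCG price $v_A(m)-v_A(s)$.

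The commitment argument is as follows. Assume, for contradiction, that two valuations $v_A^1,v_A^2\in V^{ND,\gamma}$ induce identical dominant-strategy behaviour at every Alice vertex. Then against any fixed $v_B$, the two executions coincide, reaching the same leaf with the same written payment $p^*$ for Bob. For each $s\in\{2,\ldots,m-2\}$ take $v_B=v_B^{s,1/2}\in V^{D,\gamma}$, the semi-decisive valuation with $x^*=m-s$ and margin $\tfrac12$; a direct marginal comparison combined with Lemma \ref{onlylemma} confirms that the unique welfare-maximizing allocation against every $v_A\in V^{ND,\gamma}$ is $(s,m-s)$, whence by the previous paragraph $p^*=v_A^1(m)-v_A^1(s)=v_A^2(m)-v_A^2(s)$. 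For the boundary allocations $s\in\{1,m-1\}$, where no semi-decisive $V^{D,\gamma}$ witness exists, suitably tuned valuations from $V^P$ (with $t^*=m-1$ and maximal $d_2^{v'}$ for $s=1$, and with $t^*=1$ and $d_m^{v'}=1$ for $s=m-1$) force the corresponding allocations robustly across all $v_A\in V^{ND,\gamma}$ and give the analogous identities. Together with the hard-coded $v_A^1(1)=v_A^2(1)=\gamma\cdot 3m^8$, the $s=1$ identity yields $v_A^1(m)=v_A^2(m)$, after which a telescoping over the marginals $d_2,\ldots,d_{m-1},d_m$ gives $v_A^1=v_A^2$, a contradiction. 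Lemma \ref{lemma-known-prices} enters in the more delicate variant of this argument where Alice's strategies for $v_A^1$ and $v_A^2$ only agree up to some internal vertex $u$ and then diverge: the same payment identity is then derived inside Bob's induced tree at $u$, where the allocation $(s,m-s)$ appears in the two subtrees corresponding to $v_B^{s,1/2}$ and $v_B^{s,1}$ and thus pins Bob's price across all matched leaves.

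The main obstacle is the commitment step. One has to verify (i) that the allocation $(s,m-s)$ is robustly forced for \emph{every} $v_A\in V^{ND,\gamma}$ rather than just a generic one, which is handled by the generous gaps built into the marginal-value ranges of $V^{ND,\gamma}$ and $V^{D,\gamma}$; (ii) that the VCG-payment identification of Step~1 holds uniformly on $V^{ND,\gamma}$, which is precisely why the $V^P$ family with $t^*=1$ was included in the setup; and (iii) that the boundary allocations $s\in\{1,m-1\}$, which fall outside the $\{2,\ldots,m-2\}$ range of $V^{D,\gamma}$, can be forced by $V^P$ witnesses---this is the secondary role of the payment-sketch family.
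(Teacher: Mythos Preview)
Your proposal has a genuine gap in the encoding step. You argue that if two valuations $v_A^1,v_A^2\in V^{ND,\gamma}$ induce the same dominant strategy for Alice then $v_A^1=v_A^2$, and then assert that ``these messages fit inside the $c$-bit transcript.'' But a dominant strategy is a \emph{function} from protocol vertices to messages, not a single transcript. In a $c$-bit protocol the tree may have up to $2^c$ vertices, and Alice's strategy specifies a message at every vertex where she speaks; describing such a function can require far more than $c$ bits (e.g., if Bob first sends $k$ bits and Alice then sends one bit, Alice has $2^{2^k}$ strategies while $c=k+1$). Your own contradiction argument confirms the problem: it plays Alice's strategy against \emph{many} different Bob valuations $v_B^{s,1/2}$ as $s$ ranges over $\{2,\dots,m-2\}$, so no single $c$-bit transcript can serve as the encoding.

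The paper closes this gap with a two-step argument your proposal bypasses. First (Claim~\ref{claimbobsayspayment}) it shows that Bob must send different messages \emph{at the root} for the two decisive valuations with the same $x^\ast=m-s$ but different $d_m$; this is precisely where the choice of $\gamma\in\{1,m^5\}$ enters, depending on which regime of Alice's valuations first separates at the root. Second (Claim~\ref{muamainclaim}), this root-level separation of Bob's messages lets Lemma~\ref{lemma-known-prices} pin Bob's payment for $(s,m-s)$ inside the subtree reached after only Alice's \emph{first} message $z_A$. The encoding is then just $z_A$ together with $v(1),v(m-1),v(m)$ --- a genuine $c+\mathcal{O}(\log m)$-bit object --- and reconstruction of $v(s)$ proceeds by reading a leaf in the subtree determined by $z_A$ and Bob's root message for $v_{m-s}^{\gamma}$. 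Your ``more delicate variant'' gestures at Lemma~\ref{lemma-known-prices} but never supplies this root-level separation of Bob's messages. A secondary issue: your Green--Laffont appeal to conclude exact VCG payments is unjustified on this discrete domain; the paper only obtains $P_B(x,v_A)\in[v_A(m)-v_A(m-x)\pm\tfrac{1}{8m}]$ via the $V^P$ probes (Lemma~\ref{lemma-approx-uniqueness}) and then uses integrality to recover exact values (Corollary~\ref{cor-approx-payments}).
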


\begin{proposition}\label{onewayhardmuaclaim}
	For every $\gamma \in \{1,\ldots,m^5\}$, the representation size of a valuation in $V^{ND,\gamma}$ is $\Omega(m\log (m))$.
\end{proposition}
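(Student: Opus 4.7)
The plan is to prove the proposition by a direct counting argument: one just needs to show that $V^{ND,\gamma}$ is large, and that different parameter vectors $(d_2,\ldots,d_m)$ give rise to different valuations (so the bit length required to distinguish among them is at least $\log_2 |V^{ND,\gamma}|$).

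First I would verify injectivity of the parametrization. Fix $\gamma \in \{1,\ldots,m^5\}$ and suppose $v \in V^{ND,\gamma}$ has parameters $(d_2,\ldots,d_m)$. Reading off the recursive definition, the marginal utilities satisfy
\[
v(x) - v(x-1) \;=\; \gamma \cdot d_x \quad \text{for } x \in \{2,\ldots,m-1\}, \qquad v(m) - v(m-1) \;=\; d_m .
\]
Hence given $v$, one recovers $d_x = (v(x)-v(x-1))/\gamma$ for $x \in \{2,\ldots,m-1\}$ and $d_m = v(m)-v(m-1)$. Thus distinct parameter vectors produce distinct valuations in $V^{ND,\gamma}$.

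Next I would count the parameter space. For each $x \in \{2,\ldots,m-1\}$, the set $D_x = \{m^2-mx,\, m^2-mx+1,\, \ldots,\, m^2-m(x-1)\}$ has exactly $m+1$ elements, while $|D_m|=2$. Therefore
\[
|V^{ND,\gamma}| \;=\; \prod_{x=2}^{m-1}|D_x|\cdot|D_m| \;=\; 2\,(m+1)^{m-2}.
\]
Any encoding scheme that assigns distinct bit-strings to distinct valuations in $V^{ND,\gamma}$ must, in the worst case, use at least
\[
\log_2 |V^{ND,\gamma}| \;=\; (m-2)\log_2(m+1) + 1 \;=\; \Omega(m\log m)
\]
bits, which is exactly the claim. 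I do not anticipate a real obstacle here; the only subtlety worth double-checking is the injectivity argument (so that the product $(m+1)^{m-2}\cdot 2$ really is a lower bound on $|V^{ND,\gamma}|$ rather than an over-count), and this is immediate from the explicit recovery of each $d_x$ as a marginal of $v$.
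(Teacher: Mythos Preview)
Your proof is correct and follows essentially the same approach as the paper: both compute $|V^{ND,\gamma}| = 2(m+1)^{m-2}$ and invoke the pigeonhole principle to conclude the $\Omega(m\log m)$ lower bound. Your explicit verification of injectivity (recovering each $d_x$ from the marginals of $v$) is a helpful elaboration that the paper leaves implicit, but the argument is otherwise identical.
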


\begin{proof}[Proof of Proposition \ref{onewayhardmuaclaim}]
	By definition, for every $\gamma \in \{1,\ldots,m^5\}$, $|V^{ND,\gamma}|=2\cdot (m+1)^{m-2}$. Thus, by the pigeonhole principle, the representation size of an element in $V^{ND,\gamma}$ is  $\Omega(m\log (m))$ bits. 
\end{proof}
\subsection{Proof of Proposition 
\ref{dsicsketchprop}}
Fix a dominant strategy normalized two-player mechanism $\mathcal{M},\mathcal{S}_A,\mathcal{S}_B$ that implements a welfare-maximizer $f$\footnote{There is more than one welfare-maximizer due to tie breaking.} with payment schemes $P_A,P_B$ for a multi-unit auction where the valuations have decreasing marginal utilities and the value of a bundle can be represented with $\mathcal{O}(\log m)$ bits. Observe that $\mathcal M$ is in particular dominant strategy when the domain of each player is $V^{D}\cup V^{ND}\cup V^{P}$.  Denote with 
$c$ the communication complexity of the mechanism $\mathcal M$. 

Observe that $\mathcal M$ is incentive compatible, so by the taxation principle every valuation $v_A$ of Alice is associated with a menu of prices to Bob, such that for every valuation $v_B$ of Bob the action profile $(\mathcal S_A(v_A),\mathcal S_B(v_B))$ reaches a leaf that is labeled with  a profit-maximizing bundle given this menu. The same can be said of Bob's valuation and the menu presented to Alice.   

The proof idea is as follows. We begin by showing that the payments in the menu associated with a valuation are closely related to its values (Subsection \ref{subsec-payments-sketch}). In Section \ref{bobsayssomething-subsec}, we show that there exists a set of valuations of Bob such that he sends the price of some bundle (e.g., the price of one item), or otherwise Alice's strategy $\mathcal{S}_A$ is not dominant. 
Consider now two valuations $v_B,v'_B$ from this set that differ only in the price of $1$ item. Assume towards a contradiction that Alice has two valuations $v_A,v'_A$ with the same message such that the optimal solution in every one of the four possible instance is $(s,m-s)$ but $P_B(m-s,v_A)\neq P_B(m-s,v_A')$ . In this case, the worry is that Alice can determine Bob's payment to be either $P_B(m-s,v_A)$ or $P_B(m-s,v_A')$ without changing Bob's allocation, based only on the price that $v_B,v_B'$ display for one item.  Thus, Bob will not have a dominant strategy in this case unless Alice commits on the price she displays for $m-s$ items (Subsection \ref{alicesayspayment-sec}). However, if this happens for too many bundles, we can reconstruct Alice's valuation from her first message (Subsection \ref{sketch-subsec}).

\subsubsection{Payments are Good Sketches}\label{subsec-payments-sketch}
We now prove that the payments in the menu that each player presents to the other player are tightly related to the valuation. 


\begin{lemma}\label{lemma-approx-uniqueness}
	Let $v_A\in V^{ND}\cup V^{D}$ and let $x\in \{1,\ldots,m-1\}$ be a number of items. Then: 
	\begin{equation*}\label{eq-alice-valpay}
	P_B(x,v_A)\in \big[v_A(m)-v_A(m-x)\pm\frac{1}{8m}
	\big]
	\end{equation*}
	where $P_B(x,v_A)$ is the price of $x$ items presented to Bob when Alice has the valuation $v_A$.
	Similarly,  every valuation of Bob $v_B\in V^{ND}\cup V^{D}$ and every $x\in 
	\{1,\ldots,m-1\}$ satisfy that:
	\begin{equation*}
	P_A(x,v_B)\in \big[v_B(m)-v_B(m-x)\pm\frac{1}{8m}\big]
	\end{equation*}
\end{lemma}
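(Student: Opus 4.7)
The plan is to probe the menu of prices that $v_A$ induces for Bob by letting Bob play valuations from $V^{P}$ with $v'=v_A$; the analogous construction with Alice and Bob swapped will yield the second statement. Since $\mathcal{M}$ is DSIC (hence ex-post IC), the taxation principle gives a well-defined price $P_B(x,v_A)$ whenever some valuation of Bob leads the welfare-maximizer to award Bob exactly $x$ items. The $V^{P}$ valuations I use below realize every $x\in\{0,1,\ldots,m-1\}$, and normalization forces $P_B(0,v_A)=0$, so it suffices to pin down each consecutive difference $P_B(x,v_A)-P_B(x-1,v_A)$ and telescope.

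Fix $x\in\{1,\ldots,m-1\}$ and consider the two valuations $v_B^{+},v_B^{-}\in V^{P}$ with $v'=v_A$, $t^{\ast}=x$, and $sn=0,1$ respectively. Their marginals are $m^{15}$ on positions $1,\ldots,x-1$, equal to $v_A(m-x+1)-v_A(m-x)\pm\frac{1}{8m^{2}}$ on position $x$, and $0$ afterwards. Since $v_A(1)\le\gamma\cdot 3m^{8}\le 3m^{13}<m^{15}$, these marginals are non-increasing, so $v_B^{\pm}$ lie in the domain. Applying Lemma~\ref{onlylemma} (using its boundary statements at $x=1$ and $x=m-1$), one verifies that the unique welfare maximizer under $(v_A,v_B^{+})$ is $(m-x,x)$ and the one under $(v_A,v_B^{-})$ is $(m-x+1,x-1)$. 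In particular both bundle sizes $x$ and $x-1$ occur on Bob's menu induced by $v_A$, and the DSIC inequalities comparing the chosen outcome with the alternative give
\[
v_B^{+}(x)-v_B^{+}(x-1)\ \ge\ P_B(x,v_A)-P_B(x-1,v_A)\ \ge\ v_B^{-}(x)-v_B^{-}(x-1).
\]
Both extreme sides equal $v_A(m-x+1)-v_A(m-x)\pm\frac{1}{8m^{2}}$ by construction, so
\[
P_B(x,v_A)-P_B(x-1,v_A)\ \in\ \Bigl[v_A(m-x+1)-v_A(m-x)\pm\frac{1}{8m^{2}}\Bigr].
\]

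Telescoping these $x$ difference bounds and using $P_B(0,v_A)=0$ yields
\[
P_B(x,v_A)\in\Bigl[v_A(m)-v_A(m-x)\pm\frac{x}{8m^{2}}\Bigr]\subseteq\Bigl[v_A(m)-v_A(m-x)\pm\frac{1}{8m}\Bigr],
\]
which is the first statement of the lemma. The second statement is proved by the symmetric construction in which Alice plays the $V^{P}$ valuation with $v'=v_B$. The main thing to verify is the uniqueness of the welfare-maximizing allocations under the probing profiles, i.e., that no other split can beat $(m-x,x)$ or $(m-x+1,x-1)$: Bob's zero marginals past position $t^{\ast}=x$ together with strict positivity of $v_A$'s marginals eliminate allocations giving Bob more than $x$ items, while Bob's enormous $m^{15}$ marginals at positions $<x$ dwarf every marginal of $v_A$ (bounded by $3m^{13}$), eliminating allocations giving Bob fewer than $x-1$ items. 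This is exactly what the numerical calibration in $V^{P}$ is designed for, and the remaining two-way comparison between $x$ and $x-1$ is what Lemma~\ref{onlylemma} decides based on the sign $sn$.
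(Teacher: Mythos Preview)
Your proof is correct and follows essentially the same approach as the paper: you probe the menu that $v_A$ presents to Bob using the $V^{P}$ valuations with $t^\ast$ ranging over $\{1,\ldots,x\}$ and both signs, invoke Lemma~\ref{onlylemma} (including its boundary cases) to pin down the unique optimal allocation in each probe, apply ex-post incentive compatibility to sandwich each consecutive price difference $P_B(y,v_A)-P_B(y-1,v_A)$ within $\frac{1}{8m^2}$ of the corresponding marginal of $v_A$, and telescope using normalization $P_B(0,v_A)=0$. This is exactly the argument in Subsection~\ref{subsec-approx-uniqueness}.
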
	
We defer the proof of Lemma \ref{lemma-approx-uniqueness} to Subsection \ref{subsec-approx-uniqueness}.
\begin{corollary}\label{cor-approx-payments}
	Fix $v_A\in V^{ND}\cup V^{D}$ and a number of items  $x\in \{1,\ldots,m-1\}$. 
	Given every $P_B(m-x,v_A)$ and $v_A(m)$, the exact value of $v_A(x)$ can be deduced. 
\end{corollary}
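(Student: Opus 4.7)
The corollary should follow almost immediately from Lemma~\ref{lemma-approx-uniqueness} once one exploits the integrality structure built into the families $V^{ND}$ and $V^{D}$. My plan is as follows.

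First, I would apply Lemma~\ref{lemma-approx-uniqueness} with the parameter $m-x$ in place of $x$. Since $x\in\{1,\ldots,m-1\}$, we also have $m-x\in\{1,\ldots,m-1\}$, so the lemma is applicable and gives
\[
P_B(m-x,v_A)\in\bigl[v_A(m)-v_A(x)\,\pm\,\tfrac{1}{8m}\bigr].
\]
Rearranging, $v_A(x)$ is pinned down to the interval
\[
I:=\bigl[v_A(m)-P_B(m-x,v_A)\,\pm\,\tfrac{1}{8m}\bigr],
\]
which has length $\tfrac{1}{4m}<1$.

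The second ingredient is that $v_A(x)$ is an integer for every $x\in\{1,\ldots,m-1\}$ and every $v_A\in V^{ND}\cup V^{D}$. Inspecting the definitions: for $v_A\in V^{ND,\gamma}$ we have $v_A(1)=\gamma\cdot 3m^8$ and $v_A(x)=v_A(x-1)+\gamma\cdot d_x$ with $d_x\in\mathbb{Z}$, so inductively $v_A(x)\in\gamma\mathbb{Z}\subseteq\mathbb{Z}$ for $x\le m-1$ (the non-integer marginal $d_m\in\{\tfrac12,1\}$ only enters at $x=m$). The same calculation works for $V^{D,\gamma}$, where the marginals at $x\le m-1$ are integers times $\gamma$.

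Since $v_A(x)\in\mathbb{Z}$ and $I$ has length strictly less than $1$, the interval $I$ contains at most one integer, so $v_A(x)$ is uniquely determined; concretely, it is obtained by rounding $v_A(m)-P_B(m-x,v_A)$ to the nearest integer. No real obstacle is expected here — the entire content is the combination of the approximate identification provided by Lemma~\ref{lemma-approx-uniqueness} with the integrality ``gap'' of size $1$ enforced by the definitions of $V^{ND}$ and $V^{D}$.
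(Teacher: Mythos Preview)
Your proposal is correct and follows essentially the same approach as the paper's own proof: apply Lemma~\ref{lemma-approx-uniqueness} with $m-x$ in place of $x$, rearrange to trap $v_A(x)$ in an interval of length $\tfrac{1}{4m}<1$, and use the integrality of $v_A(x)$ for $x\le m-1$ to conclude uniqueness. If anything, you are slightly more careful than the paper in checking that $m-x$ lies in the range where the lemma applies and in explicitly verifying the integrality from the definitions.
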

\begin{proof}
	By Lemma \ref{lemma-approx-uniqueness}, we have that 
	$
	v_A(x)\in [v_A(m)-P_B(m-x,v_A)\pm \frac{1}{8m}]
	$. 
	Thus, given $P_B(m-x,v_A)$ and $v_A(m)$, we can construct an interval of size $\frac{1+1}{8m}\le \frac{1}{4}$ such that $v_A(x)$ belongs in it. Recall that $x\le m$ so by definition $v_A(x)$ is an integer and an interval of size at most $\frac{1}{4}$ has only one integer in it, so we can immediately identify it.  
\end{proof}





\subsubsection{Bob Reveals Information That Does Not Affect the Allocation}
 \label{bobsayssomething-subsec}
 From now on, we focus on the following subsets of valuation sets of Alice and Bob: 
\begin{equation*}\label{eq-valuations}
	V_A=V_B=\big\{\bigcup\limits_{\gamma=1}^{m^5} V^{ND,\gamma} \big\} \bigcup \big \{\bigcup\limits_{\gamma=1}^{m^5} V^{D,\gamma} 
	\big \}
\end{equation*}
Observe that the mechanism $\mathcal{M}$ together with the strategies $\mathcal{S}_A,\mathcal{S}_B$ is also a dominant strategy implementation of $f,P_A,P_B$ with respect to $V_A\times V_B$,
since they have decreasing marginal values and the value of a bundle can be described with $\mathcal{O}(\log m)$ bits. 
By Lemma \ref{minimal-lemma}, given the valuations $V_A\times V_B$  there exists a \emph{minimal} dominant strategy mechanism $\mathcal{M}'$ with strategies $(\mathcal S_A',\mathcal{S}_B')$ that realize the welfare-maximizer $f$ with payment schemes $P_A,P_B$ with $c'\le c$ bits. 
 
 We remind that throughout the proof we slightly abuse notation: we say that a player with valuation $v$ sends a message $z$ at vertex $r$ instead of saying that the dominant strategy of the player is to send message $z$ given the valuation 
$v$.
We also use the notations $V^\gamma$, $V^{\le \gamma}$ or $V^{\ge \gamma}$ to denote all the valuations in $V_A$ or $V_B$ with weight $\gamma$, or the valuations with a weight which is smaller or larger than $\gamma$.  Note that all these three sets do not include valuations for $V^{P}$. 

Observe that since $\mathcal M$ is minimal,  there exists a player, without loss of generality Alice, that sends different messages for different valuations at the root vertex of the protocol, which we denote with $r$. The reason for that is that $\mathcal M'$ is minimal and there exist $(v_A,v_B),(v_A',v_B')\in V_A\times V_B$ such that the optimal allocation for them differs.  
We will show that since she sends non-trivial message in the first round, she has a dominant strategy in   $\mathcal{M}'$ only
if Bob discloses very specific information that, in certain situations, does not affect the allocation. Formally:

\begin{claim} \label{claimbobsayspayment}
	One of the two conditions below necessarily holds:
	\begin{enumerate}
		\item For every $v_B^1,v_B^2\in V_B^{D,\gamma=m^5}$ such that $P_A(1,v_B^1)\neq P_A(1,v_B^2)$, Bob sends different messages at vertex $r$. \label{condicondi1}
		\item For every $v_B^1,v_B^2\in V_B^{D,\gamma=1}$ such that $P_A(m-1,v_B^1)\neq P_A(m-1,v_B^2)$, Bob sends different messages at vertex $r$. \label{condicondi2}
	\end{enumerate}
\end{claim}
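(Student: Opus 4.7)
The plan is to argue by contradiction: assume both Condition 1 and Condition 2 fail. Let $v_B^1, v_B^2 \in V_B^{D,m^5}$ witness the failure of Condition 1 (so $P_A(1,v_B^1) \neq P_A(1,v_B^2)$ yet both send a common message $z_B^{heavy}$ at $r$), and likewise let $\tilde v_B^1, \tilde v_B^2 \in V_B^{D,1}$ witness the failure of Condition 2 with common message $z_B^{light}$ at $r$.

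The first step is to identify the welfare-maximizing allocations on the relevant pairs via Lemma \ref{onlylemma}. A direct comparison of marginals, using that Bob's trailing (post-$x^*$) marginal in $V^{D,m^5}$ equals $m^5$ while Alice's second marginal is at most $\gamma_A(m^2-m+1)$, shows that for every $v_A \in V_A$ of light weight $\gamma_A$ (small enough that $\gamma_A(m^2-m+1) < m^5$, which covers all $\gamma_A \le m^3$) and every $v_B \in \{v_B^1, v_B^2\}$, the unique welfare-maximizing allocation is $(1,m-1)$. Symmetrically, for every $v_A \in V_A$ of sufficiently heavy weight (namely such that Alice's $(m-1)$-th marginal exceeds $m^2-m+1$, covering all $\gamma_A \ge m$ in $V^{ND}$ and all $\gamma_A \ge m^2$ in $V^{D}$) paired with any $v_B \in \{\tilde v_B^1, \tilde v_B^2\}$, the unique welfare-maximizer is $(m-1,1)$.

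The main step is an application of Lemma \ref{lemma-known-prices} inside Alice's induced tree at $r$ given the Bob message $z_B^{heavy}$. Suppose towards contradiction that two light-weight valuations $v_A, v_A' \in V_A$ sent different messages at $r$. By the previous paragraph the allocation $(1,m-1)$ would then appear in two distinct Alice-subtrees, witnessed by $(v_A, v_B^1)$ and $(v_A', v_B^1)$. Lemma \ref{lemma-known-prices} would then force every $(1,m-1)$-labelled leaf in this induced tree to carry the same Alice payment. However, the two leaves reached by $(v_A, v_B^1)$ and $(v_A, v_B^2)$ lie in the single $v_A$-subtree and carry Alice payments $P_A(1,v_B^1)$ and $P_A(1,v_B^2)$ respectively, which differ by assumption, a contradiction. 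Hence every light-weight $v_A \in V_A$ sends the same message $z_A^{light}$ at $r$. A symmetric application of the same argument inside Alice's induced tree at $r$ given $z_B^{light}$, using the $(m-1,1)$ characterization, shows that every heavy-weight $v_A$ sends the same message $z_A^{heavy}$.

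Finally, the light and heavy weight ranges overlap (for instance, $[m, m^3] \subseteq V^{ND}$ and $[m^2, m^3] \subseteq V^{D}$) and together cover all of $\{1,\ldots,m^5\}$ for both families $V^{ND}$ and $V^{D}$. The overlap forces $z_A^{light} = z_A^{heavy}$, and the union then forces every $v_A \in V_A$ to send the identical message at $r$. This contradicts the hypothesis that Alice, by minimality, distinguishes some pair of valuations at $r$, completing the proof. The main obstacle is the marginal comparison in the first step: one must verify carefully that the welfare-maximizer is uniquely $(1,m-1)$ (respectively $(m-1,1)$) on the full claimed ranges for both $V^{ND}$ and $V^{D}$ valuations; the rest reduces to a direct application of Lemma \ref{lemma-known-prices} and a straightforward union-of-intervals calculation.
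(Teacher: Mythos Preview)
Your proof is correct and follows essentially the same route as the paper. The paper packages the core step as a standalone Lemma~\ref{unite-big-small-lemma} (if Alice distinguishes $v_A^1,v_A^2$ and the optimum is $(x,m-x)$ with $P_A(x,v_B^1)\neq P_A(x,v_B^2)$, then Bob must distinguish $v_B^1,v_B^2$), then instantiates it via Lemmas~\ref{bigcorollary} and~\ref{smallcorollary} with the single split point $\gamma_A=m^2$. Your argument is the contrapositive of this, inlined: assuming Bob fails to distinguish, you conclude Alice collapses on a range, and then use an overlap/union of ranges to contradict minimality. Your threshold choices ($\gamma_A\le m^3$ for the light side, $\gamma_A\ge m$ for $V^{ND}$ and $\gamma_A\ge m^2$ for $V^{D}$ on the heavy side) are different from the paper's uniform $m^2$ split but equally valid; the paper's single cut is a little cleaner since it avoids treating $V^{ND}$ and $V^{D}$ asymmetrically on the heavy side.
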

For the proof of Claim \ref{claimbobsayspayment}, we
 prove the following lemma, which is the main working horse of this subsection:
\begin{lemma} \label{unite-big-small-lemma}
Let $v_A^{1},v_A^{2}$ be two valuations  of Alice, and let $v_B^1,v_B^2$ be two valuations of Bob such that: 
\begin{enumerate}
	\item The unique optimal solution for the instances $(v_A^1,v_B^1)$ and $(v_A^2,v_B^2)$ is $(x,m-x)$.
	\item $P_A(x,v_B^1) \neq P_A(x,v_B^2)$.
	\item Alice sends different messages at the root vertex $r$ for $v_A^1$ and $v_A^2$.  
\end{enumerate} 
\emph{Then,} Bob  sends different messages at the root vertex $r$ for the valuations $v_B^1$ and $v_B^2$.
\end{lemma}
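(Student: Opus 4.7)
The proof proposal is to argue by contradiction, using the induced-tree machinery from Lemma \ref{lemma-known-prices}.

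Suppose for contradiction that Bob's dominant strategy sends the \emph{same} message $z_B$ at the root $r$ for both $v_B^1$ and $v_B^2$. The plan is to look at Alice's induced tree rooted at $r$ given $z_B$, locate the two leaves reached by $(v_A^1,v_B^1)$ and $(v_A^2,v_B^2)$, observe that they lie in two different subtrees of this induced tree, and then invoke Lemma \ref{lemma-known-prices} to derive a contradiction with assumption~2.

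More concretely, I would proceed as follows. Let $\ell_1$ be the leaf reached by the dominant-strategy play of $(v_A^1,v_B^1)$ and $\ell_2$ the leaf reached by $(v_A^2,v_B^2)$. Because the mechanism realizes the welfare-maximizer $f$ with payment schemes $P_A,P_B$, and by assumption~1 the unique optimal allocation at both instances is $(x,m-x)$, both leaves $\ell_1$ and $\ell_2$ are labeled with the allocation $(x,m-x)$; moreover, Alice's payment at $\ell_1$ equals $P_A(x,v_B^1)$ and at $\ell_2$ equals $P_A(x,v_B^2)$ (since in a DSIC mechanism Alice's payment depends only on her allocation and the opponent's valuation via the taxation principle). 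Under our contradiction hypothesis, both $\ell_1$ and $\ell_2$ lie inside Alice's induced tree at $r$ given $z_B$: the first edge out of $r$ on each path has Bob playing $z_B$. By assumption~3, the first messages of Alice at $r$ are different for $v_A^1$ and $v_A^2$, so $\ell_1$ and $\ell_2$ sit in \emph{distinct} subtrees of this induced tree.

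Now Lemma \ref{lemma-known-prices} applies: since the allocation $(x,m-x)$ appears in two different subtrees of Alice's induced tree at $r$ given $z_B$, all leaves of that induced tree labeled with $(x,m-x)$ share the same payment for Alice. In particular $P_A(x,v_B^1)=P_A(x,v_B^2)$, contradicting assumption~2. Hence Bob must send different messages at $r$ for $v_B^1$ and $v_B^2$.

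The only subtle point I anticipate is ensuring that both $\ell_1$ and $\ell_2$ genuinely sit in Alice's induced tree at $r$ given $z_B$, i.e.\ that once we replace Bob's later play by his actual dominant responses under $v_B^1$ and $v_B^2$ respectively, the paths stay within the induced subtree rooted at $z_B$. This is immediate from the definition of the induced tree, which fixes only Bob's first message at $r$ and leaves all subsequent subtrees intact, so no further constraints on Bob's later play are needed. The rest of the argument is a direct application of Lemma \ref{lemma-known-prices}.
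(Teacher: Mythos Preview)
Your proof is correct and essentially identical to the paper's own argument: assume Bob sends the same message $z_B$, observe that the leaves reached by $(v_A^1,v_B^1)$ and $(v_A^2,v_B^2)$ lie in distinct subtrees of Alice's induced tree at $r$ given $z_B$ (by assumption~3), note both are labeled with the allocation $(x,m-x)$ and payments $P_A(x,v_B^1)$, $P_A(x,v_B^2)$ respectively, and apply Lemma~\ref{lemma-known-prices} to force these payments equal, contradicting assumption~2. The paper's write-up is slightly terser but follows exactly this route.
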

\begin{proof}
	Denote with $z_A^1$ and $z_A^2$ the messages that Alice sends for $v_A^1,v_A^2$.
	Assume towards a contradiction that Bob sends the same message $z_B$ for the valuations $v_B^1,v_B^2$ at the root vertex $r$. Let $t_1,t_2$ be the subtrees that the message profiles $(z_A^1,z_B)$ and $(z_A^2,z_B)$ lead to. 
	Denote with $l_1,l_2$ the leaves that $(v_A^1,v_B^1)$ and $(v_A^2,v_B^2)$ reach (respectively). 
	For an illustration, see Figure \ref{new-figure}.
	\begin{figure} [H] 
		\centering
		
		\begin{tikzpicture}
			\node[shape=circle,draw=black,minimum size=0.8cm] (r) at (1.5,1.5) {$r$};
			\node[shape=circle,draw=black,minimum size=0.8cm] (v) at (0,0) {};
			\node[shape=circle,draw=black,minimum size=0.8cm,fill=pink!90] (l) at (-1,-1) {$l_1$};
			\node[] (bundle) at (-1,-1.7) {\footnotesize	$x,P_A(x,v_B^1)$};
			\node[shape=circle,draw=black,minimum size=0.8cm] (v') at (3,0) {}; 
			\node[shape=circle,draw=black,minimum size=0.8cm,fill=blue!30] (l') at (4,-1) {$l_2$};
			\node[] (bundle') at (4,-1.7) {\footnotesize	$x,P_A(x,v_B^2)$}; 
			\node[shape=circle,draw=black,minimum size=0.8cm] (n') at (2,-1) {}; 
			\node[shape=circle,draw=black,minimum size=0.8cm] (n) at (1,-1) {}; 
			
			\draw[thick,red!60,dashed] (-2,0.5) rectangle (1.45,-2) {};
			\node[black,font=\itshape,text=red!85] at (-0.3,-2.3) {subtree $t_1$};
			
			\draw[thick,blue!60,anchor=mid west,dashed] (1.55,0.5) rectangle (5,-2) {};
			\node[black,font=\itshape,text=blue!85] at (3.4,-2.3) {subtree $t_2$};

			\draw [->] (r) edge  node[sloped, above] {\footnotesize$z_A^1$} (v);
			\draw [->] (r) edge  node[sloped, above] {\footnotesize$z_A^2$} (v');
			\draw [->] (v) edge[dotted]  node[sloped, above] {}  (l);
			\draw [->] (v') edge[dotted]  node[sloped, above] {} (l');
			\draw [->] (v) edge[dotted]  node[sloped, above] {} (n);
			\draw [->] (v') edge[dotted]  node[sloped, above] {} (n');
			
		\end{tikzpicture}
		\caption{An illustration for the proof of Lemma \ref{unite-big-small-lemma}. It describes two subtrees $t_1,t_2$ in the tree that the message $z_B$ of Bob induces for Alice at the root vertex $r$. The leaves $l_1,l_2$ are the leaves that $(v_A^1,v_B^1)$ and $(v_A^2,v_B^2)$ reach, so by assumption they are labeled with the allocation $x$ items for Alice with a price of $P_A(x,v_B^1)$  and $P_A(x,v_B^2)$, respectively.
		}
		\label{new-figure}
		
	\end{figure}
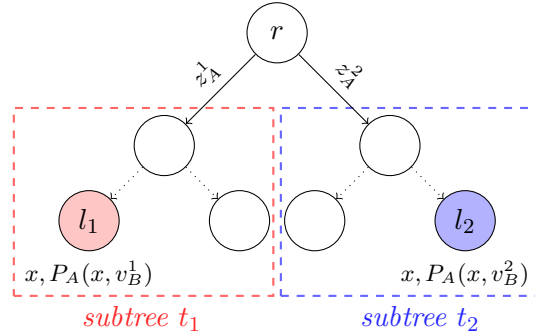
	
	Note that the leaf $l_1$ is labeled with the allocation $(x,m-x)$ and with the payment $P_A(x,v_B^1)$ for Alice, and similarly the leaf $l_2$ is labeled with the allocation $(x,m-x)$ and with the payment $P_A(x,v_B^2)$ for Alice. 
	Observe that $l_1,l_2$ appear in different subtrees $t_1,t_2$, so by Lemma \ref{minimal-lemma}, they are labeled with the same payment for Alice. However, $P_A(x,v_B^1) \neq P_A(x,v_B^2)$ by assumption, so we reach a contradiction.

\end{proof}	

The following two lemmas are immediate corollaries of Lemma \ref{unite-big-small-lemma}:
\begin{lemma}\label{bigcorollary}
	Assume that there exist two valuations
	$v_A^1,v_A^2\in V_A^{\le m^2}$ that Alice sends different messages for at the root vertex $r$. 
	Let $v_B^1,v_B^2\in V_B^{D,\gamma=m^5}$ be two semi-decisive valuations of Bob such that $P_A(1,v_B^1)\neq P_A(1,v_B^2)$.  \emph{Then}, Bob sends different messages at the root vertex $r$ for the valuations $v_B^1$ and $v_B^2$. 
\end{lemma}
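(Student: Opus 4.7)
\textbf{Plan for the proof of Lemma \ref{bigcorollary}.}
The idea is to apply Lemma \ref{unite-big-small-lemma} directly with $x = 1$. Of the three hypotheses of that lemma, the third (Alice sends different messages at $r$ for $v_A^1$ and $v_A^2$) is exactly what we assumed, and the second ($P_A(1, v_B^1) \neq P_A(1, v_B^2)$) is given by hypothesis on $v_B^1, v_B^2$. So the only real work is to verify the first hypothesis: that the unique welfare-maximizing allocation for both instances $(v_A^1, v_B^1)$ and $(v_A^2, v_B^2)$ is $(1, m-1)$.

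To check this, I will use Lemma \ref{onlylemma} with $s = 1$. Concretely, I need
\[
v_B^j(m-1) - v_B^j(m-2) > v_A^i(2) - v_A^i(1)
\quad\text{and}\quad
v_A^i(1) - v_A^i(0) > v_B^j(m) - v_B^j(m-1)
\]
for $i,j \in \{1,2\}$. The second inequality is immediate: since $v_A^i \in V_A^{\le m^2}$ with weight at least $1$, we have $v_A^i(1) \geq 3m^8$, while $v_B^j(m) - v_B^j(m-1) = d_m \leq 1$. For the first inequality, since $v_B^j \in V_B^{D,\gamma=m^5}$ and the special bundle $x^*$ lies in $\{2,\ldots,m-2\}$, the marginal $v_B^j(m-1) - v_B^j(m-2)$ always equals $m^5$ (it is either the ``small'' step $\gamma = m^5$, or, in the borderline case $x^* = m-2$, still equals $\gamma = m^5$ because $v_B^j(m-1) = \gamma + v_B^j(m-2)$). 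Meanwhile, $v_A^i(2) - v_A^i(1) \leq \gamma_A \cdot (m^2 - m + 1) \leq m^2 \cdot m^2 = m^4$ for any valuation in $V_A^{\le m^2}$ (whether semi-decisive or non-decisive, the marginal at $x = 2$ is bounded by the product of the weight and the largest possible entry of $D_2$). Hence $v_B^j(m-1) - v_B^j(m-2) = m^5 > m^4 \geq v_A^i(2) - v_A^i(1)$, establishing the first inequality.

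Thus Lemma \ref{onlylemma} applies and the unique welfare-maximizing allocation for each of the four instance combinations is $(1, m-1)$, verifying hypothesis~1 of Lemma \ref{unite-big-small-lemma} with $x = 1$. Invoking that lemma yields immediately that Bob sends different messages at the root $r$ for $v_B^1$ and $v_B^2$, completing the proof.

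The only subtle step is the verification of the marginal-value gap at position $m-1$ for Bob's valuations, which requires unpacking the piecewise definition of $V^{D,\gamma}$ and checking the borderline case $x^* = m-2$; everything else is a direct unpacking of definitions and appeals to Lemma \ref{onlylemma} and Lemma \ref{unite-big-small-lemma}.
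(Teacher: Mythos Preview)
Your proof is correct and follows essentially the same approach as the paper: both verify the two marginal inequalities needed for Lemma~\ref{onlylemma} with $s=1$ and then invoke Lemma~\ref{unite-big-small-lemma}. The paper bounds $v_A(2)-v_A(1)\le m^2(m^2-m+1)$ directly rather than your coarser $\le m^4$, and your parenthetical remark about ``the largest possible entry of $D_2$'' is slightly off (that entry is $m^2-m$, whereas the semi-decisive marginal is $\gamma(m^2-m+1)$), but the stated bound $\gamma_A(m^2-m+1)$ is correct for both families and the argument goes through unchanged.
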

\begin{proof}
	We begin by showing that for every $v_A\in V_A^{\le m^2}$ and for every $v_B\in V_B^{D,\gamma=m^5}$, the unique optimal allocation is $(1,m-1)$. By Lemma \ref{onlylemma}, it suffices to prove the inequalities $v_B(m-1)-v_B(m-2)>v_A(2)-v_A(1)$ and $v_A(1)-v_A(0)>v_B(m)-v_B(m-1)$, which hold by definition: 
	\begin{gather*}
	v_B(m-1)-v_B(m-2) \ge m^5 > m^2\cdot (m^2-m+1) \ge v_A(2)-v_A(1) \\
		\implies v_B(m-1)-v_B(m-2)> v_A(2)-v_A(1)  \\
	v_A(1)-v_A(0)\ge 1\cdot 3m^8 >  1 \ge v_B(m)-v_B(m-1) 
	\implies v_A(1)-v_A(0)> v_B(m)-v_B(m-1)   
	\end{gather*}
	Thus, the unique optimal allocation for the instances  $(v_A^1,v_B^1),(v_A^2,v_B^2)$ is $(1,m-1)$.
	Recall that by assumption Alice sends different messages for $v_A^1,v_A^2$ and that $P_A(1,v_B^1)\neq P_A(1,v_B^2)$, so by Lemma \ref{unite-big-small-lemma}, Bob sends different messages at the root vertex for $v_B^1$ and $v_B^2$, as needed.  
\end{proof}

\begin{lemma}\label{smallcorollary}
	Assume that there exist two valuations
	$v_A^1,v_A^2\in V_A^{\ge m^2}$ that Alice sends different messages for at the root vertex $r$. 
	Let $v_B^1,v_B^2\in V_B^{D,\gamma=1}$ be two semi-decisive valuations of Bob such that $P_A(m-1,v_B^1)\neq P_A(m-1,v_B^2)$.  \emph{Then}, Bob sends different messages at the root vertex $r$ for the valuations $v_B^1$ and $v_B^2$. 
\end{lemma}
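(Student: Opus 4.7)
The plan is to follow the proof of Lemma~\ref{bigcorollary} essentially verbatim, swapping the weight roles of Alice and Bob so that now Alice carries the large weight ($\gamma_A \geq m^2$) and Bob's semi-decisive valuations carry the small weight ($\gamma_B = 1$). The effect of this reversal is that the unique welfare-maximizing allocation for every pair $(v_A, v_B) \in V_A^{\geq m^2} \times V_B^{D, \gamma=1}$ is $(m-1, 1)$ rather than $(1, m-1)$, after which I would invoke Lemma~\ref{unite-big-small-lemma} with $x = m-1$ exactly as in the proof of Lemma~\ref{bigcorollary}.

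First I would verify, via Lemma~\ref{onlylemma}, that $(m-1,1)$ is the unique optimum for every such pair. This reduces to checking two marginal inequalities: $v_A(m-1) - v_A(m-2) > v_B(2) - v_B(1)$ and $v_B(1) - v_B(0) > v_A(m) - v_A(m-1)$. For the first, if $v_A$ is non-decisive then the marginal at position $m-1$ lies in $\gamma_A \cdot D_{m-1}$, whose minimum element is $\gamma_A \cdot m \geq m^3$; if $v_A$ is semi-decisive, then since $x^\ast \in \{2,\ldots,m-2\}$ the marginal at position $m-1$ equals $\gamma_A \geq m^2$. On Bob's side, since $x^\ast \geq 2$ and $\gamma_B = 1$, the marginal at position $2$ equals $m^2 - m + 1$. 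The first inequality therefore reduces to $m^2 > m^2 - m + 1$, which holds for $m \geq 2$. For the second inequality, $v_A(m) - v_A(m-1) \leq 1$ while $v_B(1) - v_B(0) = 3m^8$, which is immediate.

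Having established that $(m-1, 1)$ is the unique optimum for all four instances $(v_A^i, v_B^j)$, the three hypotheses of Lemma~\ref{unite-big-small-lemma} are satisfied with $x = m-1$: the unique optimum for $(v_A^1, v_B^1)$ and $(v_A^2, v_B^2)$ is $(m-1, 1)$, the payment inequality $P_A(m-1, v_B^1) \neq P_A(m-1, v_B^2)$ holds by assumption, and by assumption Alice sends different messages at the root $r$ for $v_A^1$ and $v_A^2$. The lemma then immediately yields that Bob sends different messages at $r$ for $v_B^1$ and $v_B^2$.

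No new conceptual step arises beyond Lemma~\ref{bigcorollary}; the only thing to watch is the routine case split across the semi-decisive and non-decisive templates when lower-bounding Alice's marginal at position $m-1$, and the observation that the constraint $x^\ast \in \{2,\ldots, m-2\}$ in the semi-decisive template is precisely what keeps the marginal there equal to $\gamma_A$ rather than to one of the large values from the prefix. This is a genuinely minor obstacle, essentially bookkeeping.
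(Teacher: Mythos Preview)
Your proposal is correct and follows essentially the same argument as the paper: verify via Lemma~\ref{onlylemma} that $(m-1,1)$ is the unique optimum for every pair in $V_A^{\ge m^2}\times V_B^{D,\gamma=1}$ by checking the two marginal inequalities, then invoke Lemma~\ref{unite-big-small-lemma} with $x=m-1$. The only cosmetic difference is that the paper collapses your case split into the single bound $v_A(m-1)-v_A(m-2)\ge m^2$, which covers both the semi-decisive and non-decisive templates at once.
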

\begin{proof}
	We begin by showing that for every $v_A\in V_A^{\ge m^2}$ and for every $v_B\in V_B^{D,\gamma=1}$, the unique optimal allocation is $(m-1,1)$. 
		By Lemma \ref{onlylemma}, it suffices to prove the inequalities $v_A(m-1)-v_A(m-2)>v_B(2)-v_B(1)$ and $v_B(1)-v_B(0)>v_A(m)-v_A(m-1)$ that hold by definition:
	\begin{gather*}
		v_A(m-1)-v_A(m-2)\ge m^{2}  > 1\cdot (m^2-m+1) \ge v_B(2)-v_B(1) \\
		\implies
		v_A(m-1)-v_A(m-2)> v_B(2)-v_B(1)          \\
		v_B(1)-v_B(0)\ge 1\cdot 3m^8 >  1 \ge  v_A(m)-v_A(m-1) 
		\implies
		v_B(1)-v_B(0)> v_A(m)-v_A(m-1)          
	\end{gather*}
Thus, the optimal allocation for the instances  $(v_A^1,v_B^1),(v_A^2,v_B^2)$ is $(m-1,1)$.
Recall that by assumption Alice sends different messages for $v_A^1,v_A^2$ and that $P_A(m-1,v_B^1)\neq P_A(m-1,v_B^2)$, so by Lemma \ref{unite-big-small-lemma}, Bob sends different messages at the root vertex for $v_B^1,v_B^2$, as needed.
\end{proof}	
We can now prove Claim \ref{claimbobsayspayment}: 
\begin{proof}[Proof of Claim \ref{claimbobsayspayment}]
		Recall that we have assumed (without loss of generality) that there exist two valuations of Alice that she sends different messages for at the root vertex $r$. It implies that the mechanism $\mathcal{M}'$ satisfies at least one of the following conditions: either Alice  sends different messages for two valuations in $V_A^{\le m^2}$ or she sends different messages for two valuations in $V_A^{\ge m^2}$ (otherwise, she sends the same message for all valuations in $V_A$, since  $V_A^{\le m^2}$ and $V_A^{\ge m^2}$ are intersecting and $V_A=V_A^{\le m^2}\cup V_A^{\ge m^2}$). 
		
		If she sends different messages for two valuations in $V_A^{\le m^2}$ at the root vertex $r$, by Lemma \ref{bigcorollary}, we get that for every $v_B^1,v_B^2\in V_B^{D,\gamma=m^5}$ such that $P_A(1,v_B^1)\neq P_A(1,v_B^2)$, Bob sends different messages at vertex $r$. Similarly, if she sends different messages for two valuations in $V_A^{\ge m^2}$, then by applying Lemma \ref{smallcorollary} we have that for every $v_B^1,v_B^2\in V_B^{D,\gamma=1}$ with $P_A(m-1,v_B^1)\neq P_A(m-1,v_B^2)$, Bob sends different messages at vertex $r$.    
%
%
%
%
%
\end{proof}
\subsubsection{Alice Commits to Bob's Payment} \label{alicesayspayment-sec}
We now use the information revealed by Bob about the semi-decisive valuations in $V_B^{\gamma=1}$ or in $V_B^{\gamma=m^5}$ to show that there exists \textquote{large} set of valuations such that Alice has to commit to Bob's payment for every possible allocation in the first round of the mechanism.  
 In Subsection \ref{sketch-subsec}, we will show how to use the payment to reconstruct these valuations. 
 
 Observe that we now use the fact that $\mathcal{M}'$ is dominant strategies for Bob. 
For the statement of the claim, we define $v_{m-s}^{\gamma}\in V^{D,\gamma}$ as the semi-decisive valuation parameterized with weight $\gamma$, the special bundle $x^\ast=m-s$ and the margin $d_m=\frac{1}{2}$.
\begin{claim}\label{muamainclaim}
	The following holds for either $\gamma=1$ or for $\gamma=m^5$. Let $v_A\in V_A^{ND,\gamma}$ be a valuation, and let $z_A$ be the message that Alice sends for it at the root of the protocol. 
	Fix a number of items $s\in \{2,\ldots,m-2\}$ and let $z_B$ be the message that Bob sends at the root if his valuation is the semi-decisive valuation $v_{m-s}^{\gamma}$ defined above. 
	Denote with $t$ the subtree that the message profile $(z_A,z_B)$ leads to. 
	\emph{Then:}  
	\begin{enumerate}
		\item There exists a leaf at subtree $t$ labeled with the allocation $(s,m-s)$. 
		\item Every leaf at subtree $t$  that is labeled with the allocation $(s,m-s)$ satisfies that it is labeled with the payment $P_B(m-s,v_A)$ for Bob. 
	\end{enumerate}
\end{claim}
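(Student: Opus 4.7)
The plan is to pick $\gamma$ according to Claim~\ref{claimbobsayspayment} --- take $\gamma=m^5$ if condition~(\ref{condicondi1}) holds, and $\gamma=1$ otherwise --- and derive both parts by pairing the decisive valuation $v_{m-s}^{\gamma}$ with a sibling $v'\in V_B^{D,\gamma}$ that reaches a different subtree of the induced tree of Bob at $r$ but still yields the same welfare-maximizing allocation. For Part~1 I would directly verify the hypotheses of Lemma~\ref{onlylemma} for the instance $(v_A,v_{m-s}^{\gamma})$: Bob's marginals are $\gamma(m^2-m+1)$ on bundles $\{2,\ldots,m-s\}$ and $\gamma$ on $\{m-s+1,\ldots,m-1\}$, while for any $v_A\in V^{ND,\gamma}$ the $x$-th marginal lies in the band $\gamma\{m^2-mx,\ldots,m^2-m(x-1)\}$ for $x\in\{2,\ldots,m-1\}$. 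A short arithmetic check using $s\in\{2,\ldots,m-2\}$ gives both $v_B(m-s)-v_B(m-s-1)>v_A(s+1)-v_A(s)$ and $v_A(s)-v_A(s-1)>v_B(m-s+1)-v_B(m-s)$, so $(s,m-s)$ is the unique welfare-maximizing allocation. Since the dominant strategies for $(v_A,v_{m-s}^{\gamma})$ begin with $(z_A,z_B)$ and therefore reach some leaf of $t$, that leaf must be labeled $(s,m-s)$, proving Part~1.

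For Part~2, take $v'\in V_B^{D,\gamma}$ identical to $v_{m-s}^{\gamma}$ except that its top margin is $d_m=1$ instead of $\frac{1}{2}$. Because $d_m$ enters only the final marginal, the crossing inequalities above are unaffected and $(v_A,v')$ also has unique optimum $(s,m-s)$. To see that Bob sends some message $z_B'\neq z_B$ for $v'$ at $r$, I would apply Lemma~\ref{lemma-approx-uniqueness}: in the $\gamma=m^5$ case it places $P_A(1,\cdot)$ within $\frac{1}{8m}$ of $v_B(m)-v_B(m-1)=d_m$, so the payments for $v_{m-s}^{\gamma}$ and $v'$ lie in disjoint intervals around $\frac{1}{2}$ and $1$; in the $\gamma=1$ case the same lemma places $P_A(m-1,\cdot)$ within $\frac{1}{8m}$ of $v_B(m)-v_B(1)$, and since $v_B(1)=\gamma\cdot 3m^8$ is identical across the two valuations while $v_B(m)$ differs by exactly $\frac{1}{2}$, the same disjointness holds. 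Either way the appropriate condition of Claim~\ref{claimbobsayspayment} applies and forces $z_B'\neq z_B$.

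To conclude, I would invoke Lemma~\ref{lemma-known-prices} on the induced tree of Bob at $r$ given Alice's message $z_A$: the subtree $t$ (reached by $z_B$) and the subtree $t'$ (reached by $z_B'$) each contain a leaf labeled $(s,m-s)$, namely the leaves reached by $(v_A,v_{m-s}^{\gamma})$ and $(v_A,v')$ respectively. Since $(s,m-s)$ appears in two distinct subtrees of this induced tree, the lemma forces every leaf of the induced tree labeled $(s,m-s)$ to carry the same payment for Bob, and by the definition of $P_B$ at the $(v_A,v_{m-s}^{\gamma})$-leaf that common payment equals $P_B(m-s,v_A)$; in particular every leaf of $t$ labeled $(s,m-s)$ has Bob's payment $P_B(m-s,v_A)$. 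The main delicate point is the choice of $v'$: it must simultaneously preserve the optimum and perturb exactly the price ($P_A(1,\cdot)$ or $P_A(m-1,\cdot)$) to which the chosen case of Claim~\ref{claimbobsayspayment} is sensitive, and perturbing only the top margin $d_m$ achieves both because, by Lemma~\ref{lemma-approx-uniqueness}, $d_m$ is the sole parameter controlling those extreme-bundle prices while never entering any crossing inequality.
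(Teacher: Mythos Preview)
Your proposal is correct and mirrors the paper's proof essentially step for step: choose $\gamma$ via Claim~\ref{claimbobsayspayment}, verify via Lemma~\ref{onlylemma} that $(s,m-s)$ is the unique optimum for $(v_A,v_{m-s}^{\gamma})$ to get Part~1, pair $v_{m-s}^{\gamma}$ with the sibling obtained by flipping $d_m$ to $1$, use Lemma~\ref{lemma-approx-uniqueness} to separate the relevant $P_A$-values and hence force distinct root messages, and finish with Lemma~\ref{lemma-known-prices}. The only cosmetic difference is that you spell out the $\gamma=1$ case explicitly rather than calling it analogous.
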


\begin{proof}
	We show that condition \ref{condicondi1} of Claim \ref{claimbobsayspayment} implies that  Claim \ref{muamainclaim} holds for $\gamma=m^5$. The proof that condition \ref{condicondi2} of Claim \ref{claimbobsayspayment} implies that Claim \ref{muamainclaim} holds for $\gamma=1$ is analogous. Claim \ref{muamainclaim} follows since by Claim \ref{claimbobsayspayment} at least one of the conditions specified in the statement of Claim \ref{claimbobsayspayment} holds. 
	
	Assume that condition \ref{condicondi1} holds. Let $v_A\in V_A^{ND,\gamma=m^5}$ be a valuation, and let $s \in \{2,\ldots,m-2\}$ be a number of items. 
	Define $v_B,v_B'\in V_B^{D,\gamma=m^5}$ as follows.  
	\begin{gather*}
	v_B=	v_{m-s}^{m^5},\quad
	v_B'(x)=	\begin{cases}
	0 \quad&x=0, \\
	m^5\cdot m^8 \quad&x=1, \\
	m^5\cdot (m^2-m+1)+v'(x-1) \quad& x\in \{2,\ldots,m-s\}, \\
	m^5+v'(x-1) \quad&x\in \{m-s+1,\ldots,m-1\}, \\
	v'(m-1)+ 1
	\quad&x=m.
	\end{cases} 
	\end{gather*}
In words, $v_B$ and $v_B'$ are the two decisive valuations with weight $\gamma=m^5$ and special bundle $x^\ast=m-s$. Note that the only difference between $v_B,v_B'$ is the marginal value of the $m'$th item.   

	We begin by explaining why the unique welfare maximizing allocation for the instance $(v_A,v_B)$ is $(s,m-s)$.
%
	By Lemma \ref{onlylemma}, it suffices to prove that: 
	\begin{gather*}
		v_B(m-s)-v_B(m-s-1)= m^5\cdot (m^2-m+1)>m^5\cdot (m^2-m) \ge  v_A(2)-v_A(1)\ge v_A(s+1)-v_A(s)  \\
		v_A(s)-v_A(s-1) \ge v_A(m-1)-v_A(m-2) \ge  m^5 \cdot m > m^5 \ge v_B(m-s+1)-v_B(m-s)
	\end{gather*}
Thus, the leaf $l$ that $(v_A,v_B=v_{m-s}^{m^5})$ reaches is labeled with the allocation $(s,m-s)$. By definition, this leaf belongs in the subtree $t$, so we have part $1$ of the claim.  
For the proof of the second part, recall that by Lemma \ref{lemma-approx-uniqueness} we have that: 
\begin{equation*}\label{eq-payment-range}
P_A(1,v_B)\le v_B(m)-v_B(m-1)+\frac{1}{8m},\quad 
P_A(1,v_B')\ge v_B'(m)-v_B'(m-1)-\frac{1}{8m}
\end{equation*}
Therefore: 
\begin{multline*}
P_A(1,v_B)\le v_B(m)-v_B(m-1)+\frac{1}{8m} < v_B'(m)-v_B'(m-1) - \frac{1}{8m}\le P_A(1,v_B')  \\ \implies P_A(1,v_B) < P_A(1,v_B')
\end{multline*}
where the strict inequality holds because $v_B'(m)-v_B(m)=\frac{1}{2}$ and $v_B'(m-1)=v_B(m-1)$.
Therefore, by condition  \ref{condicondi1} of Claim \ref{claimbobsayspayment} we have that 
Bob sends different message $z_B'$ for $v_B'$ than the message $z_B$ he sends for $v_B$ at vertex $r$. 

	Denote with $t'$ the subtree that the messages $(z_A,z_B')$ lead to, and denote the leaf in $t'$ that $(v_A,v_B')$ reaches with $l'$. 
	Since $v_B$ and $v_B'$ are equal for all the coordinates in $\{1,\ldots,m-1\}$, we have that the unique welfare-maximizing allocation for $(v_A,v_B')$ is also $(s,m-s)$, so $l'$ is labeled with it. For an illustration, see Figure \ref{mua-figure}.  
	
%
%
%
	
			Since the mechanism $\mathcal M'$ realizes the welfare-maximizer $f$ with the payment schemes $P_A,P_B$, we have that the leaf $l$ that is labeled with the allocation 
	$(s,m-s)$ is labeled with the payment $P_B(m-s,v_A)$ for Bob.  By Lemma \ref{lemma-known-prices} all the leaves in $t$ and in $t'$ that are labeled with the allocation $(s,m-s)$ have the same price for Bob.
	By combining these two facts, we get that all the leaves in the subtree $t$ labeled with the allocation $(s,m-s)$ are labeled with  the payment $P_B(m-s,v_A)$ for Bob, which completes the proof. 
\end{proof}
	
		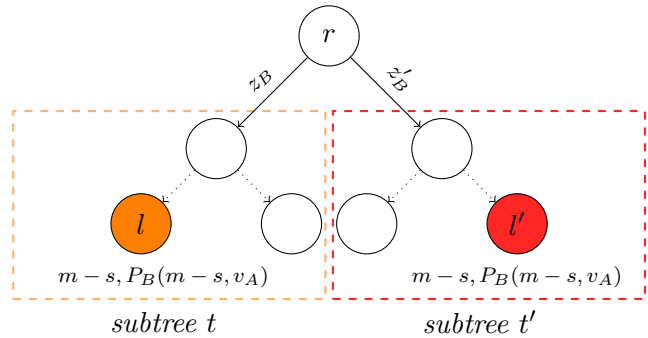
\begin{figure}[H]
		\centering
		
		\begin{tikzpicture}
			\node[shape=circle,draw=black,minimum size=0.8cm] (r) at (1.5,1.5) {$r$};
			\node[shape=circle,draw=black,minimum size=0.8cm] (v) at (0,0) {};
			\node[shape=circle,draw=black,minimum size=0.8cm,fill=orange] (l) at (-1,-1) {$l$};
			\node[] (bundle) at (-0.7,-1.7) {\scriptsize	$m-s, P_B(m-s,v_A)$};
			\node[shape=circle,draw=black,minimum size=0.8cm] (v') at (3,0) {}; 
			\node[shape=circle,draw=black,minimum size=0.8cm,fill=red!85] (l') at (4,-1) {$l'$};
			
			\node[] (bundle') at (4,-1.7) {\scriptsize $m-s, P_B(m-s,v_A)$}; 
			\node[shape=circle,draw=black,minimum size=0.8cm] (n') at (2,-1) {}; 
			\node[shape=circle,draw=black,minimum size=0.8cm] (n) at (1,-1) {}; 
			
			\draw[thick,orange!60,dashed] (-2.7,0.5) rectangle (1.45,-2) {};
			\node[black,font=\itshape] at (-0.7,-2.3) {subtree $t$};
			
			\draw[thick,red!85,anchor=mid west,dashed] (1.55,0.5) rectangle (5.7,-2) {};
			\node[black,font=\itshape] at (3.5,-2.3) {subtree $t'$};

			\draw [->] (r) edge  node[sloped, above] {\footnotesize$z_B$} (v);
			\draw [->] (r) edge  node[sloped, above] {\footnotesize$z_B'$} (v');
			\draw [->] (v) edge[dotted]  node[sloped, above] {}  (l);
			\draw [->] (v') edge[dotted]  node[sloped, above] {} (l');
			\draw [->] (v) edge[dotted]  node[sloped, above] {} (n);
			\draw [->] (v') edge[dotted]  node[sloped, above] {} (n');
			
		\end{tikzpicture}
		\caption{An illustration for the proof of Claim \ref{muamainclaim}. It describes the subtrees $t,t'$ in the tree that the message $z_A$ of Alice induces for Bob at the root vertex $r$. The leaves $l,l'$ are the leaves that $(v_A,v_B)$ and $(v_A,v_B')$ reach, so as we prove they are labeled with the allocation $(s,m-s)$. 
		}
		\label{mua-figure}
		
	\end{figure}

%

\subsubsection{Reconstructing Alice's Valuation} \label{sketch-subsec}
We can now complete the proof of Proposition \ref{dsicsketchprop}.	
Let $\gamma\in\{1,m^5\}$ be the scalar that Claim \ref{muamainclaim} holds for.
We will show how to represent every valuation in $V^{ND,\gamma}$ with at most $c'+\mathcal{O}(\log (m))\le c+ \mathcal{O}(\log (m))$ bits (we remind that $c,c'$ stand for the communication complexity of the mechanisms $\mathcal{M},\mathcal{M}'$).

The representation of a valuation $v$ is composed of the values $v(1),v(m-1),v(m)$ and the message $z_A$ Alice sends at the root vertex $r$ given the valuation $v$. For every number of items $s\in[m]$, we show how to compute $v(s)$ without any additional communication. 

$v(1),v(m-1)$ and $v(m)$ are specified in the sketch. Let $s \in \{2,\ldots,m-2\}$. Let $z_B$ be the message that Bob sends at the root vertex $r$ when his valuation is the decisive valuation $v_B=v_{m-s}^\gamma$. Let $\ell$ be an arbitrary leaf in the subtree that $(z_A,z_B)$ leads to that is labeled with the allocation $(s,m-s)$. By Claim \ref{muamainclaim}, such a leaf exists and it is labeled with the payment $P_B(m-s,v_A)$ for Bob. 
%
Recall that $v(m)$ is included in the representation, so by Corollary \ref{cor-approx-payments} we can extract $v(s)$.
\subsubsection{Proof of Lemma \ref{lemma-approx-uniqueness}} \label{subsec-approx-uniqueness}
We prove Lemma \ref{lemma-approx-uniqueness} for the valuations of Alice and the payment scheme of Bob. The proof for Bob's valuations and Alice's payment scheme is identical. 

Fix a valuation $v_A$ and number of items $x\in \{1,\ldots,m-1\}$.  
We begin by  showing that for every $1\le y\le x$, the payment of Bob satisfies that:
\begin{equation*}
P_B(y,v_A)-P_B(y-1,v_A) \in \big[v_A(m-y+1)-v_A(m-y)\pm \frac{1}{8m^2}\big] 
\end{equation*}
Fix $1\le y\le x$. Observe the valuation $v_{0}\in V^{P}$ that is parameterized with the valuation $v_A$, with the special bundle $t^\ast=y$ and with the sign $sn=0$:
\begin{gather*} 
	v_0(x)=\begin{cases}
		0 \quad& x=0, \\
		m^{15}+ v_0(x-1) \quad& 0<x<y, \\
		v_A(m-y+1)-v_A(m-y)+ \frac{1}{8m^2}+v_0(x-1) \quad& x=y, \\
		v_0(x-1) \quad& x>y.
	\end{cases}
\end{gather*}
Note that if Alice's valuation is $v_A$ and Bob's valuation is $v_0$, the following inequalities hold:
\begin{gather*}
	v_0(y)-v_0(y-1)=v_A(m-y+1)-v_A(m-y)+\frac{1}{8m^2}> v_A(m-y+1)-v_A(m-y) \\
	v_0(y+1)-v_0(y)< 0<v_A(m-y)-v_A(m-y-1)
\end{gather*}
Therefore, by Lemma \ref{onlylemma}, the unique welfare maximizing allocation is that Alice wins $m-y$ items and Bob wins $y$ items. We remind that $\mathcal{M}$ is ex-post incentive compatible (since it is dominant strategy incentive compatible), and that it realizes a welfare-maximizer with the payment schemes $P_A,P_B$, so:  
\begin{multline}\label{eq-payment-upper}
v_0(y)-P_B(y,v_A)\ge v_0(y-1)-P_B(y-1,v_A) \\ \implies
v_A(m-y+1)-v_A(m-y)+\frac{1}{8m^2}=v_0(y)-v_0(y-1)\ge
P_B(y,v_A) - P_B(y-1,v_A)  
\end{multline}
To prove a lower bound on $P_B(y,v_A)-P_B(y-1,v_A)$, we construct the valuation $v_1\in V^{P}$ which is parameterized with the valuation $v_A$, the special bundle $t^\ast=y$ and $sn=1$:
\begin{gather*} 
	v_1(x)=\begin{cases}
		0 \quad& x=0, \\
		m^{15}+ v_1(x-1) \quad& 0<x<y, \\
		v_A(m-y+1)-v_A(m-y)- \frac{1}{8m^2}+v_1(x-1) \quad& x=y, \\
		v_1(x-1) \quad& x>y.
	\end{cases}
\end{gather*}
Observe that the following inequalities hold for every $2\le y\le x$: 
\begin{gather}
	v_1(y)-v_1(y-1)=v_A(m-y+1)-v_A(m-y)-\frac{1}{8m^2}<v_A(m-y+1)-v_A(m-y) \label{eqeq} \\
	v_1(y-1)-v_1(y-2)=m^{15}>v_A(m-y+2)-v_A(m-y+1) \nonumber
\end{gather}
Note that for $y=1$, only inequality (\ref{eqeq}) holds. By Lemma \ref{onlylemma}, we get that welfare-maximizing allocation for every $1\le y\le m-1$ is $(y-1,m-y+1)$. Due to the same considerations as before, we have that: 
\begin{multline}\label{eq-payment-lower}
v_1(y-1)-P_B(y-1,v_A) \ge
v_1(y)-P_B(y,v_A) \\ \implies P_B(y,v_A)-P_B(y-1,v_A) \ge v_1(y)-v_1(y-1) = v_A(m-y+1)-v_A(m-y)-\frac{1}{8m^2}
\end{multline}
Combining (\ref{eq-payment-upper}) and (\ref{eq-payment-lower}) gives:
\begin{equation}\label{eq-partial-sums}
v_A(m-y+1)-v_A(m-y)-\frac{1}{8m^2}
\le P_B(y,v_A)-P_B(y-1,v_A) 
\le 
v_A(m-y+1)-v_A(m-y)+\frac{1}{8m^2}
\end{equation} 
We can now complete the proof. We remind that $\mathcal{M}$ is normalized so $P_B(0,v_A)=0$. 
Therefore, the following  telescopic sum equals $P_B(x,v_A)$:
\begin{align*}\label{eq-telescopic}
	P_B(x,v_A)&=P_B(x,v_A)-P_B(x-1,v_A)+P_B(x-1,v_A)-\ldots\\ &-P_B(1,v_A)  +P_B(1,v_A) - P_B(0,v_A) &\text{($\mathcal M$ is normalized, so $P_B(0,v_A)=0$)} \nonumber \\
	&= \sum_{y=1}^{x}  P_B(y,v_A)-P_B(y-1,v_A)  \nonumber 
\end{align*} 
Observe that by (\ref{eq-partial-sums}) we have that: 
\begin{align*}
P_B(x,v_A) &= \sum_{y=1}^{x} P_B(y,v_A)-P_B(y-1,v_A)\\ &\ge \sum_{y=1}^{x} \big[
v_A(m-y+1)-v_A(m-y)-\frac{1}{8m^2} \big] \\
&\ge v_A(m)-v_A(m-x) -\frac{1}{8m} 
\end{align*}
As needed. 
A similar analysis 
gives that $v_A(m)-v_A(m-x) +\frac{1}{8m}\ge P_B(x,v_A)$, which completes the proof. 
\subsection{An FPTAS for Multi-Unit Auctions with Decreasing Marginal Values - Proof of Theorem \ref{thm-mua-fptas}}\label{subsec-fptas}

In Section \ref{subsec-hardness} we showed that no mechanism finds the welfare maximizing allocation in dominant strategies and $poly(\log m)$ communication. In this section we show that this result is tight.

The mechanism is an adaptation of the maximal in range $2$-approximation algorithm for general multi unit auctions of \cite{DN07b}. A maximal in range algorithm (see \cite{DN07a},\cite{DN07b}) is an algorithm that finds the welfare maximizing solution in some pre-defined set of allocations. VCG payments are used to guarantee incentive compatibility.

Our maximal-in-range algorithm will split the items into $t=\frac {m} q$ bundles of size $q=\lfloor \frac {\eps\cdot m} {n^2} \rfloor$, and (possibly) one additional bundle of size $l=m-t\cdot q$. The maximal-in-range algorithm will optimally distribute these items among the bidders. We implement the algorithm by asking each bidder $i$ with valuation $v_i$ to send, simultaneously with the others, his values for all possible combinations of the bundles: $\{v_i(z\cdot q)\}_{z\leq t}$ and $\{v_i(z\cdot q+l)\}_{z\leq t}$.

It is clear that the number of value queries that the algorithm makes is $poly(n, \frac 1 \eps)$. In fact, the running time of the algorithm is also polynomial, the proof is essentially identical to that of \cite{DN07b}. The dominant strategy of each bidder is to send the true values, since this is a simultaneous maximal-in-range algorithm. It remains to prove the claimed approximation ratio. 

\begin{lemma}
The social welfare of the allocation that the algorithm outputs is at least $(1-\eps)\cdot OPT$.
\end{lemma}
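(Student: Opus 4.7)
The strategy is to exhibit, inside the range of allocations the algorithm optimizes over, a specific allocation whose welfare is already at least $(1-\eps)\cdot OPT$; since the algorithm returns a range-optimal allocation, the lemma follows.

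Let $(o_1,\ldots,o_n)$ be a welfare-optimal (unrestricted) allocation with $\sum_i o_i = m$. The plan is to ``shrink'' this allocation by a factor of $1-\eps/2$ and then round the number of items each bidder receives up to the nearest multiple of $q$. Because each $v_i$ has decreasing marginals and $v_i(0)=0$, its piecewise linear extension $\bar v_i:[0,m]\to \mathbb{R}_+$ is concave and monotone, so $\bar v_i\bigl((1-\eps/2)o_i\bigr) \ge (1-\eps/2)\,v_i(o_i)$. Setting $o_i':= q\,\lceil (1-\eps/2)o_i / q\rceil$ gives $o_i' \ge (1-\eps/2)o_i$, and hence, by monotonicity of $\bar v_i$, $v_i(o_i') \ge (1-\eps/2)v_i(o_i)$.

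The only real content of the proof is an item-counting check: we must verify $\sum_i o_i' \le m$ so that the rounded allocation actually fits within $m$ items (this is where the choice $q=\lfloor \eps m/n^2\rfloor$ is used). Since each rounding adds at most $q$, $\sum_i o_i' \le (1-\eps/2)m + nq \le (1-\eps/2)m + \eps m/n \le m$ whenever $n\ge 2$ (the case $n=1$ is trivial, since giving all items to the single bidder is both optimal and in the range). Because each $o_i'$ is a multiple of $q$, we can assign bidder $i$ a union of $o_i'/q$ of the $q$-sized bundles; the remaining (unused) $q$-bundles together with the single size-$l$ bundle can be dumped onto any one bidder, which by monotonicity only increases welfare. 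The resulting allocation lies in the algorithm's range and attains welfare at least $\sum_i v_i(o_i')\ge (1-\eps/2)\cdot OPT \ge (1-\eps)\cdot OPT$.

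The main obstacle is the bookkeeping in the previous paragraph; once the scaling factor $1-\eps/2$ is chosen to leave a budget of $\eps m/2$ items for rounding slack, the calculation is routine. One minor edge case is the degenerate regime $\eps m/n^2 < 1$, where $q=0$; there, one may simply take $q=1$ and the algorithm then finds the exact optimum in $poly(n,\tfrac1\eps)$ value queries, so no approximation is needed.
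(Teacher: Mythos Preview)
Your proof is correct, but it takes a different route from the paper's. The paper's argument is asymmetric: it singles out a bidder with $o_1\ge m/n$, rounds every other bidder's allocation \emph{up} to a multiple of $q$, and lets bidder~1 absorb the (at most $nq\le \eps m/n\le \eps o_1$) shortfall; monotonicity handles bidders $i>1$ and a single application of concavity gives $v_1(o_1')\ge(1-\eps)v_1(o_1)$. You instead scale \emph{every} bidder down by $1-\eps/2$ (using concavity uniformly), then round each up to a multiple of $q$, using the global slack $\eps m/2$ to pay for the $nq\le \eps m/n$ total rounding. Both arguments hinge on the same two facts (concavity of $\bar v_i$ with $\bar v_i(0)=0$, and $nq\le \eps m/n$); the paper's is marginally more elementary in that it invokes concavity once rather than $n$ times, while yours is more symmetric and avoids having to identify a ``large'' bidder. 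Your handling of the edge cases ($n=1$; $q=0$ forcing $m<n^2/\eps$, hence $q=1$ still gives $poly(n,1/\eps)$ queries) is fine.
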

\begin{proof}
We will show that there is an allocation in the range with social welfare at least $(1-\eps)\cdot OPT$. Since the algorithm is maximal-in-range, it must output a solution with at least that welfare.

Fix some optimal allocation of the items $(o_1,\ldots, o_n)$. Without loss of generality assume that all items are allocated: $\sum_io_i=m$. Thus, there must be some bidder, without loss of generality bidder $1$, such that $o_1 \geq m/n$.

For each $i>1$, obtain $o'_i$ by rounding up $o_i$ to the nearest multiple of $q$. Let $o_1'=m-\sum_{i>1}o'_i$. Note that this allocation is indeed in the range (each bidder $i>1$ gets a multiple of $q$, bidder $1$ gets the remaining bundles of size $q$ and the single bundle of size $l$).


We now analyze the social welfare of the allocation $(o'_1,\ldots, o'_n)$. By the monotonicity of the valuations, for each bidder $i'>1$ it holds that $v_i(o'_i)\geq v_i(o_i)$. As for bidder $1$, it holds that:
$o_1-o_1' =m-\sum_{i>1}o_i -m+\sum_{i>1}o'_i \le n\cdot q  = n \cdot \lfloor \frac {\eps\cdot m} {n^2} \rfloor \le \frac {\eps\cdot m} {n} $. Recall that $o_1\ge \frac{m}{n}$ and that $v_1$ exhibits decreasing marginal utilities, so by taking away at most $\epsilon$ fraction of the items of player $1$, his utility decreases by at most $\epsilon\cdot v_1(o_1)$. Thus, $v_1(o_1')\ge (1-\eps) \cdot v_1(o_1)$ and we have that 
$\sum_iv_i(o'_i)\geq (1-\eps)\cdot \sum_iv_i(o_i)$, as needed.
\end{proof}
\section{Combinatorial Auctions with Gross Substitutes Valuations}\label{sec-gs-impossibility}

Combinatorial auctions with gross substitutes valuations are another example for an important domain in which the welfare maximizing allocation can be efficiently found. In particular, Nisan and Segal \cite{NS06} show how to compute the optimal solution with $poly(m,n)$ communication if the value of  each bundle takes $poly(m)$ bits to represent. This implies that in this setting  the welfare maximizing allocation can be found by an ex-post incentive compatible mechanism that uses VCG payments with about the same communication complexity. We start with a definition of the class of gross substitutes valuations:

\begin{definition}
	A valuation $v:2^{M}\to \mathbb{R}$ satisfies the gross substitutes property if for every price vector $\vec p\in \mathbb{R}^{m}$ and
	for every $S \in \argmax_{S\subseteq M}\{v(T)-\sum_{j\in T}p_j\}$, if $\vec{p'}\ge \vec p$, then exists a bundle $S' \in \argmax_{T\subseteq M}\{v(T)-\sum_{j\in S}p'_j\}$ such that $S\cap \{j\hspace{0.25em}|\hspace{0.25em}p_j=p_j'\}\subseteq S'$.    
\end{definition}




The proof is very similar in structure and in spirit to  the proof of hardness of dominant strategy implementations for multi-unit auctions with decreasing marginal values (Subsection \ref{subsec-hardness}). We prove hardness for gross substitutes valuations since it is the largest set of valuations for which the exact optimum can be computed easily in ex-post equilibrium. However, the proof itself does not rely too much on the intricate definition and properties of gross substitutes. 

\subsection{The Hardness Result: Proof of Theorem \ref{gsmainthm}}
We will show that there exists a specific set of gross substitutes valuations where the value of a bundle can be represented with $poly(m)$ bits such that every dominant strategy implementation for them requires $\exp(m)$ bits.

Consider  a combinatorial auction of $m$  heterogeneous items (denoted with $M$) and two players (Alice and Bob), where $m\ge 3$.  
The valuations that we consider in the proof belong to three families: ``semi-decisive'' valuations $V^{ND}$, non-decisive valuations $V^{D}$ and another set of valuations $V^{P}$ that we will use to show that payments can serve as good sketches.  
Every semi-decisive and non-decisive valuation will have a ``weight'' which is a scalar $\gamma \in \{1,\ldots,m^5\}$ that captures its magnitude. In addition, we fix two items $a,b\in M$ as Alice's and Bob's special items.  

We now define the set of semi-decisive  valuations of Alice with  weight $\gamma$, denoted with $V_A^{D,\gamma}$. Every  $v\in V_A^{D,\gamma}$ is an additive valuation that has a subset of items $S \subseteq M \setminus \{a,b\}$ and a noise $\eta\in \{0,\frac{1}{2}\}$ such that:
$$
v(x)=\begin{cases}
	m^8 \quad & x=a, \\
	\eta \quad & x=b, \\
	\gamma\cdot (m+2) \quad & x\in S,\\
	\frac{\gamma}{2} \quad & \text{otherwise.} 
\end{cases}
$$
Similarly, every semi-decisive valuation of Bob $v\in V_B^{D,\gamma}$ is an additive valuation that has  a subset of items $S\subseteq M\setminus\{a,b\}$ and noise $\eta\in\{0,\frac{1}{2}\}$ such that:
$$
v(x)=\begin{cases}
	m^8 \quad & x=b, \\
	\eta \quad & x=a, \\
	\gamma\cdot (m+2) \quad & x\in S,\\
	\frac{\gamma}{2}  \quad & \text{otherwise.} 
\end{cases}
$$
We now define the set of non-decisive valuations of Alice and Bob. For this, we need to introduce the following notation. We use the notation $GS(M\setminus\{a,b\})$ for the set of gross substitutes valuations on the domain $2^{M\setminus\{a,b\}}$, where the value of each subset of $M\setminus\{a,b\}$ is in $\{0,\ldots,m\}$. 
We now define the set of non-decisive valuations of Alice with weight $\gamma$, i.e. $V_A^{ND,\gamma}$. Every $v\in V_A^{ND,\gamma}$ has a valuation $\tilde{v}\in GS(M\setminus\{a,b\})$ and a noise $\eta \in \{0,\frac{1}{2}\}$ such that:
 $$
v(S)=\begin{cases}
 	\gamma\cdot  \tilde{v}(S\setminus\{a,b\} + \gamma\cdot |S|+  m^8 + \eta \quad & \{a,b\} \subseteq S, \\	
 	\gamma\cdot \tilde{v}(S\setminus\{a\})+ \gamma\cdot |S|+ m^8 \quad & a\in S, \\
 	\gamma\cdot \tilde{v}(S\setminus\{b\})+ \gamma\cdot |S|+\eta  \quad & b\in S, \\
 	\gamma\cdot \tilde{v}(S)+ \gamma\cdot |S| \quad &\text{otherwise.}
 \end{cases} 	
 $$ 
 
 Similarly, every non-decisive valuation of Bob with weight $\gamma$, i.e. $v\in V_B^{ND,\gamma}$, has  a valuation $\tilde{v}\in GS(M\setminus\{a,b\})$ and a noise $\eta\in \{0,\frac{1}{2}\}$ such that:
$$
v(S)=\begin{cases}
	\gamma\cdot \tilde{v}(S\setminus\{a,b\})+ \gamma\cdot |S|+ m^8 + \eta \quad & \{a,b\} \subseteq S, \\	
	\gamma\cdot \tilde{v}(S\setminus\{b\})+ \gamma\cdot |S|+m^8 \quad & b\in S, \\
	\gamma\cdot \tilde{v}(S\setminus\{a\})+ \gamma\cdot |S|+\eta \quad & a\in S, \\
	\gamma\cdot \tilde{v}(S) \quad &\text{otherwise.}
\end{cases} 	
$$ 
Throughout the proof, we use the notations $V^{ND}=\bigcup\limits_{\gamma=1}^{m^5} V^{ND,\gamma}$ and $V^{D}=\bigcup\limits_{\gamma=1}^{m^5} V^{D,\gamma}$.  
We now define another set of valuations $V^{P}$ with the purpose of guaranteeing that different valuations in $V^{D}\cup V^{ND}$ induce different payments. 
Every $v\in V^{P}$ is an additive valuation that is parameterized with a valuation $v'\in V^{ND}\cup V^{D}$, a special bundle $S^\ast\subseteq M$, a special item $x^\ast \notin S^\ast$ and a sign $sn\in \{0,1\}$ such that:
$$
v(x)=\begin{cases}
	m^{15} \quad & x\in S^\ast, \\
	v'(M-S^\ast+\{x^\ast\})-v'(M-S^\ast) \pm (-1)^{sn}\cdot \frac{1}{8m^2} \quad & x=x^\ast, \\
	0  \quad & \text{otherwise.} 
\end{cases}
$$
Note that all the valuations in $V^{P},V^{D}$ are additive, so they are gross substitutes. We now explain why the valuations in $V^{ND}$ are gross substitutes as well. First,  gross substitute valuations are known to be closed under multiplication by a non-negative scalar and addition. Also,  by Lemma \ref{gslemma} below, they are also closed under the extension operation, which adds another item to the set of items, where the value of the item is additive. 
\begin{lemma}\label{gslemma}
	Given a valuation $v:2^{[S]}\to \mathbb{R}$, an additional item $x\notin S$ and a scalar $c\in \mathbb{R}^{+}$, we define an extended valuation $\tilde{v}:2^{S\cup \{x\}}\to \mathbb{R}$ as follows:
	$$
	\tilde{v}(S)=\begin{cases}
		v(S)+ c \quad & x\in S, \\	
		v(S) \quad & x\notin S,
	\end{cases} 	
	$$
	If $v$ is gross substitutes, then $\tilde{v}$ is also gross substitutes. 
\end{lemma}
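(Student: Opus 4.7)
The plan is to verify the gross substitutes condition for $\tilde v$ directly, exploiting the fact that $\tilde v$ differs from $v$ only by the additive contribution of item $x$. The starting point will be the identity
$$\tilde{v}(T) - \sum_{j \in T} p_j \;=\; \Bigl[v(T \cap S) - \sum_{j \in T \cap S} p_j\Bigr] + \mathbf{1}[x \in T] \cdot (c - p_x),$$
valid for any price vector $\vec p$ on $S\cup\{x\}$ and any $T\subseteq S\cup\{x\}$. This decoupling will immediately yield a characterization of the demand correspondence $D_{\tilde v}(\vec p)$: its members are exactly the bundles of the form $T^*\cup Y$ with $T^*\in D_v(\vec p|_S)$ and $Y\subseteq\{x\}$, where $Y$ must agree with the sign of $c-p_x$ (and either choice is allowed when $c=p_x$).

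With that characterization in hand, I will fix $\vec p'\geq \vec p$ and an arbitrary $T = T^*\cup Y \in D_{\tilde v}(\vec p)$. I will invoke the GS property of the original valuation $v$ on ground set $S$ to produce some $T'^*\in D_v(\vec p'|_S)$ satisfying $T^*\cap\{j\in S:p_j=p_j'\}\subseteq T'^*$, and then append a carefully chosen $Y'\subseteq\{x\}$ to form the candidate $T'=T'^*\cup Y' \in D_{\tilde v}(\vec p')$. The choice of $Y'$ will be dictated by comparing $c$ with $p_x'$. For coordinates belonging to $S$, the required containment $T\cap\{j:p_j=p_j'\}\subseteq T'$ will be inherited directly from the GS inclusion already provided for $v$.

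The only delicate point I expect is handling the coordinate $x$: if $p_x<p_x'$, then $x$ is not in the equal-price set and no containment needs to be verified for it, so $Y'$ can be fixed freely in accordance with optimality at $\vec p'$; if $p_x=p_x'$ and $x\in T$, then $c\geq p_x=p_x'$, so the tiebreak $Y'=\{x\}$ is consistent with demand at $\vec p'$ and produces the required inclusion $x\in T'$. I do not anticipate a real obstacle, since the lemma is essentially the standard observation that the class of gross substitutes valuations is closed under the addition of an additive valuation; the bulk of the work is just bookkeeping the tiebreak for $Y'$ in the boundary case $c=p_x'$.
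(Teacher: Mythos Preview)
Your proposal is correct and follows essentially the same idea as the paper's proof: both rely on the observation that demand under $\tilde v$ decouples into demand under $v$ on $S$ together with an independent decision on $x$ governed by the sign of $c-p_x$. The paper's proof is a two-sentence sketch (``it is easy to see''), whereas you have spelled out the demand characterization and the tiebreak case $c=p_x'$ explicitly; your version is the fully worked-out form of the same argument.
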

\begin{proof}
	Note that there exists a bundle the contains the item $x$ is in the demand set of $\tilde{v}$ given the price vector $p$ if and only if $p_x\le c$. Therefore, it is easy to see that adding an additive value of $c$ to bundles that contain $x$ preserves the gross substitutes property.   
\end{proof}


The proof consists of the following  propositions that together imply Theorem \ref{gsmainthm}. 
\begin{proposition}\label{dsicsketchclaim-gs}
		Let $\mathcal{M}$ be a normalized mechanism with $c$ bits that implements in dominant strategies a welfare maximizer for a combinatorial auction with gross substitutes valuations, where the value of each bundle can be represented with $poly(m)$
 bits. \emph{Then}, every element of $GS(M\setminus\{a,b\})$ can be represented with $c+\mathcal{O}(\log m)$ bits.

\end{proposition}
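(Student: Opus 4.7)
The proof will follow the same four-step template as Proposition~\ref{dsicsketchprop} for the multi-unit case, with the role of ``cardinality of a bundle'' replaced by an arbitrary subset $S\subseteq M\setminus\{a,b\}$. Fix a dominant strategy normalized minimal mechanism $\mathcal{M}'$ (via Lemma~\ref{minimal-lemma}) realizing a welfare maximizer $f$ with payments $P_A,P_B$ on domain $V_A\times V_B$ where $V_A=V_B=V^{ND}\cup V^{D}$. Without loss of generality, Alice sends different first messages for at least two valuations at the root $r$. The target is to sketch any $v_A\in V_A^{ND,\gamma}$ (for some fixed $\gamma\in\{1,m^5\}$) using Alice's first message $z_A$ plus $O(\log m)$ extra bits, which in turn sketches the embedded $\tilde v\in GS(M\setminus\{a,b\})$.

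The first step is a payments-as-sketches lemma: for every $v_A\in V^{ND}\cup V^{D}$ and every $T\subseteq M$, show that $P_B(T,v_A)\in [v_A(M)-v_A(M\setminus T)\pm\tfrac{1}{8m}]$. The argument is identical in spirit to Lemma~\ref{lemma-approx-uniqueness}: use the $V^{P}$ valuations parameterized by $v_A$, a special bundle $S^\ast$ and a single special item $x^\ast\notin S^\ast$ together with $sn\in\{0,1\}$. Each such $v_{sn}$ is additive, places a huge additive value $m^{15}$ on each item of $S^\ast$, and places $v_A(M\setminus S^\ast+x^\ast)-v_A(M\setminus S^\ast)\pm\tfrac{1}{8m^2}$ on $x^\ast$. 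As in the multi-unit proof, this forces Bob to demand exactly $S^\ast$ (resp. $S^\ast\cup\{x^\ast\}$), so incentive compatibility pins down $P_B(S^\ast\cup\{x^\ast\},v_A)-P_B(S^\ast,v_A)$ within $\pm\tfrac{1}{8m^2}$ of the corresponding marginal of $v_A$. Telescoping along a chain from $\emptyset$ to $T$ yields the $\pm\tfrac{1}{8m}$ bound, and since $v_A$ takes integer (or half-integer) values, $P_B(T,v_A)$ together with $v_A(M)$ uniquely determines $v_A(M\setminus T)$.

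The second step is the analog of Claim~\ref{claimbobsayspayment}: Bob must disclose the noise bit $\eta$ of his semi-decisive valuations in his first message. Using the weights $\gamma=1$ and $\gamma=m^5$, one verifies (as in Lemmas~\ref{bigcorollary} and~\ref{smallcorollary}) that for $v_A\in V_A^{\le m^2}$ together with $v_B\in V_B^{D,\gamma=m^5}$ (resp.\ $v_A\in V_A^{\ge m^2}$ with $v_B\in V_B^{D,\gamma=1}$) the unique optimal allocation gives the special item $b$ to Bob with near-certainty (since the $m^8$ mass on $b$ dominates). Then applying the analog of Lemma~\ref{unite-big-small-lemma} to two semi-decisive valuations of Bob differing only in $\eta$ (which by Step~1 changes $P_A(\{b\},v_B)$ by roughly $\tfrac12$) forces Bob's first message to differ. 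So for one of $\gamma\in\{1,m^5\}$, Bob's first message distinguishes the $\eta$-perturbed pairs.

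The third, and most delicate, step is the analog of Claim~\ref{muamainclaim}: for every $v_A\in V_A^{ND,\gamma}$ and every $S\subseteq M\setminus\{a,b\}$, let $z_B$ be Bob's first message on the semi-decisive $v_B\in V_B^{D,\gamma}$ with subset parameter $M\setminus\{a,b\}\setminus S$ and $\eta=0$. Using the additive masses $\gamma\cdot(m+2)$ on Bob's preferred items and $m^8$ on $b$, together with the fact that $\tilde v\in GS(M\setminus\{a,b\})$ has values only in $\{0,\ldots,m\}$, one checks that the unique welfare-maximizing allocation is $(S\cup\{a\},\,M\setminus(S\cup\{a\}))$. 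Now consider the $\eta=\tfrac12$ counterpart $v_B'$: Step~2 guarantees Bob sends a different first message $z_B'$, the optimal allocation is the same, and Lemma~\ref{lemma-known-prices} (payment uniqueness across sibling subtrees) then forces every leaf in the $(z_A,z_B)$-subtree labeled with $(S\cup\{a\},M\setminus(S\cup\{a\}))$ to carry the payment $P_B(M\setminus(S\cup\{a\}),v_A)$. The main obstacle is verifying that the optimal allocation is truly unique on all four pairs and that the semi-decisive construction stays within the gross substitutes class: this is where the weight hierarchy ($\gamma,\,m^8$, and the $\gamma\cdot|S|$ linear term that absorbs all ties coming from $\tilde v$) must be tuned carefully, analogously to how $3m^8$ and $m^2$ were tuned for multi-unit.

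Finally, the reconstruction: the sketch of $v_A$ is $(v_A(M),\,z_A)$ together with $O(\log m)$ bits identifying $\gamma$ and the items $a,b$. Given a target $S\subseteq M\setminus\{a,b\}$, simulate Bob's first-round strategy on the explicitly constructed $v_B$ above, locate any leaf in the induced subtree labeled $(S\cup\{a\},M\setminus(S\cup\{a\}))$ (Step~3 guarantees one exists and fixes Bob's payment there), extract $P_B(M\setminus(S\cup\{a\}),v_A)$, and recover $v_A(S\cup\{a\})$ by Step~1. Subtracting the known additive contribution $\gamma\cdot|S\cup\{a\}|+m^8$ gives $\gamma\cdot\tilde v(S)$, hence $\tilde v(S)$. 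Since this procedure uses only $c+O(\log m)$ bits and recovers $\tilde v$ on every $S$, every $\tilde v\in GS(M\setminus\{a,b\})$ admits a sketch of that size, completing the proof.
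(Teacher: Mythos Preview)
Your outline follows the paper's proof essentially step for step, and Steps~1,~3, and~4 are correct as sketched. There is, however, a concrete slip in Step~2 that you should fix. You claim that perturbing Bob's noise $\eta$ shifts $P_A(\{b\},v_B)$ by roughly $\tfrac12$; it does not. In Bob's semi-decisive valuations the noise $\eta$ sits on item $a$ (Bob's value for $b$ is the fixed $m^8$), so by your own Step~1, $P_A(\{b\},v_B)\approx v_B(M)-v_B(M\setminus\{b\})=m^8$ is independent of $\eta$. More importantly, the analog of Lemma~\ref{unite-big-small-lemma} needs $P_A(X,v_B^1)\neq P_A(X,v_B^2)$ where $X$ is \emph{Alice's} bundle in the forced optimal allocation, not Bob's. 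In the case $v_A\in V_A^{\le m^2},\ v_B\in V_B^{D,\gamma=m^5}$ the optimum is $(\{a\},M\setminus\{a\})$, so $X=\{a\}$; in the case $v_A\in V_A^{\ge m^2},\ v_B\in V_B^{D,\gamma=1}$ the optimum is $(M\setminus\{b\},\{b\})$, so $X=M\setminus\{b\}$. Both bundles contain $a$, hence $P_A(X,v_B)\approx v_B(M)-v_B(M\setminus X)$ does move by $\tfrac12$ with $\eta$, and Lemma~\ref{unite-big-small-lemma} then forces Bob's first message to separate the two $\eta$-variants. With this correction, your Step~3 invocation of Step~2 goes through unchanged and the remainder matches the paper exactly.
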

\begin{proposition}\label{sketchhardclaim}
The description size of valuation in $GS(M\setminus\{a,b\})$ is $\exp(m)$ bits.
\end{proposition}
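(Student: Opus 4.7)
The plan is to exhibit a doubly-exponentially large family of distinct valuations inside $GS(M\setminus\{a,b\})$, so that by the pigeonhole principle any encoding scheme needs at least $2^{\Omega(m)} = \exp(m)$ bits in the worst case. Concretely, if one shows $|GS(M\setminus\{a,b\})| \ge 2^{2^{\Omega(m)}}$, then no encoding of length less than $2^{\Omega(m)}$ bits can separate all the members.

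The natural family to use is the set of matroid rank functions on the ground set $M' := M\setminus\{a,b\}$, which has size $n := m-2$. First I would invoke the classical fact (due to Murota, and to Dress--Wenzel, among others) that every matroid rank function is a gross substitutes valuation. Since a rank function on $n$ elements takes integer values in $\{0,1,\ldots,n\} \subseteq \{0,\ldots,m\}$ and distinct matroids induce distinct rank functions, the number of matroids on $M'$ is a lower bound on $|GS(M\setminus\{a,b\})|$.

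Next I would apply the classical counting bound of Knuth (1974), which shows that the number of distinct matroids on a ground set of size $n$ is at least $2^{2^{\Omega(n)}}$. Plugging in $n = m-2$ gives $|GS(M\setminus\{a,b\})| \ge 2^{2^{\Omega(m)}}$, and the proposition follows immediately by pigeonhole.

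The main obstacle is really just pinning down which counting lower bound to cite. If one prefers a self-contained argument rather than invoking Knuth, a direct construction in the same spirit as the Balcan--Harvey family used in Section \ref{sec-simultaneous-GS} works: take a sufficiently large system of $k$-subsets of $M'$ with $k = \Theta(m)$, and form sparse paving matroids by choosing sub-collections of these $k$-subsets as hyperplanes, subject to the mild pairwise-intersection condition that makes them valid hyperplane families. A standard probabilistic/combinatorial argument produces $2^{2^{\Omega(m)}}$ such sub-collections, yielding the same doubly-exponential lower bound without appealing to an external matroid enumeration theorem.
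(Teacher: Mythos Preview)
Your proposal is correct and matches the paper's own proof essentially line for line: the paper also invokes Knuth's 1974 result that the number of matroid rank functions on $m-2$ elements is doubly exponential in $m$, notes that matroid rank functions are gross substitutes, and concludes by pigeonhole. Your alternative self-contained sparse-paving construction is a nice bonus but unnecessary here.
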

\begin{proof}
	By \cite{knuth1974asymptotic}, the number of matroid rank functions (which are a strict subset of  gross substitutes valuations) over a set of $m-2$ items is doubly exponential in $m$. Thus, by the pigeonhole principle, the description size of an item in $GS(M\setminus\{a,b\})$ is at least $\exp(m)$ bits.  
\end{proof}
\subsection{Proof of Proposition \ref{dsicsketchclaim-gs}}
The main ideas of the proof are the same as the  proof of Proposition \ref{dsicsketchprop}, with minor adaptations to the modified construction. We repeat the proof for completeness. 

Fix a normalized two-player mechanism $\mathcal{M}$ with strategies $\mathcal{S}_A,\mathcal{S}_B$ that realize in dominant strategies a welfare-maximizer $f$ with payment schemes $P_A,P_B$ for a combinatorial auction with gross substitutes valuations, where the value of a bundle can be represented with $poly(m)$ bits. 
Observe that $\mathcal M$ is in particular dominant strategy when the domain of each player is $V^{D}\cup V^{ND}\cup V^{P}$.  Denote with 
$c$ the communication complexity of the mechanism $\mathcal M$. 

Observe that $\mathcal M$ is incentive compatible, so by the taxation principle every valuation $v_A$ of Alice is associated with a menu of prices to Bob, such that for every valuation $v_B$ of Bob the action profile $(\mathcal S_A(v_A),\mathcal S_B(v_B))$ reaches a leaf that is labeled with  a profit-maximizing bundle given this menu. The same can be said of Bob's valuation and the menu that it presents to Alice.

The structure of the proof is as follows. We begin by showing that the prices in the menu that each valuation presents are closely related to its values (Subsection \ref{subsec-payments-sketch-gs}).
In Subsection \ref{gs-bobsayssomething-subsec}, we show that 
there exists a set of valuations of Bob such that his dominant strategy dictates that he sends  the price of some bundle (e.g., the price of the bundle of all items $M$) in the first round. 
Consider now two valuations $v_B,v'_B$ from this set that differ only in their price of bundles that contain the item $\{a\}$. 

Now, consider the case where Alice has two valuations $v_A,v'_A$ with the same message in the first round such that the optimal solution in every one of the four possible combinations of $v_A,v_A'$ and $v_B,v_B'$ is the allocation $(S+\{a\},M-S-\{a\})$ but $P_B(M-S-\{a\},v_A)\neq P_B(M-S-\{a\},v_A')$, where $S\subseteq M\setminus \{a,b\}$. In this case, the worry is that Alice can determine Bob's payment to be either $P_B(M-S-\{a\},v_A)$ or $ P_B(M-S-\{a\},v_A')$ without changing Bob's allocation, based only on the price of Bob for bundles that contain $\{a\}$.  Thus, Bob will not have a dominant strategy in this case unless Alice commits to her price for the bundle $,M-S-\{a\}$ (Section \ref{alicesayspayment-sec}). However, if this happens for too many bundles, we can reconstruct Alice's valuation from her first message, as we show in Subsection \ref{sketch-subsec}.


\subsubsection{Payments are Good Sketches}\label{subsec-payments-sketch-gs}
We now show the connection between payments and valuations. 
\begin{lemma}\label{lemma-approx-uniqueness-gs}
	Let $v_A\in V_A^{ND}\cup V_A^{D}$ and let $S\subseteq M$ be a non-empty bundle. Then: 
	\begin{equation*}\label{eq-alice-valpay-gs}
		P_B(S,v_A)\in \big[v_A(M)-v_A(M-S)\pm\frac{1}{8m}
		\big]
	\end{equation*}
	where $P_B(S,v_A)$ is the price of $S$ presented to Bob when Alice has the valuation $v_A$.
	Similarly,  every valuation of Bob $v_B\in V_B^{ND}\cup V_B^{D}$ and every non-empty bundle $S$  satisfy that:
	\begin{equation*}
		P_A(S,v_B)\in \big[v_B(M)-v_B(M-S)\pm\frac{1}{8m}\big]
	\end{equation*}
\end{lemma}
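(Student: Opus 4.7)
The plan is to adapt the proof of Lemma~\ref{lemma-approx-uniqueness} (the multi-unit case) to the set-theoretic setting, replacing ``adding a single unit'' by ``adding a specific item.'' I sketch the argument for Alice's valuations and Bob's payments; the other case is identical by symmetry. Fix $v_A \in V_A^{ND} \cup V_A^{D}$ and a non-empty $S \subseteq M$. Enumerate $S = \{x_1, \ldots, x_k\}$ in any order, and set $S_0 = \emptyset$ and $S_j = \{x_1, \ldots, x_j\}$ for $j \ge 1$. The first step is to establish the per-item estimate
\[
P_B(S_j, v_A) - P_B(S_{j-1}, v_A) \in \bigl[\,v_A(M - S_{j-1}) - v_A(M - S_j) \pm \tfrac{1}{8m^2}\,\bigr]
\]
for every $j \in \{1, \ldots, k\}$; telescoping using the normalization $P_B(\emptyset, v_A) = 0$ together with $k \le m$ will then yield the lemma with accumulated error at most $\tfrac{k}{8m^2} \le \tfrac{1}{8m}$.

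To obtain the incremental bound, I consider two valuations of Bob in $V^{P}$, both parameterized by $v' = v_A$, $S^* = S_{j-1}$ and $x^* = x_j$, differing only in the sign: $v_B^0$ with $sn = 0$ and $v_B^1$ with $sn = 1$. Both are additive (so in particular gross substitutes), assigning value $m^{15}$ to each item in $S_{j-1}$, value $0$ to items outside $S_j$, and value $v_A(M - S_{j-1}) - v_A(M - S_j) \pm \tfrac{1}{8m^2}$ to $x_j$ (plus for $sn=0$, minus for $sn=1$). Since every valuation in $V^{ND} \cup V^{D}$ satisfies $v(M) = O(m^8) \ll m^{15}$, in any welfare-maximizing allocation the entire $S_{j-1}$ must be assigned to Bob and every item outside $S_j$ to Alice, so only the placement of $x_j$ remains open. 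A direct computation shows that moving $x_j$ from Alice to Bob changes the total welfare by $v_B^{sn}(x_j) - [v_A(M - S_{j-1}) - v_A(M - S_j)] = \pm \tfrac{1}{8m^2}$, so $(M - S_j, S_j)$ is the unique welfare maximizer under $v_B^0$ while $(M - S_{j-1}, S_{j-1})$ is the unique welfare maximizer under $v_B^1$. Applying ex-post incentive compatibility of $\mathcal M$ (which follows from dominant-strategy IC) to Bob with type $v_B^0$ versus the deviation $v_B^1$---which would yield bundle $S_{j-1}$ at price $P_B(S_{j-1}, v_A)$---gives $P_B(S_j, v_A) - P_B(S_{j-1}, v_A) \le v_B^0(x_j)$; the symmetric application with roles of $v_B^0, v_B^1$ swapped gives the matching lower bound $P_B(S_j, v_A) - P_B(S_{j-1}, v_A) \ge v_B^1(x_j)$.

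Summing the two-sided increments over $j = 1, \ldots, k$ and using $P_B(\emptyset, v_A) = 0$ gives the desired bound on $P_B(S, v_A)$, and the symmetric argument handles Bob's valuations with $P_A$. The main---indeed essentially the only---obstacle is the case analysis establishing uniqueness of the welfare maximizer: one must rule out allocations that split $S_{j-1}$ between the two players or hand Bob an item outside $S_j$. The first is controlled by the enormous $m^{15}$-gap on items in $S_{j-1}$, and the second by $v_B^{sn}$ vanishing outside $S_j$; both rely on the crude estimate $v_A(M) \ll m^{15}$, which is easily verified uniformly across $V^{ND} \cup V^{D}$ using $\gamma \le m^5$ and the fact that the only contribution of order larger than $m^7$ comes from the special item $a$ at value $m^8$.
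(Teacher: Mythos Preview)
Your proposal is correct and follows essentially the same approach as the paper's proof: enumerate $S$ as a chain $\emptyset = S_0 \subset S_1 \subset \cdots \subset S_k = S$, use the two $V^P$ valuations (parameterized by $v_A$, $S^\ast = S_{j-1}$, $x^\ast = x_j$, and opposite signs) to pin the unique welfare maximizer to $(M\setminus S_j, S_j)$ versus $(M\setminus S_{j-1}, S_{j-1})$, extract the two-sided incremental bound from ex-post IC, and telescope using normalization. The case analysis you describe (the $m^{15}$ gap forcing $S_{j-1}$ to Bob, and $v_B^{sn}$ vanishing outside $S_j$ forcing the rest to Alice) is exactly the content of the paper's argument.
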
	
We defer the proof of Lemma \ref{lemma-approx-uniqueness-gs} to Subsection \ref{subsec-sketch-proof-gs}.
\begin{corollary}\label{cor-approx-payments-gs}
	Fix $v_A\in V^{ND}\cup V^{D}$ and a bundle $S\neq M,\varnothing$.  
	Given every $P_B(M-S,v_A)$ and $v_A(M)$, the exact value of $v_A(S)$ can be deduced. 
\end{corollary}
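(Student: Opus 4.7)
The plan is to mirror directly the argument used for Corollary~\ref{cor-approx-payments} in the multi-unit setting. First, since $S \neq M$, the bundle $M-S$ is non-empty, so Lemma~\ref{lemma-approx-uniqueness-gs} applies to it and gives
\[
P_B(M-S, v_A) \in \bigl[\, v_A(M) - v_A(S) \pm \tfrac{1}{8m}\, \bigr].
\]
Rearranging, $v_A(S)$ lies in an interval of length $\tfrac{1}{4m}$ centred at the known quantity $v_A(M) - P_B(M-S, v_A)$.

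It then remains to argue that the set of values taken by valuations in $V^{ND} \cup V^{D}$ is sufficiently sparse that an interval of length $\tfrac{1}{4m} < \tfrac{1}{2}$ (valid since $m \geq 3$) contains at most one candidate. To this end I would inspect each case of the definitions of $V^{D,\gamma}$ and $V^{ND,\gamma}$ term by term. Every summand appearing in the constructions is either an integer --- namely $\gamma\cdot\tilde v(\cdot)$ with $\tilde v$ integer-valued in $\{0,\ldots,m\}$, $\gamma\cdot|S|$, $\gamma(m+2)$, and $m^8$ --- or a multiple of $\tfrac{1}{2}$, coming either from the noise $\eta \in \{0,\tfrac{1}{2}\}$ or, in the decisive additive case, from $\gamma/2$ contributions of non-special items (which sum over $k$ non-special items to $k\gamma/2$, still a multiple of $\tfrac{1}{2}$). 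Hence every value $v_A(S)$ is a multiple of $\tfrac{1}{2}$, and the interval above contains at most one such number, which must therefore equal $v_A(S)$.

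The argument is essentially mechanical, and I do not expect any significant obstacle. The only step that requires some care is the bookkeeping verification that no combination of construction terms can produce a value finer than a half-integer; one must enumerate the four piecewise branches in the definition of $V^{ND,\gamma}$ (and the additive cases of $V^{D,\gamma}$) and check that in each branch the sole sub-integer contribution is a single copy of $\eta \in \{0,\tfrac{1}{2}\}$ combined with integer or half-integer summands. Once this is done, $v_A(S)$ is identified uniquely as the unique half-integer in the interval, and the deduction is complete.
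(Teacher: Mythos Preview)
Your proposal is correct and follows essentially the same route as the paper's proof: apply Lemma~\ref{lemma-approx-uniqueness-gs} to the non-empty bundle $M\setminus S$, rearrange to pin $v_A(S)$ inside a short interval around $v_A(M)-P_B(M\setminus S,v_A)$, and then observe that every value in $V^{ND}\cup V^{D}$ is a half-integer so the interval identifies $v_A(S)$ uniquely. The paper states the interval length more loosely as $\frac{1+1}{8m}\le\frac{1}{4}$ and asserts the half-integer property without the case-by-case bookkeeping you outline, but the substance is identical.
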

\begin{proof}
	By Lemma \ref{lemma-approx-uniqueness}, we have that 
	$
	v_A(S)\in [v_A(M)-P_B(M-S,v_A)\pm \frac{1}{8m}]
	$. 
	Thus, given $P_B(M-S,v_A)$ and $v_A(M)$, we can construct an interval of size $\frac{1+1}{8m}\le \frac{1}{4}$ such that $v_A(S)$ belongs in it. Observe that by construction $v_A(S)$ is either an integer or an integer with addition of $\frac{1}{2}$, and an interval of size
at most $\frac{1}{4}$ has only one such number in it, so we can immediately identify it.  
\end{proof}

\subsubsection{Bob Reveals Information That Does Not Affect the Allocation} \label{gs-bobsayssomething-subsec}

 From now on, we focus on the following subsets of valuation sets of Alice and Bob: 
\begin{equation*}\label{eq-valuations}
	V_A=V_B=\big\{\bigcup\limits_{\gamma=1}^{m^5} V^{ND,\gamma} \big\} \bigcup \big \{\bigcup\limits_{\gamma=1}^{m^5} V^{D,\gamma} 
	\big \}
\end{equation*}
Observe that the mechanism $\mathcal{M}$ together with the strategies $\mathcal{S}_A,\mathcal{S}_B$ is also a dominant strategy implementation of $f,P_A,P_B$ with respect to $V_A\times V_B$,
since they have decreasing marginal values and the value of a bundle can be described with $\mathcal{O}(\log m)$ bits.  
By Lemma \ref{minimal-lemma}, given the valuations $V_A\times V_B$  there exists a \emph{minimal} dominant strategy mechanism $\mathcal{M}'$ with strategies $(\mathcal S_A',\mathcal{S}_B')$ that realize the welfare-maximizer $f$ with payment schemes $P_A,P_B$ with $c'\le c$ bits. 

We also use the notations $V^\gamma$, $V^{\le \gamma}$ or $V^{\ge \gamma}$ to denote all the valuations in $V_A$ or $V_B$ with weight $\gamma$, or the valuations with a weight which is smaller or larger than $\gamma$. Note that all these three sets do not include valuations from $V^{P}$.

Observe that since $\mathcal M$ is minimal, there exists a player, without loss of generality Alice, that sends different messages for different valuations at the root vertex of the protocol, which we denote with $r$. The reason for that is that $\mathcal M'$ is minimal and there exist $(v_A,v_B),(v_A',v_B')\in V_A\times V_B$ such that the optimal allocation for them differs.  
We will show that since she sends non-trivial message in the first round, she has a dominant strategy in   $\mathcal{M}'$ only
if Bob discloses very specific information that, in certain situations, does not affect the allocation. Formally:
\begin{claim} \label{gs-claimbobsayspayment}
	One of the two conditions below necessarily holds:
	\begin{enumerate}
		\item For every $v_B^1,v_B^2\in V_B^{D,\gamma=m^5}$ such that $P_A(\{a\},v_B^1)\neq P_A(\{a\},v_B^2)$, Bob sends different messages  at vertex $r$.\label{gs-condicondi1}
		\item For every $v_B^1,v_B^2\in V_B^{D,\gamma=1}$ such that $P_A(M\setminus\{b\},v_B^1)\neq P_A(M\setminus\{b\},v_B^2)$,
		 Bob sends different messages at vertex $r$. \label{gs-condicondi2}
	\end{enumerate}
\end{claim}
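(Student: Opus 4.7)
My plan is to mirror the proof of Claim \ref{claimbobsayspayment}, with three steps: first establish a gross substitutes analogue of Lemma \ref{unite-big-small-lemma}; then split on whether Alice generates her two distinct first messages from $V_A^{\le m^2}$ or from $V_A^{\ge m^2}$; and in each case pair her with a decisive valuation of Bob whose extremal weight pins down a unique welfare-maximizing allocation supported on the bundle named in the corresponding condition.

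The analogue of Lemma \ref{unite-big-small-lemma} that I will state and prove is: if Alice sends distinct messages at $r$ for $v_A^1, v_A^2$, the unique welfare-maximizing allocation for both $(v_A^1, v_B^1)$ and $(v_A^2, v_B^2)$ is a common $(S, M\setminus S)$, and $P_A(S, v_B^1) \neq P_A(S, v_B^2)$, then Bob sends distinct messages for $v_B^1$ and $v_B^2$ at $r$. The proof is the one already used for Lemma \ref{unite-big-small-lemma}: if Bob sent the same message $z_B$ at $r$, the leaves reached by $(v_A^1, v_B^1)$ and $(v_A^2, v_B^2)$ would sit in distinct subtrees of Alice's induced tree at $r$ given $z_B$, both labeled with $(S, M\setminus S)$ but carrying different payments for Alice, contradicting Lemma \ref{lemma-known-prices}. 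This argument uses only minimality of $\mathcal{M}'$ and does not depend on the valuation class.

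By the minimality of $\mathcal{M}'$, some player, without loss of generality Alice, sends distinct messages at $r$ for distinct valuations. Since $V_A = V_A^{\le m^2} \cup V_A^{\ge m^2}$ with $V_A^{=m^2}$ in the intersection, I conclude that Alice either sends distinct messages for two valuations in $V_A^{\le m^2}$ (case (i)) or for two in $V_A^{\ge m^2}$ (case (ii)). In case (i), I will verify that the unique welfare maximizer for any $(v_A, v_B) \in V_A^{\le m^2} \times V_B^{D, \gamma=m^5}$ is $(\{a\}, M\setminus\{a\})$, by checking the three critical marginal inequalities: Alice's marginal for $a$ is $m^8$ versus Bob's at most $1/2$; Bob's minimum marginal for any item in $M \setminus \{a,b\}$ is $\gamma/2 = m^5/2$ versus Alice's at most $O(m^3)$ (using weight $\le m^2$ and $\tilde v \le m$); and Bob's marginal for $b$ is $m^8$ versus Alice's at most $\gamma + \eta \le m^2 + 1/2$. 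Every single-item reallocation thus strictly decreases welfare, so $(\{a\}, M\setminus\{a\})$ is the unique optimum, and applying the analogue of Lemma \ref{unite-big-small-lemma} with $S = \{a\}$ gives condition 1.

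Case (ii) will be completely symmetric: for any $(v_A, v_B) \in V_A^{\ge m^2} \times V_B^{D, \gamma=1}$, the unique welfare maximizer is $(M \setminus \{b\}, \{b\})$. Alice will dominate on $a$ ($m^8$ vs.\ at most $1/2$) and on every $c \in M\setminus\{a,b\}$ (her minimum marginal $\ge m^2/2$ vs.\ Bob's maximum $1 \cdot (m+2)$), while Bob will dominate on $b$ ($m^8$ vs.\ at most $m^5 + 1/2$). Applying the analogue of Lemma \ref{unite-big-small-lemma} with $S = M\setminus\{b\}$ then yields condition 2. The main obstacle is the careful marginal bookkeeping across the decisive and non-decisive families; the high-level inequalities go through cleanly because the $m^8$ gap on $\{a,b\}$ and the linear-in-$\gamma$ gap on the bulk items are both large relative to the opposing side's marginals, so strict uniqueness of the optimum is preserved under every single-item swap.
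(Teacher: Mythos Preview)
Your proposal is correct and follows essentially the same approach as the paper: state the gross-substitutes analogue of Lemma \ref{unite-big-small-lemma} (the paper's Lemma \ref{gs-unite-big-small-lemma}, with the identical proof), split on whether Alice's two distinct messages come from $V_A^{\le m^2}$ or $V_A^{\ge m^2}$, and in each case verify via item-by-item marginal comparisons that the unique welfare maximizer is $(\{a\},M\setminus\{a\})$ or $(M\setminus\{b\},\{b\})$ respectively (the paper's Lemmas \ref{gs-smallcorollary} and \ref{gs-bigcorollary}). Your marginal bookkeeping is slightly more careful than the paper's (e.g., you write $\ge m^2/2$ where the paper writes $\ge m^2$), but the structure and all the key ideas coincide.
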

For the proof of Claim \ref{gs-claimbobsayspayment}, we state the following lemma, which is the main working horse of this subsection:
\begin{lemma} \label{gs-unite-big-small-lemma}
	Let $v_A^{1},v_A^{2}$ be two valuations  of Alice, and let $v_B^1,v_B^2$ be two valuations of Bob such that: 
	\begin{enumerate}
		\item The unique optimal solution for the instances $(v_A^1,v_B^1)$ and $(v_A^2,v_B^2)$ is $(X,M\setminus X)$.
		\item $P_A(X,v_B^1) \neq P_A(X,v_B^2)$.
		\item Alice sends different messages at the root vertex $r$ for $v_A^1$ and $v_A^2$.  
	\end{enumerate} 
	\emph{Then,} Bob  sends different messages at the root vertex $r$ for the valuations $v_B^1$ and $v_B^2$.
\end{lemma}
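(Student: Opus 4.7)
The proof will follow precisely the template of Lemma~\ref{unite-big-small-lemma}, adapted from bundles of a fixed cardinality to arbitrary subsets $X \subseteq M$. Suppose for contradiction that Bob sends the same message $z_B$ at the root vertex $r$ for both valuations $v_B^1$ and $v_B^2$. Let $z_A^1 \neq z_A^2$ denote the (distinct) messages that Alice's dominant strategy prescribes at $r$ for $v_A^1$ and $v_A^2$, and let $t_1, t_2$ be the subtrees reached via the message profiles $(z_A^1, z_B)$ and $(z_A^2, z_B)$ respectively. By construction, $t_1$ and $t_2$ are two distinct subtrees of the induced tree of Alice at $r$ given $z_B$.

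Next, I would pin down the two leaves of interest. Since $(X, M\setminus X)$ is the unique welfare-maximizing allocation for the instance $(v_A^1, v_B^1)$ and the mechanism realizes $f$ with payments $P_A, P_B$, the leaf $\ell_1 \in t_1$ reached by the play $(\mathcal{S}'_A(v_A^1), \mathcal{S}'_B(v_B^1))$ is labeled with allocation $(X, M\setminus X)$ and payment $P_A(X, v_B^1)$ for Alice. The identical argument applied to $(v_A^2, v_B^2)$ yields a leaf $\ell_2 \in t_2$ labeled with the same allocation $(X, M\setminus X)$ and payment $P_A(X, v_B^2)$ for Alice.

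Now I would invoke Lemma~\ref{lemma-known-prices} for the minimal mechanism $\mathcal{M}'$: in the induced tree of Alice at $r$ given $z_B$, the alternative $(X, M\setminus X)$ appears in both subtrees $t_1$ and $t_2$, so every leaf in this induced tree labeled with this alternative must carry the same payment for Alice. Applied to $\ell_1$ and $\ell_2$, this gives $P_A(X, v_B^1) = P_A(X, v_B^2)$, directly contradicting the second hypothesis of the lemma. Therefore Bob's dominant strategy must prescribe different messages at $r$ for $v_B^1$ and $v_B^2$, completing the proof.

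There is essentially no obstacle: the argument is structurally identical to the multi-unit case, and none of its steps depends on the decreasing-marginal-values structure of the valuations, only on (i) the minimality of $\mathcal{M}'$ (already arranged via Lemma~\ref{minimal-lemma}), (ii) the fact that $(X, M\setminus X)$ is the unique welfare maximizer for both instances, so the corresponding leaves are forced to be labeled with it, and (iii) the taxation-style statement of Lemma~\ref{lemma-known-prices} applied to Alice's induced tree. An illustration analogous to Figure~\ref{new-figure}, with bundle labels $X$ replacing the integer $x$, would clarify the argument visually.
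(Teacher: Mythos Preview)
Your proof is correct and follows exactly the approach the paper intends: the paper itself simply states that the proof is identical to that of Lemma~\ref{unite-big-small-lemma} and omits it, and you have faithfully reproduced that argument with the bundle $X$ replacing the integer $x$. The key steps---assuming Bob sends the same message, locating the two leaves in distinct subtrees of Alice's induced tree at $r$ given $z_B$, and invoking Lemma~\ref{lemma-known-prices} to force $P_A(X,v_B^1)=P_A(X,v_B^2)$---match the paper's template exactly.
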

	The proof is identical to the proof of Lemma \ref{unite-big-small-lemma} and is omitted.
%
%
The following two lemmas are immediate corollaries of Lemma \ref{gs-unite-big-small-lemma}:
\begin{lemma}\label{gs-bigcorollary}
	Assume that there exist two valuations $v_A^1,v_A^2\in V_A^{\ge m^2}$ that Alice sends different messages for at the root vertex $r$. 
Let $v_B^1,v_B^2\in V_B^{D,\gamma=1}$ be two semi-decisive valuations of Bob such that $P_A(M\setminus \{b\},v_B^1)\neq P_A(M\setminus \{b\},v_B^2)$.  \emph{Then}, Bob sends different messages at the root vertex $r$ for the valuations $v_B^1$ and $v_B^2$. 
\end{lemma}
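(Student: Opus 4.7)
The plan is to mirror the proof of the analogous Lemma~\ref{smallcorollary} from the multi-unit setting, with the bundle $M\setminus\{b\}$ playing the role of ``$m-1$ items''. Concretely, I will first show that for every $v_A\in V_A^{\ge m^2}$ and every $v_B\in V_B^{D,\gamma=1}$, the unique welfare-maximizing allocation of $(v_A,v_B)$ is $(M\setminus\{b\},\{b\})$. Once this is in hand, the three hypotheses of Lemma~\ref{gs-unite-big-small-lemma} hold with $X=M\setminus\{b\}$: the unique optimum for both $(v_A^1,v_B^1)$ and $(v_A^2,v_B^2)$ is $(M\setminus\{b\},\{b\})$ by the claim just stated, the payment inequality $P_A(M\setminus\{b\},v_B^1)\ne P_A(M\setminus\{b\},v_B^2)$ is given in the statement, and so is the fact that Alice sends different messages at $r$ for $v_A^1$ and $v_A^2$. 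Applying Lemma~\ref{gs-unite-big-small-lemma} then yields the conclusion.

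For the optimality step, I will use monotonicity of all valuations to restrict attention to allocations $(X_A,X_B)$ that cover $M$, and compare $(M\setminus\{b\},\{b\})$ against an arbitrary such $(X_A,X_B)\ne(M\setminus\{b\},\{b\})$ by splitting into cases according to where the special items $a,b$ lie. The driving quantitative bounds are: (a) Bob's value $v_B(\{b\})=m^8$ dominates the value of any bundle not containing $b$ by more than a factor of $m^5$, so placing $b$ anywhere outside Bob's hand costs $\Omega(m^8)$ in welfare; (b) Alice's marginal for $a$ is exactly $m^8$ in both the decisive and non-decisive families, while Bob's marginal for $a$ is at most $\tfrac12$, forcing $a\in X_A$; (c) for every $c\in M\setminus\{a,b\}$, Alice's marginal is at least $\gamma_A\ge m^2$ (coming from the $\gamma_A/2$ floor in the decisive case, or from the additive $\gamma_A|S|$ term in the non-decisive case), whereas Bob's marginal is at most $m+2$ since $v_B$ is additive with weight $1$. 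Thus every transfer of an item from Alice to Bob costs at least $m^2-(m+2)>0$ in welfare. Combining the three cases shows $(M\setminus\{b\},\{b\})$ is strictly better than any alternative.

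The one delicate point, which I expect to be the main obstacle, is case (c) when $v_A$ is non-decisive, since the gross-substitutes function $\tilde v$ contributes additional value that could in principle distort the marginal comparison. Here, monotonicity of $\tilde v$ suffices: removing an item $c$ from Alice's bundle can only decrease the $\tilde v$-contribution, so the full marginal of $c$ to $v_A$ is still at least the purely additive part $\gamma_A\ge m^2$. Once this observation is recorded, the rest of the case analysis is a routine numerical comparison of the same flavor as in Lemma~\ref{smallcorollary}, and the conclusion follows by a single invocation of Lemma~\ref{gs-unite-big-small-lemma} with $X=M\setminus\{b\}$.
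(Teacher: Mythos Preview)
Your proposal is correct and follows essentially the same route as the paper: establish that the unique welfare-maximizing allocation for any $v_A\in V_A^{\ge m^2}$ and $v_B\in V_B^{D,\gamma=1}$ is $(M\setminus\{b\},\{b\})$ via an item-by-item marginal comparison, and then invoke Lemma~\ref{gs-unite-big-small-lemma} with $X=M\setminus\{b\}$. One cosmetic point: in the decisive case Alice's marginal for $c\in M\setminus\{a,b\}$ can be as low as $\gamma_A/2$ (not $\gamma_A$), but since $\gamma_A/2\ge m^2/2>m+2$ for the relevant range of $m$, the comparison with Bob's marginal of at most $m+2$ still goes through.
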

\begin{proof}
	We begin by showing that every $v_A\in V_A^{\ge m^2}$ and every $v_B\in V_B^{\gamma=1}$ satisfy that the unique optimal allocation is $(M\setminus \{b\},\{b\})$.  By definition, Alice always wins item $a$ and Bob always wins item $b$. Regarding the rest of the items, we have that the marginal utility of Alice for an item given any bundle that she already has is at least $m^2$, whereas Bob's marginal utility for every item is at most $m+2$. Therefore, Alice wins the rest of the items. 

Therefore,  the optimal allocation for the instances $(v_A^1,v_B^1)$ and $(v_A^2,v_B^2)$ is $(M\setminus\{b\},\{b\})$. By applying Lemma \ref{gs-unite-big-small-lemma}, we get that Bob sends different messages for $v_B^1$ and $v_B^2$  at vertex $r$.   
\end{proof}

\begin{lemma}\label{gs-smallcorollary}
	Assume that there exist two valuations
$v_A^1,v_A^2\in V_A^{\le m^2}$ that Alice sends different messages for at the root vertex $r$. 
Let $v_B^1,v_B^2\in V_B^{D,\gamma=m^5}$ be two semi-decisive valuations of Bob such that $P_A(\{a\},v_B^1)\neq P_A(\{a\},v_B^2)$.  \emph{Then}, Bob sends different messages at the root vertex $r$ for the valuations $v_B^1$ and $v_B^2$. 
\end{lemma}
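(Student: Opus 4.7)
The plan is to mirror Lemma \ref{gs-bigcorollary} with the roles of Alice and Bob swapped. There, Alice was ``strong'' (weight $\ge m^2$) and Bob was ``weak'' (weight $=1$), so the unique welfare-maximizer was $(M\setminus\{b\}, \{b\})$. Here the weights are reversed: Bob has $\gamma = m^5$ and Alice has $\gamma \le m^2$, so the unique welfare-maximizer for every relevant pair should be $(\{a\}, M\setminus\{a\})$. Once we establish this, Lemma \ref{gs-unite-big-small-lemma} applied with $X=\{a\}$ delivers exactly the conclusion we want.

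The main step is to verify that for every $v_A\in V_A^{\le m^2}$ and every $v_B\in V_B^{D,\gamma=m^5}$, the unique welfare-maximizing allocation is $(\{a\}, M\setminus\{a\})$. Item $a$ must go to Alice because her marginal value for $a$ is at least $m^8$ while Bob's value for $a$ is at most $\tfrac{1}{2}$; symmetrically, $b$ must go to Bob. For any remaining item $j\in M\setminus\{a,b\}$, Alice's marginal value is at most $\gamma_A(m+2)\le m^2(m+2)$ (in the non-decisive case, her valuation equals $\gamma_A\tilde{v}(S)+\gamma_A|S|$ on bundles disjoint from $\{a,b\}$, and $\tilde{v}$ is monotone with values in $\{0,\ldots,m\}$; the decisive case gives a matching bound), while Bob's additive marginal for $j$ is at least $\gamma_B/2 = m^5/2$. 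Since $m^5/2 > m^2(m+2)$ for $m\ge 3$, every non-special item goes to Bob, proving uniqueness.

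With this in hand, I invoke Lemma \ref{gs-unite-big-small-lemma} with $X=\{a\}$. Its three hypotheses are exactly satisfied: the unique optimum for $(v_A^1,v_B^1)$ and $(v_A^2,v_B^2)$ is $(\{a\},M\setminus\{a\})$ by the computation above; $P_A(\{a\},v_B^1)\ne P_A(\{a\},v_B^2)$ and divergent root messages for Alice are hypotheses of the lemma itself. The conclusion is that Bob sends different messages for $v_B^1$ and $v_B^2$ at $r$, which is what we need.

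There is no genuine obstacle in this argument — the only thing to be careful about is the weight arithmetic, ensuring that $\gamma_B/2$ strictly dominates $\gamma_A(m+2)$ uniformly over all $v_A\in V_A^{\le m^2}$ and all non-special items. This is the exact mirror image of the marginal-value comparison used in Lemma \ref{gs-bigcorollary}, and the choice of weight scales $m^2$ and $m^5$ in the construction was tailored precisely so that both directions of the argument go through.
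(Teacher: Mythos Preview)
Your proposal is correct and follows essentially the same approach as the paper's proof: you establish that for every $v_A\in V_A^{\le m^2}$ and $v_B\in V_B^{D,\gamma=m^5}$ the unique welfare maximizer is $(\{a\},M\setminus\{a\})$ via the same marginal-value comparison (Alice's marginal for a non-special item is at most $m^2(m+2)$, Bob's is at least $m^5/2$), and then apply Lemma~\ref{gs-unite-big-small-lemma} with $X=\{a\}$.
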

\begin{proof}

		We begin by showing that every $v_A\in V_A^{\le m^2}$ and every $v_B\in V_B^{D,
		\gamma=m^5}$ satisfy that the unique optimal allocation is $(\{a\},M\setminus \{a\})$. By construction, Alice always wins item $a$ and Bob always wins item $b$. Regarding the rest of the items, we have that the marginal utility of Alice for an item given any bundle that she already has is at most $m^2\cdot (m+2)$ (because $\gamma\le m^2$), whereas Bob's marginal utility for every item is at least $\frac{m^5}{2}$. Therefore, Bob wins the rest of  the items.

	Therefore, the optimal allocation for the instances $(v_A^1,v_B^1)$ and $(v_A^2,v_B^2)$ is $(\{a\},M\setminus\{a\})$. By applying Lemma \ref{gs-unite-big-small-lemma}, we get that 
	Bob sends different messages at vertex $r$ for $v_B^1$ and for $v_B^2$, as needed. 
\end{proof}	
	The proof of Claim \ref{gs-claimbobsayspayment} stems from Lemma \ref{gs-bigcorollary} and from Lemma \ref{gs-smallcorollary}. It is identical to the proof of Claim \ref{claimbobsayspayment} and is omitted. 
%
\subsubsection{Alice Commits to Bob's Payment} \label{gs-alicesayspayment-sec}
We now use the information revealed by Bob about the semi-decisive valuations in $V_B^{D,\gamma=1}$ or in $V_B^{D,\gamma=m^5}$ to show that there exists \textquote{large} set of valuations such that Alice has to commit to Bob's payment for every possible allocation in the first round of the mechanism.  
In Subsection \ref{gs-sketch-subsec}, we will show how to use the payment to reconstruct these valuations. 
Observe that we now use the fact that $\mathcal{M}'$ is dominant strategies for Bob. 

Now, for every bundle $S \subseteq M \setminus \{a,b\}$, we define $\bar{S}=M\setminus (S\cup \{a,b\})$.
For the statement of the claim, we define
for every weight $\gamma$ and for every bundle $S \subseteq M \setminus \{a,b\}$, a semi-decisive valuation of Bob that is parameterized with weight $\gamma$, with the bundle $\bar{S}$ and the noise $\eta=0$, i.e.:
\begin{equation}\label{eq-decisive-val-sketch}
v_B^{\bar{S},\gamma,\eta=0}(x)=\begin{cases}
	m^8 \quad & x=b, \\
	0 \quad & x=a, \\
	\gamma \cdot (m+2) \quad & x\in \overline{S},\\
	\frac{\gamma }{2}  \quad & \text{otherwise.}	
\end{cases}
\end{equation}

We define another valuation that is identical to $v_B^{\bar{S},\gamma,\eta=0}$,
except that the noise $\eta$ is now equal to $\frac{1}{2}$:
$$
v_B^{\bar{S},\gamma,\eta=\frac{1}{2}}(x)=\begin{cases}
m^8 \quad & x=b, \\
\frac{1}{2} \quad & x=a, \\
\gamma \cdot (m+2) \quad & x\in \overline{S},\\
\frac{\gamma }{2}  \quad & \text{otherwise.}	
\end{cases}
$$

\begin{claim}\label{mainclaim-gs}
	The following holds for either $\gamma=1$ or for $\gamma=m^5$. Let $v_A\in V_A^{ND,\gamma}$ be a valuation, and let $z_A$ be the message that Alice sends for it at the root of the protocol. 
Fix a bundle $S\subseteq M \setminus \{a,b\}$ and let $z_B^0$ be the message that Bob sends at the root if his valuation is the semi-decisive valuation $v_B^{\bar{S},\gamma,\eta=0}$ defined above. 
Denote with $t^0$ the subtree that the message profile $(z_A,z_B)$ leads to. 
\emph{Then:}  
\begin{enumerate}
	\item There exists a leaf at subtree $t^0$ labeled with the allocation $(S\cup \{a\},M\setminus\{S\cup \{a\}\})$. 
	\item Every leaf at subtree $t^0$  that is labeled with the allocation $(S\cup \{a\},M\setminus\{S\cup \{a\}\})$ satisfies that it is labeled with the payment $P_B(M\setminus\{S\cup \{a\}\},v_A)$ for Bob. 
\end{enumerate}
\end{claim}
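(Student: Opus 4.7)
The plan is to mirror the argument of Claim \ref{muamainclaim}, translated to the combinatorial setting. By Claim \ref{gs-claimbobsayspayment}, at least one of the two conditions holds; I will show that condition \ref{gs-condicondi1} implies the statement for $\gamma=m^5$, and the argument for condition \ref{gs-condicondi2} yielding the statement for $\gamma=1$ is symmetric. So fix $\gamma=m^5$, a valuation $v_A\in V_A^{ND,\gamma}$, a bundle $S\subseteq M\setminus\{a,b\}$, and set $T=S\cup\{a\}$ (so $\bar S=M\setminus(T\cup\{b\})$ and Bob's complement is $M\setminus T=\bar S\cup\{b\}$).

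The first step is to introduce the two semi-decisive valuations of Bob that differ only in the noise at item $a$: let $v_B=v_B^{\bar S,m^5,\eta=0}$ and $v_B'=v_B^{\bar S,m^5,\eta=1/2}$. I would then verify that for both pairs $(v_A,v_B)$ and $(v_A,v_B')$ the unique welfare-maximizing allocation is $(T,M\setminus T)$. This is a direct marginal calculation: item $a$ must go to Alice because her marginal there is $m^8$ while Bob's is at most $1/2$; item $b$ goes to Bob because his marginal is $m^8$ while Alice's is at most $m^5\cdot O(1)+1/2$; items in $\bar S$ go to Bob because his marginal is $m^5(m+2)$ while Alice's gain from any item (taking into account $\tilde v\in GS(M\setminus\{a,b\})$ whose values are bounded by $m$, plus the $\gamma|S|$ contribution) is at most $m^5\cdot(m+1)$; and items in $S$ go to Alice because her marginal is at least $m^5$ while Bob's for any item outside $\bar S\cup\{b\}$ is only $m^5/2$. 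Uniqueness follows from the fact that each of these inequalities is strict.

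Next, by Lemma \ref{lemma-approx-uniqueness-gs} applied to $\{a\}$,
\[
P_A(\{a\},v_B)\in\bigl[v_B(M)-v_B(M\setminus\{a\})\pm\tfrac{1}{8m}\bigr]=\bigl[0\pm\tfrac{1}{8m}\bigr],\qquad P_A(\{a\},v_B')\in\bigl[\tfrac12\pm\tfrac{1}{8m}\bigr],
\]
so $P_A(\{a\},v_B)\ne P_A(\{a\},v_B')$. Hypothesis \ref{gs-condicondi1} of Claim \ref{gs-claimbobsayspayment} then forces Bob to send distinct messages $z_B^0\ne z_B^1$ at the root for $v_B$ and $v_B'$. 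Let $t^0$ and $t^1$ be the subtrees reached from $(z_A,z_B^0)$ and $(z_A,z_B^1)$; these are two distinct subtrees of the induced tree of Bob at $r$ given $z_A$. Because $\mathcal M'$ realizes the welfare maximizer, the leaf reached from $(v_A,v_B)$ in $t^0$ is labeled with allocation $(T,M\setminus T)$, proving part 1, and the leaf reached from $(v_A,v_B')$ in $t^1$ is likewise labeled with $(T,M\setminus T)$.

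Finally, part 2 is handed to us by Lemma \ref{lemma-known-prices}: the allocation $(T,M\setminus T)$ appears in both subtrees $t^0$ and $t^1$ of the induced tree of Bob at $r$ given $z_A$, so every leaf in this induced tree labeled with $(T,M\setminus T)$ carries the same payment for Bob. Since the particular leaf reached by $(v_A,v_B)$ carries payment $P_B(M\setminus T,v_A)$ by the definition of the payment scheme, every leaf of $t^0$ labeled with $(T,M\setminus T)$ carries exactly this payment. The main technical obstacle is really just the marginal accounting in the optimality argument, which must be done carefully because Alice's non-decisive valuation involves the unknown gross-substitutes component $\tilde v$; this is precisely where the choice of weights (Alice with weight $m^5$ matched against Bob's $\gamma=m^5$ decisive structure) is used so that Alice dominates Bob on $S$ while Bob dominates Alice on $\bar S$.
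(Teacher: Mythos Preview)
Your proof is correct and follows essentially the same approach as the paper's own proof: both argue that condition \ref{gs-condicondi1} of Claim \ref{gs-claimbobsayspayment} yields the case $\gamma=m^5$ (with the $\gamma=1$ case symmetric), verify via the same marginal-value calculation that $(S\cup\{a\},M\setminus(S\cup\{a\}))$ is the unique optimum for both $(v_A,v_B^{\bar S,\gamma,0})$ and $(v_A,v_B^{\bar S,\gamma,1/2})$, invoke Lemma \ref{lemma-approx-uniqueness-gs} to separate $P_A(\{a\},\cdot)$ on the two Bob valuations, and then apply Lemma \ref{lemma-known-prices} to the two resulting subtrees.
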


\begin{proof}
	We show that condition \ref{gs-condicondi1} of Claim \ref{gs-claimbobsayspayment} implies that  Claim \ref{mainclaim-gs} holds for $\gamma=m^5$. The proof that condition \ref{gs-condicondi2} of Claim \ref{gs-claimbobsayspayment} implies that Claim \ref{mainclaim-gs} holds for $\gamma=1$ is analogous. Claim \ref{mainclaim-gs} follows since by Claim \ref{gs-claimbobsayspayment} at least one of those conditions holds. 
	Assume that condition \ref{condicondi1} holds. Let $v_A\in V_A^{ND,\gamma=m^5}$ be a valuation, and let  $S \subseteq M\setminus\{a,b\}$ be a bundle. 
	
We explain why the unique welfare maximizing allocation for both instances $(v_A,v_B^{\bar{S},\gamma=m^{5},\eta=0})$ and $(v_A,v_B^{\bar{S},\gamma,\eta=\frac{1}{2}})$ is $(S\cup \{a\}, M\setminus \{S\cup \{a\}\})=(S\cup \{a\},\overline{S}\cup \{b\})$. Note that Alice  wins item $a$, because the marginal value of $a$ for Alice given any bundle is always $m^8$ whereas the marginal value of $a$ for Bob is at most $\frac{1}{2}$. Due to the same reason, Bob  wins $b$. For every item in $S$, Bob's value is  $\frac{m^5}{2}$, whereas the marginal value of Alice for it is at least $m^5$, so Alice wins all the items in $S$. For the items in $\overline{S}$, the  value of Bob  is $m^5\cdot (m+2)$ , whereas the marginal value of Alice for every such item is at most $m^5\cdot (m+1)$. Therefore, Bob wins all the items in $\bar{S}$. 
	Thus, the leaf $l^0$ that $(v_A,v_B^{\bar{S},\gamma=m^{5},\eta=0})$ reaches is labeled with the allocation $(S\cup \{a\},M\setminus\{S\cup \{a\}\})$.  By definition, this leaf belongs in the subtree $t^0$, so we have part $1$ of the claim.  
	
	For the proof of the second part, we denote $v_B^{\bar{S},\gamma,\eta=0}$ with $v_B^{0}$ and $v_B^{\bar{S},\gamma,\eta=\frac{1}{2}}$ with $v_B^{1}$. 
Recall that by Lemma \ref{lemma-approx-uniqueness-gs} we have that: 
	\begin{equation*}\label{eq-payment-range-gs}
	P_A(\{a\},v_B^0)\le v_B^0(M)-v_B^0(M\setminus \{a\})+\frac{1}{8m},\quad 
	P_A(\{a\},v_B^{1})\ge v_B^{1}(M)-v_B^{1}(M\setminus \{a\})-\frac{1}{8m}
	\end{equation*}
	Therefore: 
	\begin{multline*}
	P_A(\{a\},v_B^0)\le v_B^0(M)-v_B^0(M\setminus \{a\})+\frac{1}{8m} < v_B^{1}(M)-v_B^{1}(M\setminus \{a\})-\frac{1}{8m} \le P_A(\{a\},v_B^{1})  \\ \implies P_A(\{a\},v_B) < P_A(\{a\},v_B')
	\end{multline*}
	where the strict inequality holds because $v_B^1(M)-v_B^0(M)=\frac{1}{2}$ and $v_B^1(M \setminus \{a\})=v_B^0(M \setminus \{a\})$. 
	Therefore, by condition  \ref{gs-condicondi1} of Claim \ref{gs-claimbobsayspayment} we have that 
	Bob sends a different message $z_B^1$ for the valuation $v_B^1$ than the message $z_B^0$ he sends for the valuation $v_B^0$ at vertex $r$. 
	Denote with $t^1$ the subtree that the messages $(z_A,z_B^1)$ lead to, and denote the leaf in $t^1$ that $(v_A,v_B^1)$ reaches with $l^1$. By the above, $l_1$ is labeled with the allocation  $(S\cup \{a\}, M\setminus \{S\cup \{a\}\})$. 
	
	We remind that  the leaf $l^0$ is labeled with the allocation $(S\cup \{a\}, M\setminus \{S\cup \{a\}\})$ and with the payment $P_B(M\setminus \{S\cup \{a\},v_A)$ for Bob (the latter holds because $\mathcal M'$ realizes the welfare maximizer  $f$ with the payment schemes $P_A,P_B$).  By Lemma \ref{lemma-known-prices}, all the leaves in subtrees $t^0$ and in $t^1$ that are labeled with the allocation $(S\cup \{a\}, M\setminus \{S\cup \{a\}\})$ have the same price for Bob.
	By combining these two facts, we get that all the leaves in the subtree $t^0$ labeled with the allocation $(S\cup \{a\}, M\setminus \{S\cup \{a\}\})$ are labeled with  the payment $P_B(M\setminus \{S\cup \{a\}\},v_A)$ for Bob, which completes the proof. 
\end{proof}

\subsubsection{Reconstructing Alice's Valuation} \label{gs-sketch-subsec}
We can now complete the proof of Proposition \ref{dsicsketchclaim-gs}.	 Given the minimal mechanism $\mathcal{M}'$ for the valuations $V_A\times V_B$, we explain how to construct an exact representation for a valuation $v\in GS(M\setminus\{a,b\})$ with at most 
 $c'+\mathcal{O}(\log (m))\le c+ \mathcal{O}(\log (m))$ bits (we remind that $c,c'$ stand for the communication complexity of the mechanisms $\mathcal{M},\mathcal{M}'$ respectively).    

	For  $v \in GS(M\setminus\{a,b\})$, consider the following non-decisive valuation $v_A \in V_A^{ND,\gamma} \subseteq V_A$ that is parameterized with the noise $\eta=0$:  
	$$
	\forall S\subseteq M\cup\{a,b\}, \quad v_A(S)=\begin{cases}
		\gamma\cdot v(S\setminus \{a,b\})+ \gamma\cdot |S|+ m^8 \quad & \{a,b\} \subseteq S, \\	
		\gamma\cdot v(S\setminus \{a\})+\gamma\cdot |S|+m^8 \quad & a\in S, \\
		\gamma\cdot v(S\setminus \{b\})+\gamma\cdot |S|  \quad & b\in S, \\
		\gamma\cdot v(S) +\gamma\cdot |S|  \quad &\text{otherwise.}
	\end{cases} 	
	$$ 
	where $\gamma$ is the scalar in $\{1,m^5\}$ that Claim \ref{mainclaim-gs} holds for. 
	
	The description of a valuation $v$ is $v_A(M)$ and the message $z_A$ that Alice sends for it at the root vertex $r$.
	Note that the overall size of the description is at most $c'+\mathcal{O}(\log m)\le c+\mathcal{O}(\log m)$ bits. 
	Fix $S\subseteq M\setminus \{a,b\}$. We remind that we want to reconstruct $v(S)$ from $v_A(M)$ and from the message $z_A$. Let $z_B$ be the message that Bob sends at vertex $r$ of the mechanism $\mathcal M'$ when his valuation is the decisive valuation $v_B^{\bar{S},\gamma,\eta=0}$ that is defined in (\ref{eq-decisive-val-sketch}), where $\gamma$ is again the scalar that Claim \ref{mainclaim-gs} holds for.  Let $l$ be an arbitrary leaf that is labeled with the allocation $(S\cup \{a\}, M\setminus \{S\cup \{a\}\})$ in the subtree that $(z_A,z_B)$ leads to at the root vertex $r$.  
	By Claim \ref{mainclaim-gs}, such a leaf necessarily exists and it is labeled with the payment  $P_B(M\setminus \{S\cup \{a\}\},v_A)$ for Bob.    
	
	By Corollary \ref{cor-approx-payments-gs}, 
	we can reconstruct $v_A(S\cup \{a\})$ from $P_B(M\setminus \{S\cup \{a\}\},v_A)$ and $v_A(M)$ with no additional communication. 
	Now, to extract $v(S)$, we remind that by definition:
	$$v_A(S\cup\{a\})=\gamma \cdot\big[v(S)+|S|\big] +m^8
	\implies v(S)= \frac{v_A(S\cup\{a\})-m^8}{\gamma} - |S|$$
	which completes the proof.       
	
%
%
%

\subsubsection{Proof of Lemma \ref{lemma-approx-uniqueness-gs}}\label{subsec-sketch-proof-gs}  
We prove Lemma \ref{lemma-approx-uniqueness-gs} for the valuations of Alice and the payment scheme of Bob. The proof for Bob's valuations and Alice's payment scheme is identical. 

Fix a valuation $v_A\in V_A^{ND}\cup V_A^D$ and a bundle of items  $S \subseteq M$ of size $s$. We begin by defining an arbitrary order on the elements of $S$, i.e. $S=\{m_1,m_2,\ldots,m_s\}$. Then, we use it to define a strictly increasing sequence of subsets, $\{S_j\}_{j\in \{0,\ldots,s\}}$, where $S_j=\cupdot_{\ell=1}^{j}\{m_j\}$. In other words, $S_0=\emptyset$, $S_1$ contains the first item, $S_2$ contains the first and the second items and so on and so forth.

We begin by  showing that for every $1\le j\le s$, the payment of Bob satisfies that:
\begin{equation*}
	P_B(S_j,v_A)-P_B(S_{j-1},v_A) \in \big[v_A(M\setminus S_{j-1})-v_A(M\setminus S_j)\pm \frac{1}{8m^2}\big] 
\end{equation*}
Fix $1\le j\le s$. Observe the additive valuation $v_{0}\in V^{P}$ that is parameterized with the valuation $v'=v_A$, with the special bundle $S^\ast=S_{j-1}$, with the special item $m_j$ and with the sign $sn=0$:
\begin{gather*}
v_0(x)=\begin{cases}
	m^{15} \quad & x\in S_{j-1}, \\
	v_A(M \setminus S_{j-1})-v_A(M \setminus S_{j}) + \frac{1}{8m^2} \quad & x=m_j, \\
	0  \quad & \text{otherwise.} 
\end{cases}
\end{gather*}
Now, observe that the welfare maximizing allocation if Alice's valuation is $v_A$ and Bob's valuation is $v_0$  is $(M\setminus S_{j},S_{j})$. First of all, Bob necessarily wins all the items in $S_{j-1}$ because the marginal value of every item in $S_{j-1}$ for Bob given $v_0$ is larger than any value of Alice. 
Second, Alice necessarily wins the items in $\{m_{j+1},\ldots,m_s\}$ because the marginal value of $v_A$ is strictly positive for them, whereas Bob's value for them is $0$. Now, observe that the marginal value of Alice from $m_j$ given $\{m_{j+1},\ldots,m_s\}=M\setminus S_{j}$ is equal to $v_A(M\setminus S_{j-1})-v_A(M \setminus S_{j})$, whereas Bob's value increases by $v_A(M \setminus S_{j-1})-v_A(M \setminus S_{j}) + \frac{1}{8m^2}$ if he wins $m_j$. Thus, Bob necessarily wins $m_j$. 
 
We remind that $\mathcal{M}$ is ex-post incentive compatible (since it is dominant strategy incentive compatible), and that it realizes a welfare-maximizer with the payment schemes $P_A,P_B$, so:  
\begin{multline}\label{eq-payment-upper-gs}
	v_0(S_j)-P_B(S_j,v_A)\ge v_0(S_{j-1})-P_B(S_{j-1},v_A)  \implies \\
	v_A(M \setminus S_{j-1})-v_A(M \setminus S_{j})+\frac{1}{8m^2}=  v_0(m_j)=v_0(S_j)-v_0(S_{j-1})\ge P_B(S_j,v_A) - P_B(S_{j-1},v_A)  \implies \\
		v_A(M \setminus S_{j-1})-v_A(M \setminus S_{j})+\frac{1}{8m^2} \ge P_B(S_j,v_A) - P_B(S_{j-1},v_A)
\end{multline}
To prove a lower  bound on $P_B(S_j,v_A) - P_B(S_{j-1},v_A)$, we construct the valuation $v_1\in V^{P}$ which is parameterized with the valuation $v'=v_A$, 
with the special bundle $S^\ast=S_{j-1}$, with the special item $m_j$ and with the sign $sn=1$:
\begin{gather*} 
	v_1(x)=\begin{cases}
		m^{15} \quad & x\in S_{j-1}, \\
		v_A(M \setminus S_{j-1})-v_A(M \setminus S_{j}) - \frac{1}{8m^2} \quad & x=m_j, \\
		0  \quad & \text{otherwise.} 
	\end{cases}
\end{gather*}
Now, the welfare-maximizing allocation given $(v_A,v_1)$ is $(M\setminus S_{j-1},S_{j-1})$. Due to the same reasons as above, Bob wins all the items in $S_{j-1}$ and Alice wins the items $\{m_{j+1},\ldots,m_s\}$. Now, observe that the marginal value of Alice from $m_j$ given $\{m_{j+1},\ldots,m_s\}=M\setminus S_{j}$ is equal to $v_A(M\setminus S_{j-1})-v_A(M \setminus S_{j})$, whereas Bob's value increases by $v_A(M \setminus S_{j-1})-v_A(M \setminus S_{j})- \frac{1}{8m^2}$ if he wins $m_j$. Thus, Alice wins $m_j$ this time.
 Due to the same considerations as before, we have that: 
\begin{multline}\label{eq-payment-lower-gs}
	v_1(S_{j-1})-P_B(S_{j-1},v_A) \ge
	v_1(S_{j})-P_B(S_{j},v_A) \\ \implies P_B(S_{j},v_A)-P_B(S_{j-1},v_A) \ge 	v_1(S_{j})-	v_1(S_{j-1})=v_1(m_j)= v_A(M \setminus S_{j-1})-v_A(M \setminus S_{j}) - \frac{1}{8m^2} 
\end{multline}
Combining (\ref{eq-payment-upper-gs}) and (\ref{eq-payment-lower-gs}) gives:
\begin{equation}\label{eq-partial-sums-gs}
	v_A(M \setminus S_{j-1})-v_A(M \setminus S_{j}) - \frac{1}{8m^2} 
	\le P_B(S_{j},v_A)-P_B(S_{j-1},v_A) 
	\le 
	v_A(M \setminus S_{j-1})-v_A(M \setminus S_{j}) + \frac{1}{8m^2} 
\end{equation} 
We can now complete the proof. We remind that $\mathcal{M}$ is normalized and that $S_0=\emptyset$, so $P_B(S_0,v_A)=0$. Therefore, the following  telescopic sum equals $P_B(S,v_A)$:
\begin{align*}\label{eq-telescopic}
	P_B(S,v_A)&=P_B(S,v_A)-P_B(S_{s-1},v_A)+P_B(S_{s-1},v_A)-\ldots-P_B(S_1,v_A)\\ &  +P_B(S_1,v_A) - P_B(S_0,v_A)=  \nonumber \\
	&= \sum_{j=1}^{s}  P_B(S_j,v_A)-P_B(S_{j-1},v_A)  \nonumber 
\end{align*} 
Observe that by (\ref{eq-partial-sums-gs}) we have that: 
\begin{align*}
	P_B(S,v_A) &= \sum_{j=1}^{s}  P_B(S_j,v_A)-P_B(S_{j-1},v_A)\\ &\ge \sum_{j=1}^{s} \big[
	v_A(M \setminus S_{j-1})-v_A(M \setminus S_{j}) - \frac{1}{8m^2}  \big] \\
	&\ge v_A(M)-v_A(M\setminus S) -\frac{1}{8m} 
\end{align*}
As needed. 
A similar analysis 
gives that $v_A(M)-v_A(M\setminus S) +\frac{1}{8m}\ge P_B(S,v_A)$, which completes the proof.

\section{A Hardness Result for Dominant Strategy Mechanisms: Proof of Theorem \ref{theorem-general-impossibility}}
\label{sec-general2}
We will show that there exists a class of general valuations (where the value of each bundle can be described with $poly(m)$ bits) such that every  dominant strategy mechanism $\mathcal M$ for it
that $m^{1-4\epsilon}$ approximates the social welfare can be used to construct a simultaneous algorithm for the \textquote{hard distribution} (see Subsection \ref{impossible-sim-general-subsec}) with (roughly) the same communication and (roughly) the same approximation ratio as the mechanism. 

The structure of the proof is as follows.  
We begin by describing the class of valuations whose  dominant strategy mechanism will be used to construct a simultaneous algorithm (Subsection \ref{subsec-class-of-valuations}).
Afterwards, we describe the steps of the algorithm and state some observations that are necessarily for that (Subsection \ref{subsec-alg-description}). Then, for the analysis of the algorithm, we show that with high enough probability, properties that guarantee a sufficient approximation ratio hold. Very roughly speaking, in Subsection \ref{subsec-analysis} we prove that given those properties the algorithm has good approximation ratio, and we prove that a large enough fraction of the instances satisfy those properties in Subsection \ref{subsec-properties}. 
   
To use the mechanism $\mathcal M$ as a building block for a simultaneous algorithm, we assume that it is deterministic,  normalized and has  no negative transfers. We remind that a mechanism is normalized if the 
price of the empty bundle is always  $0$ and that no-negative-transfers means that the payment of each player is always non-negative.  Throughout the proof,  we will sometimes abuse notation when analyzing $\mathcal M$ by saying that a valuation \textquote{sends} a message, instead of saying that the dominant strategy given a valuation dictates it.

\subsection{Description of Class of Valuations}\label{subsec-class-of-valuations}
For the proof, we assume that $\mathcal M$ is dominant strategy for the following class of valuations.
Consider the set of all possible valuations of player $i$  in the hard distribution of Subsection \ref{impossible-sim-general-subsec}, with the following adjustments.  Each valuation in the support of the hard distribution has a \emph{weight} $\alpha\in B=\{1,2,2^2,\ldots,2^{2^m}\}$ and a \emph{noise} $\eta$, that is negligible compared to $\alpha$. Formally, if $u$ is a possible valuation of player $i$ in the hard distribution then we consider valuations $v=\alpha(1+\eta)\cdot u$ for any weight $\alpha
\in B$ and noise $\eta \in \{0,\frac 1 {4^m}, \frac {2} {4^m}, \frac {3} {4^m}, \ldots,  \frac {3^m} {4^m}\}$. 
In addition, we assume that for every valuation $v=\alpha(1+\eta)\cdot u$ that belongs in the class,
 its base valuation $u$ is not the all-zero valuation, even though it is in the support of the hard distribution. In other words, every valuation in the domain that we consider has at least one valuable set.\footnote{For the proof, we will use Proposition 
 	\ref{prop-simul-hard-dist}. Note that the hardness of approximation for simultaneous algorithms persists despite the omission of all-zero valuations, because by the definition of the hard distribution, the probability to sample an instance $(u_1,\ldots,u_n)$ that has a valuation  $u_i$ which is the all-zero valuation is negligible.} 
 We call a valuation a \emph{base valuation} if it is in the support of the hard distribution and has at least one valuable set.



We say that a base valuation $u_i$ of player $i$ is a \emph{random base valuation} if it is sampled as follows. 
$u_i$ has some base set  $T_i$, $|T_i|=2 m^\eps$, and it   
chooses $t=2^{\Theta(\epsilon^2\cdot m^\epsilon)}$ valuable sets uniformly at random (where one of them is fixed), each of size $m^\eps$ independently and uniformly at random from $T_i$, and gives a value of $1$ for each of them with probability $\frac{1}{m}$ (with the guarantee that at least one bundle has value of $1$).  Throughout the proof, we require that $\mathcal M$ provides approximation of $m^{1-4\epsilon}$ also for instances where the valuations are not necessarily correlated in the same way as in the hard distribution.

Recall that an instance the hard distribution of Subsection \ref{impossible-sim-general-subsec} satisfies that the players are partitioned into groups that have the same base set. Half of the base set of the players in each group $j$ is disjoint from the base sets of all other groups, and half of it is a shared set $B$ which we call the ``center''. Given a player in group $j$,  the disjoint half of his base set is called \emph{the special set}. 
Each group $j$ has a random family $\mathcal A_j$ of sets of size $m^\eps$, and each bidder $i$ in the group is interested in a set from  $\mathcal{A}_j$  (i.e., gives it a value of $1$) with probability $\frac 1 m$. See Subsection~\ref{impossible-sim-general-subsec} for the formal construction. We remind that each player that is interested in his special set is called a \emph{special bidder}. 

\subsection{A Simultaneous Algorithm for the Hard Distribution} \label{subsec-alg-description}
Our goal is to show that given the mechanism $\mathcal M$,  there exists a simultaneous deterministic algorithm for the hard distribution with approximation ratio better than $m^{1-\epsilon}$ and with communication complexity that is polynomial in the communication complexity of $\mathcal M$, which we denote with $cc(\mathcal M)$. Observe that by Lemma \ref{optimal-high-in-expectation}, the optimal welfare for an instance from the hard distribution is at least $m^{1-\epsilon} - 1$. 
Thus, if we manage to show a deterministic simultaneous algorithm that has communication complexity $poly(cc(\mathcal M),m,n)$  and provides in expectation $\Omega(m^{\epsilon})$ of the welfare for instances from the hard distribution, then by Proposition \ref{prop-simul-hard-dist} we get that the communication complexity of the mechanism $\mathcal M$ is at least $poly(\frac {2^{m^{\frac {\eps^2} 2}}} n)$. 
Throughout the proof, we heavily use the fact that  the mechanism $\mathcal M$ is deterministic, that is, provides an approximation ratio of $m^{1-4\eps}$ for every instance. 

  
Observe that it suffices to provide a randomized simultaneous algorithm instead of a deterministic one. The reason for it is as follows. Assume that there exists a randomized algorithm that provides $\Omega(m^\epsilon)$ of the welfare in expectation over the sampling from the hard distribution and over its random coins. Then,  by an averaging argument,  there exists at least one sequence of coins such that its expected welfare is $\Omega(m^\epsilon)$ (where the probability is taken only over the sampling from the hard distribution). Therefore, a randomized algorithm implies the existence of a deterministic algorithm.  

Thus, we want to build a randomized algorithm with $poly(cc(\mathcal M),m,n)$ bits that outputs an allocation whose welfare  is   $\Omega(m^{\epsilon})$ in expectation for $(u_1,\ldots,u_n)$. Note that the expectation is taken over the sampling of $(u_1,\ldots,u_n)$ from the hard distribution and over the random coins of the algorithm. 
Before we describe the algorithm, we need to state the following lemmas:
\begin{lemma}\label{lemma-first-round2}
	There exists some set $A$ of $\mathcal{O}(\log m)$  consecutive 
	powers of $2$ that satisfies the following:
	let $x$ be the first vertex in the mechanism $\mathcal M$ where two valuations with weights in $A$ send different messages. Then,  we have that for every player $i$,  for every base set $T_i$, and
	for every noise $\eta_i$, 	
	the following holds with probability at least $1-\frac 1 {exp(m^{\eps})}$: let $u_i$ be a random base valuation of player $i$ with base set $T_i$. Then, there exist two consecutive powers in $A$, $\alpha,2\alpha$, such that the message that player $i$ sends in vertex $x$ when his valuation is $\alpha(1+\eta_i)\cdot u_i$ is different from the message he sends when his valuation is  $2\alpha(1+\eta_i)\cdot u_i$. 
\end{lemma}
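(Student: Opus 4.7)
The plan is a pigeonhole/averaging argument exploiting the gap between $|B|=2^m+1$ and the communication cost $cc(\mathcal M)$ of $\mathcal M$. Since $\mathcal M$'s protocol uses at most $cc(\mathcal M)$ bits, for any fixed valuation the sequence of messages player $i$ sends as a function of $\alpha\in B$ is a labeling of $B$ by at most $2^{cc(\mathcal M)}$ ``colors''; so if we partition $B$ into $|B|/k$ disjoint windows of $k=\Theta(\log m)$ consecutive powers of $2$, in most windows $A$ the label of player $i$ at the first branching vertex $x=x(A)$ actually changes across $A$. My target is to pick out a single window $A$ that works uniformly for every triple $(i,T_i,\eta_i)$.

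To make this quantitative, first I would fix a triple $(i,T_i,\eta_i)$ and study, as a function of $\alpha\in B$, player $i$'s strategy at $x(A)$. I would then argue that for most random base valuations $u_i$, this strategy is non-constant on the window $A$. The key input is the approximation guarantee of $\mathcal M$: if player $i$'s message at $x(A)$ were constant across $A$ and the same constancy held on many windows, then the decisions of $\mathcal M$ for two valuations $\alpha(1+\eta_i)u_i$ and $\alpha'(1+\eta_i)u_i$ at the extremes of the entire range $B$ would have to agree up through $x(A)$, which, coupled with the $m^{1-4\eps}$ approximation ratio, would force the mechanism to mishandle one of the extremes (at the small-$\alpha$ side player $i$ should receive essentially nothing, at the large-$\alpha$ side player $i$ should receive essentially everything they want). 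A Chernoff-style bound over the $t=2^{\Theta(\eps^2 m^\eps)}$ essentially independent coordinates of $u_i$ then upgrades the per-window estimate to failure probability at most $e^{-m^\eps}$ per triple.

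Finally, I would take a union bound over $(i,T_i,\eta_i)$. The main obstacle I anticipate is that the number of possible base sets $T_i$ is $\binom{m}{2m^\eps}$, which nearly saturates the $e^{-m^\eps}$ budget. I would address this by exploiting symmetry: the distribution of $u_i$ depends on $T_i$ only through the identity of its items, and the bad event under analysis is invariant under relabeling of items inside $T_i$. This collapses the relevant union to a much smaller structural count that fits within the slack, so a single window $A$ of $\Theta(\log m)$ consecutive powers of $2$ works for all $(i,T_i,\eta_i)$ simultaneously, proving the lemma.
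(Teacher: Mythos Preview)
Your proposal has a genuine gap: it never invokes the dominant-strategy property of $\mathcal M$, which is the engine of the paper's argument. The approximation guarantee alone cannot do the work you assign it. Nothing in your sketch explains why, at the particular vertex $x=x(A)$, player $i$'s \emph{own} message must change with $\alpha$; the vertex $x$ is defined as the first place where \emph{some} pair of valuations with weights in $A$ diverge, and that divergence may be caused by a different player entirely. Your claimed contradiction (``at the small-$\alpha$ side player $i$ should receive essentially nothing, at the large-$\alpha$ side essentially everything'') concerns the final allocation, which is determined by later rounds and by all players' inputs, not by player $i$'s message at $x$ in isolation. The pigeonhole count on ``colors'' is also not well-formed: a player's full strategy is a function from histories to messages, not a single $cc(\mathcal M)$-bit string.

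The paper proceeds very differently. It first locates an \emph{initial vertex} $x$ and two \emph{neighboring} valuations $v_i^1,v_i^2$ of some player $i$ (weights differing by a factor of $2$, plus at most one elementary edit) that diverge at $x$. The crux is Lemma~\ref{lemma-random2}: if a random base valuation $w$ of another player $i'$ were insensitive across a window of $O(\log m)$ weights around $\alpha$, one builds an explicit valuation $v$ of player $i$ (weight $8\alpha$, interested in two specific sets $X_1,X_2$) that has \emph{no dominant strategy}. The point is that when $i'$'s weight is $\alpha/m^4$ the approximation guarantee forces $i$ to win a valuable set, while when $i'$'s weight is $m^4\alpha$ it forces $i'$ to win and (since all valuable sets intersect w.h.p.) $i$ to lose --- yet both cases produce the same message $z_{i'}$ at $x$, so $i$ cannot distinguish them. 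This contradiction, together with the set-intersection bound (Claim~\ref{claim-intersect}), yields the $1-\exp(-m^\eps)$ probability; no Chernoff over ``coordinates of $u_i$'' is involved. Sensitivity is then propagated from $i$ to every other player by chaining Lemma~\ref{lemma-random2} through an auxiliary player, fixing a single window $A$ of size $O(\log m)$ around $\alpha$.

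Finally, your symmetry reduction does not hold: $\mathcal M$ is an arbitrary fixed protocol and need not treat items symmetrically, so the event ``$u_i$ is insensitive at $x$'' is not invariant under relabeling items inside $T_i$. The paper never union-bounds over $T_i$; the per-triple bound is established directly for each $(i,T_i,\eta_i)$ once $A$ is fixed from the initial neighboring pair.
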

We use Lemma \ref{lemma-first-round2} in the proof of Lemma \ref{lemma-good-alpha}. We defer its proof to Subsection \ref{subsec-first-round-proof2}. 
Now, we say that $\alpha$ is \emph{critical} for player $i$ with base valuation $u_i$ and noise $\eta_i$ if he sends different messages at vertex $x$ for $\alpha(1+\eta_i)\cdot u_i$ and for $2\alpha(1+\eta_i)\cdot u_i$. 
Given an instance from the hard distribution $u=(u_1,\ldots,u_n)$ and noises $\eta=(\eta_1,\ldots,\eta_n)$, we define for every $\alpha$ the subset $P_\alpha(u,\eta)\subseteq N$ as the number of special bidders such that $\alpha$ is critical for player $i$ given the base valuation $u_i$ and the noise $\eta_i$. 
\begin{lemma}\label{lemma-good-alpha}
	There exists $\alpha^\ast \in A$ such that: $$\Pr_{(u_1,\ldots,u_n)\sim \mathcal U,\eta} \Big[\big|P_{\alpha^\ast}(u,\eta)\big|\ge \frac{m^{1-\epsilon}-1}{|A|}\Big]\ge  \frac{1}{2|A|}$$ 
\end{lemma}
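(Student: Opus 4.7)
The proof combines Lemma~\ref{lemma-first-round2}, linearity of expectation, averaging over $A$, and a concentration argument that exploits the independence of different groups in the hard distribution~$\mathcal{U}$.

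First, I would apply Lemma~\ref{lemma-first-round2} to each of the $\ell = m^{1-\eps}-1$ special bidders (one per group). For each such bidder $i$ in group $j$, the probability over the random base valuation $u_i$ and noise $\eta_i$ that some $\alpha \in A$ is critical for $i$ is at least $1 - 1/\exp(m^{\eps})$. By the symmetry of $\mathcal{U}$, for each fixed $\alpha \in A$ the quantity $p_\alpha := \Pr[\alpha \text{ is critical for the special bidder in group } j]$ does not depend on $j$, and $\sum_{\alpha \in A} p_\alpha \geq 1 - 1/\exp(m^{\eps})$. By pigeonhole, there exists $\alpha^* \in A$ with $p_{\alpha^*} \geq (1 - 1/\exp(m^{\eps}))/|A|$, so by linearity $\E[|P_{\alpha^*}(u,\eta)|] \geq \ell(1 - 1/\exp(m^{\eps}))/|A|$.

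Next, I would convert this expectation bound into the desired probability bound. In $\mathcal{U}$, once the random partition of items and the identity of the special bidder in each group are fixed, the valuation and noise of the special bidders in different groups are drawn independently. Hence, conditional on these coarser choices, $|P_{\alpha^*}(u,\eta)|$ is a sum of $\ell$ independent Bernoulli$(p_{\alpha^*})$ indicators, i.e., a Binomial$(\ell, p_{\alpha^*})$ random variable with mean $\mu \geq \ell(1 - 1/\exp(m^{\eps}))/|A|$. A classical fact about the Binomial distribution is that $\lfloor \mu \rfloor$ is always a median, so $\Pr[|P_{\alpha^*}(u,\eta)| \geq \lfloor \mu \rfloor] \geq \tfrac{1}{2}$. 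The remaining gap between $\lfloor \mu \rfloor$ and $\lceil \ell/|A| \rceil$ is at most $\ell/(|A|\,\exp(m^{\eps})) + O(1)$, which is negligible compared to the standard deviation $\sigma = \Theta(\sqrt{\ell/|A|})$. A standard anti-concentration estimate (or a reverse Chernoff bound) therefore yields $\Pr[|P_{\alpha^*}(u,\eta)| \geq \ell/|A|] \geq \tfrac{1}{2} - o(1)$, which comfortably exceeds $1/(2|A|)$ for large $m$. Integrating over the coarser randomness preserves the bound.

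The main obstacle is exactly this conversion from expectation to probability: a direct Markov-style argument from $\E[|P_{\alpha^*}|] \geq \ell(1-o(1))/|A|$ is too weak to give a nontrivial probability that $|P_{\alpha^*}|$ meets the threshold $\ell/|A|$ (it would only reach $\ell/(2|A|)$). The proof must therefore rely on the independence of the groups in $\mathcal{U}$ to guarantee that $|P_{\alpha^*}|$ is Binomial, and then exploit the tight concentration of Binomial distributions around their mean.
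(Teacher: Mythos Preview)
Your approach is more elaborate than necessary and has two genuine gaps.

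\textbf{The symmetry claim is unjustified.} You assert that $p_\alpha$, the probability that $\alpha$ is critical for the special bidder in group $j$, does not depend on $j$. But the mechanism $\mathcal M$ is an arbitrary fixed protocol and may treat players in different groups completely differently; nothing in the construction forces the distribution of the critical $\alpha$ to be the same across groups. So after conditioning, $|P_{\alpha^*}|$ is at best a Poisson binomial with group-dependent parameters $p_j$, not a Binomial.

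\textbf{The concentration step does not close.} You fix $\alpha^*$ using the \emph{unconditional} expectation $\E[|P_{\alpha^*}|]\ge \ell(1-o(1))/|A|$ and then condition on the partition and special-bidder identities. But for a given conditioning $\omega$ the conditional mean $\sum_j p_j(\omega)$ need not be close to $\ell/|A|$; only its average over $\omega$ is. For conditionings where $\sum_j p_j(\omega)$ is small, the ``median near mean'' fact gives you nothing about the event $|P_{\alpha^*}|\ge \ell/|A|$. A reverse-Markov argument on $\omega$ only yields $\Pr_\omega[\sum_j p_j(\omega)\ge \ell/(2|A|)]\gtrsim 1/(2|A|)$, i.e., the wrong threshold.

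\textbf{The paper's route is a two-line double pigeonhole.} Condition on the high-probability event (from Lemma~\ref{lemma-first-round2} and a union bound) that \emph{every} player has some critical $\alpha\in A$. For each fixed instance in this event, the $m^{1-\eps}-1$ special bidders each contribute to at least one $P_\alpha$, so by pigeonhole some $\alpha$ satisfies $|P_\alpha|\ge (m^{1-\eps}-1)/|A|$; call this $\alpha$ \emph{good} for the instance. Since every such instance has at least one good $\alpha$, a second pigeonhole over the $|A|$ choices gives an $\alpha^*$ that is good with conditional probability at least $1/|A|$. Multiplying by $1-n/\exp(m^\eps)$ yields the bound $1/(2|A|)$. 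No independence, no concentration, no symmetry needed.
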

\begin{proof}
	Let  $(u_1,\ldots,u_n)$ and $(\eta_1,\ldots,\eta_n)$ be base valuations and noises such that all players have a critical value of $\alpha$.  We remind that by the definition of the hard distribution, there are $m^{1-\epsilon}-1$ special bidders given the valuations $(u_1,\ldots,u_n)$. Since by assumption all bidders have a critical $\alpha$, by applying an averaging argument we get that there exists a weight $\alpha \in A$ such that $P_\alpha(u,\eta)\ge \frac{m^{1-\epsilon}-1}{|A|}$. We say that this weight is \emph{good} for $(u_1,\ldots,u_n)$ and for $(\eta_1,\ldots,\eta_n)$.  
	
	By applying the averaging argument once again, we get that there necessarily exists $\alpha^\ast\in A$
	 that is good with probability at least $\frac{1}{|A|}$ 
	 over the sampling of $(u_1,\ldots,u_n)$ from the hard distribution and the uniform sampling of the noises, conditioned on the assumption that $(u_1,\ldots,u_n)$ and $(\eta_1,\ldots,\eta_n)$ satisfy that all players have a critical value of $\alpha$. Note that this assumption holds with probability of at least $1-\frac{n}{exp(m^\epsilon)}$ by Lemma \ref{lemma-first-round2}. The claim immediately follows.   
%
%
%
\end{proof}

\subsubsection{Description of the Simultaneous Algorithm}
Our randomized algorithm for the hard distribution is as follows.
Given $(u_1,\ldots,u_n)$, every  player $i$ constructs a valuation $v_i$ that is based on $u_i$.
He samples a noise  $\eta_i$ from $\{\frac 1 {4^m}, \frac {2} {4^m}, \frac {3} {4^m}, \ldots,  \frac {3^m} {4^m}\}$. 
If he sends different messages for the valuations $\alpha^\ast(1+\eta_i)\cdot u_i$ and $2\alpha^\ast(1+\eta_i)\cdot u_i$ in vertex $x$,
 we say that $\alpha^\ast$ is \emph{critical} for the base valuation $u_i$ and for $\eta_i$. In this case,
we sample a weight $\alpha\in \{\alpha^\ast,2\alpha^\ast\}$, each with probability $\frac{1}{2}$.  Then, we set $v_i=\alpha(1+\eta_i)\cdot u_i$. 

 Otherwise, $\alpha^\ast$ is not critical for  player $i$ given the base valuation $u_i$ and the noise $\eta_i$.  In this case, 
 player $i$ samples uniformly at random one of his valuable sets, which we denote with $T_i$. Then, $v_i$  is defined as follows:
\begin{equation*}
v_i(S)=\begin{cases}
2\alpha^\ast(1+\eta_i), &\quad T_i \subseteq S \\
0, & \quad \text{otherwise.}
\end{cases}
\end{equation*}
We denote the distribution of the valuations $(v_1,\ldots,v_n)$ with $\mathcal I_{\alpha^\ast}$, and denote an instance in the support of this distribution with $I_{\alpha^\ast}$. 
Now, the message that player $i$ sends in the simultaneous algorithm consists of several blocks:
\begin{enumerate}
\item The message $z_i$ he sends in vertex $x$ of $\mathcal{M}$ when his valuation is $v_i$. \label{block-message-in-mech}
\item One bit that specifies whether the weight  of the valuation $v_i$ is $\alpha^\ast$ or $2\alpha^\ast$.
\item A bit that specifies whether $\alpha^\ast$ is critical for the base valuation $u_i$ and the noise $\eta_i$ or not.   
\item The base set of $u_i$, i.e. the union of all the items that belong in a bundle  $T$ such that $u_i(T)>0$ (with probability $1-\frac 1 {exp(m^\epsilon)}$ each player can identify his entire base set by taking the union of all the sets he is interested in, since an item appears in a set with probability $\frac 1 2$ and there are exponentially many sets). \label{block-baseset}
\item $poly(m)$ bits that specify the value of the random noise $\eta_i$. 
\end{enumerate}
  Now, based on the blocks of messages, we output an allocation $(S_1,\ldots,S_n)$ as follows.
  Note that all players sent their base sets as a block in the simultaneous protocols, so with high probability the messages of all players reveal  the center $B$ and the special sets $A_1,\ldots,A_l$. 
  If they do not, then we output the allocation with the empty bundle for all players. 
  
 For every player $i$, if the weight of  $v_i$ is $2\alpha^\ast$, then we set $S_i\gets \emptyset$. Otherwise, we 
denote with $G_j$ the group of players that player $i$ belongs to and observe the tree that is induced by the messages $(z_1,\ldots,z_{i-1},z_{i+1},\ldots,z_n)$ at the initial vertex $x$.  
We allocate to player $i$ either $A_j$, the special set of group $j$, or the empty bundle. 
We allocate $A_j$ to player $i$ only if there exists a valuation $v_i''$ that satisfies all the following conditions at the same time: 
\begin{enumerate}
	\item $v_i''$ has weight $\alpha^\ast$ and noise $\eta_i$.\footnote{Note that this is the reason for sending the noise as a block in the simultaneous algorithm.}
	\item The dominant strategy of  player $i$, $\mathcal S_i^{dom}$, dictates sending the message $z_i$ at vertex $x$ given the valuation $v_i''$.   
	\item The dominant strategy $\mathcal S_i^{dom}(v_i'')$ guarantees a valuable set for player $i$ if the protocol reaches vertex $x$ and the players in $N\setminus \{i\}$ send $z_{-i}$  (To be clear, $S_i^{dom}(v_i'')$ guarantees a valuable set if for every strategy profile $\mathcal S_{-i}$ of the other players in the mechanism $\mathcal M$ that is consistent with vertex $x$ and with the message $z_{-i}$, we have that given $S_i^{dom}(v_i'')$ and $\mathcal{S}_{-i}$, the mechanism $\mathcal M$ reaches a leaf where player $i$ wins a set $T$ such that $v_i''(T)>0$).  
\end{enumerate}
Otherwise, we allocate to player $i$ the empty bundle. 
\subsection{Analysis of the Simultaneous Algorithm} \label{subsec-analysis}
It is easy to see that the communication complexity of the randomized algorithm is $poly(cc(\mathcal{M}),m,n)$.  Therefore, to conclude  Theorem \ref{theorem-general-impossibility}, it remains to show that:
\begin{proposition}\label{prop-alg-is-good}
	$\E_{r,(u_1,\ldots,u_n)}\sum_i u_i(S_i)= \Omega(m^{\eps})$, 
	where the probability is taken over the sampling of $\vec{u}$ from the hard distribution and the sequence of random coins of the algorithm (which we denote with $r$). 
\end{proposition}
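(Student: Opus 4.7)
The plan is to show that the algorithm achieves expected welfare $\Omega(m^{\epsilon})$ by combining three ingredients: (i) Lemma~\ref{lemma-good-alpha} guaranteeing many special bidders with critical $\alpha^\ast$; (ii) the $m^{1-4\epsilon}$-approximation guarantee of $\mathcal M$ on the random instance drawn from $\mathcal I_{\alpha^\ast}$; and (iii) a structural argument converting $\mathcal M$'s per-bidder success into the existence of a valid certificate $v_i''$ for the algorithm's allocation rule.

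First, I would apply Lemma~\ref{lemma-good-alpha} to fix $\alpha^\ast \in A$ and let $E$ denote the event $|P_{\alpha^\ast}(u,\eta)| \ge k := (m^{1-\epsilon}-1)/|A|$, whose probability is at least $1/(2|A|) = \Omega(1/\log m)$. Conditioning on $E$, for each of the $\ge k$ critical special bidders the algorithm independently assigns weight $\alpha^\ast$ with probability $1/2$, so by linearity at least $k/2$ of them end up with $v_i = \alpha^\ast(1+\eta_i)\,u_i$.

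Next I would analyze $\mathcal M$ on the random instance $(v_1,\ldots,v_n) \sim \mathcal I_{\alpha^\ast}$. The allocation that assigns each critical special bidder the disjoint set $A_j$ of his group yields welfare at least $k\alpha^\ast$, so $OPT(v)\ge k\alpha^\ast$; on the other hand, each bidder contributes at most $4\alpha^\ast$ to welfare. Therefore the approximation guarantee forces $\mathcal M$ to give a valuable set (in the $v_i$ sense) to at least $\Omega(k/m^{1-4\epsilon}) = \Omega(m^{3\epsilon}/\log m)$ bidders. A key structural point, echoing the arguments in the proof of Theorem~\ref{thm-simultaneous-general}, is that only the special bidder of each group $G_j$ can satisfy $u_i(A_j)=1$: every set $B\in\mathcal B_i$ for a non-special bidder $i$ lies in $\mathcal A_j \setminus \{A_j\}$ and intersects the center with probability $1-e^{-\Omega(m^\epsilon)}$, hence is not contained in $A_j$; moreover the random interested sets and the single-minded choices $T_i$ of non-special bidders intersect pairwise with the same high probability, limiting the number of non-special satisfied bidders to $O(1)$. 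Consequently, all but $O(1)$ of the $\Omega(m^{3\epsilon}/\log m)$ $\mathcal M$-satisfied bidders must be special bidders of their respective groups receiving $A_j$, and by the independent $1/2$-biased weight sampling a constant fraction are of type $\alpha^\ast$.

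The third step is to verify, for each such $\mathcal M$-satisfied special bidder $i$ of weight $\alpha^\ast$, that the algorithm allocates $A_j$ to him by taking $v_i'':=v_i$ as the witness. Conditions (1) and (2) of the allocation rule hold by construction. For condition (3) the argument is: suppose $\mathcal S_i^{\mathrm{dom}}(v_i)$ failed to guarantee a valuable set against some strategy profile $\mathcal S_{-i}$ consistent with $z_{-i}$. Then there would exist a valuation profile $v'_{-i}$ in the support of $\mathcal I_{\alpha^\ast}$ whose dominant strategies induce the messages $z_{-i}$ at vertex~$x$ but on which $\mathcal M$ denies player $i$ a valuable set. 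Invoking Lemma~\ref{lemma-known-prices} (which forces the price charged to player $i$ for any fixed bundle to be the same across all leaves of the induced subtrees at~$x$), together with the no-negative-transfers property and the dominant-strategy inequality for $v_i$, one derives that $\mathcal M$ must also deny $i$ on the actually sampled instance, contradicting the conclusion of step~(ii). Summing up, conditionally on $E$ the algorithm allocates $A_j$ to $\Omega(m^{3\epsilon}/\log m)$ special bidders in expectation, so unconditionally $\E[\sum_i u_i(S_i)] = \Omega(m^{3\epsilon}/\log^2 m) = \Omega(m^\epsilon)$ for any fixed $\epsilon>0$. The main obstacle is precisely the third step: translating the instance-specific fact that $\mathcal M$ serves player $i$ into the uniform guarantee condition~(3) demands using the full strength of the dominant-strategy assumption and the no-negative-transfers property.
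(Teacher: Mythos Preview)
Your step (iii) has a genuine gap. You assert that if $\mathcal S_i^{\mathrm{dom}}(v_i)$ fails to guarantee a valuable set against some adversarial $\mathcal S_{-i}$ consistent with $z_{-i}$, then there is a valuation profile $v'_{-i}$ whose \emph{dominant} strategies also deny player $i$. This is a non-sequitur: the bad leaf $l_b$ reached by the adversary lives in the \emph{same} subtree $t$ (the one reached by $z_i$) as the good leaf, and Lemma~\ref{lemma-known-prices} only equates prices of a fixed allocation across \emph{different} subtrees of the induced tree, so it gives you nothing here. The paper closes this gap with a device you never invoke: the criticality of $\alpha^\ast$ furnishes a second message $z_i^2$ (for $v_i^2 = 2\alpha^\ast(1+\eta_i)u_i$) leading to a \emph{different} subtree $t_2$. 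Because $v_i$ and $v_i^2$ share the same valuable sets, the paper shows $t_2$ contains a leaf $l_2$ with strictly positive profit for $v_i$; then, against a hybrid adversary that plays ``bad'' in $t$ and ``good'' in $t_2$, player $i$ with valuation $v_i$ would strictly prefer to deviate to $z_i^2$, contradicting dominance. Without this two-subtree structure, condition~(3) simply does not follow from ``$\mathcal M$ served $i$ on one instance.''

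A second, related problem is that you analyze $\mathcal M$ directly on $(v_1,\ldots,v_n)$ and then claim that ``by the independent $1/2$-biased weight sampling a constant fraction [of served special bidders] are of type $\alpha^\ast$.'' But the event that $\mathcal M$ serves bidder $i$ depends on the entire weight profile, so independence fails; nothing prevents $\mathcal M$ from preferentially satisfying the $2\alpha^\ast$-weight special bidders. The paper avoids this by running $\mathcal M$ on the \emph{diluted} instance $(v_1',\ldots,v_n')$, in which every special bidder with weight $2\alpha^\ast$ (and critical $\alpha^\ast$) has been replaced by a same-message valuation that is \emph{not} interested in the special set. This forces every $\mathcal M$-served special bidder in the diluted run to have weight $\alpha^\ast$, and simultaneously provides the good leaf $l_2$ in subtree $t_2$ needed for the two-subtree argument above (since the other players' messages $z_{-i}$ are unchanged between $v_{-i}$ and $v'_{-i}$). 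Both the dilution step and the explicit use of criticality in the structural argument are essential and missing from your proposal.
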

To analyze the welfare guarantees of the algorithm, we define for every instance $I_{\alpha^\ast}$ in the support of $\mathcal I_{\alpha^\ast}$ another instance that is \textquote{close} to it. Formally:
\begin{definition}
	Let $I_{\alpha^\ast}=(v_1,\ldots,v_n)$ be an instance in the support of $\mathcal I_{\alpha^\ast}$ that is defined as above. Then, $(v_1',\ldots,v_n')$ is a \emph{diluted} version of  
	$(v_1,\ldots,v_n)$ if for every player $i \in G_j$ such that:
	\begin{enumerate}
		\item  The weight of $v_i$ is $2\alpha^\ast$ \emph{and}
		\item He is interested in his special set (i.e. $v_i(A_j)>0$) \emph{and}
		\item  $\alpha^\ast$ is critical for $u_i$ and $\eta_i$, where $u_i$ and $\eta_i$ are the base valuation and the noise that $v_i$ is based on.
	\end{enumerate} 
we replace $v_i=2\alpha^\ast\cdot(1+\eta_i)\cdot u_i$ with $v_{i}'$ that sends the same message in vertex $x$ as $v_i$ and also has weight $2\alpha^\ast$, but has zero value for the special set.
If one of the conditions above does not hold, then we set  $v_i'\gets v_i$. 	
\end{definition}

We now define several events and prove that they occur
\emph{simultaneously} with high enough probability (Claim \ref{claim-simultaneously}), so 
analyzing the welfare guarantees of the algorithm only for the case where they all hold suffices.
For that, we will analyze the probability of each event separately (Subsection \ref{subsec-properties}). 
We remind that we denote the random base valuations that are sampled from the hard distribution with $(u_1,\ldots,u_n)$, and their weighted and noisy versions with $(v_1,\ldots,v_n)$. 
 The events are:
\begin{enumerate}
	\item $(v_1,\ldots,v_n)$ has a diluted version $(v_1',\ldots,v_n')$ (Claim \ref{claim-diluted2}). \label{event-diluted} \label{first-event2} 
		\item The optimal welfare for the diluted version $(v_1',\ldots,v_n')$ is at least $\alpha^\ast \cdot  m^{1-2\epsilon}$ (Claim \ref{claim-diluted-has-high-welfare}).   \label{event-diluted-high-welfare}
		\item 	The welfare of every allocation given the valuations $(v_1',\ldots,v_n')$ consists only of the welfare of  bidders that receive their special sets and are interested in them, plus one player (Due to the same arguments as in Lemma \ref{lemma-welfare-comes-from-special}, this holds with probability $1-e^{-\Omega(m^\epsilon)}$). \label{event-welfare-comes-from-special} 
		\item The diluted version $(v_1',\ldots,v_n')$ satisfies that every bidder that is interested in his special set has weight $\alpha^\ast$ (Claim \ref{claim-all-special-with-alpha}).  
\label{event-all-special-with-alpha}
	\item The allocation and the payments of the mechanism $\mathcal M$ for $(v_1',\ldots,v_n')$ satisfy that every player that wins a set that is valuable for him has a strictly positive profit (Claim \ref{claim-positive-profit2}). \label{event-positive-profit}
	\item The allocation $(S_1,\ldots,S_n)$ is feasible (Claim \ref{claim-feasible-allocation}).
	\item  Each player can identify his entire base set. As we explained above, this event occurs with probability at least $1-\frac{n}{\exp(m^\eps)}$. \label{almost-last-event}   
	\label{last-event2}
\end{enumerate}
 \begin{claim} \label{claim-simultaneously}
Events \ref{first-event2}-\ref{last-event2} occur simultaneously with probability at least $\frac{1}{4|A|}$. 
 \end{claim}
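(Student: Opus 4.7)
The plan is to base the argument on Lemma \ref{lemma-good-alpha}, which already supplies a probability of $\tfrac{1}{2|A|}$ for the existence of the good weight $\alpha^\ast$, and then show that each of the remaining six events in the list fails only with probability $o(1/|A|)$, so that a union bound yields the target bound of $\tfrac{1}{4|A|}$. Accordingly, I would first condition on the event $E^\ast$ of Lemma \ref{lemma-good-alpha}, namely $|P_{\alpha^\ast}(u,\eta)|\ge \tfrac{m^{1-\epsilon}-1}{|A|}$; this conditioning preserves the unconditional distribution of most other random choices (the independent resampling of weights $\alpha\in\{\alpha^\ast,2\alpha^\ast\}$ in the algorithm, the choices of $u_i$ for non-special coordinates, etc.), which is what makes the subsequent union bound possible.

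Next, I would handle the events one by one. Events \ref{event-welfare-comes-from-special} and \ref{almost-last-event} are already quoted with probability $1-e^{-\Omega(m^\epsilon)}$ and $1-n/\exp(m^\epsilon)$ respectively, so they contribute negligibly to the union bound. Events \ref{event-diluted}, \ref{event-all-special-with-alpha}, \ref{event-positive-profit} and \ref{claim-feasible-allocation} are each delivered by the referenced claims (\ref{claim-diluted2}, \ref{claim-all-special-with-alpha}, \ref{claim-positive-profit2}, \ref{claim-feasible-allocation}); I would simply verify that the high-probability bounds proved there survive the conditioning on $E^\ast$, which will hold because $E^\ast$ is a statement about critical values at vertex $x$ and is measurable with respect to the coordinates those claims do not rely on in a contradictory way. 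Event \ref{event-diluted-high-welfare} is the one that genuinely uses $E^\ast$: on $E^\ast$ there are at least $\tfrac{m^{1-\epsilon}-1}{|A|}$ special bidders for whom $\alpha^\ast$ is critical, and by the independent coin that chooses $\alpha\in\{\alpha^\ast,2\alpha^\ast\}$, a constant fraction of them will, with overwhelming probability (by a Chernoff bound over these independent $\tfrac12,\tfrac12$ choices), get assigned weight $2\alpha^\ast$. In the diluted instance each such bidder can be ``unspecialized'' safely, but the optimum allocation which gives $A_j$ to the original special bidder (before dilution, picking the $\alpha^\ast$-variant) attains welfare at least $\alpha^\ast\cdot\Omega(\tfrac{m^{1-\epsilon}-1}{|A|})=\alpha^\ast\cdot m^{1-2\epsilon}$, using $|A|=O(\log m)$.

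Finally, I would union-bound: the failure probability of any of events \ref{event-diluted}, \ref{event-welfare-comes-from-special}, \ref{event-diluted-high-welfare}, \ref{event-all-special-with-alpha}, \ref{event-positive-profit}, \ref{claim-feasible-allocation}, \ref{almost-last-event} conditioned on $E^\ast$ is of order $e^{-\Omega(m^\epsilon)}+n/\exp(m^\epsilon)+1/\mathrm{poly}(m)=o(1)$, hence certainly smaller than $\tfrac{1}{4|A|}$ for $m$ large. Multiplying by $\Pr[E^\ast]\ge\tfrac{1}{2|A|}$ gives
\[
\Pr[\text{events \ref{first-event2}--\ref{last-event2} all hold}]\;\ge\;\frac{1}{2|A|}\Big(1-o(1)\Big)\;\ge\;\frac{1}{4|A|}.
\]
The main obstacle I foresee is event \ref{event-diluted-high-welfare}: the welfare lower bound has to be derived from the purely count-based guarantee $|P_{\alpha^\ast}|\ge(m^{1-\epsilon}-1)/|A|$ together with the randomness of the weight assignment (so that a positive fraction of critical special bidders end up with weight $\alpha^\ast$ and can actually realize the value), and this requires being careful that the conditioning on $E^\ast$ does not kill the independence used in that Chernoff step. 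Once that is done, the other events are ``high-probability'' in the standard sense and fit into the union bound routinely.
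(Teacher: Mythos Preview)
Your plan is workable but more circuitous than what the paper does. The paper does not condition on the event $E^\ast$ from Lemma~\ref{lemma-good-alpha} at all; instead, it simply invokes Claim~\ref{claim-diluted-has-high-welfare}, which already packages the use of Lemma~\ref{lemma-good-alpha} (together with the Chernoff step over the independent $\{\alpha^\ast,2\alpha^\ast\}$ coins) into a single unconditional bound $\Pr[\text{event }\ref{event-diluted-high-welfare}]\ge \tfrac{1}{3|A|}$. Given that, the paper's proof is a one-line union bound: event~\ref{event-diluted-high-welfare} holds with probability at least $\tfrac{1}{3|A|}$, each of the remaining events fails with probability at most $\tfrac{3}{n}$ or exponentially small, so the intersection has probability at least $\tfrac{1}{3|A|}-\tfrac{3}{n}-o(1)\ge \tfrac{1}{4|A|}$. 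No conditioning is needed.

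Your route essentially re-derives Claim~\ref{claim-diluted-has-high-welfare} inside the proof of Claim~\ref{claim-simultaneously}, which is redundant. More importantly, the step where you ``simply verify that the high-probability bounds survive the conditioning on $E^\ast$'' is not as innocent as you suggest: $E^\ast$ is measurable with respect to $(u_1,\dots,u_n,\eta_1,\dots,\eta_n)$, the very same randomness that governs Claims~\ref{claim-diluted2}, \ref{claim-all-special-with-alpha}, \ref{claim-positive-profit2}, \ref{claim-feasible-allocation}. The clean way to justify this is the generic blowup $\Pr[F\mid E^\ast]\le \Pr[F]/\Pr[E^\ast]\le 2|A|\cdot\Pr[F]$; since $|A|=O(\log m)$ and the unconditional failure probabilities are $\tfrac{3}{n}$ or exponentially small, this still gives $o(1)$. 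You should state that explicitly rather than appeal to a vague measurability argument. Also, your description of event~\ref{event-diluted-high-welfare} has the weights backwards: the special bidders that contribute to the diluted optimum are those assigned weight $\alpha^\ast$, not $2\alpha^\ast$ (the $2\alpha^\ast$ ones are precisely those that get diluted).
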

 \begin{proof}
By Claim \ref{claim-diluted2}, event \ref{first-event2} occurs with probability at least $1-\frac{3}{n}$ and by Claim \ref{claim-diluted-has-high-welfare} event \ref{event-diluted-high-welfare} occurs with probability at least $\frac{1}{3|A|}$. Events \ref{event-welfare-comes-from-special}-\ref{almost-last-event} each hold with probability of at least $1$ minus an exponentially small probability. By combining these three facts together, we get that events \ref{first-event2}-\ref{almost-last-event}  occur simultaneously with probability at least $\frac{1}{4|A|}$.   
 \end{proof}

\begin{proof}[Proof of Proposition \ref{prop-alg-is-good}]
Sample $(u_1,\ldots,u_n)$ from the hard distribution and sample $I_{\alpha^\ast}=(v_1,\ldots,v_n)$ 
as described in the algorithm above. 
	Note that by the union bound, all the desirable events (numbered \ref{first-event2}-\ref{last-event2})  occur simultaneously with probability of at least $\frac{1}{3|A|}$. 
	We will analyze the expected value of $\sum_i u_i(S_i)$, assuming that events \ref{first-event2}-\ref{last-event2}  all hold. 

	Given $(v_1,\ldots,v_n)$, let $(v_1',\ldots,v_n')$ be its diluted version. By condition \ref{event-diluted-high-welfare}, the optimal allocation $(O_1,\ldots,O_n)$ satisfies that
	 $\sum_{i=1}^{n}v_i'(O_i)\ge \alpha^\ast m^{1-2\eps}$.
	  Thus, the mechanism $\mathcal M$ necessarily outputs an allocation $(X_1,\ldots,X_n)$ such that $\sum_{i=1}^{n}v_i'(X_i)\ge \frac{\alpha^\ast m^{1-2\eps}}{m^{1-4\eps}}=\alpha^\ast m^{2\eps}$.

	Note that by event \ref{event-welfare-comes-from-special}, the welfare of $(X_1,\ldots,X_n)$   consists of bidders who receive their special sets and are interested in them given the valuation profile $(v_1',\ldots,v_n')$, plus at most one bidder. By event \ref{event-all-special-with-alpha}, all those bidders (except one) satisfy that the weight of $v_i'$ is $\alpha^\ast$.   
	By construction, if  a bidder who is not interested in the special set contributes anything to the welfare, it is at most $2\alpha^\ast(1+\frac{3^n}{4^n})$, whereas the every  player that gets a special set and is interested in it contributes at most $\alpha^\ast(1+\frac{3^n}{4^n})$.
	 Now,	we define the set $\mathcal H$ as the set of players who are interested in their special set given the valuation $v_i'$ (i.e. $v_i'(A_j)>0$ if $i\in G_j$) and also satisfy that $v_i'(X_i)>0$. By the above:
%
\begin{equation} \label{lower-bound-h}
	|\mathcal H| \ge \frac{\alpha^\ast m^{2\eps}-2\alpha^\ast(1+\frac{3^n}{4^n})}{\alpha^\ast(1+\frac{3^n}{4^n})}
	\ge 
		\frac{\alpha^\ast m^{2\eps}-3\alpha^\ast}{2\alpha^\ast} \ge \frac{m^{2\eps}}{4} 
\end{equation}
	We will now show that for every happy bidder $i\in \mathcal H$ such that $i\in G_j$, we have that if the events \ref{first-event2}-\ref{last-event2} hold, then with probability $1$, $S_i=A_j$. 
	It immediately implies that:
	\begin{equation}\label{eq-special-have-value-1}
		i\in \mathcal H \land S_i=A_j \implies v_i'(S_i)>0 \implies u_i(S_i)=1
	\end{equation}
	
	 It immediately implies that:
	\begin{align*}
		\E_{r\sim R,(u_1,\ldots,u_n)}\Big[\sum_{i=1}^{n} u_i(S_i)\Big] &\ge \Pr[\text{events \ref{first-event2}-\ref{last-event2}}]\cdot \E_{r\sim R,(u_1,\ldots,u_n)} \Big[\sum_{i=1}^{n}u_i(S_i)\big|\text{events \ref{first-event2}-\ref{last-event2}}\Big] \\
		&\ge  \frac{1}{4|A|}\cdot \E_{r\sim R,(u_1,\ldots,u_n)} \Big[\sum_{i\in \mathcal H}u_i(S_i)  \big|\text{events \ref{first-event2}-\ref{last-event2}}\Big] &&\text{(by Claim \ref{claim-simultaneously})}  \\
		&\ge  \frac{|\mathcal H|}{4|A|)} &&\text{(by (\ref{eq-special-have-value-1}))} \\
		&\ge  \frac{1}{4|A|)}\cdot \frac{m^{2\eps}}{4} &&\text{(by (\ref{lower-bound-h}))} \\
		&\ge m^{\epsilon} \quad \quad \quad &&\text{(for large enough $m$)}
	\end{align*}
	where the equality holds because every $i\in \mathcal H$ is special and wins the special set with probability $1$ conditioned on the events \ref{first-event2}-\ref{last-event2}.  	It completes the proof.
	It remains to show that for every $i\in \mathcal H$ we have that $S_i=A_j$.

	Fix a player  $i\in \mathcal H$. 
	We remind that the messages that the players send given the valuations $(v_1,\ldots,v_n)$ and $(v_1',\ldots,v_n')$  in vertex $x$ of the mechanism are $(z_1,\ldots,z_n)$. By the description of the allocation that we output, $S_i=A_j$ only if the message $z_i$ has a valuation $v_i''$ that sends $z_i$ and satisfies that its dominant strategy guarantees for it a valuable set given the messages $z_{-i}$ at vertex $x$. By definition, $v_i$ sends $z_i$ so showing that $v_i$ has a dominant strategy that guarantees a valuable set will do.

	 We denote the leaf that the mechanism $\mathcal M$ reaches given the action profile $(\mathcal S_1^{dom}(v_1'),\ldots,\mathcal S_n^{dom}(v_n'))$ with $l$. Let $t$ be the subtree that $z_i$ leads to. 
	We remind that by definition, the leaf $l$ is labeled with the allocation $X_i$ and that since player $i$ is happy, we have that $v_i(X_i)=v_i'(X_i)>0$. Denote the payment of player $i$ at leaf $l$ with $p$, and note that by condition \ref{event-positive-profit} we have that:
	\begin{equation}\label{eq-strictly-pos}
	 v_i'(X_i)-p>0 \implies v_i(X_i)-p>0 
	\end{equation}
Also, the fact that  $v_i\equiv v_i'$ implies that: 
$$
(\mathcal S_i^{dom}(v_i'),\mathcal S_{-i}^{dom}(v_{-i}'))\to l \implies
(\mathcal S_i^{dom}(v_i),\mathcal S_{-i}^{dom}(v_{-i}'))\to l
$$
Recall that $i\in \mathcal H$,  so by definition player $i$ is interested in his special set given $v_i'$, and therefore by event \ref{event-all-special-with-alpha}, the weight of $v_i'$ is $\alpha^\ast$, so the weight of $v_i$ is   $\alpha^\ast$ as well. Observe that by construction,
if $v_i$ has weight $\alpha^\ast$, then $v_i=\alpha^\ast(1+\eta_i)\cdot u_i$ where $\alpha^\ast$ is critical for the base valuation $u_i$ and for the noise $\eta_i$. 
We therefore have that player $i$ sends a message $z_i^2$ for the valuation $2\alpha^\ast(1+\eta_i)\cdot u_i$ that differs from $z_i$. 
Denote the subtree that the message leads to with $t_2$, and the leaf that $v_i^2$ reaches given $v_{-i}'$ with $l_2$, i.e.:
\begin{equation*}
(\mathcal S_i^{dom}(v_i^2),\mathcal S_{-i}^{dom}(v_{-i}')) \to l_2
\end{equation*}
where $l_2$ is labeled with the payment and allocation $(S_2,p_2)$ for player $i$. 
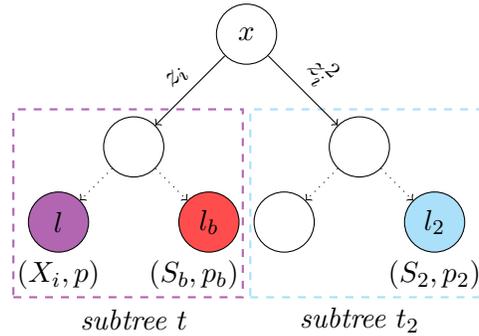
\begin{figure} [h] 
	\centering
	
	\begin{tikzpicture}
		\node[shape=circle,draw=black,minimum size=0.8cm] (r) at (1.5,1.5) {$x$};
		\node[shape=circle,draw=black,minimum size=0.8cm] (v) at (0,0) {};
		\node[shape=circle,draw=black,minimum size=0.8cm,fill=violet!60] () at (-1,-1) {$l$};
		\node[] (bundle) at (-1,-1.7) {$(X_i,p)$};
		\node[shape=circle,draw=black,minimum size=0.8cm] (v') at (3,0) {}; 
		\node[shape=circle,draw=black,minimum size=0.8cm,fill=cyan!30] (l') at (4,-1) {$l_2$};
		\node[] (bundle') at (4,-1.7) {$(S_2,p_{2})$}; 
		\node[shape=circle,draw=black,minimum size=0.8cm] (n') at (2,-1) {}; 
		\node[shape=circle,draw=black,minimum size=0.8cm,fill=red!70] (n) at (1,-1) {$l_b$}; 
		\node[] (badbundle') at (0.8,-1.7) {$(S_b,p_b)$}; 
		
		\draw[thick,violet!60,dashed] (-1.6,0.5) rectangle (1.45,-2) {};
		\node[black,font=\itshape] at (0,-2.3) {subtree $t$};
		
		\draw[thick,cyan!30,anchor=mid west,dashed] (1.55,0.5) rectangle (4.7,-2) {};
		\node[black,font=\itshape] at (3,-2.3) {subtree $t_2$};

		\draw [->] (r) edge  node[sloped, above] {$\footnotesize z_i$} (v);
		\draw [->] (r) edge  node[sloped, above] {$\footnotesize z_i^2$} (v');
		\draw [->] (v) edge[dotted]  node[sloped, above] {}  (l);
		\draw [->] (v') edge[dotted]  node[sloped, above] {} (l');
		\draw [->] (v) edge[dotted]  node[sloped, above] {} (n);
		\draw [->] (v') edge[dotted]  node[sloped, above] {} (n');
		
	\end{tikzpicture}
	\caption{An illustration for the proof of Proposition \ref{prop-alg-is-good}. It describes the tree that the messages of the other players $z_{-i}^u$ induce for player $i$ at vertex $x$. We remind that by definition, the players in $N\setminus \{i\}$ send the same messages $z_{-i}$ for both $v_{-i}$ and for $v_{-i}$. As the figure demonstrates, both subtrees $t,t_2$ have a leaf that is labeled with a bundle and payment such that the profit of player $i$ given the valuation $v_i$ is strictly positive. We will use this structure to show that for every strategy of the other players, the dominant strategy of player $i$ guarantees that he wins a set that is valuable for him.}
	\label{section-g-figure}
\end{figure}
Observe that since $S_i^{dom}(v_i^2)$ is dominant for player $i$, in particular it dominates $\mathcal S_i^{dom}(v_i)$, so we have that $v_i^2(S_2)-p_2 \ge v_i^2(X_i)-p>0$. Therefore, $S_2$ is valuable for player $i$ since the mechanism does not allow negative transfers. Since all valuable sets have the same value for player $i$, we have that $v_i^2(S_2)=v_i^2(X_i)$. Since $v_i^2$ and $v_i$ have the same valuable sets and all valuable sets of each valuation have the same value, we get  that $v_i(S_2)=v_i(X_i)$. 
 Therefore, we have that $p \ge p_2$. Similarly, the fact that $S_i^{dom}(v_i)$ is dominant for player $i$ and not $S_i^{dom}(v_i^2)$ implies that $p_2\ge p$, so $p=p_2$. 

Now, we finally show that $\mathcal S_i^{dom}(v_i)$ guarantees a valuable set at vertex $x$ for player $i$ given the messages $z_{-i}$ of the other players. To this end, assume towards a contradiction that there exists a strategy $\mathcal S_{-i}^b$ and valuations $v_{-i}^b$ of the players in $N\setminus \{i\}$ such that $(\mathcal S_i^{dom}(v_i),\mathcal S_{-i}(v_{-i}^b)))$ 
reaches a leaf $l_{b}$ that is labeled with the bundle and payment $(S_{b},p_b)$ such that $v_i(S_{b})=0$. We do not allow negative transfers, so the profit of player $i$ with valuation $v_i$ given this leaf is necessarily zero. 
Observe that $l_b$ necessarily belongs to subtree $t$. 
See Figure \ref{section-g-figure} for an illustration.

Now, observe the following strategy profile $\mathcal S_{-i}''$: for every valuation $v_{-i}''$, choose the actions specified by both $\mathcal S_{-i}(v_{-i}^b)$ and $\mathcal S_{-i}^{dom}(v_{-i}')$ until vertex $x$, including vertex $x$. 
Note that both strategy profiles are identical up to that point, and that they both dictate sending $z_{-i}$ at vertex $x$. Now, at subtree $t$, choose the actions specified by $\mathcal S_{-i}(v_{-i}^b)$ and in subtree $t_2$ choose the actions specified by $\mathcal S_{-i}^{dom}(v_{-i}')$.  We therefore have that for every $v_{-i}''\in V_{-i}$:
$$
(\mathcal S_i^{dom}(v_i),\mathcal S_{-i}''(v_{-i}''))\to l_b, \quad
(\mathcal S_i^{dom}(v_i^2),\mathcal S_{-i}''(v_{-i}''))\to l_2 
$$
where the $(\mathcal S_i^{dom}(v_i^2),\mathcal S_{-i}''(v_{-i}''))$ reaches $l_2$ because $(\mathcal S_i^{dom}(v_i^2),\mathcal S_{-i}^{dom}(v_{-i}'))$ reaches $l_2$.

We remind that $l_2$ is labeled with $(S_2,p_2)$. We remind that by equation (\ref{eq-strictly-pos}), $v_i(X_i)-p>0$ and also that  $v_i(S_2)=v_i(X_i)$ and  $p=p_2$. Therefore:
$$
v_i(S_2)-p_2 >0 = v_i(S_b)-p_b
$$
Thus, $\mathcal S_i^{dom}$ is not dominant, so we have a contradiction, which completes the proof. 
\end{proof}

\subsection{Properties of Instances}\label{subsec-properties}

\subsubsection{A Diluted Version Exists}
We denote with $L$ the maximum message length in the mechanism $\mathcal M$.

\begin{claim}\label{claim-diluted2}
The probability that a diluted version $I'_{\alpha^\ast}=(v_1',\ldots,v_n')$ of $I_{\alpha^\ast}=(v_1,\ldots,v_n)$ does not exist is at most $\frac 3 n$. 
\end{claim}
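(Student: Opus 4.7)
The plan is to bound the probability that dilution fails by showing that, for each individual player who needs to be diluted, a valid replacement almost surely exists; the claim then follows by a union bound. Dilution for player $i \in G_j$ fails only when conditions 1--3 hold and no valuation $v_i' = 2\alpha^\ast(1+\eta')u'$ in the domain, with $u'(A_j) = 0$, sends the same message as $v_i$ at vertex $x$. So the task reduces to proving that with very high probability, whenever $u_i$ itself sends message $z$ at vertex $x$ and $u_i(A_j) > 0$, the message $z$ is \emph{also} sent by some base valuation that is not interested in $A_j$.

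The key tool is a counting (or pigeonhole) argument contrasting the size of the message space with the size of the base-valuation space. Since we are proving this by way of contradiction to Proposition~\ref{prop-simul-hard-dist}, the message length $L$ at vertex $x$ is strictly less than $2^{m^{\epsilon^2/2}}/n$, while the space of base valuations with a given base set $T_i$ has size $2^{\Theta(\epsilon^2 m^\epsilon)}$, hence exponentially larger than $2^L$. The map $u \mapsto \mathrm{msg}(2\alpha^\ast(1+\eta_i)\cdot u)$ therefore partitions the base-valuation space into at most $2^L$ cells with average size exponentially large. Within each cell, I would compare the count of base valuations with $A_j \in \mathcal{B}(u)$ to those with $A_j \notin \mathcal{B}(u)$: since each set $A \in \mathcal{A}_j$ is independently included in $\mathcal{B}(u)$ with marginal probability $1/m$, the fraction interested in $A_j$ is small, so the ``bad cells'' (those containing only valuations with $A_j \in \mathcal{B}$) carry negligible total measure.

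Combining these, I would bound the probability that a random $u_i$ (conditioned on lying in $\{u : A_j \in \mathcal{B}(u)\}$, i.e., on being a special bidder) falls into a bad cell by an expression of order $2^L / 2^{\Omega(m^\epsilon)}$, which is far smaller than any polynomial in $1/n$. Union-bounding this over the at most $\ell = m^{1-\epsilon}-1$ potential special bidders (at most one per group, by the hard-distribution construction) gives a total dilution-failure probability much smaller than $3/n$, with slack absorbed into the $3/n$ bound together with the low-probability event that a player cannot fully recover his base set from the sets he is interested in.

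The main obstacle I anticipate is making the counting argument tight in the presence of the conditioning already imposed by conditions 1--3 (in particular the condition that $\alpha^\ast$ is critical for $u_i,\eta_i$, which is itself an event of the message-function): one must argue that conditioning on criticality does not concentrate the posterior on a tiny subset of base valuations that would invalidate the averaging. I would address this by noting that criticality is an event with probability bounded away from zero uniformly (by Lemma~\ref{lemma-first-round2} this event holds with probability $1 - 1/\exp(m^\epsilon)$), so the conditional distribution is within $1 + o(1)$ of the marginal and the counting bound is preserved up to constants. The intricate piece is the bookkeeping tying $L$ to the communication complexity $cc(\mathcal{M})$ along the path to vertex $x$ and confirming that the exponential gap between $2^L$ and the valuation-space size indeed survives after every conditioning.
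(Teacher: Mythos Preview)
Your proposal has the right overall shape but contains a genuine error in handling the conditioning on criticality, and the core counting step is underspecified in a way that hides the key idea.

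\textbf{The conditioning on criticality.} You write that ``criticality is an event with probability bounded away from zero uniformly (by Lemma~\ref{lemma-first-round2} this event holds with probability $1-1/\exp(m^\epsilon)$).'' This misreads the lemma. Lemma~\ref{lemma-first-round2} guarantees that with high probability \emph{some} pair of consecutive weights in $A$ is critical for player $i$; it says nothing about the specific $\alpha^\ast$ chosen via Lemma~\ref{lemma-good-alpha}. For an individual player the probability that $\alpha^\ast$ in particular is critical can be as small as you like --- indeed Lemma~\ref{lemma-good-alpha} only secures that $\alpha^\ast$ is critical for an $\Omega(1/|A|)$ fraction of the special bidders, not for each one. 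Hence conditioning on ``$\alpha^\ast$ critical for $u_i,\eta_i$'' can blow up probabilities by an unbounded factor, and your $1+o(1)$ correction does not apply. The paper handles exactly this issue by a case split: call $\alpha^\ast$ \emph{significant} for player $i$ if $\Pr[\alpha^\ast\text{ critical}]\ge 1/n^2$. For significant players, Bayes gives at most an $n^2$ blowup, which the unconditional $2/n^4$ bound from Claim~\ref{claim-change-valuation2} absorbs. For non-significant players, the event ``critical'' itself has probability $<1/n^2$, so the joint event is already rare. Only after this split does the union bound over $n$ players yield $3/n$.

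\textbf{The counting step.} Your pigeonhole sketch (``bad cells carry negligible total measure'') does not by itself give $2^L/2^{\Omega(m^\epsilon)}$: for a \emph{fixed} special set $A_j$, the cells entirely contained in $\{u:A_j\in\mathcal B(u)\}$ could in principle have conditional measure close to $1$. The paper's Claim~\ref{claim-change-valuation2} avoids this by exploiting that the special set is a \emph{uniformly random} member of the player's valuable sets: it defines a set $A$ to be \emph{static} for a message $z$ if every valuation sending $z$ is interested in $A$, calls $z$ \emph{dangerous} if it has $\ge k/n^4$ static sets (where $k\approx t/m$ is the number of valuable sets), shows $\Pr[z\text{ dangerous}]\le 2^L\cdot m^{-k/n^4}$, and then observes that for a non-dangerous message the randomly chosen special set is static with probability $\le 1/n^4$. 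This randomization over which valuable set is ``special'' is the missing ingredient in your argument.
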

\begin{proof}
		For $(v_1',\ldots,v_n')$, denote with $u=(u_1,\ldots,u_n)$ its random base valuations and with $\eta=(\eta_1,\ldots,\eta_n)$ its uniform noises.  
	We want to prove an  upper bound on the probability that there exists a player such that $v_i$ has weight $2\alpha^\ast$ and cannot be replaced. 
	If $L\ge \frac {2^{\Theta(\eps^2 m^\epsilon)}} {n^8}$, then the communication complexity of $\mathcal M$ is larger than $2^{m^{\frac{\eps^2}{2}}}$, and we are done.
	
	 Thus, we can assume that $L<\frac {2^{\Theta(\eps^2 m^\epsilon)}} {n^8}$, so by Claim \ref{claim-change-valuation2} below, the probability that a specific valuation of a player $i$ who is interested in his special set cannot be replaced is at most $\frac 2 {n^4}$. However, we replace only valuations of players
	such that $\alpha^\ast$ is critical for the base valuation $u_i$ and for the noise $\eta_i$,
 so it could be that such valuations are harder to replace.

	We say that $\alpha^\ast$ is   \emph{significant} for player $i$ if the probability that 
	$\alpha^\ast$ is critical for a random base valuation $u_i$ with a uniform noise $\eta_i$
is at least $\frac 1 {n^2}$.
If $\alpha^\ast$ is significant for player $i$, then the probability that he has a valuation $v_i$ that cannot be replaced given that $\alpha^\ast$ is critical for $u_i$ and for $\eta_i$ is (by Bayes theorem) at most   $n^2\cdot \frac 2 {n^4}=\frac 2 {n^2}$.
If $\alpha^\ast$ is not significant for player $i$,
 then the probability that $\alpha^\ast$ is critical for $u_i$ and $\eta_i$
is below $\frac 1 {n^2}$, so clearly the probability that $\alpha^\ast$ is critical for $u_i$ and $\eta_i$  and in addition $v_i$ cannot be replaced is smaller or equal to $\frac 1 {n^2}$.

Thus, the probability that a single player needs to be replaced but cannot be is at most $\frac 3 {n^2}$. By the union bound over all the players, we get that the probability that a diluted version of $I_{\alpha^\ast}$ does not exist is at most $\frac{3}{n}$.
\end{proof}

\begin{claim}\label{claim-change-valuation2}
	Fix a player $i$.
Let $T$ be some base set and let $\mathcal A_j$ be a family of $t = 2^{\Theta(\epsilon^2 m^\epsilon)}$ subsets of size $m^{\eps}$ of $T$ such that each subset is sampled uniformly at random. Let $u$ be a random base valuation with a base set $T$  in which player $i$ is interested in a subset from $\mathcal A_j$ with probability exactly $\frac 1 m$. Choose one of the valuable sets of  $u$ independently at 
random to be the special set.
Let $\eta$ be a noise and denote with $v$ the valuation $2\alpha^\ast(1+\eta)\cdot u$.
 If 
 $L<\frac {2^{\Theta(\eps^2 m^\epsilon)}} {n^8}$, then with probability at least $1-\frac {2} {n^4}$, there is another valuation $v'$, also with  weight $2\alpha^\ast$, in which the player is not interested in this special set and the messages sent by $v$ and $v’$ in vertex $x$ of the mechanism $\mathcal M$ are the same.
\end{claim}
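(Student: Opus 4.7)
The plan is to identify each candidate valuation of weight $2\alpha^\ast$ and noise $\eta$ with its underlying valuable-set family $\mathcal B \subseteq \mathcal A_j$, so that the message sent at vertex $x$ becomes a deterministic function $m(\mathcal B)$ of this family. The special set $S^\ast$ is, by construction, uniform in $\mathcal B$. Writing $\mathcal V_z = \{\mathcal B : m(\mathcal B) = z\}$ and the ``forced'' intersection $I(z) = \bigcap_{\mathcal B \in \mathcal V_z} \mathcal B$, the existence of a replacement $v'$ with the same noise $\eta$ but a different underlying family is equivalent to $S^\ast \notin I(m(\mathcal B))$. Thus the goal reduces to showing $\Pr_{\mathcal B, S^\ast}[S^\ast \in I(m(\mathcal B))] \leq 2/n^4$.

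Following the frequent/rare dichotomy of Lemma~\ref{lem:frequent}, I would call $z$ \emph{frequent} if $\Pr[m(\mathcal B) = z] \geq 4^{-L}$, and \emph{rare} otherwise; since there are at most $2^L$ messages, rare messages together occur with probability at most $2^{-L}$, which is negligible. To bound $|I(z)|$ for frequent $z$, observe that every $\mathcal B \in \mathcal V_z$ contains $I(z)$, so the event $\{m(\mathcal B) = z\}$ forces $I(z) \subseteq \mathcal B$, an event whose unconditional probability is $m^{-|I(z)|}$ by independence of the coordinates of $\mathcal B$. Hence $4^{-L} \leq \Pr[z] \leq m^{-|I(z)|}$, yielding $|I(z)| \leq 2L/\log m$. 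Separately, $|\mathcal B|$ is unconditionally binomial with mean $t/m$, so by Chernoff $\Pr[|\mathcal B| < t/(2m)] \leq e^{-\Omega(t/m)}$; conditioning on a frequent $z$ inflates this by at most $4^L$, and since the hypothesis $L < t/n^8$ with $n = \Omega(m^{2-\epsilon})$ gives $4^L \ll e^{\Omega(t/m)}$, the conditional probability that $|\mathcal B| < t/(2m)$ remains super-polynomially small.

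Putting things together, on the event that $z$ is frequent and $|\mathcal B| \geq t/(2m)$, the uniformity of $S^\ast$ in $\mathcal B$ gives
\[
\Pr\bigl[S^\ast \in I(z) \,\big|\, z,\; |\mathcal B| \geq t/(2m)\bigr] \;\leq\; \frac{|I(z)|}{|\mathcal B|} \;\leq\; \frac{4Lm}{t \log m} \;=\; O\!\left(\frac{m}{n^8 \log m}\right),
\]
which is much smaller than $1/n^4$. Summing the three contributions---rare messages, concentration failure on frequent messages, and the bad event on frequent-and-concentrated realizations---yields total bad probability well below $2/n^4$. The main obstacle will be the conditional concentration step: one has to verify that a $4^L$ blow-up of an $e^{-\Omega(t/m)}$ tail remains negligible under the quantitative relation $L < t/n^8$ with $n = \Omega(m^{2-\epsilon})$. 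Once that bookkeeping is in place, the remaining steps are routine pigeonhole, Chernoff, and union-bound arguments, and the three error terms combine with plenty of slack.
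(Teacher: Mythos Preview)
Your proposal is correct and tracks the paper's argument closely: both identify valuations with their valuable-set family $\mathcal B$, introduce the forced-intersection/``static set'' notion (your $I(z)$), use Chernoff to control $|\mathcal B|$, and exploit uniformity of the special set in $\mathcal B$. The only organizational difference is that the paper decomposes contrapositively: instead of frequent/rare messages, it calls a message \emph{dangerous} if it has at least $k/n^4$ static sets, union-bounds over the $2^L$ messages to show dangerous messages are sent with probability at most $2^L/m^{k/n^4}$, and handles the Chernoff event $k\ge t/(2m)$ unconditionally upfront---so it never needs your $4^L$ conditional-blow-up step. Your route is slightly more careful about conditioning and reuses the frequent/rare machinery of Lemma~\ref{lem:frequent}, but the two decompositions are equivalent and the arithmetic goes through the same way.
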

\begin{proof}
Fixing the base set $T$, and the weight $2\alpha^\ast$, we represent each valuation of player $i$ as in the statement by a  vector in $\{0,1\}^t$ that has $k$ coordinates that are equal to $1$, where by Chernoff bounds $k\in \{\frac{t}{2m},\ldots,\frac{3t}{2m}\}$ with probability of at least $1-2\exp(\frac{-t}{12m})=1-2\exp(\frac{-2^{\Theta(\epsilon^2 m^\epsilon)}}{12m})$. If two valuations differ only in their noise, then they have the same representation. 



Assume that the valuation $v$ is interested in some set $S$. Suppose that for this specific $v$ there is no $v'$ as in the statement of the claim. In other words, all vectors that send the same message as $v$ and have weight $2\alpha^\ast$ satisfy that they have a value of $1$ in the coordinate that corresponds to the set $S$. In this case, we say that this specific set $S$ is \emph{static} for the message.

A message of player $i$ is called \emph{dangerous} if there are at least $\frac k {n^4}$ static sets for it given the weight $\alpha$ and the noise $\eta$. Note that if a message is dangerous this implies that the location of $\frac k {n^4}$ coordinates with value $1$ in the vector is already determined. Therefore, the probability that a random base valuation with weight $2\alpha^\ast$  makes the player send a specific dangerous message is at most $\frac 1 {m^{k/n^4}}$, since each of the static sets is  valuable with probability $\frac{1}{m}$. Since the number of possible messages is at most $2^L$, the probability that we sample a random base valuation that sends some dangerous message is at most $2^L \cdot \frac 1 {m^\frac{k}{n^4}}$. 
Note that $L\le \frac{t}{n^8}\le \frac{k\cdot 2m}{n^8}\le \frac{k}{n^4}$, 
so the probability that player $i$ sends 
a dangerous message is exponentially small.

Conditioned on this event not happening (i.e., that the player does not send a dangerous message), the probability that the special set of a player is not static -- assuming that the player is interested in his special set -- is at least $1-\frac 1 {n^4}$, since each $1$ coordinate has the same probability of being the special set. The claim follows.
\end{proof}
\subsubsection{The Diluted Version Has High Welfare}
\begin{claim}\label{claim-diluted-has-high-welfare}
	Sample an instance $I_{\alpha^\ast}=(v_1,\ldots,v_n)$ from the distribution $\mathcal I_{\alpha^\ast}$ and let $I_{\alpha^\ast}'=(v_1',\ldots,v_n')$ be its diluted version.  Then,
	with probability of at least $\frac{1}{3|A|}$, 
	 the optimal welfare of $(v_1',\ldots,v_n')$ is at least $\alpha^\ast\cdot m^{1-2\epsilon}$. 
\end{claim}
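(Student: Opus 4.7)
The plan is to combine Lemma~\ref{lemma-good-alpha} with a Chernoff bound over the independent coin flips that determine the weights in the algorithm. Recall that Lemma~\ref{lemma-good-alpha} gives us, with probability at least $\tfrac{1}{2|A|}$ over the sampling of $(u_1,\ldots,u_n)$ and the noises $(\eta_1,\ldots,\eta_n)$, a set $P_{\alpha^\ast}(u,\eta)$ of at least $\tfrac{m^{1-\epsilon}-1}{|A|}$ special bidders for whom $\alpha^\ast$ is critical. Condition on this event for the remainder of the argument; call it $\mathcal{E}$.

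Next I would use the randomness that the algorithm injects \emph{after} fixing $(u,\eta)$: for each player $i \in P_{\alpha^\ast}$, the algorithm independently flips a fair coin and sets the weight of $v_i$ to $\alpha^\ast$ or $2\alpha^\ast$. Let $Z_i$ be the indicator that the weight is $\alpha^\ast$. By the definition of the diluted version, when $Z_i=1$ we keep $v_i' = v_i$, and since $i \in P_{\alpha^\ast}$ is a special bidder in some group $G_{j(i)}$ (so $u_i(A_{j(i)})=1$), we have $v_i'(A_{j(i)}) = \alpha^\ast(1+\eta_i) \geq \alpha^\ast$. The $Z_i$'s are mutually independent, so by a standard Chernoff bound
\[
\Pr\!\Big[\sum_{i \in P_{\alpha^\ast}} Z_i \,\geq\, \tfrac{1}{4}|P_{\alpha^\ast}|\;\Big|\;\mathcal{E}\Big] \;\geq\; 1 - e^{-\Omega(|P_{\alpha^\ast}|)} \;=\; 1 - e^{-\Omega(m^{1-\epsilon}/|A|)}.
\]

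To conclude, I would exhibit a concrete feasible allocation in the diluted instance: give each bidder $i \in P_{\alpha^\ast}$ with $Z_i=1$ the special set $A_{j(i)}$, and give every other bidder the empty set. The special sets $A_j$ across different groups are pairwise disjoint, and within a single group at most one bidder is special, so this allocation is feasible. It has welfare at least $\alpha^\ast \cdot \tfrac{1}{4}|P_{\alpha^\ast}| \geq \tfrac{\alpha^\ast(m^{1-\epsilon}-1)}{4|A|}$. Since $|A| = \mathcal{O}(\log m)$ by Lemma~\ref{lemma-first-round2}, this quantity is at least $\alpha^\ast\cdot m^{1-2\epsilon}$ for sufficiently large $m$. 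Taking the product of the two probabilities gives at least $\tfrac{1}{2|A|}\cdot (1-e^{-\Omega(m^{1-\epsilon}/|A|)}) \geq \tfrac{1}{3|A|}$, as required.

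The main obstacle is organizational rather than technical: making sure the source of randomness is carefully separated. The bound from Lemma~\ref{lemma-good-alpha} is purely over $(u,\eta)$, while the Chernoff concentration uses the algorithm's private coin flips for the weight assignment, which are independent of $(u,\eta)$ and of each other. Once this decomposition is in place the calculation is routine, and in particular we do not need to understand the behavior of the mechanism on the diluted instance — only the existence of a high-welfare allocation for it.
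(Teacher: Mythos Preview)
Your proof is correct and essentially identical to the paper's: both condition on the event from Lemma~\ref{lemma-good-alpha}, apply a Chernoff bound to the independent $\{\alpha^\ast,2\alpha^\ast\}$ coin flips over the special bidders in $P_{\alpha^\ast}(u,\eta)$, and exhibit the allocation that gives each such bidder with weight $\alpha^\ast$ its special set. The paper names this set $\mathcal{B}$ and arrives at the same bound $|\mathcal{B}|\geq \tfrac{m^{1-\epsilon}-1}{4|A|}$ and the same final probability $\tfrac{1}{3|A|}$.
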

\begin{proof}
	For $(v_1',\ldots,v_n')$, denote with $u=(u_1,\ldots,u_n)$ its random base valuations and with $\eta=(\eta_1,\ldots,\eta_n)$ its uniform noises.  
	We remind that $P_{\alpha^\ast}(u,\eta)$ is the subset of special bidders that send different messages in vertex $x$ of the mechanism for $\alpha^\ast(1+\eta_i)\cdot u_i$ and for $2\alpha^\ast(1+\eta_i)\cdot u_i$. Now, we define another subset of bidders:
	$$
	\mathcal B=\{ i\in N\hspace{0.25em}|\hspace{0.25em} v_i'=\alpha^\ast(1+\eta_i)\cdot u_i,\hspace{0.25em} i\in P_{\alpha^\ast}(u,\eta)\}
	$$
	
	For  a lower bound on the optimal welfare of $(v_1',\ldots,v_n')$, consider the allocation $(X_1,\ldots,X_n)$ where every player in $\mathcal B$ gets his special set. It is feasible because the special sets are disjoint. Observe that:
\begin{equation} \label{eq-opt-welfare}
OPT(v_1',\ldots,v_n')\ge \sum_i v_i'(X_i) \ge |\mathcal{B}|\cdot \alpha^\ast
\end{equation}
	Now, observe that every player in $P_{\alpha^\ast}(u,\eta)$ belongs in $\mathcal B$ with probability $\frac{1}{2}$. 
		Therefore the expected welfare of $(X_1,\ldots,X_n)$  is at least  $\frac{|\mathcal{B}|\cdot \alpha^\ast}{2}$. 
		Now, if we assume that  $P_{\alpha^\ast}(u,\eta)\ge \frac{m^{1-\epsilon}-1}{|A|}$, then   
by Chernoff bounds with probability at least $1-\exp(-m)$,  we have that
$
		  |\mathcal B| \ge \frac{m^{1-\epsilon}-1}{4|A|}
	$. 
		 By Lemma \ref{lemma-good-alpha}, $P_{\alpha^\ast}(u,\eta)\ge \frac{m^{1-\epsilon}-1}{|A|}$ with probability of at least $\frac{1}{2|A|}$. 
		 By the law of total probability,  $|\mathcal B| \ge \frac{m^{1-\epsilon}-1}{4|A|}$ with probability of at least $\frac{1}{3|A|}$.
		 By inequality (\ref{eq-opt-welfare}), it means that with probability at least $\frac{1}{3|A|}$,
		  the optimal welfare for $(v_1',\ldots,v_n')$ is at least $\alpha^\ast \cdot \frac{m^{1-\epsilon}-1}{4|A|}$, which is larger than $\alpha^\ast\cdot m^{1-2\epsilon}$.   		 
\end{proof}
\subsubsection{Special Bidders in the Diluted Version Have Weight $\alpha^\ast$}
\begin{claim} \label{claim-all-special-with-alpha}
		Sample an instance $I_{\alpha^\ast}=(v_1,\ldots,v_n)$ from the distribution $\mathcal I_{\alpha^\ast}$ and let $I_{\alpha^\ast}'=(v_1',\ldots,v_n')$ be its diluted version. Then, 
	with probability of at least $1-\frac{m^2}{\exp(m)}$,  every player with weight $2\alpha^\ast$ satisfies that he is not interested in the special set of his group.
%
%
\end{claim}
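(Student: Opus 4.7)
The plan is to fix a player $i$ in group $G_j$ and bound the probability that $v'_i$ has weight $2\alpha^\ast$ while $v'_i(A_j) > 0$; a union bound over the at most $\ell < m$ groups then yields the claim. I will split on whether $\alpha^\ast$ is critical for the base valuation $u_i$ and noise $\eta_i$, which are the two disjoint regimes in which the algorithm constructs $v_i$.

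In the critical case, the algorithm draws $\alpha \in \{\alpha^\ast, 2\alpha^\ast\}$ uniformly and sets $v_i = \alpha(1+\eta_i)\cdot u_i$. If $\alpha = \alpha^\ast$, the weight of $v'_i$ is $\alpha^\ast$, which is outside the scope of the claim. If $\alpha = 2\alpha^\ast$ and $u_i(A_j) > 0$, all three conditions for dilution hold, so the (assumed to exist) diluted version replaces $v_i$ by some $v'_i$ of weight $2\alpha^\ast$ with $v'_i(A_j) = 0$. If $\alpha = 2\alpha^\ast$ and $u_i(A_j) = 0$, then $v'_i = v_i$ already satisfies $v'_i(A_j) = 0$. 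In every sub-case of the critical regime the claim's condition is met deterministically.

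In the non-critical case, the algorithm defines $v_i$ to assign $2\alpha^\ast(1+\eta_i)$ to every superset of $T_i$ and $0$ elsewhere, where $T_i$ is drawn uniformly from the family $\mathcal{B}_i$ of valuable sets of $u_i$, and sets $v'_i = v_i$. Since $|T_i| = |A_j| = m^\epsilon$, the event $v'_i(A_j) > 0$ is equivalent to $T_i = A_j$, which in turn requires $A_j \in \mathcal{B}_i$, i.e.\ that $i$ is the (unique) special bidder of $G_j$. Conditioned on being special, $\Pr[T_i = A_j] = 1/|\mathcal{B}_i|$. Because the sets in $\mathcal{A}_j$ are assigned independently to uniformly chosen bidders in $G_j$, $|\mathcal{B}_i|$ is binomial-like with mean $t/m = 2^{\Theta(\epsilon^2 m^\epsilon)}/m$, so by a Chernoff bound $|\mathcal{B}_i| \geq t/(2m)$ except with exponentially small probability. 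Hence the per-group failure probability is at most $O(m/t) + e^{-\Omega(t/m)}$.

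A union bound over the at most $\ell < m$ groups gives the overall exponentially small failure bound claimed. The only mildly delicate step is verifying the concentration of $|\mathcal{B}_i|$ under the random assignment of sets in $\mathcal{A}_j$ to bidders, and checking that conditioning on $i$ being the unique special bidder of $G_j$ perturbs the distribution of $|\mathcal{B}_i|$ only by an additive $\pm 1$, so that the bound $\Pr[T_i = A_j] = 1/|\mathcal{B}_i|$ remains valid conditionally.
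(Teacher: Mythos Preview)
Your proof is correct and follows essentially the same approach as the paper: split into the critical and non-critical cases, use the definition of dilution to handle the critical case deterministically, and in the non-critical case bound the probability that the uniformly chosen valuable set equals $A_j$ via a Chernoff bound on $|\mathcal{B}_i|$. Your argument is actually slightly tighter, since you observe that only the unique special bidder in each group can possibly fail (so a union bound over $\ell<m$ groups suffices), whereas the paper union-bounds over all $n$ players; both versions give the required exponentially small failure probability.
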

\begin{proof}
	First,
	denote with $u=(u_1,\ldots,u_n)$ the random base valuations and with $\eta=(\eta_1,\ldots,\eta_n)$ the uniform noises  that $(v_1,\ldots,v_n)$ and $(v_1',\ldots,v_n')$ are based on. 
	  	Fix a player $i$ that belongs to group $G_j$ and satisfies that the weight of $v_i'$ is 
$2\alpha^\ast$. We want to show that with high probability $v_i'(A_j)=0$, where $A_j$ is the special set of group $j$. First of all, observe that since $(v_1',\ldots,v_n')$ is diluted,  if $\alpha^\ast$ is critical for $u_i$ and for $\eta_i$, then $v_i'(A_j)=0$, and we are done. Similarly, if $u_i(A_j)=0$, then we are also done. 

Thus, we only need to handle the case where  $\alpha^\ast$ is not critical for $u_i$ and for $\eta_i$.
 Recall that by construction, every player $i$ satisfies that there are $t=2^{\Theta(\epsilon^2\cdot m^\epsilon)}$  subsets  of a base set $T=A_j\cup B$ of size $m^\epsilon$ such that each subset is valuable for player $i$ with probability of exactly $\frac{1}{m}$. Therefore, by Chernoff bounds, the number of subsets of size $m^\epsilon$ that are  valuable for $u_i$ is at least $\frac{t}{2m}$ with probability that of at least $1-\exp(-\frac{t}{8m})$. 
	
	We remind that since $\alpha^\ast$ is not critical for $u_i$ and $\eta_i$, by construction $v_i$ has only one valuable set that is sampled uniformly at random from the valuable sets of the base valuation $u_i$. Therefore, the probability that the chosen set is $A_j$ is at most $\frac{2m}{t}$. By applying the union bound over all players, we get the claim.
\end{proof}
\subsubsection{Positive Profit}
\begin{claim}\label{claim-positive-profit2}
		Sample an instance $I_{\alpha^\ast}=(v_1,\ldots,v_n)$ from the distribution $\mathcal I_{\alpha^\ast}$ and let $I_{\alpha^\ast}'=(v_1',\ldots,v_n')$ be its diluted version. Then,
		the probability that there is a player that is allocated a set with non-zero value by the mechanism and has  profit of  $0$ is at most $\frac {n} {3^n}$. 
\end{claim}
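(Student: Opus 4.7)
The plan is to apply the taxation principle for dominant-strategy mechanisms to each player in isolation, and reduce the ``zero-profit on a valuable allocation'' event to the event that the uniform noise $\eta_i$ hits a specific value, which is rare because the noise distribution is supported on the exponentially large grid $\{1/4^m,2/4^m,\ldots,3^m/4^m\}$. A union bound over the $n$ players then yields the claimed exponentially small bound.

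In more detail, fix a player $i$ and condition on all the randomness of the algorithm other than $\eta_i$ --- namely, the base valuations $(u_1,\ldots,u_n)$, the noises $\eta_{-i}$, every weight-coin outcome, and the random sets $T_j$ used in the non-critical branches. Under this conditioning, the other players' diluted valuations $v_{-i}'$ are determined, so by the taxation principle there is a menu $M = M(v_{-i}')$ of (bundle, price) pairs --- independent of $\eta_i$ --- against which player $i$'s dominant-strategy response is the utility-maximizer. Writing $v_i' = \alpha_i(1+\eta_i)\cdot u_i^{\mathrm{base}}$, every set valuable under $u_i^{\mathrm{base}}$ has the same value $\alpha_i(1+\eta_i)$, and every other set has value $0$; hence, on the event that player $i$ wins a valuable set, he pays the minimum valuable price $p^\ast := \min\{p:(S,p)\in M,\,u_i^{\mathrm{base}}(S)>0\}$, so his profit equals $\alpha_i(1+\eta_i)-p^\ast$. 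Zero profit therefore forces the equation
\[
\eta_i \;=\; \frac{p^\ast(u_i^{\mathrm{base}})}{\alpha_i}-1.
\]

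The only subtlety --- and where the main work lies --- is that $u_i^{\mathrm{base}}$ and $\alpha_i$ themselves depend on $\eta_i$ through the construction's branching: whether $\alpha^\ast$ is critical for $(u_i,\eta_i)$, and, in the diluted-and-critical case, the specific replacement valuation chosen so as to match the message that $v_i$ sends at vertex $x$. As $\eta_i$ varies over its $3^m$ possible values, however, only a bounded number of distinct pairs $(u_i^{\mathrm{base}},\alpha_i)$ can arise: $\alpha_i\in\{\alpha^\ast,2\alpha^\ast\}$, and each $u_i^{\mathrm{base}}$ is indexed by a message at vertex $x$, of which there are at most $2^L$ where $L$ is the maximum message length of $\mathcal M$. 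For each such fixed pair $(u,\alpha)$, at most one noise value satisfies the displayed equation, so the per-player probability of winning a valuable set at zero profit is at most $O(2^L/3^m)$. A union bound over the $n$ players yields a bound that is exponentially small (one may assume $L$ is polynomial, since otherwise Theorem~\ref{theorem-general-impossibility} already follows). The main obstacle is precisely this bookkeeping: the naive statement ``the menu is fixed, so zero profit pins down a single $\eta_i$'' is not literally true because $v_i'$ itself depends on $\eta_i$ through the dilution step, so one must enumerate the few possible $(u_i^{\mathrm{base}},\alpha_i)$ configurations and handle the critical-vs-non-critical split separately.
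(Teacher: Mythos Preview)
Your approach --- freeze the menu via the taxation principle, observe that winning a valuable bundle at zero profit forces the noise $\eta_i$ to a single grid point, then union-bound over players --- is exactly the paper's. The paper dispatches it in one line by asserting that ``$p_i$ does not depend on $v_i'$,'' which is not literally true (the minimum is taken over the bundles valuable for $v_i'$), so you are right to unpack how $(u_i^{\mathrm{base}},\alpha_i)$ can depend on $\eta_i$ through the criticality test and the dilution step.

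Your repair still has a gap in the diluted branch. The dilution rule only requires the replacement $v_i'$ to share the weight $2\alpha^\ast$ and the message at $x$ and to drop the special set; nothing forces its noise to equal $\eta_i$. So your premise $v_i'=\alpha_i(1+\eta_i)\cdot u_i^{\mathrm{base}}$ can fail there, and once it does, for a fixed message the diluted $v_i'$ is a fixed valuation independent of $\eta_i$, making zero profit an all-or-nothing condition on the whole message class rather than a single bad $\eta_i$ --- your ``one bad $\eta_i$ per configuration'' count then breaks, and the $2^L$ factor does not save it. The cleanest fix is to note that in Proposition~\ref{prop-alg-is-good} the positive-profit event is applied only to players $i\in\mathcal H$, who (via event~\ref{event-all-special-with-alpha}) have weight $\alpha^\ast$ and are therefore never diluted; for such a player, as $\eta_i$ varies the pair $(u_i^{\mathrm{base}},\alpha_i)$ takes at most two values (critical gives $(u_i,\alpha^\ast)$, non-critical gives the single-valuable-set base with weight $2\alpha^\ast$), so the per-player bound is $O(1)/3^m$ and the $2^L$ enumeration is unnecessary.
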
 
\begin{proof}
	Consider some player $i$ with valuation $v_i'$ in the instance $I_{\alpha^\ast}=(v_1',\ldots,v_n')$. By the taxation principle, for every bundle $S$ there exists a price $p_S$ (that does not depend on $v_i'$) such that if all players play their dominant strategies, player $i$ is assigned a bundle $T$ in $\arg\max_S v_i'(S)-p_S$. Recall that all bundles with non-zero value of player $i$ have the same value (denoted $y_i$). Let $p_i$ denote the minimal price of a bundle that player $i$ has a non-zero value for (so  by the taxation principle, $p_i$ is the price that player $i$ pays if he wins a bundle with non-zero value). Since $p_i$ does not depend on $v_i'$ and since, fixing the value of $\alpha$, $y$ has $3^n$ possible values, the probability that $y_i=p_i$ is at most $\frac 1 {3^n}$. It implies that the probability that $\mathcal M$ allocates a non-zero bundle to player $i$ and $y_i=p_i$ is at most $\frac 1 {3^n}$. 
	By the union bound, the probability that there exists  some player $i'$ that wins a valuable bundle and has $0$ profit is at most $\frac n {3^n}$, so the claim follows.
\end{proof}

\subsubsection{Feasible Allocation}
\begin{claim}\label{claim-feasible-allocation}
	With probability $1-\frac{m^3}{\exp(m^\epsilon)}$, the allocation  $(S_1,\ldots,S_n)$ is feasible. 
\end{claim}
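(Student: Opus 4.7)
The plan is to reduce a potential infeasibility to a violation of the mechanism's feasibility, using the hypothetical valuations that witness an allocation of $A_j$, together with the high-probability pairwise intersection of sets inside each family $\mathcal A_j$. First I would observe that by the description of the algorithm each $S_i$ is either $\emptyset$ or the special set $A_j$ of player $i$'s group, and that players whose $v_i$ has weight $2\alpha^\ast$ are immediately given $S_i=\emptyset$, so only players with weight $\alpha^\ast$ compete for sets. Since the special sets $A_1,\ldots,A_\ell$ are disjoint by construction of the hard distribution, the allocation $(S_1,\ldots,S_n)$ can be infeasible only if two distinct players of the same group $G_j$ are both assigned $A_j$.

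Suppose, for contradiction, that $i_1,i_2\in G_j$ both receive $A_j$. The allocation rule then provides hypothetical valuations $v_{i_1}''$ and $v_{i_2}''$, each of weight $\alpha^\ast$ and noise $\eta_{i_b}$, such that $v_{i_b}''$ sends the actual message $z_{i_b}$ at vertex $x$, and the dominant strategy $\mathcal S_{i_b}^{dom}(v_{i_b}'')$ guarantees player $i_b$ a valuable set against every strategy profile of the remaining players that is consistent with vertex $x$ and with the messages $z_{-i_b}$. Consider now the strategy profile in which $i_1$ and $i_2$ play $\mathcal S^{dom}(v_{i_1}'')$ and $\mathcal S^{dom}(v_{i_2}'')$ and every other player $k$ plays $\mathcal S_k^{dom}(v_k)$. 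Since all valuations in play have weights in $A$ and $x$ is the first vertex at which such valuations diverge, the execution reaches $x$ deterministically; at $x$ every player $k\ne i_1,i_2$ sends $z_k$, $i_1$ sends $z_{i_1}$, and $i_2$ sends $z_{i_2}$. From $i_1$'s perspective the other players' strategies are thus consistent with reaching $x$ and sending $z_{-i_1}$ there, so by the hypothesis $\mathcal M$ allocates to $i_1$ a set $T_1$ with $v_{i_1}''(T_1)>0$; the analogous statement for $i_2$ yields a set $T_2$ with $v_{i_2}''(T_2)>0$ in the same run.

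To derive the contradiction, note that the valuable sets of $v_{i_b}''$ belong to the shared family $\mathcal A_j$ of group $G_j$ (this is inherited from the definition of the class of valuations). A straightforward hypergeometric/union-bound argument, identical in spirit to the one used to prove that distinct $\cA_j$'s intersect in Section~\ref{impossible-sim-general-subsec}, shows that any two sets in $\mathcal A_j$ (size $m^\epsilon$ inside a universe of size $2m^\epsilon$) intersect with probability $1-e^{-\Omega(m^\epsilon)}$. On this high-probability event $T_1\cap T_2\ne\emptyset$, contradicting the feasibility of the mechanism's output. Thus on the same event no two players within a common group are both allocated $A_j$, and the output of the algorithm is feasible. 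Taking a union bound over the at most $\ell\cdot\binom{|G_j|}{2}=O(m^3)$ intra-group pairs yields feasibility with probability at least $1-\frac{m^3}{\exp(m^\epsilon)}$.

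The main obstacle, I expect, is the careful handling of the quantifier ``for every strategy profile consistent with $z_{-i_b}$'': I need to verify that the concrete profile constructed above -- in which $i_2$ plays $\mathcal S^{dom}(v_{i_2}'')$ rather than $\mathcal S^{dom}(v_{i_2})$ -- is indeed captured by the definition, which boils down to checking that $v_{i_b}''$ and $v_{i_b}$ send identical messages along the path from the root to $x$ (true by the definition of $x$ restricted to weights in $A$), so that dropping in either dominant strategy does not alter the prefix of the execution. A secondary (routine) item is to extend the paper's pairwise-intersection lemma from across-group pairs to within-group pairs in $\mathcal A_j$, which follows from the same concentration bound.
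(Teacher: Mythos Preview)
Your proposal follows essentially the same approach as the paper's proof: reduce infeasibility to two players in the same group both being assigned $A_j$, invoke the witness valuations $v_{i_1}'',v_{i_2}''$ whose dominant strategies guarantee valuable sets against the messages $z_{-i_b}$, use within-group pairwise intersection of valuable sets to derive a contradiction with the mechanism's own feasibility, and finish with a union bound over intra-group pairs. Your explicit construction of a single joint run in which both $v_{i_1}''$ and $v_{i_2}''$ play their dominant strategies simultaneously is exactly what the paper compresses into the sentence ``it is not possible that the dominant strategies of both of them given the message profile $(z_1,\ldots,z_n)$ at vertex $x$ guarantee a valuable set.''
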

\begin{proof}
	Note that we only allocate the special set and  we only allocate every special set $A_j$ to bidders from the group $G_j$. By the construction, the special sets are disjoint from each other, so it suffices to show that no special set is allocated to more than one player in each group. 
	
	Let $i_1,i_2$ be two players that belong in the group $G_j$. We will show prove an upper bound on the probability of the event that $S_{i_1}=S_{i_2}=A_j$, i.e., the event where both players win the special set.
	Recall that the messages of the players given $(v_1,\ldots,v_n)$  are $(z_1,\ldots,z_n)$. 
	We will analyze the tree that the messages $z_{-i_1}$ induce for player $i_1$ at vertex $x$ and the tree that the messages $z_{-i_2}$ induce for player $i_2$ at vertex $x$.  
	
	Since $S_1=A_j$, by definition the message $z_1$ sent by player $i_1$ has valuation $v_1$ whose dominant strategy in the mechanism $\mathcal M$ guarantees for it a valuable set at the initial vertex given that the other players send the messages $z_{-i_1}$. Similarly, the message $z_2$ sent by player $i_2$ has a valuation $v_2$ whose dominant strategy guarantees a valuable set at the initial vertex if  the other players send the messages $z_{-i_1}$. 
	Observe that $i_1,i_2$  belong in the same group $G_j$, so by the construction all their valuable sets intersect with probability at least $1-\frac{1}{\exp(m^\epsilon)}$. Thus, with probability at least $1-\frac{1}{\exp(m^\epsilon)}$,  it is not possible that the dominant strategies of both of them  given the message profile $(z_1,\ldots,z_n)$ at vertex $x$ of the mechanism $\mathcal M$ guarantee a valuable set, so by the construction of the allocation $(S_1,\ldots,S_n)$, $S_{i_1}=S_{i_2}=A_j$ with probability of at most $\frac{1}{\exp(m^\epsilon)}$.   
	
	 By taking the union bound on all pairs of players of each group $j$, and on all groups, we get that the probability at least $1-\frac{m^3}{\exp(m^\epsilon)}$, the special set of each group is allocated at most once, which completes the proof.
%
%
\end{proof}

\subsubsection{Proof of Lemma \ref{lemma-first-round2}}\label{subsec-first-round-proof2}

\begin{definition}\label{def-initial-vertex}
	We call a vertex in the protocol tree of $\mathcal M$ an \emph{initial vertex} if it satisfies that there exists a consecutive range of powers of $2$, $C\subseteq \{1,2,\ldots,2^{2^m}\}$, such that:
	\begin{enumerate}
		\item For every profile of valuations $(v_1,\ldots,v_n)$ with weights in $C$, the mechanism reaches the initial vertex given $(\mathcal S_1^{dom}(v_1),\ldots,\mathcal S_n^{dom}(v_n))$.
		\item There exists a player $i$ with valuations $v,v'$ with weights $\alpha,\alpha'$ that sends different messages in the initial vertex for $v,v'$, where $m^4\cdot \min C \le \alpha,\alpha'\le \frac{\max C}{m^4}$.  
	\end{enumerate}	 
	We call the set $C$ the \emph{initial set of weights}.
\end{definition}

\begin{lemma}\label{lemma-initial-exists}
	The mechanism $\mathcal{M}$ has an initial vertex.
\end{lemma}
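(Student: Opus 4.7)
The plan is to argue by contradiction: assuming $\mathcal{M}$ has no initial vertex, I will trace a descending path in the protocol tree along which the same outcome is forced on an exponentially large range of weights, contradicting the $m^{1-4\eps}$ approximation guarantee.

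Initialize $u_0$ as the root and $C_0 := \{1,2,\ldots,2^{2^m}\}$; every profile with weights in $C_0$ trivially reaches $u_0$. Inductively, suppose $u_k$ is a non-leaf reached by every profile with weights in $C_k$. The minimality of $\mathcal{M}$ (Lemma~\ref{minimal-lemma}) guarantees that some player speaks at $u_k$. Because $u_k$ is assumed not to be initial with respect to $C_k$, condition~2 of Definition~\ref{def-initial-vertex} must fail: no player distinguishes any two valuations with weights in $\mathrm{int}(C_k) := C_k \cap [m^4\min C_k,\, \max C_k/m^4]$. Hence each speaker at $u_k$ sends one fixed message for every valuation of weight in $\mathrm{int}(C_k)$, and so all profiles whose weights lie in $\mathrm{int}(C_k)$ send the same tuple of messages and reach a common child $u_{k+1}$; set $C_{k+1} := \mathrm{int}(C_k)$. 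The $\log_2$-ratio of $C_k$ shrinks by exactly $8\log m$ per step from its initial value $2^m$, so the descent is well-defined for the first $\lfloor 2^m/(8\log m)\rfloor$ levels. A tree of greater depth would already force $cc(\mathcal{M})$ past the bound sought in Theorem~\ref{theorem-general-impossibility}, so we may assume the descent reaches a leaf $u_{k^\ast}$ with $C_{k^\ast}$ non-empty.

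At this leaf every profile whose weights lie in $C_{k^\ast}$ is assigned the same fixed allocation $(S_1,\ldots,S_n)$ and the same payments. Fix any $\alpha\in C_{k^\ast}$ and draw a base profile $(u_1,\ldots,u_n)$ from the hard distribution of Subsection~\ref{impossible-sim-general-subsec}; the scaled profile $\alpha\cdot(u_1,\ldots,u_n)$ (with noise $\eta=0$) lies in our valuation class, has $\mathrm{OPT}\ge \alpha(m^{1-\eps}-1)$ by Lemma~\ref{optimal-high-in-expectation}, and receives welfare $\alpha\sum_i u_i(S_i)$ from $\mathcal{M}$. The random-subset structure of the hard distribution, handled as in Lemma~\ref{lemma-welfare-comes-from-special}, yields $\sum_i u_i(S_i) = O(1)$ with overwhelming probability over the sample; by averaging, some concrete instance suffers approximation ratio $\Omega(m^{1-\eps}) \gg m^{1-4\eps}$, the desired contradiction.

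The main obstacle is the last step: the allocation $(S_1,\ldots,S_n)$ is produced by $\mathcal{M}$ and could in principle be tailored to the weight range $C_{k^\ast}$, so one cannot simply invoke a fixed choice of sets. The key point is that $(S_1,\ldots,S_n)$ is nonetheless \emph{independent} of the random base profile $(u_1,\ldots,u_n)$, because only the weight vector (all equal to $\alpha$) determines which leaf is reached. One therefore needs a bound on $\sum_i u_i(S_i)$ that is uniform over allocations --- morally the same random-intersection calculation as in Lemma~\ref{lemma-welfare-comes-from-special}, using the fact that each bidder is interested in sets drawn independently and uniformly from a large family --- after which averaging produces the concrete worst-case instance that supplies the contradiction.
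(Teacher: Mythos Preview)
Your argument is correct and follows essentially the same approach as the paper's own proof: iteratively trim the weight window by a factor of $m^4$ on each side, observe that the number of trims is bounded by the communication complexity (or depth) of $\mathcal{M}$, and conclude that if the process never produces an initial vertex then a single leaf absorbs all profiles of some fixed weight, where a constant allocation cannot meet the $m^{1-4\eps}$ guarantee against the hard distribution. Your write-up is in fact somewhat cleaner than the paper's (you descend node-by-node and make the invariant on $C_k$ explicit, whereas the paper sketches an ``iterative process'' jumping between differentiating vertices), but the underlying mechanism and the endgame via Lemma~\ref{lemma-welfare-comes-from-special}/Proposition~\ref{prop-simul-hard-dist} are identical.
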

\begin{proof}
	Observe the first vertex of the mechanism $\mathcal M$ where there exists a player $i$ with valuations $v_i,v_i'$ such that his dominant strategy dictates different messages for them. Such vertex necessarily exists because otherwise all players send the same message throughout all of the protocol, so it has no approximation guarantee. Now, if the weights of $v_i,v_i'$ belong in 
	$\{2^{4\log m}=m^{4},\ldots,2^{2^{m}-4\log m}=\dfrac{2^{2^m}}{m^4}\}$, we are done. Otherwise,  observe that the approximation ratio of $\mathcal M$ guarantees that there exists a vertex with some player $i$ that sends different messages for two valuations $v_i,v_i'$ with weights in	$\{2^{4\log m}=m^{4},\ldots,2^{2^{m}-4\log m}=\dfrac{2^{2^m}}{m^4}\}$. If the weight of $v_i,v_i'$ are in $\{m^8,\ldots, \dfrac{2^{2^{m}}}{m^8}\}$, then we are done. We continue this iterative process until we reach an initial vertex.

	We now explain why we can safely assume that an initial vertex exists.
	If $\mathcal M$ has communication complexity larger than $poly(\frac{2^{m^{\frac{\epsilon^2}{2}}}}{n})$, then Theorem \ref{theorem-general-impossibility} holds and  we are done. Thus, we assume that the communication complexity is at most $poly(\frac{2^{m^{\frac{\epsilon^2}{2}}}}{n})$, so the number of meaningful rounds (i.e. rounds where there exist players that send different messages for different valuations) is at most $poly(\frac{2^{m^{\frac{\epsilon^2}{2}}}}{n})$, since in each meaningful round at least one bit is sent.  
	If an initial vertex does not exist, after $poly(\frac {2^{m^{\frac {\eps^2} 2}}} n)< \frac{2^{m}}{8\log m}$ iterations, we reach a leaf that contains all base valuations with weight $2^{2^{m-1}}$. By Proposition \ref{prop-simul-hard-dist}, outputting the same allocation for all those valuations does not provide a $m^{1-4\eps}$ approximation to the optimal welfare, so we reach a contradiction.   
\end{proof}
We denote the initial vertex that is guranteed by Lemma \ref{lemma-initial-exists} with $x$. 
\begin{definition}
	\label{def-neighboring2}
	We say that two different valuations $w,w'$ with weights $\alpha, \alpha'$ and noises $\eta,\eta'$  
	respectively, are \emph{neighboring} if $\frac \alpha {\alpha'}\in \{\frac 1 2,1, 2\}$ and \emph{at most one} of the following conditions holds:
	\begin{enumerate}
		\item $|\eta-\eta'|= \pm \frac{1}{3^n}$    
		\item There exist items $j,j'$ such that $w,w'$ permute them, in the sense such that for every subset $S$, either $w(S)=w'(S)$ or $w(S)=w'(S')$ where $S'=S-\{j\}+\{j'\}$.
		\item $w,w'$ differ only on one coordinate, i.e. 
		there exist a subset of items $T$ such that for every subset $S$ other than $T$, $w(S)=w'(S)$. 
	\end{enumerate} 
\end{definition}
\begin{claim}\label{claim-neighboring2}
	Let $v_i,v_i'$ be two valuations of player $i$ with weights $\alpha,\alpha'\in \{m^4\cdot \min C,\ldots, \frac{\max C}{m^4}\}$ and noises $\eta,\eta'$ that send different messages in vertex $x$, the initial vertex. Then, there are two \emph{neighboring} valuations $w,w'$ of player $i$ with weights in $\{m^4\cdot \min C,\ldots,\frac{\max C}{m^4}\}$ that send different messages in vertex $x$. 
\end{claim}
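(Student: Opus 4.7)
The plan is a standard interpolation argument: I build a finite path $v_i = w_0, w_1, \dots, w_k = v_i'$ of valuations in the class, all with weights in $\{m^4 \cdot \min C, \ldots, \frac{\max C}{m^4}\}$, such that every consecutive pair is neighboring in the sense of Definition \ref{def-neighboring2}. Since $x$ is an initial vertex and every valuation along the path has weight inside $C$, each $w_j$ reaches $x$ when paired with any opponent profile of weights in $C$. Because $w_0$ and $w_k$ send different messages at $x$ under the dominant strategies, pigeonhole hands us some index $j$ for which $w_j$ and $w_{j+1}$ send different messages at $x$, and these are the desired neighboring pair $(w,w')$.

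To construct the path, decompose $v_i = \alpha(1+\eta)u$ and $v_i' = \alpha'(1+\eta')u'$ and proceed in three stages. In Stage A, I change the weight from $\alpha$ to $\alpha'$ by successively doubling or halving, holding $(\eta,u)$ fixed; each step satisfies only the weight part of Definition \ref{def-neighboring2} (ratio $\tfrac{1}{2}$ or $2$, with none of the three secondary conditions triggered), and every intermediate weight lies between $\alpha$ and $\alpha'$, hence in $\{m^4 \cdot \min C, \ldots, \frac{\max C}{m^4}\}$. In Stage B, I move the noise from $\eta$ to $\eta'$ one grid step at a time, each step satisfying condition~1 of Definition \ref{def-neighboring2} with unchanged weight and base valuation. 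In Stage C, I morph $u$ into $u'$ with $(\alpha',\eta')$ held fixed: first, a sequence of single item transpositions (condition~2) realigns the underlying base sets of $u$ and $u'$, and then one-coordinate toggles (condition~3) add the valuable sets of $u'$ that are missing from $u$ and afterwards remove the extra ones from $u$. Adding before removing ensures that every intermediate base valuation retains at least one valuable set, so it stays in the class.

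The main subtlety is verifying that Stage~C can be executed entirely inside the class: the intermediate base valuations produced by an item swap or by a single-coordinate toggle must themselves be admissible base valuations (equivalently, they must fit the template of Subsection \ref{subsec-class-of-valuations} and retain at least one valuable set). This is exactly what the class is set up to allow, since the family of ``base valuations with at least one valuable set'' is closed under single-set addition/removal and under item relabelling. Once the path is in place, pigeonhole over the finitely many consecutive pairs delivers the required neighboring valuations $w, w'$ with different messages at $x$, completing the proof.
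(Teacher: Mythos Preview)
Your proof is correct and follows essentially the same interpolation argument as the paper: build a chain from $v_i$ to $v_i'$ by first adjusting the weight multiplicatively, then the noise one grid step at a time, then relabelling items to align base sets, and finally flipping single coordinates, and apply pigeonhole. Your extra care in Stage~C (adding valuable sets before removing them so that every intermediate base valuation keeps at least one valuable set) is a point the paper glosses over.
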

\begin{proof}
	Assume without loss of generality that $\alpha'\geq \alpha$. Define a sequence of valuations $v=w_0, w_1, \ldots,\allowbreak w_k=v'$ as follows.   
	Each valuation $w_r$ is obtained from the valuation $w_{r-1}$ by one of the following operations: first, $w_r=2w_{r-1}$ until the weight of $w_{r}$ is $\alpha'$.
	Afterwards, if the noise $\eta$ of $w_r$ differs from the noise of $v'$, obtain $w_r$ from $w_{r-1}$ by increasing or decreasing the noise of $w_{r}$ by $\frac{1}{3^n}$ at a time, until the noise of $w_r$ is equal to $\eta'$.
	
	Then, suppose that the base sets of $w_{r-1}$ and $v'$ are different ($T$ is the base set of $w_{r-1}$, $T'$ is the base set of $v'$). Obtain $w_{r}$ from $w_{r-1}$ by renaming some item $j\in T \setminus T'$ to some item $j'\in T' \setminus T$. When the bases of the valuations are the same but the valuations are not identical, there is at least one set $S$ such that
	$w_{r-1}(S)\neq v'(S)$. Take such a set $S$ and define $w_r$ to be identical to $w_{r-1}$, except that $w_r(S)=v'(S)$. 
	
	By construction, each pair of valuations $w_r,w_{r+1}$ consists of neighboring valuations. Since the messages that $w_0$ and $w_k$ send are not the same, there must be one pair $w_r,w_{r+1}$ in which each valuation sends a different message. The claim follows since $w_r,w_{r+1}$ are neighboring valuations.
\end{proof}

We say that a player is \emph{insensitive} 
with respect to some base  valuation $w$, noise $\eta$  and a set of weights $A$ if he 
sends the same message for all the valuations in $\{\alpha(1+\eta)\cdot w\}_{\alpha\in A}$ in the initial vertex $x$ of the mechanism $\mathcal M$. 

\begin{lemma}\label{lemma-random2}
	Let $v_i^1,v_i^2$ be two neighboring valuations  of some player $i$ that send different messages in the initial vertex $x$. 
	Denote their base sets  with $T_1,T_2$ and their weights with $\alpha \le \alpha'$ 
	(respectively). 
	Let $W$ be some base set that is obtained by choosing  $m^{\epsilon}$  items from $T_1\cap T_2$ and $m^{\epsilon}$ items arbitrarily.\footnote{The intersection of $T_1,T_2$ is larger than $m^{\epsilon}$ because they are neighboring.} Fix some player $i'\neq i$ and sample a  random base valuation $w$ with a base set $W$. For every noise $\eta$, with probability at least $1-\frac 1 {exp(m^{\eps})}$ (over the construction of the random base valuation), $(1+\eta)\cdot w$ is sensitive in the set $A'=\{\frac{\alpha}{m^4},\frac{2\alpha}{m^4},\ldots,\alpha,\ldots,\frac{m^4\cdot \alpha}{2},m^4\cdot \alpha\}$.
\end{lemma}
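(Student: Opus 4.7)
The plan is to prove Lemma~\ref{lemma-random2} by contradiction. Assume there exists a noise $\eta$ such that with probability strictly greater than $\frac{1}{\exp(m^{\eps})}$ over the random base valuation $w$, the scaled valuation $(1+\eta)\cdot w$ is \emph{insensitive} in $A'$, meaning player $i'$ sends a single message $z_{i'}$ at the initial vertex $x$ for every scaling $\beta(1+\eta)\cdot w$ with $\beta\in A'$. My goal is to force a conflict either with the $m^{1-4\eps}$ approximation guarantee of $\mathcal{M}$ or with a dominant-strategy constraint (through Lemma~\ref{lemma-known-prices} applied to the subtree below $x$).

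First I would apply a pigeonhole over the $2^L$ possible messages of player $i'$ at vertex $x$: since the number of random base valuations on $W$ is doubly exponential in $m^{\eps}$ and insensitive ones form a noticeable fraction, there exists a single message $z_{i'}^\ast$ such that many distinct random valuations $w$ satisfy: player $i'$'s dominant strategy sends $z_{i'}^\ast$ at $x$ for \emph{every} $\beta\in A'$. Out of these, I would pick two base valuations $w^{(1)},w^{(2)}$ whose valuable sets differ significantly; by controlling the ``arbitrary'' half of $W$, I can arrange that one of them has a valuable set sitting inside $T_1\cap T_2$ while the other does not.

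Next I would complete the instance by fixing the remaining $n-2$ players to carry negligibly small weights (say, from the same initial range $C$ but with weight $\min C$), so their contribution to the welfare is dominated by any value player $i'$ or player $i$ can derive. By Definition~\ref{def-initial-vertex}, all these players still reach the initial vertex $x$ under their dominant strategies. Because the message profile at $x$ equals $(z_1^{dom}(v_i^k),z_{i'}^\ast,z_{-i,-i'})$ regardless of $\beta\in A'$ and of which $w^{(r)}$ we use, the entire continuation of the mechanism lives in a single subtree $T_x$ rooted below $x$. This subtree therefore acts as a menu for player $i'$ across the factor-$m^8$ range of weights in $A'$; by the taxation principle and Lemma~\ref{lemma-known-prices} applied inside $T_x$, the set of (allocation, payment) pairs offered to $i'$ is fixed and depends only on the messages sent by the other players inside $T_x$, which in turn are determined by $v_i^k$ and $v_{-i,-i'}$.

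Now comes the contradiction. When $\beta = m^4\cdot\alpha$, the welfare-optimal allocation must allocate a valuable bundle to $i'$, contributing $\Omega(m^4\alpha)$, so the $m^{1-4\eps}$ approximation forces the menu to include such a bundle at a price at most $m^4\alpha$. When $\beta = \alpha/m^4$, however, $i'$'s value for any set is at most $\alpha/m^4$, while the welfare-optimal choice is to let player $i$ (with weight $\alpha$) receive his preferred bundle inside $T_1\cap T_2$; the approximation guarantee must again be met. Because $v_i^1,v_i^2$ are neighboring but send different messages at $x$ (so the post-$x$ subtrees for these two cases differ) and because the menu offered to $i'$ is pinned down across the entire range $A'$, applying the dominant-strategy argument of Lemma~\ref{lemma-known-prices} to player $i'$'s decisions in $T_x$ forces the same allocation/payment of $i'$ at both extremes, contradicting the approximation ratio at one of them (the gap in optimum welfare exceeds $m^{8}\gg m^{1-4\eps}$).

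The main obstacle will be ensuring that the post-$x$ protocol cannot ``rescue'' the mechanism by letting $i'$ itself reveal his weight through later messages: this is controlled by coupling $w^{(1)}$ and $w^{(2)}$ (same message $z_{i'}^\ast$ at $x$, but with different valuable sets in $T_1\cap T_2$) and by using that the other players' strategies inside $T_x$ depend only on $v_i$ and $v_{-i,-i'}$, not on $\beta$ or on $w$. A union bound over $\eta\in\{\frac{1}{4^m},\ldots,\frac{3^m}{4^m}\}$ promotes this into the ``for every noise $\eta$'' statement claimed in the lemma.
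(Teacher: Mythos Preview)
Your proposal has a genuine gap: you try to derive the contradiction from player $i'$'s side (his menu and allocation), but the insensitivity of $(1+\eta)\cdot w$ at vertex $x$ does \emph{not} freeze $i'$'s allocation across the range $A'$. The taxation-principle menu for $i'$ is indeed fixed once $v_i$ and $v_{-i,-i'}$ are fixed, but $i'$ is free to pick different bundles from that menu at different weights $\beta$: with $\beta=m^4\alpha$ he takes a valuable bundle, with $\beta=\alpha/m^4$ he takes the empty bundle, and the approximation guarantee is met in both cases. Your invocation of Lemma~\ref{lemma-known-prices} for $i'$ is misplaced: that lemma compares payments across \emph{different} subtrees of a player's induced tree, but here $i'$ sends the \emph{same} message $z_{i'}^\ast$ at $x$, so there is only one subtree in his induced tree at $x$. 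The coupling of $w^{(1)},w^{(2)}$ and the pigeonhole over messages do not help, because after $x$ player $i'$ can (and will) reveal his weight $\beta$ through later messages, and nothing you have set up prevents this.

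The paper's proof works on the other side: it exploits that $v_i^1,v_i^2$ send \emph{different} messages at $x$, giving two distinct subtrees $t_1,t_2$ in player $i$'s induced tree. In each subtree (with $i'$ at low weight $\alpha/m^4$) the approximation ratio forces a leaf where player $i$ gets a valuable set $X_1$ (resp.\ $X_2$) at price at most $2\alpha$. One then builds a fresh valuation $v$ for player $i$ of weight $8\alpha$ whose only valuable sets are $X_1,X_2$. The insensitivity of $i'$ is used to construct an \emph{adversarial strategy} for $i'$: act as low-weight $\alpha/m^4$ in the subtree $t_2$, but as high-weight $m^4\alpha$ in the subtree $t$ that $v$'s message leads to. Because (this is where the randomness of $w$ enters, via an intersection claim) the valuable sets of $w$ intersect $X_1$ and $X_2$ with probability $1-\exp(-\Omega(m^\eps))$, in subtree $t$ the mechanism must satisfy $i'$ and hence give $i$ nothing valuable; meanwhile in $t_2$ player $i$ could have secured profit $>0$. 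This defeats the supposed dominant strategy of $v$. The contradiction is thus with player $i$'s dominance, not with $i'$'s allocation, and the randomness is used only for the intersection property, not via any pigeonhole over $i'$'s messages.
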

\begin{proof}
	We will show that if $(1+\eta)\cdot w$ is not sensitive in $A$, then there exists a valuation of player $i$ that has no dominant strategy. 
	
	Assume towards a contradiction that player $i'$ sends the same message for all the valuations $\{\alpha(w+\eta)\}_{\alpha\in A'}$. Denote this message with $z_{i'}$. 
	For every player $j\neq i,i'$, let $v_j$ be an arbitrary valuation with weight $\frac{\alpha}{m^4}$ and noise $0$. Denote with $z_j$ the message that the dominant strategy $\mathcal{S}_j(v_j)$ dictates at the initial vertex.   
	
	We will analyze the tree that the message profile $z_x^{-i}$ induces for player $i$ at vertex $x$. Denote with $t_1$ and $t_2$ the subtrees that $v_i^1$ and $v_i^2$ lead to. Denote the noises of $v_i^1,v_i^2$ with $\eta_1,\eta_2$. 
	Observe that the subtree $t_1$ has a leaf $l_1$ that is labeled with a bundle $S_1$ such that $v_i(S_1)=\alpha$. 
	The reason for it is that if the valuation of player $i'$ is $\frac{\alpha}{m^4}\cdot (1+\eta)\cdot w$ and the valuation of player $i$ is $v_i^1$, then 
	the welfare is at least $\alpha$ when player $i$ wins a valuable set for $v_i^1$, whereas every allocation where player $i$ does not win a valuable set  has welfare at most
	$m\cdot \frac{\alpha}{m^4}+\eta\cdot \frac{\alpha}{m^4}\le (1+\eta)\cdot \frac{\alpha}{m^3}\le \frac{2\alpha}{m^3}$ 
	(there are at most $m$ items so no more than $m$ players can be satisfied, and only $i'$ has non-zero noise). The price $p_1$ of $S_1$ is at most $\alpha\cdot (1+\eta_1)< 2\alpha$.
	The reason for it is
	that the taxation principle guarantees that each player wins his most profitable bundle, and the price of the empty bundle is $0$ since $\mathcal M$ is normalized, so it necessarily holds that $v_1(S_1)\ge p_1$. 
	Due to the same reasons, the subtree $t_2$ has a leaf $l_2$ that is labeled with $(S_2,p_2)$ such that $v_i^2(S_2)=\alpha'(1+\eta_2)$ and $p_2\le \alpha'(1+\eta_2) \le 4\alpha$. 
	We therefore have that:
	\begin{equation}
		\begin{gathered}\label{eq-strategy-desc}
			(\mathcal S_i^{dom}(v_i^1),\mathcal S_{i'}^{dom}(\frac{\alpha}{m^4}\cdot (1+\eta)\cdot w),\{\mathcal S_j^{dom}(v_j)\}_{j\neq i,i'})\to l_1,  \\ 
			(\mathcal S_i^{dom}(v_i^2),\mathcal S_{i'}^{dom}(\frac{\alpha}{m^4}\cdot (1+\eta)\cdot w),\{\mathcal S_j^{dom}(v_j)\}_{j\neq i,i'})\to l_2
		\end{gathered}
	\end{equation}
	Note that $S_1,S_2$ might be larger than $m^\epsilon$, so we denote with $X_1\subseteq S_1,X_2 \subseteq S_2$ two  valuable sets of $v_i^1,v_i^2$   of size $m^\epsilon$. I.e.,  $v_i^1(X_1)=v_i^1(S_1)$ and  $v_i^2(X_2)=v_i^2(S_2)$.   
	
	We now define a valuation $v$ as follows. $v$ has weight $8\alpha$, noise of zero 
	and its base valuation has  a base set that contains $X_1\cup X_2$ (if $X_1,X_2$ intersect, add more items arbitrarily). We define the valuable sets of $v$ to be $X_1,X_2$. We will now show that $v$ has no dominant strategy.
	Assume towards a contradiction that $v$ does have a dominant strategy, and 
	that it dictates at vertex $x$ the message $z$ that leads to subtree $t$. 
	Observe the following valuation profile: the valuation of player $i$ is $v$, the valuation of player $i'$ is $m^4 \alpha\cdot (1+\eta)\cdot w$ and the valuation of every player $j\neq i,i'$ is the low-weight valuation $v_j$. For this valuation profile, mechanism $\mathcal M$ has to satisfy player $i'$ for approximation of $m^{1-4\eps}$: the optimal welfare is at least $m^4\cdot \alpha$, whereas the welfare of any allocation that does not satisfy player $i'$ is at most $(n-1)\cdot\frac{\alpha}{m^4}\le \frac{\alpha}{m^2}$. 
	\begin{claim}\label{claim-intersect}	
		With probability at least $1-\frac{1}{\exp(m^\eps)}$, all the valuable sets of the valuations $v$ and $m^4 \alpha\cdot(1+\eta)\cdot  w$ intersect. 
	\end{claim}
	By Claim \ref{claim-intersect} (which we prove later on), the fact that 
	the mechanism $\mathcal M$ allocates to player $i'$ a valuable set given 
	this valuation profile, implies that player $i$ wins the empty bundle or a set that he is not interested in. In other words:
	\begin{gather}\label{eq-strategy-desc2}
		(\mathcal S_i^{dom}(v),\mathcal S_{i'}^{dom}(m^4 \alpha \cdot (1+\eta)\cdot w),\{\mathcal S_j^{dom}(v_j)\}_{j\neq i,i'})\to l
	\end{gather}
	where $l$ is labeled with a bundle $(S,p)$ such that $v(S)=0$. 
	
	Note that the subtree $t$
	is a different subtree than either $t_1$ or $t_2$.
	Assume without loss of generality that $t\neq t_2$. 
	We now construct the following strategy $\mathcal{S}_{-i'}$ for every valuation of player $i'$:
	Until vertex $x$ (including vertex $x$), follow the strategy dictated by $\mathcal{S}_{i'}^{dom}(\frac{\alpha}{m^4}\cdot(1+\eta)\cdot w)$. At subtree $t_2$, follow this strategy as well. In contrast, in subtree $t$, choose the strategy specified by $\mathcal{S}_{i'}^{dom}(m^4 \alpha\cdot(1+\eta)\cdot  w)$.

	Observe that if player $i'$ follows the strategy $S_{i'}(v_{i'})$ (where $v_{i'}$ is arbitrary),  and every player $j\in N\setminus \{i,i'\}$ follows $\mathcal S_j^{dom}(v_j)$, then:
	\begin{gather*}
		(\mathcal S_i^{dom}(v),\mathcal S_{i'}(v_{i'}),\{\mathcal S_j^{dom}(v_j)\}_{j\neq i,i'})\to l, \\
		S_i^{dom}(v_i^2),\mathcal S_{i'}(v_{i'}),\{\mathcal S_j^{dom}(v_j)\}_{j\neq i,i'})\to l_2
	\end{gather*}
	This is immediate from the description of the strategy $\mathcal{S}_{-i'}$ and from (\ref{eq-strategy-desc})
	and (\ref{eq-strategy-desc2}).
	Note that $v(S_2)-p_2 \ge 8\alpha -4\alpha = 4\alpha > 0 = v(S)-p$ (here we use the no-negative transfers assumption). Therefore, $\mathcal S_i$ is not dominant for player $i$ given the valuation $v$, which is a contradiction. 
\end{proof}
\begin{proof}[Proof of Claim \ref{claim-intersect}]
	Note that by the construction, both $v_1$ and $v_2$ satisfy that half of their base sets intersect with the base set of $w$.  Thus, according to the explanation provided in Subsection \ref{impossible-sim-general-subsec}, we have that
	with probability $1-\frac{1}{exp(m^\epsilon)}$, all the valuable sets of $v_1$ intersect with all the valuable sets of $w$. The same holds for $v_2$. In particular, it implies that the valuable set $X_1$ intersects with all the valuable sets of $w$ with high probability.  Similarly, $X_2$ intersects with all the valuable sets of $w$ with probability $1-\frac{1}{exp(m^\epsilon)}$. By the union bound, we have that both  $X_1$ and $X_2$ intersect with the valuable sets of $w$ with probability $1-\frac{1}{exp(m^\epsilon)}$. Those are the valuable sets of the valuation $v$, so it completes the proof.  
\end{proof}
Lemma \ref{lemma-first-round2} can now be deduced. 
Let vertex $x$ be the initial vertex of the mechanism $\mathcal M$ (whose existence is guaranteed by Lemma \ref{lemma-initial-exists}) and let $C$ be its initial set of weights. 
By Definition \ref{def-initial-vertex}, there  exists a player $i$ that has two valuations $v_i^1,v_i^2$ 
with weights $\alpha_1 \le \alpha_2$ in  $\{m^4\cdot \min C,\ldots,\frac{\max C }{m^4}\}$ that send different messages in vertex $x$. By Claim \ref{claim-neighboring2}, we can assume that $v_i^1,v_i^2$ are neighboring. Denote their base sets with $T_1,T_2$, respectively. 

We want to show that if we sample a  random base valuation $v'$ of player $i'$ with  base set $T'$ and noise $\eta'$, then with high probability it is sensitive in the set $\{\frac{\alpha}{m^{8}},\frac{2\alpha}{m^{8}},\ldots,\alpha,\ldots,m^{8}\cdot \alpha\}$, where $\alpha=\min\{\alpha_1,\alpha_2\}$.  

Obtain a base set $T''$ by choosing at random half of the items of $T_1\cap T_2$ and half of the items of $T'$ (if $T_1\cap T_2$ and $T'$ intersect, we might need  to arbitrarily add more items to make sure that $|T''|=2m^\eps$). Sample a random base valuation $u''$ for some player $i''\neq i,i'$ with base set  $T''$ and an arbitrary noise $\eta''$. By Lemma \ref{lemma-random2}, with probability at least $1-\frac 1 { exp(m^{\eps})}$, $(1+\eta'')\cdot u''$ is sensitive for player $i''$ on the set  $\{\frac{\alpha}{m^4},\frac{2\alpha}{m^4},\ldots,\alpha,\alpha',\ldots,m^4\cdot \alpha\}$. 

Thus, there exist two consecutive weights $\beta\le \beta'$ such that player  $i''$ sends a different messages when his valuation is
$\beta(1+\eta'')\cdot v''$ and $\beta'(1+\eta'')\cdot v''$.
Note that $\beta(1+\eta'')\cdot v''$ and $\beta'(1+\eta'')\cdot v''$ are neighboring, so by applying Lemma \ref{lemma-random2} again we get that 
$(1+\eta')\cdot w$
is sensitive on the set 
$\{\frac{\beta}{m^4},\frac{2\beta}{m^4},\ldots,\beta,\ldots,m^4\cdot \beta\}$
with  probability at least $1-\frac 1 {\exp(m^{\eps})}$. Recall that  $\frac \alpha {m^4}\leq \beta  \leq m^4\cdot \alpha $, so  we get Lemma \ref{lemma-first-round2} for every player other than player $i$ with respect to the set 
$\{\frac{\alpha}{m^8},\frac{2\alpha}{m^8},\ldots,\alpha,\ldots,m^8\cdot \alpha\}$. 

The proof for player $i$ follows similar lines. 
Fix a base set $T_i$ and noise $\eta_i$ and sample a random base valuation $u_i$ with base set $T_i$. 
Obtain a base set $T''$ by taking half of the items of $T_i$ and $m^\eps$ items arbitrarily. 
Let $v_{i''}$ be 
a random base valuation  for player $i''\neq i$ with base $T''$ and an arbitrary noise $\eta''$. Since we have proven Lemma \ref{lemma-first-round2}
for all players except player $i$, we have that with probability $1-\frac 1 {\exp(m^{\eps})}$, there exist two consecutive $\gamma\le \gamma'\in \{\frac{\alpha}{m^8},\frac{2\alpha}{m^8},\ldots,m^8\cdot \alpha\}$, such that player $i''$ sends different messages at vertex $x$ for $\gamma(1+\eta'')\cdot v_{i''}$ and for $\gamma'(1+\eta'')\cdot   v_{i''}$ in vertex $x$ of the mechanism $\mathcal{M}$.  Observe that the valuations $\gamma(1+\eta'')\cdot v_{i''}$ and $\gamma'(1+\eta'')\cdot v_{i''}$  are neighboring and that $|T'' \cap T_i|\ge m^\eps$, so by  Lemma \ref{lemma-random2} the  valuation $(1+\eta_i)\cdot u_i$ is sensitive with respect to 
$\{\frac{\gamma}{m^8},\frac{2\gamma}{m^8},\ldots,\gamma,\ldots,m^8\cdot \gamma\}$
with probability $1-\frac 1 {\exp(m^{\eps})}$. Since $m^{8}\cdot \alpha \ge \gamma \ge \frac{\alpha}{m^8}$, we have that  Lemma \ref{lemma-first-round2} holds for 
$\{\frac{\alpha}{m^{16}},\frac{2\alpha}{m^{6}},\ldots,m^{16}\cdot \alpha\}$.


\bibliographystyle{alpha}
\bibliography{dominant}

\appendix

\section{Dominant Strategy Implementations for Three Players}\label{separation-sec}
Consider two player combinatorial auctions with general valuations. 
 In \cite{D16b} it is shown that every social choice function that is implementable in ex-post equilibrium can also be implemented in dominant strategy equilibrium with only a polynomial blow up in the communication. Whether a similar result holds also for three players was left as an open question. We answer it as follows:

\begin{theorem}\label{separationthm}
	There exists a three player social choice function $f$ for combinatorial auctions with general valuations such that: 
	\begin{enumerate}
		\item There exists a mechanism that implements $f$ in an ex-post Nash equilibrium with $\mathcal{O}(m)$ bits.
		\item The communication complexity of  every dominant strategy implementation of $f$ is $\exp(m)$.    
	\end{enumerate}
\end{theorem}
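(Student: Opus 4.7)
I would construct a three-player social choice function $f$ that generalizes the non-dominant-strategy serial second-price auction from the introduction. Let player $1$'s valuation be parameterized by a pair $(T, H)$ with $T \subseteq M$ of size $m/2$ and $H > 0$ large, so that $v_1^{T,H}$ values the bundle $M \setminus T$ at $H$ and has slightly lower values on nearby bundles. Players $2$ and $3$ each have valuations encoded by arbitrary functions $g, h : 2^M \to \mathbb{R}_{\geq 0}$, so that their full representation in general takes $\exp(m)$ bits. Define $f(v_1^{T,H}, v_2^g, v_3^h)$ to allocate $M \setminus T$ to player $1$ and allocate $T$ to whichever of players $2, 3$ reports the larger value on $T$. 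Thus the outcome is a function of the triple $(T, g(T), h(T))$ only, even though the underlying valuations of players $2, 3$ are exponentially rich.

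\textbf{Ex-post implementation with $O(m)$ bits.} Run a serial protocol: player $1$ first reveals $(T, H)$ using $O(m)$ bits; then players $2$ and $3$ each reveal $g(T)$ and $h(T)$ on the specific bundle $T$ using $O(m)$ bits. The mechanism allocates as above and charges VCG-style payments (player $1$ pays $0$, and the winner among $\{2,3\}$ pays the loser's value). Truthful reporting is an ex-post equilibrium, since conditional on other players being truthful each player's best response is to report truthfully. The total communication is $O(m)$.

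\textbf{Dominant strategy lower bound.} Suppose for contradiction that $\mathcal{M}$ implements $f$ in dominant strategies with $c = o(\exp(m))$ bits; by Lemma \ref{minimal-lemma} assume $\mathcal{M}$ is minimal. The key step is to show that player $1$ cannot send distinct messages at the root $r$ for valuations with different $T$. Suppose otherwise that player $1$ sends $z$ for $v_1^{T,H}$ and $z'$ for $v_1^{T',H}$ with $T \neq T'$. I would construct valuations $v_2^g, v_3^h$ together with alternative strategies $\mathcal{S}_2', \mathcal{S}_3'$ for players $2$ and $3$ which, conditioned on the root message of player $1$, bid in a way that extracts high payments from player $1$ only when he sends the ``truthful'' message $z$ (a three-player generalization of the punishment argument in the serial second-price example). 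Combined with the taxation principle and the induced tree analysis (Lemma \ref{lemma-known-prices}), this forces player $1$'s profit under $z'$ against $(\mathcal{S}_2', \mathcal{S}_3')$ to strictly exceed his profit under $z$, contradicting dominance. Hence player $1$'s root message is essentially independent of $T$. By symmetry, either player $2$ or player $3$ must convey substantial information at $r$; but since the mechanism must output $f(v_1^{T,H}, v_2^g, v_3^h)$ correctly for every $T$ and every $g, h$, a counting/pigeonhole argument forces the messages of players $2$ and $3$ to collectively encode values on exponentially many bundles, giving $c = \exp(m)$.

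\textbf{Main obstacle.} The main challenge is rigorously formalizing the punishment argument for three players. Concretely, we must design $v_2^g, v_3^h$ and non-dominant strategies $(\mathcal{S}_2', \mathcal{S}_3')$ so that when player $1$'s root message ``matches'' his true type, players $2$ and $3$'s continuation strategies induce strictly worse allocation-plus-payment outcomes for player $1$ than when his root message is ``off.'' This is delicate precisely because $\mathcal{M}$ must also implement $f$ correctly when $(v_2, v_3)$ are taken as true valuations with their own dominant strategies, so the punishing strategies $(\mathcal{S}_2', \mathcal{S}_3')$ can only differ from the dominant ones on the path following player $1$'s root message --- the same structural obstruction that, for two players only, collapses thanks to \cite{D16b}. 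Handling this three-player dependency cleanly, likely via a careful simulation argument on the induced tree at $r$ and use of the taxation principle as in Subsection \ref{subsec-payments-sketch}, is the technical heart of the proof.
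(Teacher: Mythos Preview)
Your construction does not produce the separation: it admits an efficient \emph{dominant strategy} implementation. Run the protocol serially. In round one, player $1$ announces $T$ (and $H$); he always receives $M\setminus T'$ at price $0$, where $T'$ is his report, so reporting the true $T$ is dominant for him regardless of what players $2,3$ do. In round two, players $2$ and $3$ run a second-price auction on the single bundle $T$; truthful bidding is dominant for each of them by the usual argument. Total communication is $O(m)$, and every player has a dominant strategy. So your lower-bound claim that ``player $1$ cannot send distinct messages at the root for different $T$'' is simply false for this $f$, and the punishment argument you sketch cannot succeed: there is no way for players $2,3$ to hurt player $1$ by conditioning on his root message, because player $1$'s allocation and payment are already pinned down by his own report.

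The paper's construction avoids this by a \emph{cyclic} dependence: Alice's valuation determines (via an INDEX instance held jointly with Bob) whether Charlie gets item $c$; Bob's valuation together with Charlie's determines whether Alice gets item $a$; and so on. Crucially, each player's own valuation has \emph{no} effect on her own bundle, so zero payments give a trivial ex-post equilibrium. The lower bound then reduces computing any single player's allocation to simultaneous INDEX on $\binom{m}{m/2}$ bits, and uses the induced-tree lemma to show that whichever player first sends a non-trivial message cannot have a dominant strategy. The essential missing idea in your attempt is this cyclic, ``nobody can safely go first'' structure; any construction in which one player's allocation depends only on her own type will fail for the reason above.
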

\begin{proof}
	Consider the following social choice function. There are three players, Alice, Bob and Charlie. The set of items is $M$, where $|M|=m$. For each player, the values of all bundles are integers in  $\{1,\ldots,2^{m}\}$.
	Fix three items $a,b,c \in M$. 
	Let $\mathcal X = \{X_1,\ldots,X_{\binom{m}{\frac{m}{2}}}\}$ be the collection of all subsets of $M$ of size $\frac{m}{2}$. 
	We define $f$ as follows:
	\begin{itemize}
		\item 
		If $v_A(\{c\})> \binom{m}{\frac{m}{2}}$, then item $c$ is not allocated. Else, let $X_{C}=X_{v_A(\{c\})}$, i.e. $X_{C}$ is the bundle in $\mathcal{X}$ whose index is equal to $v_A(\{c\})$. If $v_B(X_{C})< 1$, then Charlie wins $c$. Otherwise, $c$ is not allocated. 
		
				\item 
		If $v_B(\{a\})> \binom{m}{\frac{m}{2}}$, then item $a$ is not allocated. Else, let $X_{A}=X_{v_B(\{a\}}$. If $v_C(X_{A})< 1$, then Alice wins $a$. Otherwise, $a$ is not allocated. 
		
			\item	If $v_C(\{b\})> \binom{m}{\frac{m}{2}}$, then item $b$ is not allocated. Else, let $X_{B}=X_{v_C(\{b\})}$. If $v_A(X_{B})< 1$, then Bob wins $b$. Otherwise, $b$ is not allocated. 
		\end{itemize}
	Note that there are only three items $a,b,c$ that $f$ ever allocates. 
 The following protocol realizes $f$ in ex-post Nash equilibrium. All three players send $v_A(a),v_B(b)$ and $v_C(c)$ respectively, so the players can compute $X_{A}$, $X_{B}$ and $X_{C}$. Afterwards, Alice, Bob and Charlie each send a bit that indicates whether $v_A(X_{B})$, $v_B(X_{C})$ and $v_C(X_{A})$ are strictly larger than $1$ or not. Regardless of the outcome, each player always pays $0$.   
 The communication cost of the protocol is $\mathcal{O}(m)$ since each player sends his value for one bundle plus one additional bit. 

The protocol achieves an ex-post Nash equilibrium since the valuation of each player has no impact at all on the items that she wins. For example, every $v_B,v_C\in V_B\times V_C$ satisfy that for all $v_A\in V_A$, either $f(\cdot,v_B,v_C)$ allocates $a$ to Alice for every valuation $v_A\in V_A$, or $f(\cdot,v_B,v_C)$ constantly outputs the empty allocation for Alice. Thus,  a payment of zero indeed guarantees an ex-post equilibrium. 

	We now show hardness of dominant strategy implementations. The intuition for that is is that determining whether Alice wins item $a$ is the INDEX function in disguise, where Bob holds the index and Charlie holds the array. Thus, by the reduction, we have that after the first round of communication, it is not known whether Alice wins item $a$ or not, as long as the number of bits sent is $poly(m)$. The same applies to Bob and Charlie as well.   We are going to take advantage of this fact in to show that the first player who sends a \textquote{meaningful} message necessarily does not have a dominant strategy. We formalize it as follows.

The first step is to reduce $f$ to the well-known function $INDEX_k$. We use the reduction to prove that the simultaneous communication complexity of the social function $f$ is at least $\exp(m)$. Afterwards, we show that simultaneous hardness implies hardness of  dominant strategy implementation of $f$.

$INDEX_k:\{0,1\}^k\times \{1,\ldots,k\}\to \{0,1\}$ is a function where one player holds an array $arr\in \{0,1\}^k$ and the other player holds an index $j\in [k]$. $INDEX(arr,j)$ is equal to $arr[j]$, i.e. the value of the $j'$th bit in $arr$.  
\begin{theorem}(\cite{kremer1999randomized}) \label{index-hard-theorem}
	The simultaneous communication complexity of $INDEX_k$ is $\Omega(k)$.
\end{theorem}
	\begin{proposition}\label{fahardprop}
		The simultaneous communication complexity of $f$ is at least $\binom{m}{\frac{m}{2}}$.
	\end{proposition}
	\begin{proof}
		Let $INDEX_k$ be the index problem with $k=\binom{m}{\frac{m}{2}}$ bits. We show that simultaneously computing the allocation of, say, Alice is at least as hard as simultaneously computing $INDEX_k$.
		
		For the reduction, we define $int:\mathcal X \to [k]$ as a function that maps each subset in the collection $\mathcal X$ to its index. For example, $int(X_1)=1$.
	
		Given a array in $arr\in\{0,1\}^{k}$,
		we construct the 
		following valuation for Charlie:
		$$
		\quad v_C(T)=\begin{cases}
			0 \quad & |T|<\frac{m}{2}, \\
			arr[\ell] \quad & |T|=\frac{m}{2} \text{ and } int(T)=\ell, \\
			1  \quad & |T|>\frac{m}{2}.
		\end{cases}
		$$  
		In words, each coordinate $l$ in the array $arr$ is linked to a subset $T$ of $M$ of size $\frac{m}{2}$, and the value of $arr[l]$ determines the value of the subset $T$ for Charlie.  

		Given an index in $j\in \{0,\ldots,k-1\}$,  let $v_B$ be an arbitrary valuation such that  $v_B(\{b\})=j$. 
		By construction, $INDEX(arr,j)=1$ if and only if $f$ allocates item $a$ to Alice. By Theorem \ref{index-hard-theorem}, the simultaneous communication complexity of $INDEX(k)$ is $\Omega(k)$, which completes the proof. See Figure \ref{three-player-fig} for an illustration.    
	\end{proof}
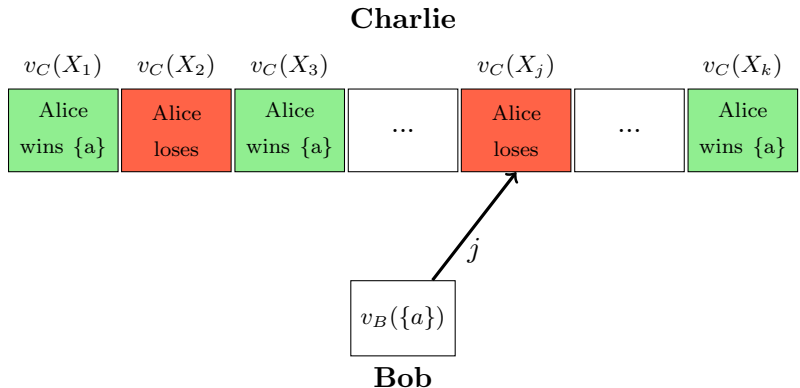
\begin{figure}[H]
	\centering 
	\setlength{\belowcaptionskip}{-1pt}
	\centering
	\caption{An illustration of the reduction of $INDEX$ to the allocation of Alice. The figure depicts a situation where Bob's value for item $a$, $j$, points at a bundle $X$ such that $v_C(X)<1$, so Alice gets the empty bundle.}
	\label{three-player-fig}
	\begin{tikzpicture}
		
		[every node/.style={anchor=base,
			text width=1em,
			font=\footnotesize,
			align=center
		}]
		
		\node[shape=rectangle,minimum size=0.5cm] (Charlie) at  (0,0) {\normalsize
			\textbf{Charlie}};

		\matrix (mat) [matrix of nodes,
		row sep=0pt, 
		column sep=1pt,
		nodes={draw,text width=1.2cm, minimum 
			height=1.1cm, 
			anchor=center,
			align=center,
		}]
		at (0,-1.5)
		{
			|[fill=LightGreen]| {\scriptsize Alice wins \{a\}}  &
			|[fill=Tomato]|	\scriptsize Alice loses
			& |[fill=LightGreen]| \scriptsize Alice wins \{a\} & ... & 
			|[fill=Tomato]| \scriptsize Alice loses &
			... &
			|[fill=LightGreen]| \scriptsize Alice wins \{a\}
			\\
		};

		
		\foreach \i [count=\xi from 1]
		in  {1,2,3}
		{
			\node also [label=above: \footnotesize $v_C(X_{\xi})$] (mat-1-\i) {}; 
		}
		\node also [label=above:\footnotesize $v_C(X_{j})$] (mat-1-5) {}; 	
		\node also [label=above: \footnotesize $v_C(X_{k})$] (mat-1-7) {};

		\node[shape=rectangle,draw,minimum size=1cm,label=below:{\normalsize \textbf{Bob}}] (Bob) at  (0,-4) {\footnotesize
			$v_B(\{a\})$};
		
		%
		
		\draw [->,very thick] (Bob) edge  node[below] {$j$} (mat-1-5.south);
		%
	\end{tikzpicture}
\end{figure}
We now leverage the simultaneous hardness of $f$ to show hardness of implementation in dominant strategies. 
Consider a dominant strategy implementation of $f$ with communication complexity $c <  {\binom{m}{\frac{m}{2}}}/3$ with the dominant strategies $(\mathcal{S}_A,\mathcal{S}_B,\mathcal{S}_C)$.
By Lemma \ref{minimal-lemma}, we can assume that the implementation is minimal without loss of generality.
Thus, there exists a player, without loss of generality Alice, that has at least two valuations with different messages at the root vertex of the protocol, which we denote with $r$.  

By Proposition \ref{fahardprop}, it is impossible to determine whether Alice wins $a$ or not after the first round of communication (otherwise, repeat the protocol three times, appropriately switching the roles of the players, and compute $f$ simultaneously with $3c$ bits). 
Thus, there are two pairs of valuations $(v_B^0,v_C^0)$ and $(v_B^1,v_C^1)$ such that the following assertions hold:
\begin{enumerate}
	\item Bob's dominant strategy dictates  the same message $z_B$ in the first round for both $v_B^0$ and $v_B^1$. 
	\item Bob's dominant strategy dictates  the same message $z_C$ in the first round for both $v_C^0$ and $v_C^1$.
	\item For every $v_A\in V_A$, $f(v_A,v_B^0,v_C^0)$ assigns the empty bundle to Alice and $f(v_A,v_B^1,v_C^1)$ assigns the item $\{a\}$ to Alice. Thus, every message of Alice at the root vertex leads to a subtree that has a leaf labeled with the allocation of $\{a\}$ for Alice and another leaf labeled with the allocation of the empty bundle for her. \label{prop-3}
\end{enumerate}
We now explain properties of the induced tree of Alice given the messages $z_B,z_C$, which we will use to show that Alice's supposedly dominant strategy fails her. 

Let $z_A,z_A'$ be two possible messages of Alice in the first round, that she sends for the valuations $v_A$ and $v_A'$.  
By the above, there exists a leaf at the subtree of the messages $(z_A',z_B,z_C)$ labeled with the bundle $\{a\}$ for Alice (which is every leaf that the strategies $\mathcal{S}_B(v_B)$ and $\mathcal{S}_C(v_C)$ lead to), and the same holds for the subtree that $(z_A'',z_B,z_C)$ leads to. Due to the same reasons, both subtrees also have a leaf labeled with the empty bundle for Alice.  Figure \ref{fig-combined-protocol-tree} provides an explanation for that.

\begin{figure}[H]
	\centering
	\setlength{\belowcaptionskip}{3pt}
	\setlength{\abovecaptionskip}{3pt}
	\caption{Illustration of part of the induced tree of Alice at the root vertex, given the messages $z_B$ and $z_C$ of Bob and Charlie respectively. Given the dominant strategies $\mathcal{S}_A,\mathcal{S}_B,\mathcal{S}_C$, where the valuations of Alice, Bob and Charlie are  $(v_A,v_B^0,v_C^0)$, we have that the execution of the protocol ends at the leftmost leaf. The same holds for the remaining leaves in the figure. 
	}	
	\label{fig-combined-protocol-tree}
	\begin{tikzpicture}
		\node[shape=circle,draw=black,minimum size=0.8cm] (u) at (1.5,1.5) {$r$};
		\node[shape=circle,draw=black,minimum size=0.8cm] (v) at (-0.5,0) {};
		\node[shape=circle,fill=Tomato,draw=black,minimum size=1.4cm,align=center,label=below:{\tiny$(v_A,v_B^0,v_C^0)$}] (l) at (-1.5,-1.5) {\tiny A loses \\ \tiny $\{a\}$};
		\node[shape=circle,draw=black,minimum size=0.8cm] (v') at (3.5,0) {}; 
		\node[shape=circle,draw=black,fill=LightGreen,minimum size=1.4cm,align=center,label=below:{\tiny$(v_A',v_B^1,v_C^1)$}] (l') at (4.5,-1.5) {\tiny A wins \\ \tiny $\{a\}$};
			\node[shape=circle,fill=Tomato,draw=black,minimum size=1.4cm,align=center,label=below:{\tiny$(v_A',v_B^0,v_C^0)$}] (n') at (2.5,-1.5) {\tiny A loses \\ \tiny $\{a\}$};
		\node[shape=circle,draw=black,,fill=LightGreen,minimum size=1.4cm,align=center,label=below:{\tiny$(v_A,v_B^1,v_C^1)$}] (n) at (0.5,-1.5) {\tiny A wins \\ \tiny $\{a\}$};

		\draw [->] (u) edge  node[sloped, above] {\small$z_A$} (v);
		\draw [->] (u) edge  node[sloped, above] {\small $z_A'$} (v');
		\draw [->] (v) edge[dotted]  node[sloped, above] {}  (l);
		\draw [->] (v') edge[dotted]  node[sloped, above] {} (l');
		\draw [->] (v) edge[dotted]  node[sloped, above] {} (n);
		\draw [->] (v') edge[dotted]  node[sloped, above] {} (n');
	\end{tikzpicture}
\end{figure}
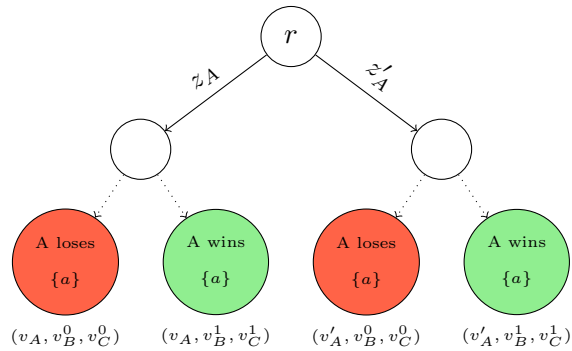

By the following proposition, we get that all the leaves labeled with the allocation $\{a\}$ for Alice have the same payment for her, and that the same holds for every leaf labeled with $\varnothing$ for Alice. 

\begin{proposition}\label{prop-known-prices}
	Consider the induced tree of Alice at the root vertex $r$ given the messages $z_B,z_C$ of Bob and Charlie. 	
	Let $S\subseteq M$ be a an allocation of Alice that appears in two different subtrees.
	\emph{Then}, all the leaves in this induced tree that are labeled with $S$ have the same payment for Alice.  	 
\end{proposition}
The proof of Proposition \ref{prop-known-prices} is identical to the proof of Lemma \ref{lemma-known-prices}. The only difference between the two is that Proposition \ref{prop-known-prices} refers to the more restricted setting of combinatorial auctions, so two allocations where player $i$ wins the same bundle are identical for him.  

By the above, the subtrees of both $z_A$ and $z_A'$ in the induced tree given the messages $z_B,z_C$ have a leaf labeled with $\{a\}$ and a leaf labeled with $\varnothing$. By Proposition \ref{prop-known-prices}, we have that all the leaves in the induced tree of $z_B,z_C$ that are labeled with $\{a\}$ have the same price $P_a$. Similarly, all leaves labeled with the empty bundle have the same price, denoted by $P_\varnothing$.

We can now use the those properties of the induced tree to show that there exists a valuation of Alice with no dominant strategy. 
Let $v_A^\ast\in V_A$ be a valuation such that $v_A^\ast(\{a\})\neq P_{a}-P_{0}$. We assume that $v_A^\ast(a)-P_{a}>-P_{0}$. The proof for the complementary case $v_A^\ast(a)<P_{a}-P_{0}$ is analogous. Assume that the dominant strategy $S_A$ dictates the message $z_A^\ast$ at vertex $r$. Let $z_A^{\ast\ast}\neq z_A^{\ast}$ be a different message that Alice can send at the root. 

Observe the following strategy profiles $\mathcal{S}_{B}'$ and $\mathcal{S}_C$ of Bob and Charlie: For every valuation profile $(v_B',v_C')$, Bob and Charlie send the messages $z_B,z_C$ (respectively) at the root vertex $r$. At the subtree that the messages $(z_A^\ast,z_B,z_C)$ leads to, Bob sends the messages that the dominant strategy $\mathcal{S}_B$ dictates for $v_B^0$ and Charlie sends the messages that $\mathcal{S}_C$ dictates for $v_C^0$.  At every other subtree, Bob and Charlie choose the  actions specified by the dominant strategies $S_B$ and $S_C$ for the valuations $v_B^1$ and $v_C^1$. By property \ref{prop-3}, for every $v_B',v_C'\in V_B\times V_C$,  $(\mathcal{S}_A(v_A),\mathcal{S}_{B}'(v_B'),\mathcal{S}_{C}'(v_C'))$ leads to the outcome $(\varnothing,P_0)$ and any strategy that dictates a message other than $z_A$ leads to $(a,P_a)$ given the actions dictated by the strategies $\mathcal{S}_{B}'(v_B'),\mathcal{S}_{C}'(v_C')$. Recall that by assumption $v_A^\ast(a)-P_{a}>-P_{0}$, so $\mathcal{S}_A(v_A)$ is not dominant for Alice, and we reach a contradiction. 

 


\end{proof}

\section{The Structure of Mechanisms for General Valuations}\label{sec-characterization} 
In this section we study the structure of dominant strategy mechanisms for general (monotone) valuations.  We will show that each such mechanism is semi-simultaneous, as we define later. 
Recall that every communication protocol can be described as a tree: In each vertex, each player has several messages that he can send. The leaves are labeled with the outcome (which are allocation and payment, in the case of combinatorial auctions). 


In what follows, we fix a dominant strategy mechanism $\mathcal M$, the dominant strategies $\mathcal{S}_1,\ldots,\mathcal{S}_n$ of the players and their valuation sets $V_1\times\cdots\times V_n$.
By Lemma \ref{minimal-lemma}, we can assume that the mechanism $\mathcal{M}$ is minimal with respect to the dominant strategies and the valuation sets without loss of generality.
For simplicity, we assume that all leaves in the induced tree are only labeled with the allocation and payment of player $i$.
\begin{definition}
	Fix some player $i$, vertex $u$, and messages of the other players $z^u_{-i}$. The \emph{minimal price} of bundle $S$ at vertex $u$ is $p_S$ if $p_S$ is the minimal price of all leaves in the induced tree of player $i$ given $z^u_{-i}$ that are labeled with a bundle that contains $S$.
\end{definition}
\begin{definition}
	Fix some player $i$, vertex $u$, and messages of the other players $z^u_{-i}$. Suppose that all players except player $i$ send the messages $z^u_{-i}=(z^{u}_1,\ldots,z^{u}_{i-1},z^{u}_{i+1},\ldots,z^{u}_{n})$. A bundle $S$ is \emph{decisive at price $p$}  at the induced tree of player $i$ at vertex $u$ given $z^{u}_{-i}$ if there is a strategy of player $i$ at vertex $u$ that guarantees that for every strategy profile of the other players that is consistent with $z^{u}_{-i}$, the protocol reaches a leaf labeled with a bundle that contains $S$ and price at most $p$.     
\end{definition}
A \emph{trivial} message of player $i$ is a message such that the dominant strategy of player $i$ is always to send this message, no matter what his valuation is.
\begin{definition}
	A  dominant strategy mechanism is called \emph{semi-simultaneous} if the following holds for each player $i$ and vertex $u$ in which player $i$ sends his first non-trivial message, and for every possible set of messages of the other players $z^u_{-i}$:
	\begin{itemize}
		\item[]Denote by $T$ the set of subtrees in the induced tree of player $i$ at vertex $u$ given $z^u_{-i}$. Then, there is at most one subtree $t^\ast\in T$ (the \emph{special subtree}) such that every leaf $l$ which does not belong to $t^\ast$ is labeled with a bundle that is decisive at its minimal price.
	\end{itemize}
\end{definition}

Recall that an ascending auction on the bundle of all items is one example for a semi-simultaneous mechanism. In an ascending auction, the special subtree for each player at each level of the protocol tree is the subtree where he agrees to pay the given price for the bundle of all items. 
Another example is a mechanism in which player $1$ is offered to buy in the first round item $a$ at price $1$. If he declines, he is offered some arbitrary prices for bundles that do not contain $a$ (the prices are a function of the valuations of the other players and they are sent  only after the first round), as well as the option to buy item $a$ at price $1$.

\begin{theorem}\label{characterizationtheorem}
	Every minimal dominant strategy mechanism $\mathcal M$ for general valuations  is semi-simultaneous.
\end{theorem}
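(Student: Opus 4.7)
The plan is a proof by contradiction. Suppose that at player $i$'s first non-trivial message vertex $u$, for some fixed messages $z^u_{-i}$ of the other players, two distinct subtrees $t_1 \neq t_2$ in the induced tree each contain a leaf that is \emph{not} decisive at its minimal price. Fix such leaves $l_1=(S_1,p_1)\in t_1$ and $l_2=(S_2,p_2)\in t_2$, and denote by $p_{S_1}, p_{S_2}$ their minimal prices in the induced tree. I will construct a monotone valuation of player $i$ that admits no dominant strategy, which is the desired contradiction.

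First I will use minimality: each $l_j$ is reached by some dominant-strategy profile $(\mathcal{S}_i(v^{(j)}),\mathcal{S}_{-i}(v_{-i}^{(j)}))$, and because $u$ is player $i$'s first non-trivial message, the two profiles of the others prescribe exactly the messages $z^u_{-i}$ at every vertex up to and including $u$. I can therefore splice them into a composite strategy $\widetilde{\mathcal{S}}_{-i}$ of the others that behaves like $\mathcal{S}_{-i}(v_{-i}^{(1)})$ inside $t_1$, like $\mathcal{S}_{-i}(v_{-i}^{(2)})$ inside $t_2$, and arbitrarily elsewhere, exactly as in the proof of Lemma~\ref{lemma-known-prices}. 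Against $\widetilde{\mathcal{S}}_{-i}$ the strategy $\mathcal{S}_i(v^{(1)})$ reaches $l_1$ and $\mathcal{S}_i(v^{(2)})$ reaches $l_2$, so both outcomes lie in the ``effective menu'' player $i$ faces against $\widetilde{\mathcal{S}}_{-i}$.

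Next I will exploit non-decisiveness. For each $j$, since $S_j$ is not decisive at $p_{S_j}$, for any prescribed strategy $\sigma$ of player $i$ at $u$ there exists a counter-strategy $\mathcal{S}^{\mathrm{bad},j}_{-i}(\sigma)$ of the others that drives the protocol to a leaf whose label $(T,q)$ satisfies $T\not\supseteq S_j$ or $q>p_{S_j}$; using Lemma~\ref{lemma-known-prices} (which fixes the price of any bundle that appears in several subtrees) I may further arrange that the reached leaf is labeled with a bundle not containing $S_j$. I now define a monotone valuation $w$ of player $i$ that places value $M$ on every bundle containing $S_1$, value $M/2$ on every bundle containing $S_2$ but not $S_1$, and a tiny additive value $\epsilon\lvert T\rvert$ on every other bundle $T$, with $M\gg\max(p_1,p_2,p_{S_1},p_{S_2})$. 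Against $\widetilde{\mathcal{S}}_{-i}$ the strategy $\mathcal{S}_i(w)$ must route player $i$ into $t_1$: the deviation $\mathcal{S}_i(v^{(1)})$ already yields utility $\geq M-p_1$ there, whereas any subtree other than $t_1$ yields at most $M/2-p_2$ against $\widetilde{\mathcal{S}}_{-i}$ (Lemma~\ref{lemma-known-prices} fixes the prices of the bundles containing $S_1$ that could appear outside $t_1$, and non-decisiveness together with the menu obtained in the previous paragraph excludes any cheaper option). Now I graft the bad counter-strategy on top of $\widetilde{\mathcal{S}}_{-i}$: inside $t_1$ replace $\widetilde{\mathcal{S}}_{-i}$ by $\mathcal{S}^{\mathrm{bad},1}_{-i}(\mathcal{S}_i(w))$, leaving the $t_2$-branch untouched, to obtain a legal hybrid strategy $\widehat{\mathcal{S}}_{-i}$. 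Against $\widehat{\mathcal{S}}_{-i}$ the strategy $\mathcal{S}_i(w)$ now ends at a leaf whose label does not contain $S_1$, giving $w$ utility at most $\epsilon m - p'$ for some price $p'$, while the deviation $\mathcal{S}_i(v^{(2)})$ still reaches $l_2$ (since $\widehat{\mathcal{S}}_{-i}$ agrees with $\widetilde{\mathcal{S}}_{-i}$ in $t_2$) and gives utility $M/2-p_2 \gg \epsilon m-p'$. This violates the dominance of $\mathcal{S}_i(w)$.

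The hard part will be the grafting step, for two intertwined reasons. First, non-decisiveness is a quantifier-alternated statement, so the counter-strategy $\mathcal{S}^{\mathrm{bad},1}_{-i}$ must be chosen \emph{after} fixing $\mathcal{S}_i(w)$, and the argument must verify that pasting it onto $\widetilde{\mathcal{S}}_{-i}$ inside a single subtree yields a well-defined strategy profile of the other players — this is exactly where the hypothesis that $u$ is player $i$'s first non-trivial message is used, since the behavior of the others in the $t_1$-branch and in the $t_2$-branch are independent after $u$. Second, I must rule out the case in which $\mathcal{S}_i(w)$ actually enters a third subtree $t_3$ (not $t_1$ or $t_2$) against $\widetilde{\mathcal{S}}_{-i}$, or reaches a leaf in $t_1$ labeled with a bundle containing $S_1$ at a price strictly below $p_1$; both possibilities are controlled by choosing $M$ sufficiently large relative to all prices appearing in the induced tree and by using the fact (from Lemma~\ref{lemma-known-prices}) that any leaf outside $t_1$ labeled with a superset of $S_1$ has a price that is rigidly determined. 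Carrying out these two steps carefully, and verifying monotonicity of $w$ throughout, constitutes the technical heart of the proof.
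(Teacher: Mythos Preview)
Your contradiction strategy has the right overall shape, but the ``big $M$'' valuation cannot close the argument, because non-decisiveness is too weak a hypothesis for it. Recall the definition: $S_1$ is \emph{not} decisive at its minimal price $p_{S_1}$ means that for every strategy of player $i$ there is a counter-strategy of the others leading to a leaf $(T,q)$ with $T\not\supseteq S_1$ \emph{or} $q>p_{S_1}$. In particular the bad outcome may well be a superset of $S_1$ at price $q=p_{S_1}+\delta$ for arbitrarily small $\delta>0$; a two-leaf subtree with labels $(S_1,p_{S_1})$ and $(S_1,p_{S_1}+1)$, where player $i$ does not speak again, already witnesses this. Against such a counter-strategy your valuation $w$ still earns $M-q$, which for large $M$ is far above the $M/2-p_2$ you get from deviating to $t_2$, so no dominance violation arises. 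Your claim that ``using Lemma~\ref{lemma-known-prices} I may further arrange that the reached leaf is labeled with a bundle not containing $S_j$'' is simply false: that lemma fixes the price of the \emph{identical} bundle across subtrees, it says nothing about forcing the bundle to change, and the example above shows no such forcing is possible.

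The paper avoids this trap by a two-stage argument with a much more delicate valuation. First it proves (Claim~\ref{claim-undecisive-unique-subtree}) that if $S$ is non-decisive at its minimal price $p_S$, then every leaf realizing a superset of $S$ at price exactly $p_S$ lies in a single subtree. Then, for two non-decisive bundles $S_1,S_2$ at minimal prices $p_1,p_2$, it takes the \emph{second-smallest} prices $p_1'>p_1$ and $p_2'>p_2$ appearing on supersets, and builds a valuation whose value for supersets of $S_j$ is calibrated to $\tfrac{p_j+p_j'}{2}$. The point is that this valuation is indifferent between the two target leaves but becomes strictly unprofitable the moment the price rises from $p_j$ to $p_j'$ --- exactly the slack that your big-$M$ construction lacks. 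The containment inequality (Claim~\ref{containment-claim}), which you would also need but do not prove, ensures supersets in other subtrees are no cheaper, so the only leaves with strictly positive normalized profit are in $t_1$ and $t_2$; then non-decisiveness of each side yields the contradiction. If you want to repair your approach, you must replace the big-$M$ gadget by one that separates $p_{S_1}$ from the next available price, and you must first establish the containment claim; at that point you will essentially have reproduced the paper's proof.
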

We call the message $z^*$ of player $i$ that leads to the subtree $t^*$ the \emph{special message} of player $i$ at vertex $u$. Note that this special message $z^*$ might depend on the messages $z_{-i}^u$ that the other players send simultaneously. In other words, at the same vertex $u$, it might be the case that sometimes a message $z'$ is a special message and sometimes a different message $z''$ is a special message. The following is an important corollary of the theorem:

\begin{corollary}\label{cor-commit-profit}
	Fix some vertex $u$, player $i$ with valuation $v$, and messages of the other players $z^u_{-i}$. Consider the induced tree of player $i$ at vertex $u$ given the messages $z^u_{-i}$. Suppose that according to the dominant strategy of player $i$, $\mathcal{S}_i(v_i)$, he sends his first non-trivial message at vertex $u$ and that this message is not special given $z^{u}_{-i}$. Then, he is guaranteed to be allocated a bundle that will give him profit of at least $\max_{S\in \mathbb{S}}v(S)-p_S$, where $\mathbb S$ is the set of all bundles $S$ that are decisive at their minimal price $p_S$ in the induced tree of vertex $u$ given the messages $z_{-i}^u$.
\end{corollary}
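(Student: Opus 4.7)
The plan is to derive the claimed profit guarantee directly from the dominance of $\mathcal{S}_i(v_i)$ combined with the definition of decisiveness. The hypotheses that $u$ is the vertex of player $i$'s first non-trivial message and that the dominant strategy's message at $u$ is not special are used only to place us in the setting of Theorem~\ref{characterizationtheorem}, which guarantees that $\mathbb{S}$ is non-trivially populated (every leaf outside the unique special subtree is labelled with a bundle decisive at its minimal price). The dominance-based deviation argument below, however, would give the same bound for any $S^{\ast}\in\mathbb{S}$ regardless.

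First I would fix $S^{\ast}\in\mathbb{S}$ attaining $\max_{S\in\mathbb{S}} v(S)-p_S$. Since $S^{\ast}$ is decisive at its minimal price $p_{S^{\ast}}$, unpacking the definition gives a strategy $\sigma^{\ast}$ for player $i$ from vertex $u$ onward such that, for every continuation of the other players' strategies consistent with them sending $z^u_{-i}$ at $u$, the protocol reaches a leaf labelled with a bundle $T\supseteq S^{\ast}$ at price at most $p_{S^{\ast}}$. By monotonicity of $v$ we have $v(T)\ge v(S^{\ast})$, so at such a leaf the profit of player $i$ is at least $v(S^{\ast})-p_{S^{\ast}}$.

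Next I would build a comparator strategy $\mathcal{S}_i'$ for player $i$ that agrees with $\mathcal{S}_i(v_i)$ on every vertex on the path from the root to $u$ (so both strategies reach $u$ against the same other-player plays) and from $u$ onward executes $\sigma^{\ast}$; on valuations $v_i'\neq v_i$ it can be defined arbitrarily. Then for any strategy profile $\mathcal{S}_{-i}'$ of the other players whose actions are consistent with reaching $u$ and sending $z^u_{-i}$ there, the joint play $(\mathcal{S}_i'(v_i),\mathcal{S}_{-i}'(v_{-i}'))$ reaches $u$ and then follows $\sigma^{\ast}$, yielding profit at least $v(S^{\ast})-p_{S^{\ast}}$ for player $i$. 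The dominant-strategy property of $\mathcal{S}_i(v_i)$ now forces the profit obtained by $(\mathcal{S}_i(v_i),\mathcal{S}_{-i}'(v_{-i}'))$ to be at least the profit obtained by $(\mathcal{S}_i'(v_i),\mathcal{S}_{-i}'(v_{-i}'))$, and hence at least $v(S^{\ast})-p_{S^{\ast}}$. Taking the maximum over $S^{\ast}\in\mathbb{S}$ gives the stated bound.

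I do not foresee a major obstacle. The only subtlety is aligning the informal word ``guaranteed'' in the statement with the universal quantifier over other-player continuations built into the definition of decisiveness: both refer to the worst case over strategies of the other players consistent with the messages already observed through vertex $u$. Once this alignment is in place, the deviation-comparator argument above is entirely standard, and the proof avoids any direct manipulation of the internal structure of the non-special subtrees promised by Theorem~\ref{characterizationtheorem}.
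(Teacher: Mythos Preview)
Your proposal is correct and follows essentially the same approach as the paper: pick a maximizer $S^{\ast}\in\mathbb S$, use decisiveness to obtain an alternative strategy guaranteeing profit at least $v(S^{\ast})-p_{S^{\ast}}$, and invoke dominance of $\mathcal{S}_i(v_i)$ against this comparator. Your write-up is slightly more explicit (building the comparator along the root-to-$u$ path and invoking monotonicity of $v$), but the argument is the same.
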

\begin{proof}
	Let $T$ be some bundle in $\arg\max_{S\in \mathbb S}v(S)-p_S$.
	Note that $T$ is decisive at price $p_T$, thus there is some strategy $\mathcal{S}'_i$ that guarantees it or a bundle that contains it at its minimal price $p_T$. For every strategy profile $\mathcal{S}_{-i}$, the dominant strategy $\mathcal{S}_i$ cannot lead to an outcome with smaller profit, because otherwise  player $i$ should play according to $\mathcal{S}'_i$ instead of $\mathcal{S}_i$. 
\end{proof}



\subsection{Proof of Theorem \ref{characterizationtheorem}}

We say that a bundle $S$ \emph{appears} in some subtree $t$ with price $p_S$ if there is a leaf in the subtree $t$ that is labeled with $(S,p_S)$.
The following claims will be useful:
\begin{claim}\label{containment-claim}
	Fix some player $i$, vertex $u$, and messages of the other players $z^u_{-i}$. Consider the induced subtree of player $i$ at vertex $u$ given $z^u_{-i}$. If the bundle $S$ appears in some subtree $t$ with price $p_S$ and a bundle $S'$ such that $S\subseteq S'$ appears in a different subtree $t'$ with price $p_{S'}$, then $p_{S'}\ge p_S$.
\end{claim}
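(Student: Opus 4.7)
The plan is to mimic the hybrid-strategy/taxation argument used in the proof of Lemma~\ref{lemma-known-prices}, replacing the two-sided symmetric step with a one-sided appeal to monotonicity. By the minimality of $\mathcal{M}$, every leaf is reached by at least one valuation profile, so we can pick valuations $v_i,v_{-i}$ such that $(\mathcal{S}_i(v_i),\mathcal{S}_{-i}(v_{-i}))$ reaches the leaf $\ell$ labeled $(S,p_S)$ in subtree $t$, and valuations $v'_i,v'_{-i}$ such that $(\mathcal{S}_i(v'_i),\mathcal{S}_{-i}(v'_{-i}))$ reaches the leaf $\ell'$ labeled $(S',p_{S'})$ in subtree $t'$. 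Because both leaves lie in the induced tree of player $i$ at $u$ given $z^u_{-i}$, both $v_{-i}$ and $v'_{-i}$ necessarily produce the message profile $z^u_{-i}$ at $u$.

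Next I would define, exactly as in Lemma~\ref{lemma-known-prices}, a hybrid strategy profile $\mathcal{S}''_{-i}$ for the other players: for every valuation profile $v''_{-i}$, follow $\mathcal{S}_{-i}(v_{-i})$ on the path to $u$ (including the message $z^u_{-i}$ sent at $u$); then inside subtree $t$ follow $\mathcal{S}_{-i}(v_{-i})$ and inside subtree $t'$ follow $\mathcal{S}_{-i}(v'_{-i})$ (on all other subtrees the actions can be arbitrary, since they will not be reached). With this construction, $(\mathcal{S}_i(v_i),\mathcal{S}''_{-i}(v''_{-i}))$ reaches $\ell$ and $(\mathcal{S}_i(v'_i),\mathcal{S}''_{-i}(v''_{-i}))$ reaches $\ell'$, regardless of $v''_{-i}$.

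Now I would invoke the dominance of $\mathcal{S}_i(v_i)$ for player $i$ with valuation $v_i$ against the strategy profile $\mathcal{S}''_{-i}$: the profit of playing $\mathcal{S}_i(v_i)$ must weakly exceed the profit of deviating to $\mathcal{S}_i(v'_i)$, which gives
\[
v_i(S) - p_S \;\geq\; v_i(S') - p_{S'},
\qquad\text{i.e.,}\qquad
p_{S'} - p_S \;\geq\; v_i(S') - v_i(S).
\]
This is precisely the point at which the argument diverges from Lemma~\ref{lemma-known-prices}: there one obtains the matching reverse inequality from the symmetric role of the two leaves (both labeled with the same bundle), yielding payment equality. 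Here, since $S \subseteq S'$, monotonicity of $v_i$ gives $v_i(S') - v_i(S) \geq 0$ directly, and the conclusion $p_{S'} \geq p_S$ follows.

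I do not expect a substantive obstacle; the proof is essentially the one for Lemma~\ref{lemma-known-prices} with monotonicity used in place of the second dominance inequality. The only detail that needs care is the bookkeeping in the construction of $\mathcal{S}''_{-i}$ — in particular verifying that $v_{-i}$ and $v'_{-i}$ send the same messages up to vertex $u$ so that the hybrid strategy is consistent — which is handled exactly as in Lemma~\ref{lemma-known-prices}.
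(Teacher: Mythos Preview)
Your proposal is correct and essentially identical to the paper's own proof: both use minimality to find valuations reaching $\ell$ and $\ell'$, build the same hybrid strategy $\mathcal{S}''_{-i}$, apply dominance of $\mathcal{S}_i(v_i)$ to get $v_i(S)-p_S \ge v_i(S')-p_{S'}$, and then invoke monotonicity of $v_i$ (rather than a second dominance inequality) to conclude $p_{S'}\ge p_S$.
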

The proof is very similar to that of Claim \ref{lemma-known-prices}. We write it for completeness.
\begin{proof}
	Denote with $\ell$ the leaf labeled with $(S,p_S)$ at the subtree $t$ and with $\ell'$ the leaf labeled with $(S',p_{S'})$ at the subtree $t'$. By the minimality of the mechanism, every leaf in the protocol has 
	valuations such that the dominant strategies  $(\mathcal{S}_1(v_1),\ldots,\mathcal{S}_n(v_n))$ reach this leaf. Thus,	 
	there exist valuations $v,v'\in V_i,v_{-i},v_{-i}'\in V_{-i}$ such that:
	$$
	(\mathcal{S}_i(v),\mathcal{S}_{-i}(v_{-i}))\to \ell,\quad
	(\mathcal{S}_i(v'),\mathcal{S}_{-i}(v_{-i}'))\to \ell'
	$$
	Observe the following strategy profile $\mathcal{S}_{-i}''$: For every valuation $v_{-i}''$ of the players in $N\setminus\{i\}$, choose the actions specified by both $\mathcal{S}_{-i}(v_{-i})$ and $\mathcal{S}_{-i}(v_{-i}')$ until vertex $u$ (including vertex $u$). Afterwards, at the subtree $t$, choose the actions specified by $\mathcal{S}_{-i}(v_{-i})$, and at the subtree $t'$, choose the actions specified by $\mathcal{S}_{-i}(v_{-i}')$. Note that:
	$$
	(\mathcal{S}_i(v),\mathcal{S}_{-i}''(v_{-i}''))\to \ell,\quad
	(\mathcal{S}_i(v'),\mathcal{S}_{-i}''(v_{-i}''))\to \ell'
	$$
	where the profit of player $i$ given the valuation $v$ at the leaf $\ell$ has to be larger than her profit at leaf $\ell'$, since $\mathcal{S}_i(v)$ is a dominant strategy for her. Thus:  
	$$
	v(S)-p_S \ge v(S') - p_{S'} 
	$$
	Recall that $S \subseteq S'$, so $v(S')\ge v(S)$, which implies that $p_{S'}\ge p_S$, as needed. 
\end{proof}

\begin{claim}\label{claim-subtree-maximizes-bundle}
	Fix some player $i$ with valuation $v$, vertex $u$, and messages of the other players $z^u_{-i}$. Consider two subtrees $t,t'$ in the induced tree of player $i$ at vertex $u$ given $z^u_{-i}$. Suppose that the subtree $t$ contains a leaf $l$ that is labeled $(S,p_S)$ and that for each leaf $l'$ of $t'$ with label $(S',p_{S'})$ we have that $v(S)-p_S> v(S')-p_{S'}$. Then, no dominant strategy of $v$ dictates a message that leads to a subtree $t'$.
\end{claim}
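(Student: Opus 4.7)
The plan is to argue by contradiction in the spirit of Lemma~\ref{lemma-known-prices} and Claim~\ref{containment-claim}, using the minimality of $\mathcal M$ together with a splicing of the other players' strategies. Suppose some dominant strategy $\mathcal{S}_i(v)$ of player $i$ dictates at vertex $u$ a message that enters the subtree $t'$. My goal will be to exhibit a single strategy profile $\mathcal{S}'_{-i}$ for the other players with the following property: playing $\mathcal{S}_i(v)$ against $\mathcal{S}'_{-i}$ leads to some leaf $l' \in t'$ labeled $(S', p_{S'})$, while deviating to a suitable alternative strategy $\mathcal{S}'_i$ against the same $\mathcal{S}'_{-i}$ leads to the distinguished leaf $l \in t$ labeled $(S, p_S)$. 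The dominance hypothesis will then give $v(S') - p_{S'} \geq v(S) - p_S$, directly contradicting the standing assumption of the claim.

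The deviation strategy $\mathcal{S}'_i$ is provided by property~\ref{obs1} of minimality: since $l$ is a leaf of the (minimal) mechanism, there exist valuations $v_i^*$ and $v_{-i}^*$ such that $(\mathcal{S}_i(v_i^*), \mathcal{S}_{-i}(v_{-i}^*))$ reaches $l$. Because $l$ lies in the induced tree of player $i$ at $u$ given $z^u_{-i}$, the profile $\mathcal{S}_{-i}(v_{-i}^*)$ must pass through $u$ with the other players sending exactly $z^u_{-i}$, after which $\mathcal{S}_i(v_i^*)$ emits the message entering $t$. I then set $\mathcal{S}'_i := \mathcal{S}_i(v_i^*)$ and define $\mathcal{S}'_{-i}$ by splicing: it mimics $\mathcal{S}_{-i}(v_{-i}^*)$ up to and including vertex $u$ (so it sends $z^u_{-i}$ at $u$) and throughout subtree $t$; in every other child subtree of $u$, including $t'$, it follows an arbitrary fixed continuation, which is legitimate because any path inside $t'$ reaches some leaf of $t'$. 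Then $(\mathcal{S}'_i, \mathcal{S}'_{-i})$ reaches $l$, while $(\mathcal{S}_i(v), \mathcal{S}'_{-i})$ reaches some leaf $l' \in t'$ since $\mathcal{S}_i(v)$ enters $t'$ at vertex $u$; applying dominance of $\mathcal{S}_i(v)$ against $\mathcal{S}'_i$ under $\mathcal{S}'_{-i}$ then closes the argument.

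The only conceptual subtlety is that the spliced profile $\mathcal{S}'_{-i}$ is not \emph{a priori} induced by the dominant strategies of any actual valuation vector $v'_{-i}$; however, this causes no difficulty, since the definition of a dominant strategy quantifies over \emph{every} possible strategy profile $\mathcal{S}'_{-i}$ of the other players, not only over profiles of the form $(\mathcal{S}_1(v_1),\ldots)$. Thus the splicing trick, already used verbatim in the proofs of Lemma~\ref{lemma-known-prices} and Claim~\ref{containment-claim}, requires no additional structural input, and I do not anticipate a genuine obstacle beyond keeping the bookkeeping of the two subtrees $t$ and $t'$ straight.
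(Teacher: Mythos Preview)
Your proposal is correct and follows essentially the same approach as the paper. The paper's own proof is a one-sentence sketch (``if the player sends a message that leads to the subtree $t$ then there are strategies of the players [that] will lead to the leaf $l$ and will imply strictly higher profit than that of any leaf in the subtree $t'$''), and your argument simply unpacks this by invoking minimality to produce $(v_i^*, v_{-i}^*)$ reaching $l$ and then splicing, exactly as in Lemma~\ref{lemma-known-prices} and Claim~\ref{containment-claim}.
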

\begin{proof}
	A strategy that leads to a subtree $t'$ is not dominant for $v$: if the player sends a message that leads to the subtree $t$ then there are strategies of the players will lead to the leaf $l$ and will imply strictly higher profit than that of any leaf in the subtree $t'$.
\end{proof}


\begin{claim}\label{claim-undecisive-unique-subtree}
	Fix player $i$  that sends his first non-trivial message at vertex $u$. 
	Let $z^{u}_{-i}$ be the messages of the other players.
	Consider the induced tree of player $i$ at vertex $u$ given $z^u_{-i}$. Let $S$ be a bundle that is not decisive at its minimal price $p_S$ in the induced tree. Let $l_1$ and $l_2$ be two leaves labeled $(S_1,p_S), (S_2,p_S)$, respectively, where $S\subseteq S_1,S_2$. Then, $l_1$ and $l_2$ are in the same subtree of $u$. 
	
\end{claim}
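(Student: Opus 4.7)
My plan is to prove the contrapositive: I assume that $l_1 = (S_1, p_S) \in t_1$ and $l_2 = (S_2, p_S) \in t_2$ lie in two distinct subtrees of the induced tree at $u$, with $S \subseteq S_1, S_2$, and derive that $S$ must be decisive at $p_S$. Let $z_1 \ne z_2$ be the two messages of player $i$ at $u$ leading into $t_1$ and $t_2$. The driving device is a single ``pointer'' valuation $v^\ast$ defined by $v^\ast(T) = M$ if $T \supseteq S$ and $v^\ast(T) = 0$ otherwise, where $M$ is chosen larger than any price appearing in the protocol. For $S \ne \emptyset$, $v^\ast$ is monotone and normalized and hence lies in $V_i$ under the general valuations assumption; the edge case $S = \emptyset$ is handled by the analogous choice assigning $M$ to every non-empty bundle and $0$ to the empty one, which still yields $v^\ast$-profit $M - p_S$ on $l_1, l_2$ when $S_1, S_2 \ne \emptyset$ (and otherwise $p_S = 0$ makes decisiveness follow directly from normalization). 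Let $z^\ast$ be the message that $\mathcal{S}_i(v^\ast)$ sends at $u$, and $t^\ast$ the resulting subtree; I will establish $z^\ast = z_1$ and symmetrically $z^\ast = z_2$, the desired contradiction.

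The core step derives $z^\ast = z_1$ by contradiction: suppose $t^\ast \ne t_1$. Because $S$ is not decisive at $p_S$, the definition applied to the player-$i$ strategy obtained by restricting $\mathcal{S}_i(v^\ast)$ from $u$ onward furnishes a strategy profile $\sigma^\ast_{-i}$ of the other players that is consistent with $z^u_{-i}$ and such that $(\mathcal{S}_i(v^\ast), \sigma^\ast_{-i})$ reaches a ``bad'' leaf — one whose bundle does not contain $S$, or whose price exceeds $p_S$. In either case the $v^\ast$-profit is strictly less than $M - p_S$. On the other hand, minimality of the mechanism supplies valuations $v^1_i, v^1_{-i}$ such that $(\mathcal{S}_i(v^1_i), \mathcal{S}_{-i}(v^1_{-i})) \to l_1$. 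I now splice: define $\sigma'_{-i}$ to agree with $\sigma^\ast_{-i}$ at every history outside the subtree $t_1$, and to agree with the restriction of $\mathcal{S}_{-i}(v^1_{-i})$ inside $t_1$. This is a legitimate strategy profile because each message at $u$ is observable in the blackboard model, so each player $j \ne i$ can individually condition its continuation on whether the protocol entered $t_1$.

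Against $\sigma'_{-i}$, the strategy $\mathcal{S}_i(v^\ast)$ sends $z^\ast$ and so enters $t^\ast \ne t_1$; there $\sigma'_{-i}$ coincides with $\sigma^\ast_{-i}$, so the same bad leaf is reached and the $v^\ast$-profit remains $< M - p_S$. However, the alternative strategy $\mathcal{S}_i(v^1_i)$ sends $z_1$ and enters $t_1$, where $\sigma'_{-i}$ coincides with $\mathcal{S}_{-i}(v^1_{-i})$; this exactly reproduces the original execution inside $t_1$, reaching $l_1 = (S_1, p_S)$ with $v^\ast$-profit $M - p_S$. This violates the dominance of $\mathcal{S}_i(v^\ast)$ against $\sigma'_{-i}$ — the contradiction — forcing $z^\ast = z_1$. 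The entirely symmetric argument with $t_2, l_2$ in place of $t_1, l_1$ yields $z^\ast = z_2$, completing the proof. The step requiring the most care is the construction of the spliced profile $\sigma'_{-i}$: I need to verify that each player $j \ne i$ can individually implement an independent strategy that switches its continuation based on the publicly observed message at $u$, which is exactly what the blackboard model provides.
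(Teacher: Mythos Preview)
Your argument is correct and follows essentially the same approach as the paper: build a valuation that strongly favors bundles containing $S$ at price $p_S$, then combine non-decisiveness with a splicing of the opponents' strategies to contradict dominance of $\mathcal{S}_i$. The paper uses the additive valuation $v(X)=2|X\cap S|\cdot p$ with $p=\max\{1,p_{\max}-p_{\min}\}$ and first argues the dominant message must land in some ``good'' subtree before contradicting; your step valuation $v^\ast$ is cleaner and lets you force $z^\ast=z_1$ and $z^\ast=z_2$ directly. One small caveat: if negative prices are not excluded, take $M>p_{\max}-p_{\min}$ (not merely larger than every individual price) so that a bad leaf with bundle $T\not\supseteq S$ and possibly negative price still has $v^\ast$-profit $-p<M-p_S$.
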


\begin{figure} [H] 
	\centering
	\caption{This figure is an illustration of the proof of Claim \ref{claim-undecisive-unique-subtree}. Below is the tree that $z_{-i}^u$ induces for player $i$ at vertex $u$ if $l_1$ and $l_2$  belong to different subtrees, which we denote with $t_1$ and $t_2$. Note that there are possibly more subtrees in the tree besides $t_1,t_2$ that are depicted.
	} 
	\label{charac-tikz}
	\begin{tikzpicture} 
		\node[shape=circle,draw=black,minimum size=0.8cm] (r) at (1.5,1.5) {$u$};
		\node[shape=circle,draw=black,minimum size=0.8cm,fill=purple!60] (v) at (0,0) {};
		\node[shape=circle,draw=black,minimum size=0.8cm,fill=purple!60] (l) at (-1,-1) {$l_1$};
		\node[] (bundle) at (-1,-1.7) {$S_1,p_S$};
		\node[shape=circle,draw=black,minimum size=0.8cm,fill=cyan!60] (v') at (3,0) {}; 
		\node[shape=circle,draw=black,minimum size=0.8cm,,fill=cyan!60] (l') at (4,-1) {$l_2$};
		\node[] (bundle') at (4,-1.7) {$S_2,p_S$}; 
		\node[shape=circle,draw=black,minimum size=0.8cm,fill=cyan!60] (n') at (2,-1) {}; 
		\node[shape=circle,draw=black,minimum size=0.8cm,fill=purple!60] (n) at (1,-1) {}; 
		
		\draw[thick,purple!60,dashed] (-1.6,0.5) rectangle (1.45,-2) {};
		\node[font=\itshape,black] at (0,-2.3) {subtree $t_1$};
		
		\draw[thick,cyan!60,anchor=mid west,dashed] (1.55,0.5) rectangle (4.7,-2) {};
		\node[font=\itshape,black] at (3,-2.3) {subtree $t_2$};

		\draw [->] (r) edge  node[sloped, above] {} (v);
		\draw [->] (r) edge  node[sloped, above] {} (v');
		\draw [->] (v) edge[dotted]  node[sloped, above] {}  (l);
		\draw [->] (v') edge[dotted]  node[sloped, above] {} (l');
		\draw [->] (v) edge[dotted]  node[sloped, above] {} (n);
		\draw [->] (v') edge[dotted]  node[sloped, above] {} (n');
		
	\end{tikzpicture}
	
\end{figure}

\begin{proof}
	For an illustration of the induced tree if $l_1$ and $l_2$ are \emph{not} in the same subtree, see Figure \ref{charac-tikz}. 
	
	Let $p_{max}$ and $p_{min}$ denote the maximum and the minimum prices in the label of every leaf  in the induced tree (note that a maximum and a minimum exist since the number of leaves is finite because the communication protocol is finite). Let $v$ be the additive valuation such that for every bundle $X$, $v(X)=2\cdot |X\cap S|\cdot p$,
	where $p=\max\{1,p_{max}-p_{min}\}$, i.e., each item has value of $2p$ according to $v$.  
	
	At vertex $u$, the dominant strategy of the valuation $v$ must be to send a message that leads to some subtree $t$ that contains some leaf labeled with $S$ or a superset of it at price $p_S$ (a \emph{good} leaf). To see this, note that any bundle $S'$ that is a superset of $S$ but has higher price is less profitable, since the marginal value of an item outside $S$ is $0$. Also, any bundle $S''$ that contains only a a strict subset of the items is always less profitable than $S$ at price $p_S$, even if the price of $S''$ is $p_{min}$: the marginal value of adding the items $S\setminus S''$ to $S''$ is at least $2p$ whereas the price difference is at most $p_{max}-p_{min}$ (if $p_S=p_{max}$ and $p_{S''}=p_{min}$). 
	Thus, Claim \ref{claim-subtree-maximizes-bundle} guarantees that the player sends a message that leads to some subtree $t$ that contains a good leaf.
	%
	%
	%
	%
	%
	
	We now show that there exist strategies of the other players such that sending the message that leads to subtree $t$ is not dominant. Since $S$ is not decisive at price $p_S$, there are strategies of the other players such that the mechanism reaches a leaf in which player $i$'s profit is strictly less than $v(S)-p_S$. This strategy is not dominant because player $i$ is better off sending a message that leads to the subtree $t'\neq t$ that contains either leaf $l_1$ or $l_2$, if the players play in a way that leads to it. Therefore, we reach a contradiction, so we have that $l_1,l_2$ belong in the same subtree. 
\end{proof}

\begin{claim} \label{claim-undecisive-bundle-unique-subtree}
	Fix some player $i$ that sends his first non-trivial message at  vertex $u$. 
	Let the messages of the other players be $z^u_{-i}$. Consider the induced tree of player $i$ at vertex $u$ given $z^u_{-i}$. Let $S_1$, $S_2$ be two bundles that are not decisive at their minimal prices (denoted $p_1,p_2$, respectively). Let $l_1$ be a leaf labeled $(S_1,p_1)$ and $l_2$ be a leaf labeled $(S_2,p_{2})$. Then, the leaves $l_1,l_2$ are in the same subtree.
\end{claim}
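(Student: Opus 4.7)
I would argue by contradiction: suppose $l_1\in t_1$ and $l_2\in t_2$ with $t_1\ne t_2$. Applying Claim~\ref{claim-undecisive-unique-subtree} separately to $S_1$ and to $S_2$, every leaf of the induced tree labeled $(T,p_1)$ with $S_1\subseteq T$ lies in $t_1$, and every leaf labeled $(T,p_2)$ with $S_2\subseteq T$ lies in $t_2$. I would relabel so that $p_1\ge p_2$. As a warm-up observation, if $p_1=p_2$ and any leaf $(T,p_1)$ contained $S_1\cup S_2$, then the two applications of Claim~\ref{claim-undecisive-unique-subtree} would place it in both $t_1$ and $t_2$, already a contradiction; in what follows I focus on the remaining case $p_1>p_2$.

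The core step is to build a monotone valuation that tempts player $i$ toward \emph{exactly} the leaves in $t_1$ or $t_2$. Fix a small $\pi>0$ and set
$$
v(X)=\max\!\bigl((p_1+\pi)\cdot\mathbf{1}[S_1\subseteq X],\ (p_2+\pi)\cdot\mathbf{1}[S_2\subseteq X]\bigr).
$$
Using Claim~\ref{containment-claim} and the definitions of $p_1,p_2$ as minimal prices, a case analysis over any leaf $(T,p_T)$ shows that $v(T)-p_T\le \pi$, with equality iff $(T,p_T)$ falls into one of two families: (i) $S_1\subseteq T$ and $p_T=p_1$ (including the case $S_2\subseteq T$, which is harmless because $p_1>p_2$ forces $v(T)=p_1+\pi$), and (ii) $S_1\not\subseteq T$, $S_2\subseteq T$, and $p_T=p_2$. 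Family (i) is entirely in $t_1$ and family (ii) is entirely in $t_2$. Hence by Claim~\ref{claim-subtree-maximizes-bundle} the dominant strategy $\mathcal{S}_i(v)$ at $u$ must send a message leading into $t_1$ or into $t_2$; by symmetry, assume it leads to $t_1$.

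Now I would exploit the fact that $S_1$ is not decisive at $p_1$: there exist strategies $\mathcal{S}_{-i}^{bad}$ and valuations $v_{-i}^{bad}$ such that $(\mathcal{S}_i(v),\mathcal{S}_{-i}^{bad}(v_{-i}^{bad}))$ reaches a leaf $l^{bad}\in t_1$ whose label $(S',p')$ satisfies $S_1\not\subseteq S'$ or $p'>p_1$. A direct case check on $S'$ shows the $v$-profit at $l^{bad}$ is \emph{strictly} less than $\pi$, because the only way to reach profit $\pi$ with $S_1\not\subseteq S'$ is to be in family (ii), which would place $l^{bad}$ in $t_2\ne t_1$. For the deviation, minimality supplies valuations $v_1^g,\ldots,v_n^g$ whose dominant strategies jointly reach $l_2$; in particular $\mathcal{S}_i(v_i^g)$ already reaches $u$ and sends there the message $z^{**}$ going into $t_2$. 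Define $\mathcal{S}_i'$ to copy $\mathcal{S}_i(v)$ along the unique root-to-$u$ path (on which $\mathcal{S}_i(v)$ and $\mathcal{S}_i(v_i^g)$ must agree), send $z^{**}$ at $u$, and then copy $\mathcal{S}_i(v_i^g)$ inside $t_2$; splice $\mathcal{S}_{-i}^{bad}(v_{-i}^{bad})$ inside $t_1$ with $\mathcal{S}_{-i}(v_{-i}^g)$ inside $t_2$ into a strategy profile $\tilde{\mathcal{S}}_{-i}$, exactly as in the proof of Claim~\ref{containment-claim} (the splicing is consistent because all pieces agree on the unique path to $u$). Then $(\mathcal{S}_i(v),\tilde{\mathcal{S}}_{-i})$ reaches $l^{bad}$ with $v$-profit $<\pi$, while $(\mathcal{S}_i',\tilde{\mathcal{S}}_{-i})$ reaches $l_2$, whose $v$-profit equals $v(S_2)-p_2\ge(p_2+\pi)-p_2=\pi$. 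This contradicts the dominance of $\mathcal{S}_i(v)$, finishing the proof.

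The main obstacle is the design of $v$: a monotone set function cannot peak at $S_1$ and $S_2$ without also peaking at their supersets, and the profit-$\pi$ leaves that contain $S_1\cup S_2$ could, \emph{a priori}, lie outside both $t_1$ and $t_2$ and derail the argument. The max-of-two-scaled-indicators construction together with the WLOG $p_1\ge p_2$ is precisely what pushes every joint-superset leaf into family (i), confining all profit-maximizing leaves to $t_1\cup t_2$ and enabling the non-decisiveness-plus-splicing contradiction.
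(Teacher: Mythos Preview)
Your overall architecture matches the paper's: build a monotone valuation whose top-profit leaves are confined to $t_1\cup t_2$, then use non-decisiveness plus strategy splicing to contradict dominance. The paper's valuation differs only in bookkeeping: it uses the next-larger prices $p_1',p_2'$ in place of your additive $\pi$, and---crucially---subtracts an offset $q=\min\{p_{\min},0\}$ from every value.

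That offset is exactly where your argument has a gap. Theorem~\ref{characterizationtheorem} is stated for arbitrary (normalized) dominant-strategy mechanisms, with no assumption of no-negative-transfers, and the paper's proof is written to accommodate negative prices. Your case analysis asserts $v(T)-p_T\le\pi$ for every leaf, but when $S_1\not\subseteq T$ and $S_2\not\subseteq T$ you get $v(T)=0$ and hence profit $-p_T$; if some such leaf has $p_T<-\pi$ this exceeds $\pi$ and may lie outside $t_1\cup t_2$, so Claim~\ref{claim-subtree-maximizes-bundle} no longer forces the dominant strategy into $t_1$ or $t_2$. The same issue bites your valuation's validity: if $p_2<-\pi$ then $v$ takes negative values and is not a legal (normalized, monotone) valuation. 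Both problems disappear if you shift $v$ up by $-q$ on bundles containing $S_1$ or $S_2$, which is precisely the paper's fix.

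A smaller point: your warm-up disposes of the case $p_1=p_2$ only when some leaf at price $p_1$ contains $S_1\cup S_2$, and you then say you ``focus on $p_1>p_2$.'' The subcase $p_1=p_2$ with no joint-superset leaf at that price is left unaddressed. In fact your main construction already handles it verbatim (family (i) and family (ii) are then disjoint and still sit in $t_1$ and $t_2$ respectively), so this is an exposition gap rather than a mathematical one; just drop the ``$p_1>p_2$'' restriction and note that a joint-superset leaf at price $p_1=p_2$ would, by two applications of Claim~\ref{claim-undecisive-unique-subtree}, lie in both $t_1$ and $t_2$.
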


\begin{proof}
	Let $p_{min}$ be the minimal price of of any bundle in the induced tree of vertex $u$ given $z_{-i}^u$. Let $q=\min\{p_{min},0\}$.  
	
	Let $p_1'>p_1$ be the minimal price (that is not $p_1$) that appears in a label $(S,p)$ of a leaf in the induced tree where $S_1\subseteq S$ (if $p_1'$ is undefined, set it to some value strictly bigger than $p_1$). Define $p_2'$ similarly with respect to $S_2$ and $p_2$. 
	We prove the claim by showing that the following (monotone) valuation has no dominant strategy: 
	$$
	v(X)=\begin{cases}
	\max\{\frac {p_1'+p_1} 2,\frac {p_{2}'+p_2} 2\}-q \quad &S_1\cup S_2 \subseteq X, \\
	\frac {p_{1}'+p_1} 2-q \quad &S_1 \subseteq X  \text{ and } S_2 \not\subseteq X, \\
	\frac {p_2'+p_2} 2 -q \quad &S_1 \not \subseteq X \text{ and } S_2  \subseteq X, \\
	0 \quad &\text{otherwise.}
	\end{cases}
	$$

	Assume towards a contradiction that $l_1$ and $l_2$ are not in the same subtree, and let
	$t_1$ be the subtree of $l_1$ and $t_2$ be the subtree of $l_2$ (See Figure \ref{my-first-tikz-xos}). Observe that the profit of $v$ from the leaf $l_1$ is strictly larger than $-q$:
	$v(S_1)-p_1 \ge \frac{p_1+p_1'}{2}-p_1-q > -q$.
	Similarly, the profit from the leaf $l_2$ is also strictly larger than $-q$.
	
	Observe that the only leaves with label $(T,p_T)$ such that $v(T)-p_T> -q$ belong in the subtrees $t_1$ and $t_2$: by Claim \ref{containment-claim}  and Claim \ref{claim-undecisive-unique-subtree}, all leaves that are labeled with bundles that contain $S_1$ and do not appear in $t_1$ have price strictly larger than $p_{1}$ (so it is at least $p_1'$), and thus have a profit of at most $-q$ for $v$. Similarly, all leaves that are labeled with bundles that contain $S_2$ and do not appear in $t_2$ have price at least $p_2'$, and thus have a profit of at most $-q$ for $v$. $v$ gives value $0$ for all other bundles, so for every bundle $S$ that does not contain $S_1,S_2$ we have that the profit is at most $0-p_{min}\le -q$ . Thus, by Claim \ref{claim-subtree-maximizes-bundle}, the dominant strategy of $v$ dictates a message that leads either to $t_1$ or to $t_2$.
	
	
	Assume without loss of generality that it dictates a message that leads to $t_1$.
	Since $S_1$ is not decisive at price $p_1$, if player $i$ sends a message that leads to the subtree $t_1$, there are strategies of the other players in which he will not win $S_1$ or a bundle that contains it at price $p_1$. Thus, by definition if he wins $S_1$ the price is at least $p_1'$, so his profit is at most $-q$.  
	Also, by Claim \ref{claim-undecisive-unique-subtree}, every leaf in $t_1$ that is labeled with a bundle that contains $S_2$ has price at least $p'_2$. Therefore, if player $i$ wins some bundle that contains $S_2$ he pays at least $p'_2$ and has at most $-q$ profit as well. All other bundles have $0$ value for player $i$, so his profit is at most $-q$. 
	However, if the player sent a message that leads to the subtree $t_2$, there are strategies of the players that would make him win the bundle $S_2$ at price $p_2$, so he would have a profit that is strictly higher than $-q$. Thus, given this specific strategy profile of the players in $N\setminus\{i\}$, sending a message that leads to the subtree $t$ is not a dominant strategy. 
	
	Similarly, sending a message that leads to the subtree $t_2$ is not a dominant strategy. Overall, we conclude that player $i$ with valuation $v$ has no dominant strategy, a contradiction.
\end{proof}
	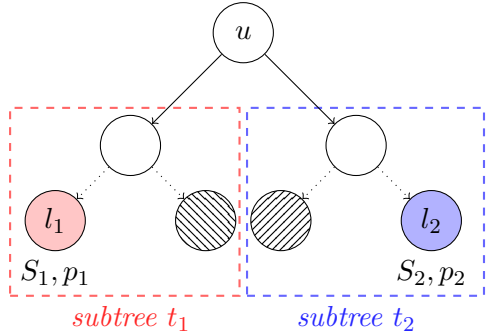
\begin{figure} [h] 
	\centering
	
	\begin{tikzpicture}
	\node[shape=circle,draw=black,minimum size=0.8cm] (r) at (1.5,1.5) {$u$};
	\node[shape=circle,draw=black,minimum size=0.8cm] (v) at (0,0) {};
	\node[shape=circle,draw=black,minimum size=0.8cm,fill=pink!90] () at (-1,-1) {$l_1$};
	\node[] (bundle) at (-1,-1.7) {$S_1,p_1$};
	\node[shape=circle,draw=black,minimum size=0.8cm] (v') at (3,0) {}; 
	\node[shape=circle,draw=black,minimum size=0.8cm,fill=blue!30] (l') at (4,-1) {$l_2$};
	\node[] (bundle') at (4,-1.7) {$S_2,p_{2}$}; 
	\node[shape=circle,draw=black,minimum size=0.8cm,pattern=north east lines] (n') at (2,-1) {}; 
	\node[shape=circle,draw=black,minimum size=0.8cm,pattern=north west lines] (n) at (1,-1) {}; 
	
	\draw[thick,red!60,dashed] (-1.6,0.5) rectangle (1.45,-2) {};
	\node[black,font=\itshape,text=red!85] at (0,-2.3) {subtree $t_1$};
	
	\draw[thick,blue!60,anchor=mid west,dashed] (1.55,0.5) rectangle (4.7,-2) {};
	\node[black,font=\itshape,text=blue!85] at (3,-2.3) {subtree $t_2$};

	\draw [->] (r) edge  node[sloped, above] {} (v);
	\draw [->] (r) edge  node[sloped, above] {} (v');
	\draw [->] (v) edge[dotted]  node[sloped, above] {}  (l);
	\draw [->] (v') edge[dotted]  node[sloped, above] {} (l');
	\draw [->] (v) edge[dotted]  node[sloped, above] {} (n);
	\draw [->] (v') edge[dotted]  node[sloped, above] {} (n');
	
	\end{tikzpicture}
	\caption{An illustration for the proof of Claim \ref{claim-undecisive-bundle-unique-subtree}. It describes the tree that the message of the other players $z_{-i}^u$ induces for player $i$ at vertex $u$ if $l_1$ and $l_2$ do not belong to the same subtree. Note that there are possibly other subtrees in the tree besides $t_1,t_2$.
		The colored leaves $l_1,l_2$ are the most profitable leaves for the valuation $v$. The leaves with the lines pattern signify less profitable leaves, which necessarily exist in the subtrees $t_1,t_2$ due to the fact that neither $S_1,S_2$ are decisive at their minimal prices.}
	\label{my-first-tikz-xos}
	
\end{figure}

By Claim \ref{claim-undecisive-bundle-unique-subtree}, we have that in every possible induced tree of player $i$ given $z_{-i}^u$, all leaves labeled with some bundle $S$ at its minimal price $p_S$, where $S$ is not decisive at $p_S$, satisfy that they all belong to the same subtree.
We call this subtree the special subtree. If all bundles are decisive at their minimal prices, Theorem \ref{characterizationtheorem} holds trivially.

To finish the proof of Theorem \ref{characterizationtheorem}, consider a leaf $l$ that does not belong to the special subtree.
Let the label of $l$ be $(T,p_T)$. Our goal is to show that $T$ is decisive at its minimal price. We prove it by showing that    
$p_T$ is the minimal price of $T$ and hence, by Claim \ref{claim-undecisive-bundle-unique-subtree}, $T$ is decisive at price $p_T$ (because $l$ is outside the special subtree).

First, note that the special subtree does not contain a leaf with label $(T',p'_T)$ where $T\subseteq T'$ and $p'_T < p_T$, by Claim \ref{containment-claim}. Thus, if $p^{min}_T<p_T$ is the minimal price of $T$, it appears on some leaf not in the special subtree. Thus, by Claim \ref{claim-undecisive-unique-subtree}, $T$ is decisive at price $p^{min}_T$. We have that no dominant strategy of player $i$ can lead to $l$, since it always has strictly lower profit. $l$ is thus useless and should not appear in the protocol of a minimal mechanism. This concludes the proof of Theorem \ref{characterizationtheorem}.

\end{document}